\documentclass[11pt]{article}

\title{Minimum Degree Spanning Tree:\\  $(1+\epsilon,1)$-Approximation in Near-Linear Time}

\author{Sayan Bhattacharya\thanks{University of Warwick, UK.} \and Ermiya Farokhnejad\thanks{University of Warwick, UK.} \and Thatchaphol Saranurak\thanks{University of Michigan, USA. \texttt{thsa@umich.edu}.
        Supported by NSF Grant CCF-2238138 and a Sloan Fellowship.} \and Haoze Wang\thanks{Peking University, China.}}
\date{}

\usepackage[linesnumbered,ruled,vlined]{algorithm2e}

\usepackage{fancyhdr}
\usepackage{blindtext}
\usepackage{amsthm}
\usepackage{amsmath}
\usepackage{amssymb}
\usepackage{thmtools}
\usepackage{thm-restate}
\usepackage{anyfontsize}
\usepackage{xstring}
\usepackage{xargs}
\usepackage{makeidx}
\usepackage{cancel}
\usepackage{chngcntr}
\usepackage{caption}
\usepackage{subcaption}
\usepackage{float}
\usepackage{pgf,tikz,pgfplots}
\usepackage{latexsym}
\usepackage{verbatim}
\usepackage{bookmark}
\usepackage{multirow}
\usepackage[toc]{appendix}
\usepackage[letterpaper, left=1in, right=1in, top=0.9in, bottom=0.9in]{geometry}

\usepackage{cleveref}

\usepackage{listings}
\usepackage{hyperref}

\hypersetup{
	colorlinks=true,
	linkcolor=blue,
	filecolor=magenta,      
	urlcolor=cyan,
}

\usepackage{algorithmicx}
\usepackage{algpseudocode}
\usepackage[dvipsnames]{xcolor}

\usepackage{bidicontour}

\newtheorem{theorem}{Theorem}[section]
\newtheorem{lemma}[theorem]{Lemma}

\newtheorem{corollary}[theorem]{Corollary}

\newtheorem{definition}[theorem]{Definition}
\newtheorem{invariant}[theorem]{Invariant}
\newtheorem{remark}[theorem]{Remark}
\newtheorem{assumption}[theorem]{Assumption}
\newtheorem{observation}[theorem]{Observation}
\newtheorem{claim}[theorem]{Claim}

\hypersetup{ colorlinks, citecolor=blue}

\PassOptionsToPackage{unicode}{hyperref}

\usepackage{framed}
\usepackage[framemethod=tikz]{mdframed}
\usepackage{tikz-cd}
\newenvironment{wrapper}[1]
{
	\begin{center}
		\begin{minipage}{\linewidth}
			\begin{mdframed}[hidealllines=true, backgroundcolor=gray!20, leftmargin=0cm,innerleftmargin=0.4cm,innerrightmargin=0.4cm,innertopmargin=0.4cm,innerbottommargin=0.4cm,roundcorner=0pt]
				#1}
			{\end{mdframed}
		\end{minipage}
	\end{center}
}

\newcommand{\calP}[0]{\mathcal{P}}

\newcommand{\calS}[0]{\mathcal{S}}
\newcommand{\calF}[0]{\mathcal{F}}

\newcommand{\calC}[0]{\mathcal{C}}
\newcommand{\calA}[0]{\mathcal{A}}

\let\epsilon\varepsilon

\renewcommand{\th}[0]{\text{th}}

\newcommand{\old}[0]{\text{old}}
\newcommand{\new}[0]{\text{new}}
\newcommand{\add}[0]{\text{add}}
\newcommand{\mult}[0]{\text{mult}}

\newcommand{\lvl}[0]{\texttt{L}}

\newcommand{\inn}[0]{\text{in}}
\newcommand{\out}[0]{\text{out}}

\newcommand{\init}[0]{\text{init}}

\newcommand{\final}[0]{\text{final}}
\newcommand{\findChain}[0]{\textsc{find-chain}}
\newcommand{\backSearch}[0]{\textsc{backward-search}}

\newcommand{\bound}[0]{\text{bound}}
\newcommand{\ds}[0]{\texttt{DS}}
\newcommand{\deficit}[0]{\text{deficit}}
\newcommand{\conf}[2]{\left(\calF^{#1}_{#2}, B^{#1}_{#2}, \calC^{#1}_{#2}, D^{#1}_{#2}\right)}

\usepackage{ifthen}
\newboolean{debug}
\setboolean{debug}{false} %

\begin{document}

\maketitle

\pagenumbering{gobble}

\begin{abstract}
The minimum degree spanning tree problem is a classic NP-hard problem whose optimal approximation guarantee was established since the early 1990s: F\"urer and Raghavachari \cite{FR92} gave an $\tilde O(mn)$-time algorithm that computes a spanning tree with maximum degree $\Delta^\star+1$, where $\Delta^\star$ denotes the optimum value. Whether similarly strong guarantees can be achieved in near-linear time has remained open for over three decades.

We give the first near-linear-time algorithm that computes a spanning tree with maximum degree $\lceil (1+\epsilon)\Delta^\star\rceil+1$ in $\tilde O(m/\epsilon^2)$ time. Prior near-linear-time algorithms either achieved the weaker bound $\lceil (1+\epsilon)\Delta^\star\rceil + O(\log n/\epsilon^2)$ \cite{DHZ20} or required dense graphs with $m\ge n^{7/4}$ \cite{CQT21,BFW26}.

Using the same framework, our algorithm can also compute a spanning tree with maximum degree $\Delta^\star+1$ in $\tilde O(mn^{2/3})$ time, improving upon the recent $\tilde O(mn^{3/4})$-time algorithm of \cite{BFW26}. These two results strictly improve all previous construction algorithms for the minimum degree spanning tree problem.
\end{abstract}

\newpage

\tableofcontents
\addtocontents{toc}{\protect\setcounter{tocdepth}{2}}

\newpage
\pagenumbering{arabic}
\section{Introduction}

In the \emph{minimum degree spanning tree} (MDST) problem, we are given an undirected connected graph $G=(V,E)$ and seek a spanning tree $T$ minimizing its maximum degree, i.e., $\max_{u\in V}\deg_T(u)$. Let
\[
\Delta^{\star}:=\min_{T:\text{ spanning tree of }G}\max_{u\in V}\deg_T(u)
\]
denote the optimum value. The problem is NP-hard: deciding whether $\Delta^{\star}=2$ is equivalent to deciding whether $G$ contains a Hamiltonian path.

MDST is one of the most prominent examples of problems whose \emph{approximation guarantee} was essentially understood long ago, while its \emph{running time} remained open for decades. In a landmark result, Fürer and Raghavachari \cite{FR92,furer1994approximating} gave an algorithm that returns a spanning tree of maximum degree at most $\Delta^{\star}+1$ in $\tilde{O}(mn)$ time. This additive-one guarantee is best possible. Thus, from the approximation viewpoint, the problem was already nearly settled in the early 1990s. The main remaining challenge has been to obtain similarly strong degree guarantees in near-linear time.

Recent progress narrowed this gap only partially. Duan, He, and Zhang~\cite{DHZ20} gave the first near-linear-time algorithm, but with the weaker degree bound \((1+\varepsilon)\Delta^{\star}+O(\varepsilon^{-2}\log n)\). Chekuri, Quanrud, and Torres~\cite{CQT21} obtained a near-linear-time algorithm for estimating the optimum value and constructing a sparsifier, but not for constructing the spanning tree itself. Bhattacharya, Farokhnejad, and Wang~\cite{BFW26} later gave a \(\Delta^{\star}+1\) approximation in \(\tilde{O}(mn^{3/4})\) time, breaking the classical \(O(mn)\) barrier. Running this algorithm on the sparsifier of \cite{CQT21} also yields a spanning tree of degree \(\lceil(1+\varepsilon)\Delta^{\star}\rceil+2\) in \(\tilde{O}((m+n^{7/4})/\varepsilon^{2})\) time, which is near-linear only when \(m\ge n^{7/4}\). Hence, before our work, no near-linear-time algorithm was known to construct a spanning tree with maximum degree \(O(\Delta^{\star})\); indeed, none achieved even an \(o(\Delta^{\star}\log n)\) guarantee. 

\paragraph{Our Results.}
We show that strong degree guarantees are achievable in near-linear time by giving a unified framework that strictly improves \emph{all} prior construction algorithms for MDST.

\begin{theorem}
\label{thm:main1}
There exists a deterministic algorithm that, given an undirected graph and $\varepsilon\in(0,1]$, computes a spanning tree with maximum degree at most $\left\lceil (1+\varepsilon)\Delta^{\star}\right\rceil +1$ in $\tilde{O}(m/\varepsilon^{2})$ time.
\end{theorem}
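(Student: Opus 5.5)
We would prove Theorem~\ref{thm:main1} by a local-search scheme in the style of F\"urer--Raghavachari, made fast by processing many improvements in parallel and charging all work to a potential. Fix a target degree $k$; we guess $k$ by binary search over $\lceil(1+\varepsilon)\Delta\rceil$ for candidate values $\Delta$, which costs only a logarithmic factor. Starting from any spanning tree $T$, call a vertex \emph{high} if $\deg_T(u)\ge k+2$. The goal is a procedure that, in $\tilde O(m/\varepsilon)$ time, either reduces the number of high vertices by a constant fraction or certifies $\Delta^\star>k/(1+\varepsilon)$. Repeating this for $O(\log n/\varepsilon)$ phases then gives $\tilde O(m/\varepsilon^2)$ overall.

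The certificate is the standard witness. Remove a set $S$ of vertices with degree at least $k+1$ (high and near-high) from $T$, producing $c$ components. If no non-tree edge joins two different components, then any spanning tree needs at least $c+|S|-1$ edges incident to $S$. Since $c\ge (k+1)|S|-2(|S|-1)$, this gives $\Delta^\star\ge (c+|S|-1)/|S|\approx k-1$. That is too weak for an additive one by itself, so we allow slack: we only require that the components of $T-S$ are ``almost'' unconnected, namely that at most $\varepsilon |S| k$ components are merged by non-tree edges. The resulting bound is $\Delta^\star\ge (1-O(\varepsilon))k$, which after rescaling $\varepsilon$ is exactly the $\lceil(1+\varepsilon)\Delta^\star\rceil+1$ guarantee.

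The improvement step is a layered search in the F\"urer--Raghavachari style. Level $0$ is $S_0$, the set of high vertices. At each level, an edge $(x,y)$ between two components of $T-S_i$ whose tree path passes through a blocking vertex $w\in S_i$ gives an exchange: add $(x,y)$, delete an edge at $w$. If the endpoints $x,y$ have degree at least $k$, they join the next level $S_{i+1}$ as vertices that must themselves be relieved first. We would grow levels only while $|S_{i+1}|\ge (1+\varepsilon)|S_i|$. This growth can continue for at most $O(\log n/\varepsilon)$ levels, so some level fails to grow. At that level, either the components are nearly unconnected, which yields the certificate above, or there are many improving edges. In the second case we perform a maximal set of vertex-disjoint improvement chains simultaneously, finding them via a forward search from the useful edges back to $S_0$.

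The main obstacle is executing many chains simultaneously in near-linear time. After one exchange, the components of $T-S_i$ change, so later chains may become invalid. We would handle this with dynamic tree structures (link-cut or top trees) to test path membership and to perform exchanges in polylogarithmic time, and we would argue amortization: each edge is scanned $O(1)$ times per phase, because an edge that is invalidated gets charged to a vertex whose degree has strictly decreased. We must also ensure that executing chains never pushes a vertex above $k+1$, which the level-endpoint rule guarantees. Showing that a constant fraction of high vertices is fixed, rather than only a $1/\log n$ fraction, is the delicate counting step. If that fails, we would instead use the $(1+\varepsilon)$ growth potential to bound the number of phases by $O(\log n/\varepsilon)$.
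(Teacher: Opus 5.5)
\textbf{There is a genuine gap.} The step you defer as ``the delicate counting step'' is the entire content of the theorem, and the mechanisms you propose for it do not deliver it.

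Vertex-disjointness does not make chains compatible. An exchange deletes a tree edge at $w$ and inserts $(x,y)$, which reroutes every tree path that used the deleted edge. So after one chain is executed, the non-tree edges of other, vertex-disjoint chains may no longer have their blocking vertices on their tree paths, and one chain can invalidate arbitrarily many others.

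Likewise, many improving edges at a non-growing level need not give many chains reaching $S_0$. They can all relieve the same few vertices of $S_i$, which in turn funnel into the same few high vertices. Nothing in your sketch lower-bounds the size of a maximal compatible family. The fallback ``use the $(1+\varepsilon)$ growth potential'' restates the goal rather than proving it. This is the same kind of obstruction as the hard instance in Appendix~\ref{appendix:hard-instance}, where each $\Omega(m)$-time pass of the previous algorithm yields a single chain.

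The paper's resolution needs several interlocking ideas your sketch lacks:
\begin{itemize}
\item rounds $t=1,\dots,H$ in which augmenting chains of length at most $t$ are searched for and applied \emph{before} any merging, so reducible regions are used up while still small;
\item a reset each round that turns dirty (touched, saturated) nodes into junctions and makes all regions untouched again;
\item a level-based backward search whose cost is charged to monotone level increases, giving $\tilde O(mH)$ in total;
\item a potential argument: unless $\Omega(f/H)$ chains are applied, there are tail-independent sets $M(t)$ of stable $t$-chains with $|M(t+1)|\ge\frac{k-2}{\Delta^\star-1}|M(t)|+f/5$. Since $\frac{k-2}{\Delta^\star-1}\ge 1+\varepsilon$, this exceeds $n$ after $H=\Theta(\log_{1+\varepsilon}n)$ rounds.
\end{itemize}
That growth survives applied chains only for two reasons. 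First, merging and touching edges never destroy a candidate. Second, each applied chain touches $O(H)$ regions and has $O(H)$ critical edges, so it destroys only $O(H)$ of the node-disjoint candidate tails. Your intuition that $(1+\varepsilon)$-geometric growth over $O(\log_{1+\varepsilon}n)$ levels is the source of the $1/\varepsilon$ factor is right. But the growing quantity must be something robust to applied chains, not the size of a layer.

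Independently, your progress measure fails. Counting vertices of degree at least $k+2$ in an arbitrary starting spanning tree ignores how far above $k+1$ they are. Take a wheel graph ($\Delta^\star=2$) started from its spanning star. There is a single high vertex, of degree $n-1$. Every exchange lowers its degree by one, and vertex-disjointness allows only one exchange through it per phase. So a phase neither reduces the number of high vertices nor can correctly certify $\Delta^\star>k/(1+\varepsilon)$, and $\Omega(n)$ phases are needed. The paper avoids this by growing a $k$-feasible forest from the empty forest and measuring progress by the number of components, $f\mapsto f-f/(100H)$. That gives $O(\varepsilon^{-1}\log^2 n)$ calls of cost $\tilde O(m/\varepsilon)$ each.

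Two smaller points:
\begin{itemize}
\item Your witness arithmetic is off: $c+|S|-1\ge k|S|+1$ already gives $\Delta^\star>k$, not $\approx k-1$.
\item Charging each rescanned edge to a strict degree decrease is unsupported, because every exchange also raises the degrees of the inserted edge's endpoints.
\end{itemize}
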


This is the first near-linear-time algorithm that constructs a spanning tree with maximum degree $O(\Delta^{\star})$, resolving an open question raised by \cite{DuanP20,P16,S24}. 
In fact, it gives the first near-linear-time algorithm with an $o(\Delta^{\star}\log n)$ guarantee.
Note that for $\Delta^\star = \mathrm{polylog}(n)$, \Cref{thm:main1} gives a spanning tree with degree $\Delta^\star +2$ in near-linear time by setting $\epsilon = 1/\Delta^\star$.

The same framework also yields a faster additive-one approximation.
\begin{theorem}
\label{thm:main2}
There exists a deterministic algorithm that, given an undirected graph, computes a spanning tree with maximum degree at most $\Delta^{\star}+1$ in $\tilde{O}(mn^{2/3})$ time.
\end{theorem}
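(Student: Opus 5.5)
We plan to prove \Cref{thm:main2} by reusing the framework behind \Cref{thm:main1} with $\epsilon$ small enough that the approximate guarantee becomes additive. The framework works with a target degree $k$ and repeatedly improves the current tree. In each phase it finds vertices of degree $\ge k$ (the ``high'' vertices). It then searches, in a layered Fürer--Raghavachari style, for improving swaps that lower the degree of a high vertex. Each swap uses only non-tree edges whose endpoints have degree at most $k-2$. If no such swap exists, the phase instead certifies a witness set $W$ of high vertices: deleting $W$ from the tree leaves many components, and every graph edge joining two different components touches $W$. This certifies $\Delta^\star \ge \lceil(c-1)/|W|\rceil$, where $c$ is the number of components. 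Such a witness pins down the degree bound $\Delta^\star\ge k-1$, which yields the additive-one guarantee.

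The key steps, in order, are as follows.
\begin{itemize}
\item[(1)] Guess the target $k$ by binary search over $[1,n]$, or decrease $k$ monotonically. For each $k$, run the improvement procedure until either the maximum degree is at most $k$ or a witness certifying $\Delta^\star\ge k$ appears.
\item[(2)] Split the high vertices at threshold $n^{1/3}$. Call a vertex ``heavy'' if $\deg\ge n^{1/3}$. Because the tree has $n-1$ edges, there are at most $2n^{2/3}$ heavy vertices. When $k\le n^{1/3}$, run the exact (additive-one) local search with the amortized blocking-flow-like search of \cite{BFW26}. Its cost is charged as $\tilde O(m)$ per phase over $\tilde O(n^{2/3})$ phases. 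When $k > n^{1/3}$, invoke the near-linear engine of \Cref{thm:main1} with $\epsilon = 1/k \le n^{-1/3}$. That call costs $\tilde O(m k^2)$, which is too much for large $k$. So instead we apply the multiplicative engine only to reduce the degree to $k+O(1)$ or so. We then spend additive-one phases, each reducing the number of vertices of degree $>k$ by a constant fraction or by at least one unit per $\tilde O(m)$ work.
\item[(3)] Bound the number of expensive phases. Each phase costs $\tilde O(m)$ and either makes progress or halts with a witness. Progress is measured by the potential $\sum_v \max(0,\deg_T(v)-k)$. After the multiplicative stage this potential is at most $O(n/k)\cdot O(1)$. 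Balancing $n/k$ against $k^2$-type costs at $k\approx n^{1/3}$ gives $\tilde O(mn^{2/3})$ overall.
\end{itemize}

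The main obstacle is step (3): showing that a single $\tilde O(m)$ phase lowers the potential by an amount proportional to the number of high vertices, rather than by one, unless a valid witness exists. The plan is to take the maximal set of vertex-disjoint improving paths found by a single layered search, and argue that the edges blocked by these paths can be repaired in amortized near-linear time. If this fails, we would fall back to a threshold of $n^{1/3}$ on the number of high vertices and use the degree-sum bound $|\{v:\deg_T v\ge k\}|\le 2n/k$ directly.
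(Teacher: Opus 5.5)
There is a genuine gap, and it lies in the regime $k<n^{1/3}$ (for example $\Delta^\star=O(1)$), which is exactly where the theorem has content.

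First, \Cref{thm:main1} cannot be tuned to give additive one. For every $\epsilon>0$ you get $\lceil(1+\epsilon)\Delta^\star\rceil+1\ge\Delta^\star+2$. For small $\Delta^\star$ it cheaply reaches $\Delta^\star+2$, but the last unit is the whole difficulty. Starting from such a tree, the vertices still needing repair have degree about $k$, not $\ge n^{1/3}$, so your bound of $2n^{2/3}$ heavy vertices is irrelevant. A tree can have $\Theta(n/k)$ vertices of degree $k+1$, so the potential $\sum_v\max(0,\deg_T(v)-k)$ can be $\Theta(n/k)$. A F\"urer--Raghavachari-style search gains $O(1)$ units per $\tilde O(m)$ phase, so it needs $\Theta(n/k)$ phases, which is $\tilde O(mn)$ for constant $k$. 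Your claim of $\tilde O(n^{2/3})$ phases in this branch has no support. Plugging in the search of \cite{BFW26} can only reproduce its own $\tilde O(mn^{3/4})$ bound. The ``balancing'' in step (3) does not help either. Since $k\approx\Delta^\star$ is fixed by the input, you need a bound valid for every $k$, not the minimizer of $\max(n/k,k^2)$; for $k\ll n^{1/3}$ the $n/k$ term dominates. What remains is precisely the statement you flag as the main obstacle: that a near-linear phase makes progress proportional to the remaining work. Your suggested route, a maximal family of disjoint improving paths followed by amortized repair, is essentially the blocking-set strategy of \cite{BFW26}. The paper explains in Appendix~\ref{appendix-previous-work-barriers} why that strategy stalls.

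The missing idea is \Cref{full:lem:main} with $(k,H)=(\Delta^\star+1,\lceil 10n/f\rceil)$. It works with a $(\Delta^\star+1)$-feasible forest and measures progress in components, not degree excess: in $\tilde O(mH)$ time it removes $f/(100H)$ components. The algorithm runs $H$ rounds:
\begin{itemize}
\item round $t$ applies only augmenting chains of length at most $t$, found by the level-based backward search, with cost charged to monotone level increases;
\item it merges or touches regions only when no such chain is found;
\item at the start of the next round, dirty nodes become junctions and all regions become untouched again.
\end{itemize}
The analysis shows that if fewer than $f/(100H)$ chains were applied, the tail-independent stable chains satisfy $|M(t+1)|\ge\frac{k-2}{\Delta^\star-1}|M(t)|+f/5=|M(t)|+f/5$. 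Hence $|M(H)|>n$, a contradiction. The theorem then follows in two phases:
\begin{itemize}
\item call the lemma while $f\ge n^{2/3}$; each call costs $\tilde O(mn/f)\le\tilde O(mn^{1/3})$ and shrinks $f$ by a $1-\Omega(n^{-1/3})$ factor, so $O(n^{1/3}\log n)$ calls suffice;
\item finish with at most $n^{2/3}$ calls to \Cref{full:lem:reduce-one}.
\end{itemize}

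Your large-$k$ branch is salvageable, but not as written. Reaching $k+O(1)$ needs $\epsilon=O(1/k)$ and hence $\tilde O(mk^2)$ time. Instead take $\epsilon=n^{-1/3}$. The total excess is then below $\frac{2n}{k}(\epsilon k+1)=O(n^{2/3})$ for $k\ge n^{1/3}$. Deleting those excess edges and reconnecting with \Cref{full:lem:reduce-one} costs $\tilde O(mn^{2/3})$. No such elementary trick reaches small $\Delta^\star$. That is why the paper splits on the number of components rather than on the degree, and uses an argument that is uniform in $\Delta^\star$.
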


This improves both the classical $\tilde{O}(mn)$-time algorithm of \cite{furer1994approximating} and the recent $\tilde{O}(mn^{3/4})$ bound of \cite{BFW26}. See \Cref{tab:literature} for a comparison with previous results.

\setlength{\arrayrulewidth}{0.1mm}
\setlength{\tabcolsep}{5pt}
\renewcommand{\arraystretch}{1.3}
\begin{table}[h]
\footnotesize
\centering
\begin{tabular}{|c|c|c|c|}
\hline
 & Maximum Degree & Time & Note\tabularnewline
\hline
\hline
\cite{furer1990nc} & $O(\Delta^{\star}\log n)$ & $\text{poly}(n)$ & \tabularnewline
\hline
\cite{FR92} & $O(\Delta^{\star}+\log n)$ & $\text{poly}(n)$ & \tabularnewline
\hline
\cite{FR92,furer1994approximating} & $\Delta^{\star}+1$ & $O(mn)$ & \tabularnewline
\hline
\cite{DHZ20} & $\left\lceil (1+\epsilon)\Delta^{\star}\right\rceil +O(\frac{\log n}{\epsilon^{2}})$ & $\tilde{O}(m/\epsilon^{7})$ & \tabularnewline
\hline
\multirow{2}{*}{\cite{CQT21}} & $\left\lceil (1+\epsilon)\Delta^{\star}\right\rceil +1$ & $\tilde{O}(m/\epsilon^{2})$ & Compute only a value or a sparsifier\tabularnewline
\cline{2-4}
 & $\left\lceil (1+\epsilon)\Delta^{\star}\right\rceil +2$ & $\tilde{O}(n^{2}/\epsilon^{2})$ & Run \cite{furer1994approximating} on \cite{CQT21} sparsifier\tabularnewline
\hline
\multirow{2}{*}{\cite{BFW26}} & $\Delta^{\star}+1$ & $\tilde{O}(mn^{3/4})$ & \tabularnewline
\cline{2-4}
 & $\left\lceil (1+\epsilon)\Delta^{\star}\right\rceil +2$ & $\tilde{O}((m+n^{7/4})/\epsilon^{2})$ & Run \cite{BFW26} on \cite{CQT21} sparsifier\tabularnewline
\hline
\multirow{2}{*}{\textbf{Ours}} & $\left\lceil (1+\epsilon)\Delta^{\star}\right\rceil +1$ & $\tilde{O}(m/\epsilon^{2})$ & \tabularnewline
\cline{2-4}
 & $\Delta^{\star}+1$ & $\tilde{O}(mn^{2/3})$ & \tabularnewline
\hline
\end{tabular}
\caption{Minimum degree spanning tree algorithms in the literature.\label{tab:literature}}
\end{table}

\paragraph{Broader Context.}

MDST is arguably the most basic problem in degree-constrained network design. The minimum-cost variant of MDST \cite{RMRRH93,KR00,CRRT05,Goemans06,singh2015approximating} and its generalizations---including bounded-degree Steiner tree, $k$-connected subgraphs \cite{lau2013additive,fukunaga2015iterative}, arborescences \cite{bansal2008additive}, and dijoins \cite{kiraly2012degree}---have played a central role in the development of iterative rounding techniques \cite{LauRS11}, which usually take large polynomial time. In contrast, the fast-algorithm side of this area remains far less developed. Our work suggests that strong degree guarantees in degree-constrained network design may be compatible with near-linear-time algorithms.

\paragraph{Our Unified Framework.}

Both \Cref{thm:main1,thm:main2} follow from the same core subroutine. Suppose we are given a forest $\mathcal{F}$ of maximum degree at most $k$ with $f$ connected components. For a parameter $H\ge 1$, the goal is to reduce $f$ quickly by repeatedly using \emph{augmenting chains} of length at most $H$, where an augmenting chain is analogous to an augmenting path in the maximum flow problem.
We show, formally in \Cref{lem:main}, how to reduce the number of components by $\Omega(f/H)$ in $\tilde{O}(mH)$ time.

This improves the recent subroutine of \cite{BFW26}, which in $\tilde{O}(mH)$ time reduces the number of components by only $\Omega(f/(\theta H))$, where $\theta=n/f$ is the average component size. At a high level, both algorithms rely on two ingredients: a progress step that merges forest components via an augmenting chain, and an auxiliary step that \emph{merges reducible regions} to help expose such chains. Informally, a region is reducible if the forest can be locally rearranged inside the region so as to make any prescribed node in the region slack, without violating the degree bound.

The algorithm of \cite{BFW26} performs auxiliary merging aggressively as long as reducible regions have size at most $O(\theta)$, and only then extracts augmenting chains of length at most $H$. Each augmenting chain renders $O(H\theta)$ vertices unusable, so their algorithm reduces the number of components by only $\Omega(f/(\theta H))$.

\begin{remark}
    In Appendix \ref{appendix-previous-work-barriers}, we provide a comprehensive explanation of the barriers of the previous works.
    Specifically, in Appendix \ref{appendix:hard-instance}, we show a concrete example that \cite{BFW26} inherently takes super-linear time, even if we allow $O(\Delta^{\star})$ degree bound.
    The intuition of our main object (augmenting chain) and how it improves \cite{BFW26}, is explained in Appendix \ref{sec:techniques-overview}.
\end{remark}

In contrast, our algorithm prioritizes progress: for each threshold $t=1,2,\dots,H$, it first searches for augmenting chains of length at most $t$, and invokes merging only when this search fails. Although chains found in round $t$ may render some regions unusable, we can restore all regions at the start of round $t+1$ in $\tilde{O}(m)$ time. Our new analysis, based on a relaxed notion of augmenting chains called \emph{pseudo-chains}, shows that after $H$ rounds we reduce the number of forest components by $\Omega(f/H)$ in $\tilde{O}(mH)$ total time.

\paragraph{Roadmap for the rest of the paper.}

The paper is split into two parts. \Cref{part:EA} presents an extended abstract of the main ideas behind our unified framework. Its goal is to explain the proof of \Cref{lem:main} -- the key ingredient behind both of our main results -- by introducing the main objects and explaining the potential-based analysis. \Cref{part:full} provides the full technical development, including the precise definitions, the detailed implementation of the augmenting-chain subroutines, and the complete proofs of correctness and running time.

\newpage
\part{Extended Abstract}\label{part:EA}

\section{Notations and Preliminaries}
\label{new:prelim}
Let $G = (V, E)$ be the connected input graph with $|V| = n$ nodes and $|E| = m$ edges.  For any subgraph $K \subseteq G$, we let $V(K)$ and $E(K)$ respectively denote the set of nodes and the set of edges of $K$. 
We denote by $\Delta^\star$ and $k$, the maximum degree of the optimal spanning tree of $G$ and the spanning tree of $G$ returned by our algorithm, respectively. Thus, either $k := \lceil (1+\epsilon) \cdot \Delta^\star \rceil + 1$ for some arbitrarily small constant $\epsilon \in (0,1)$, or $k := \Delta^\star+1$ (see \Cref{thm:main1} and \Cref{thm:main2}). Throughout this extended abstract, whenever we refer to a forest $\calF$ of $G$, we assume that $V(\calF) = V$. We say that a node $v \in V$ is {\bf $k$-saturated} w.r.t.~a forest $\calF$ of $G$ if $\deg_{\calF}(v) = k$, and {\bf $k$-slack} if $\deg_{\calF}(v) < k$, where $\deg_{\calF}(v)$ denotes the degree of $v$ in $\calF$. Furthermore, we say that a forest $\calF$ of $G$ is {\bf $k$-feasible} iff $\deg_{\calF}(v) \leq k$ for all $v \in V$, i.e., every node is either saturated or slack w.r.t.~$\calF$. Throughout this paper, \(k\) will be fixed by the algorithm, and hence we will usually omit \(k\) from the terminology. We want to compute a feasible spanning tree of $G$.

\begin{remark}
    \label{rm:binary-search} Throughout this extended abstract, we assume that the value of $\Delta^\star$ is known to the algorithm. We can get rid of the assumption by performing a simple binary search, which incurs an extra $O(\log \Delta^\star)$ factor in the total runtime (see Appendix~\ref{appendix:unknown-Delta} for details).
\end{remark}

Consider any forest $\calF$ of $G$, and any pair of nodes $u, v \in V$ that are in the same connected component of $\calF$. Then, we let $P_{u, v}^{\calF}$ denote the unique path in $\calF$ connecting $u$ and $v$. %
For any subset of nodes $B \subseteq V$, we refer to each connected component of $\calF[V- B]$ as an {\bf $(\calF, B)$-region}, where $\calF[U]$ denotes the subgraph of $\calF$ induced by a subset of nodes $U \subseteq V$. 
Now, consider any node $v \in V$. Let $N_{\calF}(v) \subseteq V$ denote the set of neighbors of $v$ in $\calF$ (excluding the node $v$ itself). We define the {\bf boundary} of the node $v$ w.r.t.~$(\calF, B)$, denoted by $\bound_{(\calF, B)}(v)$, to be the collection of all $(\calF, B)$-regions $T$ which satisfy $V(T) \cap (N_{\calF}(v) \cup \{v\}) \neq \emptyset$. Finally, for any simple path $P$ in $\calF$, we define the {\bf boundary} of $P$ w.r.t.~$(\calF, B)$ as $\bound_{(\calF, B)}(P) := \bigcup_{v \in V(P)} \bound_{(F, B)}(v)$.

\paragraph{Reducible Nodes.}

Let $T$ be any (connected) subtree of a feasible forest $\calF$ of $G$.
Consider the following procedure, which is essentially the algorithm of \cite{FR92} running on $T$.
We initialize the set $B \subseteq V(T)$ as the collection of all saturated nodes (w.r.t.~$\calF$) in $T$. Subsequently, as long as there exists a non-forest edge $(x, y) \in E - E(\calF)$ with $x, y \in V(T) - B$, we set $B \leftarrow B - V\left(P_{x,y}^{\calF}\right)$. %

We say that $T$ is {\bf reducible} iff $B = \emptyset$ at the end of the above procedure.  Morally, if $T$ is reducible, then the following condition holds for every node $v \in V(T)$:  We can locally change $\calF$ by inserting/deleting some edges inside the subgraph of $G$ induced by $V(T)$, to achieve another feasible forest $\calF'$ where $v$ is slack.
This key property is summarized in \Cref{new:lem:degree-reduction}.

\iffalse
At the end of the above procedure, we refer to every $(\calF, )$

mark all of the saturated nodes w.r.t.~$T$  as  \textit{bad nodes}, and mark all $(T,B)$-regions as \textit{slack components} where $B$ is the set of bad nodes.
Then, as long as there exists a non-forest edge $(x,y) \in E(G[T]) - E(\calF)$ between two nodes $x$ and $y$ that are contained in two different slack components, we consider the path $P_{x,y}^T$ between $x$ and $y$ in $T$, and {\em merge} all of the slack components {\em hitting} or {\em adjacent to} $P_{x,y}^T$ together with all bad nodes on $P_{x,y}^T$, to form a new larger slack component.
Throughout this process, each slack component remains a sub-tree of $T$, and different slack components remain mutually node-disjoint separated by bad nodes.

\begin{definition}
    At the end of the above procedure, we refer to every node $v \in V(T)$ inside a  \textbf{reducible} node w.r.t.~$T$, and refer to every remaining bad node as a \textbf{non-reducible} node w.r.t.~$T$.
\end{definition}
\fi

\begin{lemma}[Degree Reduction: \cite{FR92}, and Lemma 2.1 in the arXiv version of \cite{BFW26}]\label{new:lem:degree-reduction}     
    There is a {\bf degree-reduction} subroutine which works as follows.
    Consider any feasible forest $\calF$ of $G$, any reducible subtree $C$ of $\calF$, and any node $u \in V(C)$.
    The subroutine takes $(C, u)$ as input, and modifies $\calF$ by inserting/deleting some edges $e \in E$ whose {\em both} endpoints lie in $V(C)$. 
    The subroutine runs in $\tilde{O}\left(\sum_{x \in V(C)} \deg_G(x)\right)$ time,\footnote{More specifically, we need $\tilde{O}\left(\sum_{x \in V(C)} \deg_G(x)\right)$ time to identify the edge set $E_G(C) := \{ (x, y) \in E : x, y \in V(C) \}$ consisting of the edges in $G$ with both endpoints in $V(C)$.
    After having $E_G(C)$ explicitly, the running time of the subroutine is $\tilde{O}(|E_G(C)|)$.} and returns an updated forest $\calF^+$ that guarantees:
    \begin{enumerate}
        \item $\deg_{\calF^+}(u) \leq k-1$.
        \item $\calF^+$ remains a feasible forest. 
    \end{enumerate}
\end{lemma}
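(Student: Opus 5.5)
We prove the lemma by replaying the Fürer--Raghavachari reduction procedure on $C$, remembering for each unmarked node the edge that unmarked it, and then undoing these events backwards from $u$.

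\textbf{Recording the history.} Run the procedure defining reducibility on $C$ with $B_0$ the saturated nodes of $C$. Each time an edge $(x,y)\in E-E(\calF)$ with $x,y\in V(C)-B$ is processed, every $w\in B\cap V(P^{\calF}_{x,y})$ is removed from $B$. We store the \emph{witness} $e(w):=(x,y)$ and the time stamp of the removal. Since $C$ is reducible, every node of $C$ ends up with either no witness (it was slack from the start) or a witness.

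\textbf{Implementation.} To make the procedure run in near-linear time we keep the current $(\calF,B)$-regions inside $C$ in a union--find structure. An edge $(x,y)$ is useful only when $x$ and $y$ lie in distinct regions; if they lie in the same region, the path $P^{\calF}_{x,y}$ is contained in that region and contains no node of $B$. When a useful edge is found, we walk along $P^{\calF}_{x,y}$ and merge the regions it touches, including the newly unmarked nodes. To keep the walk efficient we contract each region to a single representative, so each unmarked node is charged once. Each non-forest edge is scanned from a queue whenever one of its endpoints' regions changes; with standard smaller-into-larger bookkeeping this costs $\tilde O(|E_G(C)|)$ in total.

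\textbf{Recursive degree reduction.} Define $\textsc{Reduce}(w)$, which makes $w$ slack:
\begin{enumerate}
\item If $w$ is already slack, return.
\item Otherwise let $e(w)=(x,y)$. Call $\textsc{Reduce}(x)$ and $\textsc{Reduce}(y)$.
\item Insert $(x,y)$ into $\calF$ and delete one of the two forest edges of $P^{\calF}_{x,y}$ incident to $w$.
\end{enumerate}
In step 3 the degree of $w$ drops by one, the degrees of $x$ and $y$ rise by one (they were made slack in step 2), and all other degrees are unchanged. Hence feasibility is preserved, and the result is still a forest because we swap an edge on the cycle created by $(x,y)$.

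\textbf{Main obstacle: validity of the recursion.} The recursive calls must not invalidate $e(w)$: we need $x$ and $y$ still connected through a path that contains $w$. We argue by induction on the time stamps. The nodes $x$ and $y$ became unmarked strictly before $w$, and the recursion on them only touches edges inside regions that had already formed before $w$'s time stamp. These regions are disjoint from $w$ and are attached to $P_{x,y}$ only at its endpoints' regions. So the changes are confined to the $x$-side and $y$-side regions, and $w$ stays on the $x$--$y$ tree path.

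Two further points need care:
\begin{itemize}
\item \emph{The two calls do not interfere.} $\textsc{Reduce}(y)$ must not re-saturate $x$. This holds because the $x$-side and $y$-side regions are distinct at that time stamp.
\item \emph{Repeated work.} Each node can be reduced at most once per top-level call, since it stays slack afterward except when it is the endpoint being raised, and that endpoint was prepared as slack.
\end{itemize}
With this invariant, the total work is $\tilde O(|E_G(C)|)$ plus the cost of extracting $E_G(C)$. Only edges with both endpoints in $V(C)$ are changed, and $\textsc{Reduce}(u)$ leaves $\deg_{\calF^+}(u)\le k-1$ with $\calF^+$ feasible, as the lemma requires.
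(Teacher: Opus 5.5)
Your proposal is correct and is the standard Fürer--Raghavachari unwinding argument: record a witness edge for each unmarked node, then reduce recursively, using that at $w$'s time stamp the regions of $x$ and $y$ are disjoint from each other and from $w$. This is exactly the argument the paper relies on when it cites \cite{FR92} and Lemma 2.1 of \cite{BFW26} instead of reproving the lemma. One tightening: the real reason no node is reduced twice is this branch-disjointness combined with the fact that nodes unmarked in the same step end up in a common region. That same fact is also why no witness edge, which all nodes unmarked together share, is ever inserted twice. The slackness remark in your last bullet does not give this, since a node raised by its parent may be saturated again.
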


\section{Our Framework}
\label{sec:building-blocks}

The main contribution of our paper is a unified algorithm for \Cref{lem:main}.
This theorem holds for both set of the following parameters.
\begin{equation}\label{eq:param}
    (k,H) := \left(\lceil (1+\varepsilon) \cdot \Delta^\star \rceil + 1, \Theta\left(\log_{1+\varepsilon} n\right)\right) \text{ or } (k,H) := \left(\Delta^\star + 1, \Theta(n/f)\right).
\end{equation}

\begin{theorem}\label{lem:main}
    There exists a deterministic algorithm that, given a feasible forest $\calF$ of $G$ with $f = \Omega(1)$ components, returns another feasible forest with at least $\Omega(f/H)$ many fewer components. The algorithm takes $\tilde{O}(m  H)$ time.
    Here, $k$ and $H$ are defined as in \Cref{eq:param}.
\end{theorem}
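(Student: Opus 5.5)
We would prove \Cref{lem:main} with a phased algorithm driven by a potential argument, in the spirit of F\"urer--Raghavachari but with the order of operations reversed relative to \cite{BFW26}.

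The algorithm maintains a configuration $(\calF, B, \calC)$. Here $B$ is a set of \emph{bad} (saturated, currently non-reducible) nodes, and $\calC$ is the set of $(\calF,B)$-regions, each of which is a reducible subtree. Initially $B$ is the set of saturated nodes, and the regions are the components of $\calF[V-B]$; these are trivially reducible, since every node in them is slack. We then run rounds $t=1,\dots,H$.

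In round $t$ we repeatedly search, by a layered BFS over regions, for an \emph{augmenting chain} of length at most $t$. Such a chain is a sequence of non-forest edges $e_1,\dots,e_\ell$. The edge $e_1$ joins two different trees of $\calF$. Each later $e_{i+1}$ lies inside one tree and its forest path passes through a bad node $b_i$ that the previous edge $e_i$ would oversaturate. Inserting the $e_i$'s, deleting a forest edge at each $b_i$, and invoking \Cref{new:lem:degree-reduction} on the regions containing the endpoints yields a feasible forest with one fewer component. Each chain touches $O(t)$ regions. We mark the boundary of the chain, $\bound(P)$ over its paths, as \emph{used*} and skip it for the rest of the round, so that chains found in the same round are disjoint and can be applied simultaneously.

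When no chain of length at most $t$ exists among unused regions, we perform \emph{merging*}. For each non-forest edge $(x,y)$ whose endpoints lie in distinct regions of the same tree, we remove from $B$ the bad nodes on $P^{\calF}_{x,y}$, exactly as in the reducibility procedure. This keeps every region reducible. At the start of round $t+1$ we rebuild $B$ and $\calC$ from scratch on the new forest. This takes $\tilde O(m)$ time by recomputing reducibility with a union--find sweep. For the running time, each round consists of $\tilde O(m)$ work for the BFS, the merges, and the degree-reduction calls, since regions used by distinct chains are disjoint. So the total is $\tilde O(mH)$.

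The main obstacle is proving that we gain $\Omega(f/H)$ components in total. Our plan is a charging argument based on \emph{pseudo-chains}: chains that may reuse regions already spent in the current round. Suppose round $t$ ends with only $g$ chains found. Then every tree still has a pseudo-chain witness of length about $t$ that is blocked by used regions, and we charge it to one of the $g$ chains, each of which blocks $O(t)$ regions.

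We then use the standard counting lemma: since $\Delta^\star$-feasible spanning trees exist, if $|B|$ bad nodes separate $\calF$ into regions with no cross edges, then $f' \le |B|(\Delta^\star-1)+\dots$ regions must be connected by optimal-tree edges. With $k \ge (1+\epsilon)\Delta^\star+1$, this forces $|B_{t+1}|$-layers to grow geometrically, roughly by a factor $(1+\epsilon)$ per layer. So after $H=\Theta(\log_{1+\epsilon}n)$ layers either short chains exist or $n$ is exceeded. For $k=\Delta^\star+1$ the growth is additive, at least $f$ per layer, which gives $H=\Theta(n/f)$.

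Combining the two, if every round produced fewer than $cf/H$ chains, then the blocked pseudo-chains would force the layered bad-set count to violate this growth bound. Hence the $H$ rounds together remove $\Omega(f/H)$ components. Making this pseudo-chain charging precise, in particular showing that the restoration at each round does not destroy the accumulated layer structure, is the step we expect to be hardest.
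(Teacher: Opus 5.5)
Your outline (rounds $t=1,\dots,H$, chains of length at most $t$, pseudo-chain witnesses, growth from counting against $T^\star$) has the right shape. However, the step that carries the theorem, namely that the $H$ rounds remove $\Omega(f/H)$ components, is only sketched, and three of your design choices would break it.

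(i) You mark all of $\bound(P)$ over a chain's forest paths as used, yet you charge as if each chain blocks $O(t)$ regions. A single chain of length $2$ whose path $P^{\calF}_{w_1,z_2}$ passes through many junctions blocks a number of regions unrelated to $t$, up to $\Theta(n)$. The paper never blocks the path boundaries of chains. Chains are applied one at a time to the current forest, and an applied chain only degree-reduces $h+2$ regions and changes at most $2h+1$ critical edges, so it damages only $O(H)$ node-disjoint tails (\Cref{full:claim:number-of-destroys}).

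(ii) You merge in a batch once no \emph{unblocked} chain remains. A merge whose path crosses the root of a witness's tail can then erase that witness although no chain was applied. Moreover, one used region in a shared prefix blocks every witness extending it, and nothing bounds this loss per applied chain. The paper processes effective edges one at a time: it calls $\findChain(t,(u,v))$ first and merges or touches only on failure. Using monotone levels (\Cref{full:claim:levels-are-bounded,full:claim:destroying-is-improving}), it shows that any merging or touching edge that would damage the tail of a potential stable chain has the tail's root $v^\star$ on its path. Since $\backSearch$ from $v^\star$ cannot fail, such an edge is in fact improving. Prefixes are safe because they are stable by induction.

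(iii) Rebuilding $B$ and $\calC$ from scratch discards the persistent structure the potential depends on. A full reducibility closure can merge away roots $v^\star$ without applying any chain, and resetting to the saturated nodes undoes all merges. The paper instead records the set $D$ of nodes that were saturated when they became touched, sets $B_{t+1}^{\init}=B_t^{\final}\cup D_t^{\final}$, and declares all regions untouched. This preserves untouched regions and their junctions. Together with the end-of-round closure (no effective edge remains), it forces every optimal-tree edge leaving an untouched region of a tail to end in $B_{t+1}^{\init}$ (\Cref{full:claim:B-out-in-B-init}). Since applied chains lower the degrees of some junctions, saturation of bad nodes is not an invariant, and you also need the deficit bound of \Cref{full:claim:deficit-bound}.

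The search step has its own gap. Your chain description keeps Property~5 of \Cref{full:def:aug-chain} ($z_i$ lies on $P^{\calF}_{w_i,z_{i+1}}$) but omits Property~6, which makes $z_i$ separate $w_i$ from all of $z_{i+1},w_{i+1},\dots,z_{h+1}$. Without it, the swaps at higher indices, which are performed first, can reroute the tree so that inserting $(w_i,z_{i+1})$ closes a cycle avoiding $(z_i,y_i)$, and the output is not a forest. BFS layers could enforce this via the shortcut argument of \Cref{full:claim:return-chain}, but only while they are up to date. They go stale as soon as a chain is applied, and recomputing them per chain is not $\tilde O(m)$ per round.

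The paper's persistent monotone levels with the minimum-level rule give three properties at once:
\begin{itemize}
\item validity of the returned chains;
\item $\tilde O(mH)$ total time, since each failed $\backSearch$ raises a level bounded by $H+1$;
\item the no-miss property at $v^\star$ used in (ii).
\end{itemize}

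Finally, the growth is not a statement about layers of bad nodes. It is the recursion $|M(t+1)|\ge\frac{k-2}{\Delta^\star-1}|M(t)|+f/5$ on tail-independent sets of stable chains (\Cref{full:lem:potential}). These sets are built from optimal-tree edges leaving the tails and from singly-attached $(\calF^{\init}_{t+1},B^{\out}_t)$-regions. The additive $f/5$ term is what makes the case $k=\Delta^\star+1$ work.
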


We will combine the theorem above with a basic subroutine that reduces the number of forest components by an additive one in near-linear time, as summarized in \Cref{lem:reduce-one}.

\begin{lemma}[\cite{FR92}, and Lemma 3.1  \cite{BFW26}]\label{lem:reduce-one} For any $k \ge \Delta^\star + 1$, there exists a deterministic algorithm that, given a feasible forest $\calF$ of $G$ with at least two components, returns another feasible forest with one fewer component. The algorithm runs in $\tilde{O}(m)$ time.
\end{lemma}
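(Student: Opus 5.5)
The plan is to run the Fürer–Raghavachari blocking procedure once, globally over all components of $\calF$. It either exposes a non-forest edge joining two different trees of $\calF$ whose endpoints can both be made slack locally, or it certifies $\Delta^\star \ge k$, which contradicts $k\ge \Delta^\star+1$.

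\textbf{Procedure.} Initialize $B$ as the set of saturated nodes of $\calF$. Repeatedly scan the non-forest edges $(x,y)\in E-E(\calF)$ with $x,y\notin B$.
\begin{enumerate}
\item If $x$ and $y$ lie in the same tree of $\calF$, set $B\leftarrow B-V(P^{\calF}_{x,y})$. This merges all $(\calF,B)$-regions meeting the path into one region.
\item If $x$ and $y$ lie in different trees of $\calF$, stop and go to the merge step.
\end{enumerate}

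\textbf{Merge step.} Let $C_x$ and $C_y$ be the $(\calF,B)$-regions of $x$ and $y$. Restricting the procedure to $C_x$ unmarks every node of $C_x$: each saturated node of $C_x$ was removed from $B$ by a path whose endpoints are in $C_x$. Hence $C_x$ is reducible, and likewise $C_y$. Apply \Cref{new:lem:degree-reduction} to $(C_x,x)$ and then to $(C_y,y)$. Both calls only touch edges inside $V(C_x)$ and $V(C_y)$ respectively, which lie in distinct trees. So the trees and feasibility are preserved, and afterwards both $x$ and $y$ are slack. Now add $(x,y)$. The result is a feasible forest with one fewer component.

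\textbf{Certificate when no such edge is found.} If the procedure terminates without finding such an edge, every non-forest edge of $G$ between nodes outside $B$ lies inside a single region. So the components of $G-B$ are exactly the regions of $\calF-B$. Every $b\in B$ still has $\deg_{\calF}(b)=k$, and $\calF$ has at most $|B|-1$ edges inside $B$. Counting components gives
\[
\#\text{comp}(G-B)\;=\;f-|B|+\textstyle\sum_{b\in B}\deg_{\calF}(b)-e_{\calF}(B)\;\ge\; f+(k-2)|B|+1.
\]
Any spanning tree $T$ of $G$ must connect these components through $B$, so
\[
\textstyle\sum_{b\in B}\deg_T(b)\;\ge\;\#\text{comp}(G-B)+|B|-1\;\ge\; f+(k-1)|B|.
\]
Since $f\ge 2$, some $b\in B$ has $\deg_T(b)>k-1$, i.e.\ $\deg_T(b)\ge k$. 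This gives $\Delta^\star\ge k$, contradicting $k\ge\Delta^\star+1$. So the merge step is always reached. (If $B=\emptyset$ at termination, the same count shows $G$ would be disconnected, which is impossible.)

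\textbf{Running time.} This is the step I expect to be the main technical obstacle. Each node leaves $B$ at most once, but walking $P^{\calF}_{x,y}$ naively would revisit already-slack stretches. I would root every tree of $\calF$ and precompute depths and LCA structures. I would maintain the regions with union--find, each region remembering its topmost node. To process an edge $(x,y)$, climb from $x$ and from $y$ toward their LCA by jumping from region tops to parents. Every step of the climb either unmarks a $B$-node or merges two regions. That charges $O(n)$ operations in total plus $O(1)$ per scanned edge, i.e.\ $\tilde O(m)$. Each region merge exposes newly slack nodes, so their incident non-forest edges are pushed onto a queue; every edge is examined $O(1)$ times after its endpoints become non-bad. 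The two calls to \Cref{new:lem:degree-reduction} cost $\tilde O(\sum_{v\in V(C_x)\cup V(C_y)}\deg_G(v))=\tilde O(m)$, so the whole algorithm runs in $\tilde O(m)$ time.
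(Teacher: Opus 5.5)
Your proposal is correct and takes essentially the same approach the paper relies on. The paper proves nothing new here: it cites F\"urer--Raghavachari and Lemma~3.1 of BFW26, and its appendix sketches the same argument. That argument runs the F\"urer--Raghavachari blocking procedure, uses the witness counting to guarantee a non-forest edge between two reducible nodes in different trees, degree-reduces both endpoints, and adds the edge. Your global scan with early stopping, the explicit bound $\sum_{b\in B}\deg_T(b)\ge f+(k-1)|B|$, and the union--find/LCA climbing implementation are a more detailed version of that argument, and they check out.
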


\noindent
\textbf{Proof of  \Cref{thm:main1}.}
Fix  $k :=  \lceil (1+\varepsilon)\Delta^\star \rceil + 1$ and $H := \Theta(\log_{1+\varepsilon} n)$.
We initialize an empty forest $\calF$, with $V(\calF) = V$ and $E(\calF) = \emptyset$.
As long as $\calF$ has $f = \Omega(1)$ components, we repeatedly call \Cref{lem:main} to reduce its number of components to 
$f - \Omega(f/H) = f(1-\Omega(\varepsilon/ \log n))$.
Each call takes $\tilde{O}(m \cdot H) = \tilde{O}(m\cdot \varepsilon^{-1})$ time.
After  $O(\varepsilon^{-1}\log^2 n)$ iterations, the number of components of  $\calF$ eventually reduces to $O(1)$.
Then, by repeatedly calling \Cref{lem:reduce-one} $O(1)$ times, we get a feasible spanning tree of $G$.
The total runtime of this algorithm is $\tilde{O}(m\cdot \varepsilon^{-2})$.

\medskip
\noindent
\textbf{Proof of \Cref{thm:main2}.}
Fix  $k := \Delta^\star + 1$ and $H :=  \Theta(n/f)$.
We initialize an empty forest $\calF$, with $V(\calF) = V$ and $E(\calF) = \emptyset$.
As long as $\calF$ has $f \geq n^{2/3}$ components, we repeatedly call \Cref{lem:main} to reduce its number of components to
$f - \Omega(f/H) = f(1-\Omega(n^{-1/3}))$. Thus, each call reduces the number of components by a $(1-\Omega(n^{-1/3}))$ factor, and takes $\tilde{O}(m \cdot H) = \tilde{O}(m n^{1/3})$ time.
After  $O(n^{1/3}\log n)$ iterations, the number of components of $\calF$ eventually reduces to less than $n^{2/3}$.
Then, by repeatedly calling \Cref{lem:reduce-one} at most $n^{2/3}$ times, we obtain a feasible spanning tree of $G$.
The total runtime of the algorithm is $\tilde{O}(m n^{2/3})$.

\medskip
For the rest of the extended abstract, we focus on explaining the proof of \Cref{lem:main}.

\section{Basic Building Blocks}
\label{new:sec:building:block}
In this section, we present an overview of two key objects that underpin our algorithm: a {\bf configuration} (see \Cref{new:def:config}) and  an {\bf augmenting chain} (see \Cref{new:def:simple:aug}).

\begin{definition}
\label{new:def:config}
    A {\bf configuration} of the input graph $G$ is a triple $(\calF, B, \calC)$, such that:
\begin{enumerate}
\item \label{config:1} $\calF$ is a feasible forest of $G$.
\item \label{config:2} $B \subseteq V$ is a subset of nodes. We refer to the nodes in $B$ as {\bf junctions}.
\item \label{config:3} $\calC$ is a subset of the collection of all $(\calF, B)$-regions, satisfying the property  that every $C \in \calC$ is reducible. We refer to the $(\calF, B)$-regions in $\calC$ as {\bf untouched}, and the remaining $(\calF, B)$-regions as {\bf touched}. For every node $v \in V$ that belongs to a touched (resp.~untouched) $(\calF, B)$-region, we refer to $v$ itself as {\bf touched} (resp.~{\bf untouched}). 
\end{enumerate}
\end{definition} 

According to this definition, the set of nodes of the graph is partitioned into junctions, touched nodes, and untouched nodes. 

We first introduce a useful notion before defining the augmenting chain.

\begin{definition}
\label{new:def:effective}
An \emph{ordered} pair of nodes $(u,v)$ is an {\bf effective} edge w.r.t.~a configuration $(\calF, B, \calC)$ iff $(u,v) \in E - E(\calF)$ is a non-forest edge connecting two different $(\calF, B)$-regions, $u$ is untouched, and $v$ is either [untouched] or [touched and slack].
\end{definition}

\input{figures/config}

Below, we say that a node $v \in V$ \textbf{separates} two mutually disjoint node-sets $X, Y \subseteq V - \{v\}$ in a forest $\calF$ iff the following two conditions hold: (i) All the nodes in $X \cup Y \cup \{v\}$ are in the same connected component of $\calF$. (ii) There are two distinct connected components of $\calF - \{v\}$, one containing all the nodes in $X$ and the other containing all the nodes in $Y$.  Here, $\calF - \{v\}$ is the subgraph of $\calF$ obtained by deleting $v$ and all its incident edges from $\calF$.

\begin{definition}
\label{new:def:simple:aug}
Fix any configuration $(\calF,B, \calC)$ and any integer $h\ge0$. An {\bf augmenting chain} of length $h+1$ w.r.t.\ $(\calF,B, \calC)$ is a tuple of nodes $(w_{0},z_{1},w_{1},z_{2},\ldots,z_{h},w_{h},z_{h+1})$ such that: 
\begin{enumerate}
\item \label{new:simple:aug:1} $(w_{h},z_{h+1})$ is an effective edge w.r.t. $(\calF,B,\calC)$.
\item \label{new:simple:aug:2} For every $i\in\{h,\ldots,1\}$, the nodes $z_i, w_{i-1}$ satisfy the following properties:
\begin{enumerate}
\item \label{new:simple:aug:3} $z_{i}\in B$ is a junction that separates $\{w_i\}$ and $\{z_{i+1}, w_{i+1}, \ldots, z_h, w_h, z_{h+1}\}$ in $\calF$.
\item \label{new:simple:aug:4} $(w_{i-1},z_{i}) \in E - E(\calF)$ is a non-forest edge and $w_{i-1}$ is untouched.
\end{enumerate}
\item \label{new:simple:aug:5} All the nodes  $\{z_1, w_1, \ldots, z_h, w_h, z_{h+1}\}$ lie in the same connected component of $F$, whereas $w_0$ lie in a different connected component of $\calF$.
\end{enumerate}
\end{definition}

While the above definition might look daunting at first, there happens to be a relatively intuitive algorithm for generating an augmenting chain w.r.t.~a given configuration, which works as follows.

\begin{wrapper}
We initialize a sequence $\Gamma = \left(w^\old, z^\old \right)$, where the ordered pair $\left(w^\old, z^\old\right)$ is any arbitrary effective edge.
This initial edge corresponds to $(w_h,z_{h+1})$ in the final augmenting chain (see Property \ref{new:simple:aug:1} of \Cref{new:def:simple:aug}), and the construction is downwards from $i = h$ to $i=1$.
Subsequently, we proceed in iterations.

At the start of an iteration, we check if $w^\old$ and $z^\old$ are in different connected components of $\calF$. If yes, then we terminate and return the sequence $\Gamma$ (see Property \ref{new:simple:aug:5} of \Cref{new:def:simple:aug}). Otherwise, we try to find two nodes $w^\new, z^\new$ that satisfy the following conditions: 
\begin{itemize}
\item $z^\new \in B$ and it separates $\{w^\old\}$ and $\Gamma - \{ w^\old\}$ in $\calF$ (see Property \ref{new:simple:aug:3} of \Cref{new:def:simple:aug}).\footnote{Here, we slightly abuse notation and treat $\Gamma$ as a set as opposed to a sequence.}
\item   $(w^\new, z^\new) \in E - E(\calF)$ and $w^\new$ is untouched (see Property \ref{new:simple:aug:4} of \Cref{new:def:simple:aug}).
\end{itemize}
We add $(u^\new, v^\new)$ to the front of  $\Gamma$. The next iteration starts with $u^\old \leftarrow u^\new$, $v^\old \leftarrow v^\new$.
\end{wrapper}

It is easy to see that if the above algorithm terminates, then it returns an augmenting chain $\Gamma$.

Our algorithm will start with an initial configuration, and repeatedly {\bf apply} an augmenting chain. This, in turn, will change the underlying  configuration,  as specified in the lemma below.

\begin{lemma}\label{new:lem:apply-chain}
    Let $\Gamma$ be any augmenting chain of length $h+1$ w.r.t.~a configuration $(\calF, B, \calC)$.
    When we {\bf apply} $\Gamma$ on $(\calF, B, \calC)$, we get  another configuration $(\calF', B', \calC')$ with the following guarantees.
    \begin{enumerate}
        \item\label{new:lem:apply-chain-P1} $\calF'$ has exactly one fewer component than $\calF$.
        
        \item\label{new:lem:apply-chain-P2} $B = B'$. Furthermore, there are at most $2h$ edges $(u,v) \in E(\calF') \oplus E(\calF)$ with $\{u,v\} \cap B \neq \emptyset$. {\em (We refer to such edges as \textbf{critical} edges of $\Gamma$.)}

        \item\label{new:lem:apply-chain-P4} $\calC' \subseteq \calC$ and $|\calC - \calC'| \leq h+2$. 
    \end{enumerate}
\end{lemma}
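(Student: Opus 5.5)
We define the application of $\Gamma=(w_0,z_1,w_1,\ldots,z_h,w_h,z_{h+1})$ explicitly and then check the three properties. The operation is processed from the back of the chain to the front, in the spirit of the Fürer--Raghavachari augmentation.

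\textbf{The operation.}
\begin{enumerate}
\item First we treat the effective edge $(w_h,z_{h+1})$. Since $w_h$ is untouched, its region $C_h\in\calC$ is reducible. We invoke the degree-reduction subroutine of \Cref{new:lem:degree-reduction} on $(C_h,w_h)$ to make $w_h$ slack.
\item If $z_{h+1}$ is untouched, we likewise reduce its region $C_{h+1}\in\calC$ to make $z_{h+1}$ slack. If instead $z_{h+1}$ is touched and slack, nothing is needed.
\item We then insert $(w_h,z_{h+1})$. If $z_h,\dots$ lie in the same tree (the case $h\ge1$), this closes a cycle through the regions and through $z_h$.
\item Iterating for $i=h,\dots,1$, we delete the forest edge of $z_i$ on the path toward $w_i$. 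This edge exists and is on the cycle because $z_i$ separates $w_i$ from the later chain nodes.
\item Deleting that edge frees a degree at $z_i$. We use it by inserting $(w_{i-1},z_i)$, after first reducing the region of $w_{i-1}$ so that $w_{i-1}$ is slack.
\item At $i=1$, the node $w_0$ lies in a different component by Property~\ref{new:simple:aug:5}. So the last insertion merges two components rather than closing a cycle.
\end{enumerate}
The untouched regions involved are thus at most those of $w_0,\dots,w_h$ and $z_{h+1}$, i.e.\ $h+2$ of them. We define $\calC'$ as $\calC$ minus these regions, and $B':=B$.

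\textbf{Checking the properties.}
\begin{itemize}
\item Property~\ref{new:lem:apply-chain-P1}: each middle step adds one edge and removes one edge on the created cycle, so the component count is unchanged. The final insertion joins two trees, so $\calF'$ has exactly one fewer component.
\item Property~\ref{new:lem:apply-chain-P2}, edge count: degree reductions touch only edges with both endpoints in a region, hence disjoint from $B$. The only edges touching $B$ that change are the pairs (deleted tree edge at $z_i$, inserted $(w_{i-1},z_i)$) for $i=1,\dots,h$. That gives at most $2h$ critical edges.
\item Property~\ref{new:lem:apply-chain-P2}, feasibility: each $z_i$ loses one edge and gains one, so its degree is unchanged. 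Each $w_i$ and $z_{h+1}$ gains at most one edge after being made slack.
\item Property~\ref{new:lem:apply-chain-P4}: the untouched regions outside the modified ones have the same node sets and internal forest edges. They are still $(\calF',B)$-regions and remain reducible, since reducibility depends only on the region's internal edges and saturation status. So $\calC'\subseteq\calC$ with $|\calC-\calC'|\le h+2$.
\end{itemize}

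\textbf{Main obstacle.} The delicate step is ensuring that the successive cycle-breaking steps are consistent. After modifications at later indices the forest has changed, and we need the separation property of $z_i$ to still hold in the modified forest, so that the deleted edge lies on the current cycle.

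The plan is to argue via the region structure. All modifications happen inside regions or at junctions $z_j$ with $j>i$, and these lie on the side of $z_i$ containing the later chain nodes. So the $(\calF,B)$-region-level tree structure seen from $z_i$ toward $w_i$ is preserved, and $z_i$ still separates.

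A second subtlety is repeated nodes or regions, e.g.\ two $w$'s in the same untouched region. Reducing the region again after it has already been modified might fail. The first fix to try is to reduce each region only once and argue that the later slack requirement is still met. If that does not work, we would process the chain in the order that makes disjointness automatic. Either way, the count $h+2$ remains an upper bound.
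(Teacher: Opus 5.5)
Your construction matches the paper's: degree-reduce the regions of $w_0,\dots,w_h,z_{h+1}$, swap $(z_i,y_i)$ out for $(w_i,z_{i+1})$ for $i=h,\dots,1$, then add $(w_0,z_1)$. However, there is a genuine gap at the point you call the ``second subtlety''. You never show that $W_0,\dots,W_h,Z_{h+1}$ are pairwise node-disjoint, and the fixes you float do not work. \Cref{new:lem:degree-reduction} makes only the one prescribed node slack. Once a region has been altered (by a reduction, or by one of its nodes gaining a forest edge), it need not be reducible with respect to the new forest. So ``reduce each region only once'' cannot make two of its nodes slack, reducing it a second time is not justified, and ``reorder the chain'' is not an argument. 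Without disjointness, your feasibility check is unsupported: you need each of $w_0,\dots,w_h,z_{h+1}$ to be slack right before it gains exactly one new edge, and a repeated $w$ would gain two. The bound $h+2$ is not what is at stake; feasibility is.

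The missing observation is short, and it is what the paper relies on. Regions contain no junctions, so every $(\calF,B)$-region lies inside a single component of $\calF-\{z_i\}$. Property~\ref{new:simple:aug:3} puts $W_i$ in the component of $w_i$, and puts $W_j$ ($j>i$) and $Z_{h+1}$ in the other one. Property~\ref{new:simple:aug:5} puts $W_0$ in a different tree. Hence the regions are pairwise disjoint, and the same argument shows the $z_i$ are distinct.

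Given this, the paper performs all reductions up front, which is legitimate because they act on disjoint node sets and never touch edges incident to $B$. Your interleaved order also works once disjointness is known. Every swap made before $W_l$ is reduced has both endpoints in $T^{\calF}_{z_{l+1}\leftarrow z_l}$ (for $l=0$, in a different tree), which is disjoint from $W_l$. So $W_l$'s internal edges and node degrees are unchanged and it is still reducible.

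For your ``main obstacle'', the invariant to state is this: before handling index $i$, every swap so far has both endpoints in $T^{\calF}_{z_{i+1}\leftarrow z_i}$, and every other change is internal to a region. Hence the current $w_i$--$z_{i+1}$ path still passes through $z_i$ via $(z_i,y_i)$. Your phrasing that all modifications lie on the later side of $z_i$ is slightly off, since the reduction of $W_i$ is on the $w_i$ side. That is harmless, though, because the reduction is internal to $W_i$ and leaves the edge $(z_i,y_i)$ alone.
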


\paragraph{Applying an Augmenting Chain.} Let  $\Gamma = (w_0, z_1, w_1, \ldots, z_h, w_h, z_{h+1})$ be any augmenting chain w.r.t.~a configuration $(\calF, B, \calC)$. Consider any index $i \in [0, h]$.
Note that $w_i$ is untouched (see Properties \ref{new:simple:aug:1} and \ref{new:simple:aug:4} of \Cref{new:def:simple:aug}). 
Let $W_i$ be the unique untouched $(\calF, B)$-region containing $w_i$. Note that $W_i$ is reducible, by Property \ref{config:3} of \Cref{new:def:config}.
Similarly, if $z_{h+1}$ is untouched, then let $Z_{h+1}$ denote the unique untouched $(\calF, B)$-region containing $z_{h+1}$. Again, Property \ref{config:3} of \Cref{new:def:config} guarantees that $Z_{h+1}$ is reducible. Otherwise, let $Z_{h+1} := \{z_{h+1}\}$,  which is reducible as well since $z_{h+1}$ must be slack (due to Property \ref{new:simple:aug:1} of \Cref{new:def:simple:aug}).
Using Property \ref{new:simple:aug:3}  of \Cref{new:def:simple:aug}, we can show that the subtrees $\{W_0, \ldots, W_h, Z_{h+1}\}$ are mutually node-disjoint.
This motivates the following natural procedure for {\bf applying} $\Gamma$.
\begin{wrapper}
\begin{enumerate}
\item \label{new:apply:1} Apply \Cref{new:lem:degree-reduction} on $(Z_{h+1}, z_{h+1})$, so that $z_{h+1}$ becomes slack and $\calF$ remains a feasible forest. If $z_{h+1}$ was untouched before we performed this step, then set $\calC \leftarrow \calC -\{Z_{h+1}\}$, i.e., the $(\calF, B)$-region $Z_{h+1}$ becomes touched from this point onward.
\item \label{new:apply:2} For each $i \in \{h, \ldots, 0\}$: 
Apply \Cref{new:lem:degree-reduction} on $(W_i, w_i)$, so that $w_i$ becomes slack and $\calF$ remains a feasible forest. Set $\calC \leftarrow \calC - \{W_i\}$, i.e., the $(\calF, B)$-region $W_i$ becomes touched from now on.
\item \label{new:apply:4} For each $i \in \{h, \ldots, 1\}$: 
Set $E(\calF) \leftarrow E(\calF) - \{(z_i, y_i)\} + \{(w_i, z_{i+1})\}$, where $(z_i, y_i)$ is the unique edge incident on $z_i$ on the path $P_{z_i, w_i}^{\calF}$.
\item \label{new:apply:6} Set $E(\calF) \leftarrow E(\calF) + \{(w_0, z_1)\}$.
\end{enumerate}
\end{wrapper}

\paragraph{Proof (Sketch) of \Cref{new:lem:apply-chain}.} Since the subtrees $\{W_0, \ldots, W_h, Z_{h+1}\}$ are reducible and node-disjoint, we can consistently perform the Steps~\ref{new:apply:1} and~\ref{new:apply:2} in the procedure described above. 
At this point, we have a feasible forest such that all the nodes in $\{w_0, w_1, \ldots, w_h, z_{h+1}\}$ are slack.

Next, we focus on Steps~\ref{new:apply:4} and~\ref{new:apply:6} of the above procedure. Using Property \ref{new:simple:aug:3}  of \Cref{new:def:simple:aug}, we can show that these steps do not create any cycle in $\calF$.
They, however, increase the degree (w.r.t.~$\calF$) of each of the nodes in $\{w_0, w_1, \ldots, w_h, z_{h+1}\}$ by one. But, all these nodes are slack at the end of Step~\ref{new:apply:2}.
Thus, we infer that $\calF$  remains a feasible forest by the end of Step~\ref{new:apply:6}. 

Steps~\ref{new:apply:1} and~\ref{new:apply:2} do not change the number of components of $\calF$.
In Steps~\ref{new:apply:4} and~\ref{new:apply:6}, we inserted $h+1$ edges into the forest and removed $h$ edges from it.
Hence, by Property \ref{new:simple:aug:5} of \Cref{new:def:simple:aug}, the number of components of $\calF$ reduces by one as we apply $\Gamma$.
This implies guarantee \ref{new:lem:apply-chain-P1} of \Cref{new:lem:apply-chain}.

Next, the set of edges incident on $B$, which get inserted into/deleted from $\calF$ while applying  $\Gamma$, is given by: $\bigcup_{i=1}^h \{ (w_{i-1}, z_i), (z_i, y_i)\}$. This implies guarantee  \ref{new:lem:apply-chain-P2} of \Cref{new:lem:apply-chain}.

Finally, observe that Step~\ref{new:apply:1} reduces the number of untouched regions by at most one, whereas Step~\ref{new:apply:2} reduces the number of touched regions by $h+1$. In other words, as we apply $\Gamma$, the number of untouched regions decreases by at most an additive $h+2$. This implies guarantee \ref{new:lem:apply-chain-P4} of \Cref{new:lem:apply-chain}.

\begin{figure}[ht]
\caption{
An example of the augmenting chain. Solid lines are forest edges and dashed lines are non-forest edges.
The value of $k$ is $4$.
Untouched and touched regions are depicted with green and red ovals, respectively.
Yellow nodes correspond to junctions.
Green nodes correspond to either [untouched] or [touched and slack] nodes.
(a) The status of the forest before applying the augmenting chain $(w_0, z_1, w_1, z_2, w_2, z_3)$.
(b) The status of the forest after applying the augmenting chain. Regions containing $w_0$, $w_1$, $w_2$ become touched. The degree reduction process on $w_2$ inserted the edge $(u,v)$ and removed $(w_2,u)$. $w_2$ is touched and saturated after applying the chain.}
  \centering
  \subfloat[]{\input{figures/augchain-before}\label{fig:augchain-before}}
  \hfill
  \subfloat[]{\input{figures/augchain-after}\label{fig:augchain-after}}
  \label{fig:atoms-deg-reduction}
\end{figure}

\section{Our Algorithmic Template}
\label{new:sec:template}

\subsection{A Key Subroutine: Applying an Effective Edge}
\label{new:sec:effective}

Consider an effective edge $(u,v)$ w.r.t.~a configuration $(\calF, B, \calC)$ (see \Cref{new:def:effective}), and any integer $t \geq 1$. We use the phrase {\bf applying the effective edge $(u, v)$ up to length $t$} to denote the key subroutine used by our algorithm, which works as follows.

First, we call an abstract subroutine $\findChain(t, (u, v))$, which will be described in \Cref{new:sec:findchain}.
If the call succeeds, then it returns an augmenting chain $\Gamma = (w_0, z_1, w_1, \ldots, z_h, w_h, z_{h+1})$ of length $h+1 \leq t$ such that $w_h = u$ and $z_{h+1} = v$.
In this case, we refer to $(u, v)$ as a {\bf improving edge} w.r.t.~$(\calF, B, \calC)$. 
Otherwise, if the call to $\findChain(t, (u, v))$ fails, then we are guaranteed that $u$ and $v$ lie in the same connected component of $\calF$.\footnote{Since $\Gamma = (u,v)$ itself is an augmenting chain if $u$ and $v$ belong to different connected components of $\calF$, and $\findChain(t,(u,v))$ will not miss it.}
In this latter event, we consider the path $P_{u, v}^{\calF}$, and fork into two sub-cases. If $\bound_{(\calF,B)}(P^\calF_{u,v})$ does \emph{not} contain any touched $(\calF, B)$-regions, then we refer to $(u,v)$ as a \textbf{merging edge}; else we refer to $(u, v)$ as a \textbf{touching edge}.

Moreover,  we define the set
 \begin{equation}\label{new:eq:W}
X := \left( V\left(P^{\calF}_{u,v}\right) \right) \bigcup \left( \bigcup_{C \in \mathcal{Y}} V(C) \right), \text{where } \mathcal{Y} = \bound_{(\calF, B)}\left(P^{\calF}_{u,v}\right)
\end{equation}
which consists of all the nodes that belong either to the path $P_{u, v}^{\calF}$, or to an $(\calF, B)$-region on the boundary of $P_{u, v}^{\calF}$. We now perform the following operations.

\begin{itemize}

\item []
\textbf{If $(u, v)$ is a improving edge}, then
we have successfully found an augmenting chain $\Gamma$ w.r.t.~$(\calF, B, \calC)$.
Accordingly, we apply the augmenting chain $\Gamma$, as described in \Cref{new:sec:building:block}.

\item []
\textbf{If $(u, v)$ is a merging edge}, then we set $\calC \leftarrow \left(\calC - \bound_{(\calF, B)}\left(P^{\calF}_{u,v}\right)\right) + \{\calF\left[X\right]\}$ and $B \leftarrow B - V\!\left(P^{\calF}_{u,v}\right)$, where $X$ is defined as in \Cref{new:eq:W}.

\item []
\textbf{If $(u, v)$ is a touching edge}, then we set $\calC \leftarrow \calC - \bound_{(\calF, B)}\left(P^{\calF}_{u,v}\right)$ and $B \leftarrow B - V\!\left(P_{u, v}^{\calF} \right)$. So, every node that is either on $P_{u, v}^{\calF}$ or belongs to a region in the boundary of $P_{u, v}^{\calF}$ is touched.

\end{itemize}
\begin{figure}[ht]
\caption{
An example of all types of effective edges.
Solid lines are forest edges and dashed lines are non-forest edges.
The value of $k$ is $4$.
Untouched and touched regions are depicted with green and red ovals, respectively.
Yellow nodes correspond to junctions.
Green nodes correspond to either [untouched] or [touched and slack] nodes.
(a) $(u_1,v_1)$ is an improving edge, $(u_2,v_2)$ is a merging edge, and $(u_3,v_3)$ is a touching edge.
(b) The status of the configuration after applying all of these effective edges. Red nodes here correspond to [touched and saturated] nodes.
}
  \centering
  \subfloat[]{\input{figures/effective-edges-before}\label{fig:effective-edges-before}}
  \hfill
  \subfloat[]{\input{figures/effective-edges-after}\label{fig:effective-edges-after}}
  \label{fig:atoms-deg-reduction}
\end{figure}

\begin{claim}\label{new:claim:apply-effective-remain-configuration}
    After applying an effective edge $(u, v)$ up to length $t$ w.r.t.~a configuration $(\calF, B, \calC)$, the updated triple $(\calF, B, \calC)$ continues to remain a valid configuration of the input graph $G = (V, E)$.
\end{claim}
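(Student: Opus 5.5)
The plan is to split into the three cases of how $(u,v)$ is classified: improving, merging, or touching. In each case we check the three properties of \Cref{new:def:config}. Property \ref{config:2} is trivial. The content lies in Property \ref{config:1}, that $\calF$ is feasible, and Property \ref{config:3}, that $\calC$ consists only of $(\calF,B)$-regions and each member is reducible.

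\textbf{Improving edge.} Here the claim is exactly \Cref{new:lem:apply-chain}, which already asserts that applying $\Gamma$ yields a configuration $(\calF',B',\calC')$. Feasibility of $\calF'$ comes from the sketch: the degree-reduction steps of \Cref{new:lem:degree-reduction} keep feasibility and make every endpoint of a newly inserted edge slack. The one point to double-check is that every $C\in\calC'$ is still an $(\calF',B)$-region and is still reducible.

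Since $\calC'\subseteq\calC$ excludes $Z_{h+1}$ and $W_0,\dots,W_h$, I would argue that all forest modifications touch only the following:
\begin{itemize}
\item edges with both endpoints inside these now-touched regions (by \Cref{new:lem:degree-reduction});
\item edges incident to junctions $z_i$, namely $(w_{i-1},z_i)$ and $(z_i,y_i)$, where $y_i$ lies on $P^{\calF}_{z_i,w_i}$;
\item the edge $(w_i,z_{i+1})$, whose endpoints are in touched regions or are junctions.
\end{itemize}
The delicate point is $y_i$: it may be a junction or may lie in some region. Deleting $(z_i,y_i)$ does not change $\calF[V-B]$ in either case. Similarly, adding $(w_i,z_{i+1})$ does not change $\calF[V-B]$ if $z_{i+1}\in B$. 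If $i=h$, the endpoint $z_{h+1}$ is in a non-junction region $Z_{h+1}$, which is touched (or the singleton case). So the induced forest on any untouched region $C\in\calC'$ is unchanged, as are the degrees of its nodes. Hence saturation status and the non-forest edges inside $C$ are unchanged, and the reducibility procedure gives the same outcome. This uses that reducibility depends only on $\calF[V(C)]$, on the degrees in $\calF$ of nodes in $C$, and on $E$ restricted to $V(C)$. I need to confirm that degrees of nodes in $C$ do not change via edges to junctions; indeed no inserted or deleted edge has an endpoint in an untouched region of $\calC'$.

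\textbf{Touching edge.} $\calF$ is unchanged, so feasibility is immediate. Removing $V(P^{\calF}_{u,v})\cap B$ from $B$ merges the regions on $\mathrm{bound}(P)$ together with the path into a single new region. All other regions are unaffected: a region not on the boundary has no node adjacent to a path node, so it stays a separate component of $\calF[V-B']$. We delete every boundary region from $\calC$, so the remaining members of $\calC$ are unchanged regions, still reducible since $\calF$ is unchanged.

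\textbf{Merging edge.} Again $\calF$ is unchanged. The new region is exactly $\calF[X]$: the path nodes together with all adjacent regions form a connected component of $\calF[V-B']$, bounded by junctions not on the path. This is the main step, namely showing $\calF[X]$ is reducible. All regions in $\mathcal Y=\mathrm{bound}(P)$ are untouched, hence reducible. I would run the reducibility procedure on $\calF[X]$ and simulate it:
\begin{enumerate}
\item First process, within each $C\in\mathcal Y$, the edges witnessing its reducibility. Inside $V(C)$ the procedure behaves identically, since paths between nodes of $C$ stay in $C$, so every node of each $C$ is removed from the current $B$. The only nodes left in $B$ are then the saturated path nodes that were junctions.
\item Then use the effective edge $(u,v)$ itself. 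It is a non-forest edge, and both endpoints are now non-$B$: $u$ is untouched, and $v$ is either untouched or touched-and-slack. In the merging case $v$ cannot be in a touched region (it would lie on the boundary), so $v$ is untouched. Processing $(u,v)$ removes all of $V(P_{u,v})$ from $B$, leaving $B=\emptyset$.
\end{enumerate}
I expect the obstacle to be justifying step 1: the procedure's set $B$ is order-independent, or at least monotone, so witnessing edges remain usable. Monotonicity suffices here, since removing more nodes from $B$ only enables more edges.
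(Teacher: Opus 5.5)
Your proposal follows the same route as the paper. It uses the same three-way case split. The improving case is delegated to \Cref{new:lem:apply-chain}, and the touching case is immediate because $\calF$ is unchanged and every boundary region leaves $\calC$. The merging case runs the reducibility procedure on $\calF[X]$: the untouched boundary regions collapse first, and then $(u,v)$ clears the path junctions. You spell out what the paper's sketch calls easy to verify.

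One assertion in your improving case is false as written, though it is easy to repair. You claim that no inserted or deleted edge has an endpoint in a region of $\calC'$, so degrees there are unchanged. But the deleted edge $(z_i,y_i)$ can have $y_i$ in such a region. If $P^{\calF}_{z_i,w_i}$ meets another junction after $y_i$, a non-junction $y_i$ lies in a region other than $W_i$, and that region may be untouched and hence in $\calC'$. In that case $\deg_{\calF}(y_i)$ drops by one.

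The fix is short. Every inserted edge, and every edge changed by degree reduction, has its endpoints among the junctions, $W_0,\dots,W_h$, and the touched region of $z_{h+1}$. So degrees inside regions of $\calC'$ can only decrease, and the initial saturated set of such a region can only shrink. The monotonicity you already invoke in the merging case then shows that reducibility is preserved.
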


\begin{proof}(Sketch)
    If $(u,v)$ is a improving edge, then the claim follows from the guarantee of applying an augmenting chain to a configuration (see \Cref{new:lem:apply-chain}).
    If $(u,v)$ is a merging edge, then it is easy to verify that the subtree $\calF\left[X\right]$ is reducible (see \Cref{new:prelim}), since every region in $\bound_{(\calF,B)}(P_{u, v}^{\calF})$ is untouched by definition.
    Finally, if $(u, v)$ is a touching edge, then we remove all the junctions on the path $P_{u, v}^{\calF}$ from the set $B$, and $\calF\left[X\right]$ becomes a touched $(\calF, B)$-region.
    The claim follows.
\end{proof}

\subsection{High-Level Description of Our Algorithm}

To describe our algorithmic template at a high-level, we need to introduce the concept of a {\bf $D$-avoiding augmenting chain}, as defined below.

\begin{definition}
\label{new:def:avoiding} Let  $\Gamma = (w_0, z_1, w_1, \ldots, z_h, w_h, z_{h+1})$ be any augmenting chain w.r.t.~a configuration $(\calF, B, \calC)$, and $D \subseteq V$ be any subset of nodes. We say that  $\Gamma$ is {\bf $D$-avoiding} iff $z_{h+1} \notin D$.
\end{definition}

Our algorithm proceeds in a sequence of {\bf rounds}. 
Each round breaks into multiple {\bf iterations}, where an iteration consists of applying an effective edge.
Throughout each round, we maintain a set $D$ of {\bf dirty nodes} satisfying Invariant \ref{inv:dirty}.
The reason for introducing $D$ is the time constraints in our algorithm.

\begin{invariant}\label{inv:dirty}
    Throughout each round, the set $D$ of dirty nodes consists of all touched nodes that have been saturated at the time of getting touched.
    Hence, $D = \emptyset$ at the beginning of each round and remains monotonically growing until the end of the round.
\end{invariant}

We only consider effective edges $(u,v)$ satisfying $v \notin D$, which concludes that our subroutine $\findChain(\cdot, \cdot)$ only searches for $D$-avoiding augmenting chains.
Upon applying such an effective edge in an iteration, some new touched regions might emerge.
In that case, we add those nodes $x$ to the set $D$ that satisfy {\em both} these two conditions: (i) $x$ becomes touched during the iteration and (ii) $x$ is saturated (i.e., $\deg_{\calF}(v) = k$) at the end of the iteration.
It is very well possible that some dirty nodes become slack in the future iterations, but we \textbf{keep them dirty} since we do not want to spend time on those nodes again.

\medskip
\noindent
\textbf{Algorithm Description.}
The algorithm runs in $H$ rounds, and maintains a configuration $(\calF, B, \calC)$ and a set $D$ of dirty nodes. 
We use the subscript $t$ to denote the status of any object during round $t \in [1, H]$.
We start with a configuration $(\calF, B, \calC)$, where $\calF$ is a feasible forest $\calF$ of $G$ that is part of the input, $B = \{ v \in V : \deg_{\calF}(v) = k\}$ is the set of saturated nodes in $\calF$, $\calC$ is the collection of all $(\calF, B)$-regions, and $D := \emptyset$.
Next, for consistency of notations, we assume that $\calF_{0}^\final := \calF$, $B_{0}^\final := B$, $\calC_{0}^\final := \calC$, and $D_0^\final := \emptyset$.

\medskip
\noindent
\textbf{Description of Round $t$.}
Below, we explain how to implement a round $t \in [1, H]$.

\medskip
\noindent
\textbf{Initialization of Round $t$.}
At the beginning of round $t$, we let $\calF^\init_t := \calF^\final_{t-1}$, $B^\init_t := B_{t-1}^\final \cup D^\final_{t-1}$, $D_t^\init := \emptyset$, and $\calC^\init_t$ be the collection of \emph{all} $(\calF^\init_t, B^\init_t)$-regions.

\medskip
\noindent
\textbf{Main Body of Round $t$.}
We start with $(\calF_t, B_t, \calC_t) := (\calF^\init_t, B^\init_t, \calC^\init_t)$ and $D_t := D_t^\init$.
Next, we repeatedly apply either a $D_t$-avoiding augmenting chain of length $\leq t$, a merging edge, or a touching edge, via the following {\bf while} loop.

\begin{wrapper}
While there is an effective edge $(u, v)$ w.r.t.~$(\calF_t, B_t, \calC_t)$ such that $v \notin D_t$:
\begin{itemize}
\item  Apply the effective edge $(u, v)$ up to length $t$, and update the configuration $(\calF_t,B_t,\calC_t)$ as per the procedure in \Cref{new:sec:effective}.
\item For every node $x \in V$ that becomes touched due to the previous step:
\begin{itemize}
\item If $x$ is saturated (i.e., $\deg_{\calF_t}(x) = k$),  then set  $D_t \leftarrow D_t \cup \{x\}$.
\end{itemize}
\end{itemize}
\end{wrapper}

When the above {\bf while} loop terminates, we finish round $t$, and we set   $\left(\calF^\final_t, B^\final_t, \calC^\final_t\right) := (\calF_t, B_t, \calC_t)$ and $D_t^\final := D_t$. In other words,  $(\calF^\final_t, B^\final_t, \calC^\final_t)$ and $D_t^\final$ respectively denotes the final configuration and the final state of the set of dirty nodes at the end of round $t$.

\begin{observation}\label{new:obs:cond-end-round}
    At the end of any round $t \in [1, H]$, there does not exist any effective edge $(u, v)$ w.r.t.~the final configuration $\left(\calF_t^\final, B_t^\final, \calC_t^\final\right)$ with $v \notin D_t^\final$.
\end{observation}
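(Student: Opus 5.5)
The statement is essentially the termination condition of the \textbf{while} loop in round $t$. The plan is to argue directly from how the loop is written, together with \Cref{new:claim:apply-effective-remain-configuration}, and then observe that nothing changes between the loop's exit and the assignment of the final objects.

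First, I will check that the loop guard is evaluated against the current state. By \Cref{new:claim:apply-effective-remain-configuration}, after every iteration the triple $(\calF_t,B_t,\calC_t)$ is again a valid configuration. So the guard ``there is an effective edge $(u,v)$ w.r.t.~$(\calF_t,B_t,\calC_t)$ with $v\notin D_t$'' is well defined at the start of each iteration. The same holds at the moment the loop exits, since $D_t$ is updated inside the loop body before the guard is re-checked.

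Second, I will use the exit condition itself. The loop terminates exactly when the guard is false. At that moment there is no effective edge $(u,v)$ w.r.t.~the current $(\calF_t,B_t,\calC_t)$ with $v\notin D_t$. By definition, $(\calF^\final_t,B^\final_t,\calC^\final_t):=(\calF_t,B_t,\calC_t)$ and $D^\final_t:=D_t$ are assigned immediately after exit, with no intervening modification. Substituting these equalities into the negated guard gives the observation.

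The only genuine subtlety is that the argument tacitly presumes the loop terminates, since otherwise there is no ``end of round''. I would note this explicitly and, if needed, argue termination via a potential. Each improving edge reduces the number of forest components by one (\Cref{new:lem:apply-chain}). Each merging or touching edge removes at least one junction from $B_t$: the path $P^{\calF}_{u,v}$ joins two different regions, so it contains a node of $B_t$. This holds because $u,v$ lie in the same $\calF$-component but in different $(\calF,B)$-regions. Improving edges leave $B_t$ unchanged. Hence the quantity $(\#\text{components}) + |B_t|$ strictly decreases in every iteration, while staying nonnegative. This termination point is where I expect any real care to be needed; the rest is immediate.
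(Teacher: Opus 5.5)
Your proof is correct and takes the same route as the paper, which treats this as an immediate consequence of the round's \textbf{while} loop exiting only when its guard fails, with $(\calF^\final_t,B^\final_t,\calC^\final_t)$ and $D^\final_t$ then read off from the current state. Your potential argument for termination, using (number of components of $\calF_t$)${}+|B_t|$, is a sound extra that the paper leaves implicit in its running-time analysis. That analysis also guarantees that each individual $\findChain$ call terminates, which your potential argument takes for granted.
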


\noindent
\textbf{The Output.}
We return $\calF_{H}^\final$ as the final output of our algorithm for \Cref{lem:main}.

\section{Searching for Augmenting Chains}\label{sec:find-chain}

\textbf{Level of Nodes and Edges.}
Throughout our algorithm, for each node $x \in V$, we maintain an integer $\lvl_x \in [1, H+1]$.
Similarly, for each edge $(u,v) \in E$, we maintain an integer $\lvl_{(u,v)} \in [0, H+1]$.
We call $\lvl_x$ and $\lvl_{(u,v)}$ the \textbf{level} of $x$ and $(u,v)$, respectively.
At the start of round $1$, the levels of all nodes and edges are set to $1$ and $0$, respectively.
These levels keep increasing throughout the entire algorithm until the end of round $H$.

\subsection{Description of the Subroutine {\sc Find-Chain}$(h, (u, v))$}
\label{new:sec:findchain}
This subroutine takes as input: an integer $h \in [1, H]$, and an effective edge $(u,v)$ (the input being an effective edge is aligned with Property \ref{new:simple:aug:1} of \Cref{new:def:simple:aug}).
Its goal is to find an augmenting chain $A = (w_0, z_1, \ldots, z_r, u, v)$ of length at most $h$. It works as follows.

\begin{wrapper}
\begin{itemize}
\item [] {\bf While} $\lvl_{(u,v)} \leq h-1$:
\begin{itemize}
\item We call the subroutine $\backSearch$ on the edge $(u,v)$ (see \Cref{new:back:search:edge}).
\item If $\backSearch$ on $(u,v)$ is successful, then it returns an augmenting chain $A = (w_0, z_1, \ldots, z_r, u, v)$ of length at most $h$.
In this case, we terminate and return the same sequence $A$ as the output of $\findChain(h,(u,v))$.
\end{itemize}
\end{itemize}
\end{wrapper}

\subsection{Backward-Search}\label{sec:backward-search}

In this section, we provide a DFS-based procedure called $\backSearch$.
We have two types of $\backSearch$: (i) on an edge $(u,v) \in E - E(\calF)$,  or (ii) on a junction $x \in B$.
The goal of both types of $\backSearch$ are to find a partial sequence of nodes that can potentially be extended to an augmenting chain.

\subsubsection{Description of Backward-Search on an Edge $(y, x) \in E - E(\calF)$}
\label{new:back:search:edge}

The $\backSearch$ on $(y, x)$ tries to find $(w_0,z_1,\ldots,w_r,z_{r+1})$ such that $w_r = y$ and $z_{r+1} = x$.

\medskip
\noindent
\textbf{Description.}
First, we check if $x$ and $y$ are in different connected components of $\calF$ (aligned with Property \ref{new:simple:aug:5} of \Cref{new:def:simple:aug});
If yes, then we return $(y, x)$ and say that  $\backSearch$ on $(y,x)$ is successful.
From now on, we assume that $y$ and $x$ are in the same connected component of $\calF$. We next perform the following operations.

\begin{wrapper}
\begin{itemize}
\item {\bf While} $\min\limits_{x' \in B \cap P^\calF_{y,x}} \lvl_{x'} \leq \lvl_{(y, x)}$:
\begin{itemize}
    \item Let $x^\star \in B \cap P^\calF_{y,x}$ be a junction on the path $P^\calF_{y, x}$ with minimum level $\lvl_{x^\star}$ (These conditions are for achieving Property \ref{new:simple:aug:3} of \Cref{new:def:simple:aug}).
    \item We recursively call $\backSearch$ on  $x^\star$, as per \Cref{new:back:search:node}.
    \item If the call to $\backSearch$ on the junction $x^\star$ is successful, it must return a sequence $(w_0,z_1,w_1,\ldots,w_{r-1}, x^\star)$.
    In this case, we  terminate, and return $(w_0,z_1,w_1,\ldots, w_{r-1}, x^\star, y, x)$ as the output of $\backSearch$ on $(y, x)$. We also say that $\backSearch$ on $(y, x)$ is successful.
\end{itemize}
\item We increment $\lvl_{(x, y)}$ by one, and terminate. We say $\backSearch$ on $(y, x)$ fails.
\end{itemize}
\end{wrapper}

\subsubsection{Description of Backward-Search on a Junction $x \in B$}
\label{new:back:search:node}
The $\backSearch$ on a junction $x$ tries to find $(w_0,z_1,\ldots,w_r,z_{r+1})$ such that $z_{r+1} = x$.

\begin{definition}
\label{new:def:back:edge}
We say that an ordered pair $(y,x)$ is a {\bf backward edge} iff:
\begin{enumerate}\label{new:cond:neighbor}
    \item $(y, x) \in E - E(\calF)$ and $y$ is untouched (aligned with Property \ref{new:simple:aug:4} of \Cref{new:def:simple:aug}), and

    \item $\lvl_{(y,x)} \leq \lvl_x - 1$ (this condition helps ensure Property \ref{new:simple:aug:3} of \Cref{new:def:simple:aug}).
\end{enumerate}
\end{definition}

\noindent Now, the subroutine  $\backSearch$ on $x$ works as follows.

\begin{wrapper}
\begin{itemize}
\item {\bf While} there exists at least one backward edge $(y, x)$:
\begin{itemize}
\item Let $(y^\star, x)$ be a backward edge ending at $x$ with the minimum possible level $\lvl_{(y^\star, x)}$.
\item We recursively call $\backSearch$ on $(y^\star, x)$ (see \Cref{new:back:search:edge}).
\item If $\backSearch$ on $(y^\star, x)$ succeeds, then let $(w_0,z_1,\ldots,w_{r-1},z_r, y^\star, x)$ be the sequence returned by it.
    In this case, we say that $\backSearch$ on $x$ is also successful, we return the same sequence as our output, and terminate.
\end{itemize}
\item  We increment $\lvl_{x}$ by one, and terminate. We say that $\backSearch$ on $x$ fails.
\end{itemize}
\end{wrapper}

\iffalse
We refer to each such edge $(y, x)$ as a \textbf{backward edge}.
If there exists at least one backward edge, we find the backward edge $(y^\star, x)$ with the \textbf{minimum} possible level and make a recursive call to the $\backSearch$ on $(y^\star, x)$.
There are two cases for this call;
\begin{itemize}
    \item It successfully returns $(w_0,z_1,\ldots,w_{r-1},z_r, y^\star, x)$.
    In this case, we say that $\backSearch$ on $x$ is successful, and return th same sequence as the output.

    \item It fails.
    In this case, the level of ${(y^\star, x)}$ has increased upon failure.
    We then consider the next backward edge $(y^\star_\new, x)$ with minimum level.
\end{itemize}
As long as there is at least one backward edge, we repeatedly call $\backSearch$ on such an edge with minimum level.
If at least one recursive call for a backward edge is successful, we say that the $\backSearch$ on $x$ is successful as well.
If we end up in a situation where there is no more backward edge, we say that $\backSearch$ on $x$ has failed, increment the level of $x$, and terminate the $\backSearch$.
\fi

\subsection{Correctness of the Find-Chain Subroutine}\label{new:sec:correctness-find-chain}

\begin{claim}\label{claim:return-chain}
    If $\findChain(h,(u,v))$ returns a sequence, it must be an augmenting chain.
\end{claim}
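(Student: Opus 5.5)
I would prove a stronger statement by induction on the recursion: every successful call of $\backSearch$, on an edge or on a junction, returns a sequence that satisfies the augmenting-chain properties of \Cref{new:def:simple:aug} "from the left". Precisely, take a call on an edge $(y,x)$ that returns $(w_0,z_1,w_1,\ldots,z_r,w_r,z_{r+1})$ with $w_r=y$ and $z_{r+1}=x$. The invariant is:
\begin{itemize}
\item[(a)] for each $i\in\{1,\ldots,r\}$, $z_i\in B$ and $(w_{i-1},z_i)\in E-E(\calF)$ with $w_{i-1}$ untouched;
\item[(b)] each $z_i$ separates $\{w_i\}$ from $\{z_{i+1},\ldots,z_{r+1}\}$ in $\calF$;
\item[(c)] $z_1,\ldots,z_{r+1}$ lie in one component of $\calF$, and $w_0$ lies in a different one.
\end{itemize}
For a call on a junction $x$, the returned sequence ends with some $(y^\star,x)$ and is exactly what the edge call on $(y^\star,x)$ returned, so the same invariant applies. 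Since $\findChain(h,(u,v))$ starts from an effective edge, Property~\ref{new:simple:aug:1} holds automatically, and the invariant then gives Properties~\ref{new:simple:aug:2} and~\ref{new:simple:aug:5}. A separate point is the length bound $\le h$. I would get it from the level structure: each recursion step passes from an edge at level $\ell$ to a junction at level $\le \ell$, and then to a backward edge at level $\le \ell-1$. So levels strictly decrease every two steps, starting from $\lvl_{(u,v)}\le h-1$ and staying $\ge 0$. This bounds the number of junctions in the chain by $h$.

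The induction goes as follows. The base case is a successful edge call that returns $(y,x)$ directly; there $x$ and $y$ lie in different components, which is Property~\ref{new:simple:aug:5} with $h=0$. For the inductive step on an edge call, the recursive call on $x^\star\in B\cap P^\calF_{y,x}$ returns $(w_0,\ldots,w_{r-1},x^\star)$. Appending $(y,x)$ sets $z_r=x^\star$ and $w_r=y$.
\begin{itemize}
\item The conditions on $(w_{r-1},x^\star)$ hold because it was chosen as a backward edge (untouched endpoint, non-forest edge).
\item Since $x^\star$ lies on the path $P^\calF_{y,x}$, it separates $y$ from $x$.
\item Earlier separations are inherited, because the new nodes $y,x$ are in the same component as $x^\star$ and on the "far side".
\end{itemize}

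The main obstacle is this last point: I must show that $z_i$ for $i<r$ also separates $w_i$ from the newly appended $y$ and $x$. This is not automatic, since $P^\calF_{y,x}$ could pass back through earlier junctions. Here the minimality of $\lvl_{x^\star}$ and the condition $\lvl_{(y,x)}\le \lvl_x-1$ for backward edges must be used. The levels along the returned chain strictly decrease from right to left, so every earlier junction $z_i$ has level strictly below that of $x^\star$. If such a $z_i$ lay on $P^\calF_{y,x}$, then $x^\star$ would not have minimum level on the path, a contradiction. Hence the path from $x^\star$ to $\{y,x\}$ avoids $z_i$, and separation propagates. I must also check that the forest is not modified during the search, so that all these facts refer to the same $\calF$.

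Making this level-monotonicity argument rigorous requires recording, as part of the invariant, that $\lvl_{z_1}<\lvl_{z_2}<\cdots$ at the moment of return. It also requires noting that levels only increase on failure, so the levels of successful nodes are not disturbed by the recursion. I expect this bookkeeping to be the delicate part.
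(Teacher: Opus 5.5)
Your proposal is correct and rests on the same key fact as the paper's proof: junction levels strictly increase along the returned chain, so an earlier junction $z_i$ lying on a later path $P^{\calF}_{w_j,z_{j+1}}$ would have beaten $z_j$ in the minimum-level selection, which is exactly the contradiction the paper reaches in its Cases I and II. The only difference is packaging (the paper argues by minimal counterexample on the first node escaping $T^{\calF}_{z_{i+1}\leftarrow z_i}$, while you carry separation forward by induction over the recursion). Your explicit care about when levels are compared (levels are monotone, and a node's level stays fixed while its call is on the stack) makes rigorous a timing point that the paper leaves implicit.
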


\begin{proof}
Assume that $\findChain(h,(u,v))$ has returned $A = (w_0,z_1,\ldots, z_h, w_h, z_{h+1})$, where $w_h = u$ and $z_{h+1} = v$.
While describing $\backSearch$ and $\findChain$, we explicitly mentioned the correspondence between each property of \Cref{new:def:simple:aug} and each condition in the subroutines.
So, all of these properties are straightforward from our descriptions of $\backSearch$ and $\findChain$ except Property \ref{new:simple:aug:3}.
For the rest of the proof, we show Property \ref{new:simple:aug:3}.

According to the description of the subroutines, it is obvious that 
\begin{equation}\label{eq:zi-in-path}
    \text{For all $i \in [1, h]$: $z_i \in P^\calF_{w_i,z_{i+1}}$.}
\end{equation}
Denote by $T_{x \leftarrow y}^\calF$ (for this proof only) the sub-tree of $\calF$ containing $x$ that is derived by removing the unique edge incident to $y$ on the path $P^\calF_{x,y}$ from $\calF$.
Since we already have (\ref{eq:zi-in-path}), an equivalent way of writing Property \ref{new:simple:aug:3} is as follows:
\begin{equation}\label{eq:new-2d}
    \text{For all $i \in [1, h]$: all of the nodes $\{z_{i+1}, w_{i+1}, z_{i+2}, \ldots, w_h, z_{h + 1}\}$ are contained in $T^\calF_{z_{i+1}\leftarrow z_i}$.}
\end{equation}
To see the equivalence, (\ref{eq:zi-in-path}) concludes that $w_i$ must be outside of $T^\calF_{z_{i+1}\leftarrow z_i}$.
Combining this fact and (\ref{eq:new-2d}) automatically implies that $z_i$ separates $\{w_i\}$ and $\{z_{i+1}, w_{i+1}, z_{i+2}, \ldots, w_h, z_{h + 1}\}$.
It remains to show (\ref{eq:new-2d}).

\medskip
\noindent
\textbf{Proof of Property (\ref{eq:new-2d}).}
Assume that (\ref{eq:new-2d}) for $A$ is not satisfied for an index $i \in [1, h]$.
Consider the first node $x$ in the following order $ z_{i+1}, w_{i+1}, z_{i+2}, \ldots, w_h, z_{h + 1}$ that is not contained in $T^\calF_{z_{i+1}\leftarrow z_i}$.
Obviously, $z_{i+1} \in T^\calF_{z_{i+1}\leftarrow z_i}$.
We have the following two cases for $x$.

\medskip
\noindent
\textbf{Case I:} $x = w_j$ for some $j \in [i+1,h]$, which means $w_j \notin V(T^\calF_{z_{i+1} \leftarrow z_i})$.
According to the definition of $x$, we have $z_j \in V(T^\calF_{z_{i+1} \leftarrow z_i})$.
The two conditions $w_j \notin V(T^\calF_{z_{i+1} \leftarrow z_i})$ and  $z_j \in V(T^\calF_{z_{i+1} \leftarrow z_i})$ conclude $z_i \in V(P^\calF_{w_j, z_j})$.
According to (\ref{eq:zi-in-path}), we have that $z_j \in V(P^\calF_{w_j, z_{j+1}})$.
The two conditions $z_i \in V(P^\calF_{w_j, z_j})$ and $z_j \in V(P^\calF_{w_j, z_{j+1}})$ imply that $z_i$ is also contained on $V(P^\calF_{w_j, z_{j+1}})$.
Moreover, the descriptions of $\findChain$ and $\backSearch$ conclude $\lvl_{z_i} \leq \lvl_{(w_i, z_{i+1})} \leq \lvl_{z_{i+1}} - 1 \leq \lvl_{z_j}-1$.

This is in contradiction with $\backSearch$ since it always make a recursive call on a node with \textbf{minimum} possible level.
In other words, $z_i \in V(P^\calF_{w_j, z_{j+1}})$ has priority over $z_j$, and a recursive $\backSearch$ call must be made on $z_i$ instead of $z_j$.

\medskip
\noindent
\textbf{Case II:}
$x = z_j$ for some $j\in[i+2,h+1]$, which means $z_j \notin V(T^\calF_{z_{i+1} \leftarrow z_i})$.
According to the definition of $x$, we have that $w_{j-1} \in V(T^\calF_{z_{i+1} \leftarrow z_i})$.
The two conditions $z_j \notin V(T^\calF_{z_{i+1} \leftarrow z_i})$ and $w_{j-1} \in V(T^\calF_{z_{i+1} \leftarrow z_i})$ conclude $z_i \in V(P^\calF_{w_{j-1}, z_j})$.
Similarly, we get a contradiction since $\lvl_{z_i} < \lvl_{z_j}$.
\end{proof}

\section{Analysis}

We begin by noting that at the start of each round $t \in [1, H]$, $\left(\calF_t^\init, B_t^\init, \calC_t^\init \right)$ is a configuration.
This clearly holds at the start of round $t = 1$.
Subsequently, every time the algorithm updates $(\calF_t, B_t, \calC_t)$ by applying an effective edge, it remains a configuration (see \Cref{new:claim:apply-effective-remain-configuration}).
Hence, the execution of the algorithm is correct and $\left(\calF_{H}^\final, B_{H}^\final, \calC_{H}^\final\right)$ at the end of round $H$ is a configuration as well.
So, the output $\calF_{H}^\final$ is a feasible forest of $G$. 
Accordingly, the rest of this section is devoted towards proving \Cref{new:key:lemma}, which implies \Cref{lem:main}.

\begin{lemma}
    \label{new:key:lemma} The number of connected components of $\calF^\final_H$ is at most the number of connected components of $\calF^\init_1$ minus $\Omega(f/H)$. Further, it takes $\tilde{O}(mH)$ time to implement rounds $\{1, \ldots, H\}$.
\end{lemma}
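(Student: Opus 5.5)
The plan has two parts: a progress bound, argued by contradiction through a counting argument in the style of Fürer--Raghavachari, and a running-time bound, argued through the monotone levels.

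\textbf{Progress.} Suppose that in total fewer than $cf/H$ augmenting chains are applied over all $H$ rounds, for a small constant $c$. Each chain of length at most $t\le H$ touches at most $h+2\le H+1$ untouched regions and has at most $2H$ critical edges (\Cref{new:lem:apply-chain}). Over all rounds it therefore creates only $O(cf)$ dirty nodes and critical edges in total. Touching edges are only triggered by regions that are already touched, so the dirty set $D_t^\final$ in each round is charged to the chains of that round. Hence $|D_t^\final|=O(cf)$ per round, and by pigeonhole some round $t^\star$ has $|D^\final_{t^\star}|$ small.

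At the end of round $t^\star$, \Cref{new:obs:cond-end-round} says there is no effective edge avoiding $D$. Take $S:=B^\final_{t^\star}\cup D^\final_{t^\star}$. Every non-forest edge between distinct $(\calF,S)$-regions whose endpoints are untouched or touched-slack must be blocked. This is the standard witness condition, which gives
\[
\Delta^\star \ \ge\ \frac{|\text{components of }\calF-S| + \ldots - 1}{|S|}.
\]
Count $\sum_{s\in S}\deg_\calF(s)\ge k|B|$ minus the slack contributions. Every $B$ node is saturated when it is created; it loses saturation only via critical edges, and there are $O(cf)$ of those. This yields that $\calF - S$ has at least $(k-2)|S| + f - O(cf)$ components. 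Since $k-1\ge(1+\epsilon)\Delta^\star$, we get $\Delta^\star\ge \frac{(k-2)|S|+f-O(cf)}{|S|}$, which is a contradiction provided $|D|$ is negligible relative to $\epsilon|B|$ or relative to $f$. Here the parameter $H$ enters: with $H=\Theta(\log_{1+\epsilon}n)$, the count $|B^\init_t|=|B^\final_{t-1}\cup D^\final_{t-1}|$ can grow by a $(1+\epsilon)$ factor in at most $O(\log_{1+\epsilon}n)$ rounds. So some round has $|D_t|\le\epsilon|B_t|$, and the witness bound forces $k-1\le(1+\epsilon)\Delta^\star$ to fail. For $k=\Delta^\star+1$ with $H=n/f$, I would instead use $\sum_t|D_t|\le n$ to find a round with $|D_t|\le n/H=O(f)$ small compared to $f$. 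The pseudo-chain notion would be introduced here to argue that chains applied in earlier rounds do not invalidate the witness: a blocked edge at length $t$ stays blocked, or else its level is charged.

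\textbf{Runtime.} Node levels are at most $H+1$ and edge levels are at most $H+1$, and every failed backward search increments a level. So the failures total $O((n+m)H)$ increments. Each successful search is charged to the applied chain and costs $\tilde O(H)$ path queries. I would use link-cut trees or top trees to find the minimum-level junction on $P_{y,x}$ in polylog time, and per-node heaps of backward edges keyed by level. Merging and touching operations on $X$ are charged to nodes leaving $B$ or regions leaving $\calC$, which happens once per round, for $\tilde O(m)$ per round. Degree reductions (\Cref{new:lem:degree-reduction}) are done on untouched regions, each of which is touched once per round, again $\tilde O(m)$. Summing over the $H$ rounds gives $\tilde O(mH)$.

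The main obstacle is the progress argument. It must show that dirty nodes and the interference from applied chains across rounds are small enough, while the witness set $S$ still certifies a lower bound on $\Delta^\star$. I expect the pseudo-chain relaxation, which tolerates the configuration changes made by earlier chains, to be the technical heart of this step.
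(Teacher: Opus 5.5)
Your running-time half is essentially the paper's argument: levels capped at $H+1$, failed backward searches charged to level increments, $\tilde O(H)$ per successful search, top trees, and each region touched at most once per round. You do omit how effective edges are located; the paper uses a queue in which each ordered edge appears $O(1)$ times per round (\Cref{full:claim:remains-effective}).

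The progress half has a genuine gap, at two points.

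\textbf{The witness step is false.} \Cref{new:obs:cond-end-round} only excludes \emph{effective} edges, and an effective edge needs an untouched tail $u$. Non-forest edges between two touched nodes are never constrained. So at the end of a round the untouched regions are isolated in $G-S$, but the pieces into which dirty nodes cut the touched regions are not. Your count of roughly $(k-2)|S|+f$ components of $\calF-S$ includes about $(k-2)|D|$ such pieces. Concretely, take the instance of \Cref{appendix:hard-instance} with $k=2\Delta^\star$. At the end of round $1$ one has $S=\{y\}$, and $\calF-S$ has $3a+1$ components, but $G-S$ has only $a+1$, because the $z_i$ are joined by the edges $(z_i,z_{i+1})$. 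Done correctly, the argument needs $(\Delta^\star-1)|D^\final_t|$ to be small compared with $(k-1-\Delta^\star)|B^\final_t|+f$. That means $|D|\lesssim\epsilon|B|$, or $|D|\lesssim f/\Delta^\star$ in the additive case, so your target $|D_t|\le n/H\approx f/10$ would not suffice.

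\textbf{No round with so few dirty nodes is guaranteed.} The chains control the deficit and the number of touched regions, since a touching edge fuses its boundary into one touched region. They do not control the number of dirty nodes. A single applied chain can touch a region containing many saturated nodes that merging edges absorbed earlier, and every touching edge makes the junctions on its path dirty. Your pigeonholes also fail. First, $|B^\init_{t+1}|=|B^\final_t|+|D^\final_t|$ can be far below $|B^\init_t|$, because merging edges move junctions into untouched regions; so there is no $(1+\epsilon)$-growth budget. Second, a dirty node becomes a junction in the next round and can be made dirty again, so $\sum_t|D^\final_t|\le n$ is unjustified.

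As a sanity check, your argument uses only the end-of-round condition, the deficit bound, and that $\tau$ is small. All of these also hold for a variant that never searches beyond length-one chains. On the instance above, that variant applies at most one chain per round, hence far fewer than $f/(100H)$ in total for large $a$. So any correct proof must use how $\findChain$ searches.

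\textbf{The missing idea} is the paper's multi-round potential, which never needs a round with few dirty nodes. Assuming $\tau<f/(100H)$, the paper maintains tail-independent sets $M(t)$ of stable $t$-chains, with $|M(1)|\ge f/5$. These are pseudo-chains whose junctions, pivot edges and untouched blocks survive until the end of round $H$.
\begin{itemize}
\item The $T^\star$-edges leaving the tails of $M(t)$ end in a set $B^\out_t$ of size at least $(|M(t)|-f/10)/(\Delta^\star-1)$, and $B^\out_t\subseteq B^\final_t\cup D^\final_t=B^\init_{t+1}$. This is exactly how dirty nodes are exploited: they become junctions of the next round.
\item There are at least $(k-2)|B^\out_t|+9f/10$ singly-attached $(\calF^\init_{t+1},B^\out_t)$-regions. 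Each becomes the tail of a candidate $(t+1)$-chain that extends a chain of $M(t)$ through an expanding edge.
\item The decisive step, \Cref{full:claim:destroying-is-improving}, is that a candidate can only be destroyed by an improving edge. This holds because the level invariants of \Cref{full:claim:levels-are-bounded} prevent $\findChain$ from failing on an edge whose tree path passes through the root of a surviving candidate.
\item Each applied chain destroys only $O(H)$ candidates.
\end{itemize}
Together these give $|M(t+1)|\ge\frac{k-2}{\Delta^\star-1}|M(t)|+f/5$. The parameter $H$ enters as the number of rounds after which $|M(H)|>n$. Pseudo-chains are therefore not a patch for a single-round witness; they are the potential itself.
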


For ease of exposition and to highlight our key conceptual insights, in this extended abstract we only provide the proof of \Cref{new:key:lemma} for the following values of the parameters.
\begin{equation}
\label{new:eq:param}
(k, H) := (2\Delta^\star + 1,  \lceil 10\log n \rceil).
\end{equation} 

In the full version, we show that \Cref{new:key:lemma} holds even if the values of $(k, H)$ are given by \Cref{eq:param}. 
We will show that $\calF^\final_H$ has $f/(100 H)$ many fewer components than $\calF_1^\init$.
Towards this end, recall that each time we apply an augmenting chain, the number of components of the underlying configuration decreases by one (see \Cref{new:lem:apply-chain}). So, we work under \Cref{new:assumption:small-improvement}. We will show that this assumption leads to a contradiction.

\begin{assumption}\label{new:assumption:small-improvement}
    Our algorithm applies only $\tau < f/(100 \cdot H)$ augmenting chains.
\end{assumption}

Throughout the algorithm, when an augmenting chain $A$ is applied, the degree of some junctions in $B$, or some dirty nodes in $D$ can be reduced.
The only scenario where this might happen is when a critical edge of $A$ is removed from the forest $\calF$ (see guarantee \ref{new:lem:apply-chain-P2} of \Cref{new:lem:apply-chain}).
Whenever a node $v$ becomes part of the set $B \cup D$ in our algorithm, it has $\deg_{\calF}(v) = k$  at that time-instant (see \Cref{new:sec:template}).
Since the total number of augmenting chains applied by our algorithm is at most $f/(100\cdot H)$ (see \Cref{new:assumption:small-improvement}), and each of these chains has at most $2H$ critical edges (follows from guarantee \ref{new:lem:apply-chain-P2} of \Cref{new:lem:apply-chain}, and the fact that each such augmenting chain is of length at most $H$),  we conclude that the sum $\sum_{x \in B \cup D} \deg_\calF(x)$ remains very close to $k \cdot |B \cup D|$. In particular, we have $\sum_{x \in B \cup D} \deg_\calF(x) \geq k \cdot |B \cup D| - f/c$ for some large constant $c > 1$.
For ease of exposition, in this extended abstract, we ignore this extra $f/c$ term, and work under \Cref{new:assumption:full-degree}.  The full version of the paper contains a comprehensive analysis without this extra assumption.

\begin{assumption}\label{new:assumption:full-degree}
    The configuration $(\calF, B, \calC)$ and the set of dirty nodes $D$ maintained by the algorithm always satisfy the following condition:  $\deg_\calF(x) = k = 2\Delta^\star+1$ for all $x \in B \cup D$.
\end{assumption}

\subsection{A Potential Function: Tail Independent Stable $t$-Chains}

In this section, we introduce a potential function that underpins the analysis of our algorithm. We start by defining the notion of a {\bf pseudo chain}. 

\begin{definition}\label{new:def:p:aug} Fix any configuration $(\calF, B, \calC)$ and any integer $h \geq 0$. A {\bf pseudo chain} of length $h$ w.r.t.~$(\calF, B, \calC)$ is a tuple of nodes $A = (w_0, z_1, w_1, z_2, w_2, \ldots, z_h, w_h)$ such that:
\begin{enumerate}
\item \label{p:aug:1} 
$w_h$ is untouched, $z_h \in B$ is a junction, and $w_h,z_h$ lie in the same connected component of $\calF$.
\item For every $i \in \{h-1, \ldots, 1\}$:
\begin{enumerate}
\item \label{p:aug:2} $(w_{i}, z_{i+1}) \in E - E(\calF)$ is a non-forest edge, and $w_{i}$ is untouched.
\item \label{p:aug:3} $w_i$ and $z_{i+1}$ are in different connected components of $\calF - z_i$.
\end{enumerate}
\item \label{p:aug:3} $w_0$ and $z_{1}$ are in two different connected components of $\calF$,
and $w_{0}$ is untouched.
\end{enumerate}
Consider any $i \in [1, h]$. Let $(z_i, y_i) \in P_{z_i, w_i}^{\calF}$ be the unique edge incident on $z_i$ in the path $P_{z_i, w_i}^{\calF}$.
Let $A(i)$ be the connected component of $\calF - \{(z_i, y_i)\}$ containing $y_i$, where $\calF - \{(z_i, y_i)\}$ is the forest  obtained by deleting the edge $(z_i, y_i)$ from $\calF$.
Further, let $A(0)$ denote the entire connected component of $\calF$ containing $w_0$.
For $i \in [0, h]$, we refer to $A(i)$ as the {\bf $i^{th}$ block of $A$}.  In particular, we refer to $A(h)$ as the {\bf tail} of $A$. 
Finally, for every $i \in [1, h]$, we refer to the nodes $z_i$ and $y_i$ respectively as the {\bf root} and {\bf pivot} of $A(i)$. In contrast, the root and pivot of $A(0)$ are undefined. 
\end{definition}

Note that a pseudo chain is very similar to an augmenting chain (see \Cref{new:def:simple:aug}), the only differences are: (i) The pseudo chain ends at $w_h$, as opposed to ending at $z_{h+1}$. (ii) We no longer enforce that the nodes $z_1,w_1,\ldots,w_h,z_{h+1}$ are in the same connected component of $\calF$, and the pseudo chain can hit multiple connected components of $\calF$.
(iii) It might very well be the case that $z_i = z_{j}$ for some $1 \leq i < j \leq h$ in a pseudo chain.

It will be of particular interest to us to keep track of those pseudo chains that are {\em stable}, in the sense that they continue to remain a valid pseudo chain at all future time-steps during the execution of our algorithm. Below, we formally introduce this concept.

\medskip
\begin{definition}
\label{new:def:stable:chain}
Consider any $t \in [1, H]$. Let $A = (w_0,z_1,w_1,\ldots, z_h, w_h)$ be a pseudo chain of length $h \in [0, t-1]$ w.r.t.~$\left(\calF^{\final}_t, B_t^\final, \calC_t^\final\right)$.
We say that $A$ is a  \textbf{stable $t$-chain} iff the following  conditions hold at all future time-steps (from the start of round $t+1$ until the end of round $H$):
\begin{enumerate}
    \item\label{P1:cond:stable-1}  For each $i \in [1, h]$, the node $z_i$ is never removed from the set $B$ of junctions, and the  edge $(z_i, y_i) \in P_{z_i, w_i}^{\calF}$, where $y_i$ is the $i^{th}$ pivot of $A$, is never removed from the forest $\calF$.

    \item\label{P1:cond:stable-2} For all $i \in [0, h]$, the $\left(\calF^\final_t,B^\final_{t}\right)$-regions in $A(i)$ are all untouched in $\left(\calF_t^\final, B_t^\final, \calC^\final_t\right)$, and they continue to remain untouched at all subsequent time-steps. Thus, the structure of the junctions and untouched regions inside a block of $A$  does not change after round $t$.
\end{enumerate}
\end{definition}

\begin{definition}
\label{new:def:tail}
Consider any round $t \in [1, H]$, and  a collection $\calA = \{A_1,\ldots, A_q\}$ of $q$ stable $t$-chains, for some integer $q \geq 1$.
We say that  $\calA$ is a \textbf{tail-independent} set of stable $t$-chains iff the tails of $A_1, \ldots, A_q$ are mutually node-disjoint.
\end{definition}

We are now ready to define the potential function we will use to analyze our algorithm. Our key technical insight is summarized in the lemma below.

\begin{lemma}\label{new:lem:potential}
For every round $t \in [1, H]$, there exists a tail-independent set of stable $t$-chains, denoted by $M(t)$. Under \Cref{new:assumption:small-improvement}, these sets satisfy the following properties.
\begin{enumerate}
\item \label{new:lem:potential-1} $|M(1)| \geq 4f/5$.
\item \label{new:lem:potential-2} $|M(t+1)| \geq (5/4) \cdot |M(t)|$ for each $t \in [1, H - 1]$.
\end{enumerate}
\end{lemma}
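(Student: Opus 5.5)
The plan is to build the sets $M(t)$ explicitly and inductively: first from length-$0$ pseudo chains, then by extending chains of $M(t)$ by one block. Stability will be controlled by charging every loss to one of the $\tau < f/(100H)$ applied augmenting chains. The growth factor $5/4$ will come from a Fürer–Raghavachari-type counting against an optimal spanning tree $T^\star$, which uses the slack $k = 2\Delta^\star+1$ and \Cref{new:assumption:full-degree}.

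\textbf{Base case ($t=1$).} For each component $K$ of $\calF^\final_1$, pick any untouched node $w_0 \in K$ and take the length-$0$ pseudo chain $(w_0)$. Its tail $A(0)$ is all of $K$, so distinct components give tail-independent chains.
\begin{itemize}
\item I first check that touched regions spread only inside components where some augmenting chain was applied. A touching edge needs a touched region on the boundary of a path inside one component, and a merging edge creates no touched region. Hence touched regions arise only from \Cref{new:lem:apply-chain}, which touches at most $h+2 \le H+2$ regions per chain.
\item Over the whole algorithm, this forbids at most $\tau(H+2) + \tau < f/50$ components; the extra $\tau$ accounts for merges.
\item I still have to verify that re-initialising $B^\init_t := B^\final_{t-1}\cup D^\final_{t-1}$ does not alter a never-touched component. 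This holds because $D$ contains only touched nodes.
\item Components of $\calF^\final_1$ number at least $f-\tau$, so we get $|M(1)| \ge f - f/50 - \tau \ge 4f/5$.
\end{itemize}

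\textbf{Inductive step ($t \to t+1$).} Fix $M(t)$. At the end of round $t+1$, \Cref{new:obs:cond-end-round} says there is no effective edge $(u,v)$ with $v\notin D$. Let $S := B^\final_{t+1}\cup D^\final_{t+1}$. Every node of $S$ has $\calF$-degree $k$ by \Cref{new:assumption:full-degree}, so $\calF - S$ has at least $(k-1)|S| + (\#\text{components of }\calF) - O(\text{stuff})$ pieces.

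The tree $T^\star$ has at most $\Delta^\star|S|$ edges touching $S$. So $T^\star$ must use at least $(k-1-\Delta^\star)|S| \approx \Delta^\star |S|$ edges that join distinct $(\calF,S)$-regions without touching $S$, or that join a region to a junction. Every such edge with both endpoints untouched, or with an untouched endpoint and a touched slack non-dirty endpoint, would be effective and hence cannot exist. So these edges must go from an untouched node $w$ into a node of $S$.

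An edge $(w,z)$ with $z$ a junction lying in the tail $A(h)$ of some $A\in M(t)$ extends $A$ to $(A, z, w)$ with $z_{h+1}=z$ and $w_{h+1}=w$.
\begin{itemize}
\item The new tail is the subtree hanging from $z$ through the pivot towards $w$.
\item Separation Property~\ref{p:aug:3} holds by choosing $z$ on the path from the old endpoint to $w$.
\item Different pivots give node-disjoint new tails, so tail independence is preserved.
\item The length becomes $\le t$, as required for a stable $(t+1)$-chain.
\end{itemize}
Because $k-1 \ge 2\Delta^\star$, each junction offers about $2\Delta^\star$ hanging subtrees while $T^\star$ can "waste" only $\Delta^\star$ edges per junction. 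Averaging over the tails of $M(t)$ should yield at least $\tfrac{5}{4}|M(t)| + f/50$ extendable chains with disjoint tails. From these I remove the ones later destroyed by augmenting chains; each chain destroys $O(H)$ critical edges or regions (\Cref{new:lem:apply-chain}), for a total below $f/50$.

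\textbf{Main obstacle.} The hardest step is this counting. It has to be localised to the tails of $M(t)$, including showing that the edges of $T^\star$ landing in a tail really produce distinct extensions with disjoint new tails. It also has to be reconciled with the dirty set $D$, whose nodes are saturated but not junctions, and with the requirement that old blocks stay untouched. I would first try a per-tail charging argument: contract each tail's junction-separated pieces and compare its piece count $(k-1)|B\cap A(h)|+1$ against the $T^\star$-edges needed to connect them. Any tail that yields fewer than two extensions is then charged to $S$-nodes outside the tails.
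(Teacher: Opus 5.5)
Your base case is fine and is essentially the paper's argument. The paper's count is simpler: each applied chain merges two current components, so at least $f-2\tau$ input components are never involved in any chain, and each gives a length-$0$ stable $1$-chain.

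The inductive step has a genuine gap, and it starts with the direction of the extension. In \Cref{new:def:p:aug}, the non-forest edge linking block $h$ to the next junction is $(w_h,z_{h+1})$. It leaves the \emph{old} tail from an untouched node and enters a junction $z_{h+1}$ on the far side of $z_h$, i.e., in a different component of $\calF-z_h$. The node $w_{h+1}$ is then reached from $z_{h+1}$ through the forest. Your $(A,z,w)$ instead places $z_{h+1}=z$ inside $A(h)$ and uses the non-forest edge $(w,z)=(w_{h+1},z_{h+1})$. Then $w_h$ and $z_{h+1}$ lie in the same component of $\calF-z_h$, and $(w_h,z_{h+1})$ need not be a non-forest edge, so the result is not a pseudo chain. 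This matters: the pseudo-chain shape is exactly what $\backSearch$ can walk backwards along to complete an augmenting chain, and that is the only reason a tail survives. Your tail-independence claim also fails: if a junction $z'$ lies in a subtree hanging off $z$, the subtrees hanging off $z$ and $z'$ overlap.

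The paper extends outward, in three steps.
\begin{itemize}
\item Each tail $T_i$ consists of at least $1+(2\Delta^\star-1)|B\cap T_i|$ untouched regions with no effective edges among them at the end of round $t$. Each junction absorbs at most $\Delta^\star$ edges of $T^\star$, so some $T^\star$-edge $(u^\star,v^\star)$ leaves $T_i$ from an untouched region.
\item \Cref{new:obs:cond-end-round} forces $v^\star\in B^{\final}_t\cup D^{\final}_t=B^{\init}_{t+1}$. Each such $v^\star$ receives at most $\Delta^\star$ of these edges, so $|B^{\out}_t|\ge q/\Delta^\star$.
\item Each saturated $v^\star$ has at least $2\Delta^\star-1$ singly-attached regions of $\calF^{\init}_{t+1}-B^{\out}_t$. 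These regions are disjoint and become the tails of the chains $(w_0,\dots,z_h,u^\star,v^\star,y)$.
\end{itemize}
This gives the factor $(2\Delta^\star-1)/\Delta^\star\ge 3/2$, which your ``averaging'' does not supply.

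The second missing idea is stability. You only subtract candidates destroyed by augmenting chains. But merging and especially touching edges, in round $t+1$ and later, can touch an unbounded number of regions, so you must show they never destroy a candidate. This is \Cref{new:claim:destroying-is-improving}. A merging or touching edge $(u,v)$ can affect $A_U$ only if $v^\star\in P^{\calF}_{u,v}$. While $A_U$ is intact, the levels of $v^\star$ and $(u^\star,v^\star)$ stay bounded and every $\backSearch$ on them succeeds; this is an induction on $t$ that relies on the minimum-level rule. Hence $\findChain$ on $(u,v)$ cannot fail, and $(u,v)$ is improving.

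Only then does \Cref{new:lem:apply-chain} cap the loss at $O(H)$ per applied chain, i.e.\ at most $5H\tau\le f/5$ in total. This loss is absorbed because $q\ge |M(1)|\ge 4f/5$ gives $(3/2)q-f/5\ge(5/4)q$.

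Finally, your global count over $S=B^{\final}_{t+1}\cup D^{\final}_{t+1}$ cannot replace the per-tail count. $T^\star$-edges between two touched regions are not effective, so nothing forces them to hit $S$. That is why the counting has to be done on the all-untouched tails at the end of round $t$, as the paper does.
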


Assuming \Cref{new:lem:potential}, we have 
$ |M(H)| \geq (5/4)^{H-1} \cdot (4f/5) > n$, since $H = \lceil 10\log n \rceil$ as per \Cref{new:eq:param}.
This is clearly a contradiction since we can not have $|M(H)| > n = |V|$ stable $t$-chains whose tails are mutually node-disjoint.
In other words, \Cref{new:assumption:small-improvement} cannot be true, and this implies the first part of \Cref{new:key:lemma}; namely, that $\calF^\final_H$ contains $\Omega(f/H)$ fewer connected components than $\calF^\init_1$. Accordingly, we now focus on proving \Cref{new:lem:potential}.

\medskip
\noindent
\textbf{Analysis of $M(1)$.}
When we apply an augmenting chain, it affects only two connected components of $\calF$ (see \Cref{new:def:simple:aug}).
Hence, after applying $\tau$ augmenting chains, there are at least $f - 2\tau \geq f - f/(50 \cdot H) \geq 4f/5$ many connected components (see \Cref{new:assumption:small-improvement} and \Cref{new:eq:param}) of the input forest $\calF_1^\init$ that remain unaffected, i.e., have empty intersection with all the $\tau$ augmenting chains applied throughout all rounds.
Let $T$ be any such unaffected connected component of $\calF$.
It is easy to verify that all $\left(\calF_1^\final, B_1^\final\right)$-regions in $T$ remain untouched throughout the execution of the algorithm from the start of round $2$ forward, and each such component $T$ gives rise to exactly one stable $1$-chain of length $0$.
Clearly, these stable $1$-chains form a tail-independent set of stable $1$-chains (the tail of each of these stable $1$-chains is an entire connected component of $\calF^\final_1$).
This implies Property \ref{new:lem:potential-1} of \Cref{new:lem:potential}.

\medskip
It now remains to prove Property \ref{new:lem:potential-2} of \Cref{new:lem:potential}. We devote \Cref{new:sec:proof} towards this proof.

\subsection{Geometric Expansion of the Potential: Proving Property \ref{new:lem:potential-2} of \Cref{new:lem:potential}}
\label{new:sec:proof}

Throughout this section, we fix any round $t \in [1, H]$, and assume that $M(t) = \{A_1,A_2, \ldots, A_q\}$ is a tail-independent set of stable $t$-chains, for some integer $q \geq 1$.
Our goal is to construct  $M(t+1)$, a  tail-independent set of stable $(t+1)$-chains satisfying $|M(t+1)| \geq (5/4) \cdot q $.
Throughout this section, we will denote the tail of $A_i$ by $T_i$, for each $i \in [1, H]$.

\subsubsection{Tail-Expanding Edges}\label{sec:tail-exapnding}

Fix some optimal spanning tree $T^\star$ of $G$ with maximum degree $\Delta^\star$.
Consider any index $i \in [1, q]$.
We refer to an ordered pair $(u^\star,v^\star)$ as $T_i$-\textbf{expanding} iff $(u^\star, v^\star) \in E(T^\star)$,  $u^\star$ belongs to a $\left( \calF_t^\final, \calC_t^\final\right)$-region that is untouched w.r.t.~$\left(\calF_t^\final, B_t^\final, \calC_t^\final \right)$ and  contained  within $T_i$, and $v^\star$ lies outside of $T_i$.
We refer to  $(u^\star, v^\star)$ as \textbf{expanding edge} iff it is $T_i$-expanding for some $i \in [1, q]$.
Let $ B^\out_{t} := \left\{v^\star \mid (u^\star, v^\star) \text{ is an expanding edge}\right\} $ be the set of all endpoints $v^\star$ of expanding edges.

\begin{observation}\label{new:obs:expanding-types}
For each $T_i$-expanding ordered pair $(u^\star, v^\star)$, we have  $(u^\star, v^\star) \notin E\left(\calF^\final_t\right)$ unless $v^\star$ is the root of $T_i$ and $u^\star$ is the pivot of $T_i$. 
\end{observation}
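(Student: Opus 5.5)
The observation follows directly from how the tail $T_i$ is cut out of $\calF_t^\final$. The plan is to identify exactly which forest edges can have one endpoint inside $T_i$ and the other outside. A forest edge $(u^\star,v^\star)\in E(\calF_t^\final)$ with $u^\star\in V(T_i)$ and $v^\star\notin V(T_i)$ must be such a boundary edge.

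Write $A_i=(w_0,z_1,w_1,\ldots,z_h,w_h)$, so that $T_i=A_i(h)$ by \Cref{new:def:p:aug}. I will split into two cases according to $h$.

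\textbf{Case $h=0$.} The tail $T_i=A_i(0)$ is the entire connected component of $\calF_t^\final$ containing $w_0$. A connected component of a forest has no incident forest edge leaving it. Since $u^\star\in V(T_i)$ and $v^\star\notin V(T_i)$, the pair $(u^\star,v^\star)$ cannot be an edge of $\calF_t^\final$. In this case the exceptional situation does not arise at all; this is consistent with the root and pivot of $A(0)$ being undefined.

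\textbf{Case $h\ge 1$.} By definition, $T_i$ is the connected component containing the pivot $y_h$ of the forest $\calF_t^\final-\{(z_h,y_h)\}$. In a forest, deleting one edge splits exactly one tree into two pieces. Hence the only edge of $\calF_t^\final$ with exactly one endpoint in $T_i$ is $(z_h,y_h)$: every other forest edge touching $T_i$ survives in $\calF_t^\final-\{(z_h,y_h)\}$ and so lies inside the component $T_i$. Moreover $y_h\in V(T_i)$ and $z_h\notin V(T_i)$. So if $(u^\star,v^\star)\in E(\calF_t^\final)$, then $\{u^\star,v^\star\}=\{z_h,y_h\}$. The constraints $u^\star\in V(T_i)$ and $v^\star\notin V(T_i)$ then force $u^\star=y_h$, the pivot, and $v^\star=z_h$, the root.

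There is no real obstacle here. The only point needing care is to treat forest edges as unordered while $(u^\star,v^\star)$ is ordered, and to check that the case $h=0$ produces no exception.
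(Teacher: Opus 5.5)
Your argument is correct and is essentially the reasoning the paper leaves implicit (it states this as an observation without proof). For $h\ge 1$, the tail $T_i=A_i(h)$ is the component of $\calF_t^\final-\{(z_h,y_h)\}$ containing $y_h$, so the root--pivot edge $(z_h,y_h)$ is the only forest edge leaving $T_i$; for $h=0$, $T_i$ is a whole component of $\calF_t^\final$, so no forest edge leaves it.
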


\begin{claim}\label{new:claim:exist-expanding-edge}
    For each $i \in [1, q]$, there exists at least one $T_i$-expanding pair.
\end{claim}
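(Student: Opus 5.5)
The plan is to use that $T^\star$ is a spanning tree of $G$, so it connects every proper nonempty vertex subset to its complement, and then to check that a suitable crossing edge starts at an untouched node of $T_i$.

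First, the tail $T_i$ is a proper subset of $V$ that contains an untouched node. By \Cref{new:def:stable:chain} (condition \ref{P1:cond:stable-2}), every $(\calF_t^\final, B_t^\final)$-region inside $T_i$ is untouched. We also need $T_i$ to contain at least one region.
\begin{itemize}
\item If $T_i$ is a block $A(h)$ with $h\ge 1$, it contains the pivot $y_h$ and $w_h$, and $w_h$ is untouched by Property \ref{p:aug:1}.
\item If $h=0$, then $T_i=A(0)$ is the whole component containing the untouched node $w_0$.
\end{itemize}
So the set $U_i$ of nodes of $T_i$ lying in untouched regions is nonempty.

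Next we check that $T_i\neq V$.
\begin{itemize}
\item For $h\ge1$, the root $z_h\notin T_i$.
\item For $h=0$, $T_i$ is a component of $\calF_t^\final$. It would equal $V$ only if $\calF$ were already a spanning tree. This is excluded because $f$ components remain at least $f-\tau>1$ under \Cref{new:assumption:small-improvement}; alternatively, it is excluded by the standing assumption $f=\Omega(1)$ with $f\ge 2$.
\end{itemize}

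The main step is to show that some edge of $T^\star$ leaves $T_i$ from a node of $U_i$, rather than from a junction of $T_i$. Consider the nodes of $T_i$ that are junctions, i.e.\ $T_i\cap B_t^\final$. Contract nothing; instead, argue directly. Pick any node $a\in U_i$ and any node $b\notin T_i$, and walk along the path $P^{T^\star}_{a,b}$. Take the last node $u^\star$ on this path lying in $T_i$ before the path first exits $T_i$. If $u^\star\in U_i$ we are done. If instead $u^\star$ is a junction, the naive argument fails.

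So the choice of endpoints has to be refined. I would argue by counting and degrees rather than by a single path. Suppose no edge of $T^\star$ goes from $U_i$ to $V\setminus T_i$. Then in $T^\star$, the set $U_i$ together with the junctions $J_i=T_i\cap B$ must carry all crossing edges through $J_i$. By \Cref{new:assumption:full-degree}, each junction $x\in J_i$ has $\deg_{\calF}(x)=k=2\Delta^\star+1$.

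I expect the counting to go as follows.
\begin{itemize}
\item The tail $T_i$ is a subtree of $\calF$ in which the untouched regions are separated by junctions. Removing $J_i$ from $T_i$ leaves at least $\sum_{x\in J_i}(\deg_{T_i}(x)-1)+1$ regions; for the tail the count is adjusted by the one edge to the root.
\item Each such region must be connected in $T^\star$ to the rest of $V$, and by assumption only through nodes of $J_i$.
\item Hence $T^\star$ restricted to the regions plus $J_i$ behaves like a forest. Contracting each region shows that $J_i$ needs total $T^\star$-degree at least about (number of regions) $+\,|J_i|-1$, which is roughly $(2\Delta^\star)|J_i|$.
\item This exceeds $\Delta^\star|J_i|$, a contradiction.
\end{itemize}

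The main obstacle is exactly this degree count: making the bound on the number of regions correct, since the junction degrees inside $T_i$ versus $\calF$ differ by the root edge. I would handle it by a precise contraction argument in the style of the standard Fürer--Raghavachari witness bound.
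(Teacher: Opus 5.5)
Your plan is the paper's counting argument: compare the number of regions in $T_i$ with the $T^\star$-degree budget of its junctions. However, the step that carries the argument is unjustified.

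From the supposition that no $T^\star$-edge goes from $U_i$ to $V\setminus T_i$, you conclude that each region connects to the rest of $V$ ``by assumption only through nodes of $J_i$''. The supposition does not give this. It says nothing about $T^\star$-edges joining two \emph{distinct} untouched regions of $T_i$, and if such edges are allowed, no contradiction is possible. For example, let $T_i$ be a star of $\calF$ whose center is a junction of degree $k$ and whose leaves are singleton regions. Then $T^\star$ may chain the leaves together and attach the center to one leaf and to an outside node, leaving no expanding pair.

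The missing ingredient is the termination condition of round $t$, \Cref{new:obs:cond-end-round}. Forest edges never join distinct $(\calF_t^\final,B_t^\final)$-regions. A non-forest edge $(u,v)$ between two distinct untouched regions would be an effective edge with $v$ untouched, hence $v\notin D_t^\final$ (dirty nodes are touched), so round $t$ could not have ended. Thus $G$, and a fortiori $T^\star$, has no edge between two distinct regions of $T_i$. This is the only place the algorithm enters the claim, and your proposal never uses it.

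With that fact, the contraction is routine. Take as parts the $c_i$ regions of $T_i$, the singletons of $J_i$, and the nonempty set $V\setminus T_i$; your check $T_i\neq V$ is exactly what is needed here. $T^\star$ has at least $c_i+|J_i|$ edges between different parts, and under your supposition each of them is incident to $J_i$. Hence $c_i+|J_i|\le\Delta^\star|J_i|$.

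Your region count $1+\sum_{x\in J_i}(\deg_{T_i}(x)-1)$ is also incorrect when junctions are adjacent in $\calF$, since each forest edge inside $J_i$ costs one region. The correct general bound, as in \Cref{full:claim:number-leafs}, is $c_i\ge 1+\sum_{x\in J_i}(\deg_{\calF}(x)-2)=1+(2\Delta^\star-1)|J_i|$, which already absorbs the cut edge to the root. It still yields the contradiction $1+2\Delta^\star|J_i|\le\Delta^\star|J_i|$, including when $J_i=\emptyset$.
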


\begin{proof}
    Let $B^{(i)} := V(T_i) \cap B_t^\final$, and $\calC^{(i)}$ be the collection of $(\calF^\final_t, B_t^\final)$-regions inside $T_i$ (that are all untouched since $A_i$ is a stable $t$-chain).
    According to Assumption \ref{new:assumption:full-degree}, with a simple counting argument (by considering $T_i$ as a rooted tree, removing the nodes in $B^{(i)}$ iteratively, and counting the number of regions), we can show that $ |\calC^{(i)}| \geq 1 + (2\Delta^\star - 1) \cdot |B^{(i)}|$.
    
    $T^\star$ must connect the regions in $\calC^{(i)}$ together and to the rest of the graph with a set $E^\star_i$ of at least $|\calC^{(i)}| - 1$ edges.
    As a result, if there does not exist any $T_i$-expanding pair, then all of the edges in $E^\star_i$ must be incident on $B^{(i)}$ (there is no edge connecting two regions of $\calC^{(i)}$ according to \Cref{new:obs:cond-end-round}).
    This concludes $|\calC^{(i)}| - 1 \leq \Delta^\star \cdot |B^{(i)}|$ since $\deg_{T^\star}(x) \leq \Delta^\star$ for all $x \in B^{(i)}$.
    The two inequalities on $|\calC^{(i)}|$ clearly contradict each other, which concludes the claim.
\end{proof}

\begin{claim}\label{new:claim:size-B-outtt}
    We have that $ |B_t^\out| \geq q / \Delta^\star $.
\end{claim}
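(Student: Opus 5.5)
The plan is a charging argument: assign each tail $T_i$ to one node of $B^\out_t$, and then show that no node of $B^\out_t$ is charged more than $\Delta^\star$ times.

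First, fix a choice. By \Cref{new:claim:exist-expanding-edge}, for each $i \in [1,q]$ there is at least one $T_i$-expanding pair. Pick one such pair and call it $(u_i^\star, v_i^\star)$. Define the map $\phi : [1,q] \to B^\out_t$ by $\phi(i) := v_i^\star$. This is well defined because every $v_i^\star$ is, by definition, an endpoint of an expanding edge, so it lies in $B^\out_t$. It then suffices to show that $|\phi^{-1}(v)| \le \Delta^\star$ for every $v \in B^\out_t$. Summing over $v$ gives $q \le \Delta^\star \cdot |B^\out_t|$, which is the claim.

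Next, bound the fibers. Fix $v \in B^\out_t$ and let $I := \phi^{-1}(v)$. For every $i \in I$, the pair $(u_i^\star, v)$ is an edge of $T^\star$ incident on $v$, and $u_i^\star \in V(T_i)$. The key point is that the $u_i^\star$ are pairwise distinct for distinct $i \in I$. This holds because $M(t)$ is tail-independent (\Cref{new:def:tail}), so the tails $T_i$ are mutually node-disjoint, and $u_i^\star$ lies in $T_i$. Hence the edges $\{(u_i^\star, v) : i \in I\}$ are $|I|$ distinct edges of $T^\star$, all incident on $v$. Therefore $|I| \le \deg_{T^\star}(v) \le \Delta^\star$.

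The only subtle point is the fiber argument: one must avoid charging the same $T^\star$-edge twice, and must not assume that the $v_i^\star$ are distinct. Both issues are handled by charging to the endpoint $v$ and using disjointness of the tails to separate the $u$-endpoints. The orientation issue does not arise: an undirected $T^\star$-edge $\{a,b\}$ might appear as $(a,b)$ for one tail and $(b,a)$ for another, but these two uses are charged to different nodes ($b$ and $a$ respectively), so the per-node count is unaffected. Beyond this, no further obstacle is expected.
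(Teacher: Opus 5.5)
Your proof is correct and follows the paper's argument: one $T^\star$-edge per tail from \Cref{new:claim:exist-expanding-edge}, each with an endpoint in $B^\out_t$, and the bound $\deg_{T^\star}\le\Delta^\star$ at those endpoints gives $q \le \Delta^\star \cdot |B^\out_t|$. Your fiber argument also spells out something the paper leaves implicit: the charged edges are distinct at each endpoint because the tails are node-disjoint.
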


\begin{proof}
    According to \Cref{new:claim:exist-expanding-edge}, we have at least $q$ different expanding edges.
    These edges are in the optimal spanning tree $T^\star$ and have an endpoint in $B_t^\out$. This implies that $q \leq \Delta^\star \cdot |B_t^\out|$ since the degree of each $x \in B_t^\out$ in $T^\star$ is bounded by $\Delta^\star$.
\end{proof}

\begin{claim}\label{new:claim:B-out-in-B-init}
    We have $B_{t}^\out \subseteq B_{t+1}^\init$.
\end{claim}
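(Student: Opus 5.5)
We have to show that every $v^\star \in B_t^\out$ lies in $B_t^\final \cup D_t^\final$, since by definition $B_{t+1}^\init = B_t^\final \cup D_t^\final$. So fix an expanding edge $(u^\star, v^\star)$ that is $T_i$-expanding, and suppose for contradiction that $v^\star \notin B_t^\final \cup D_t^\final$. Then $v^\star$ lies in some $(\calF_t^\final, B_t^\final)$-region, and it is either untouched, or touched and not dirty.

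\textbf{Step 1: $v^\star$ is slack whenever it is touched.} By Invariant~\ref{inv:dirty}, the only way a touched node escapes $D_t^\final$ is that it was slack at the moment it became touched. I will argue that it then remains slack until the end of round $t$. Degrees of touched nodes only change through applying augmenting chains. Degree reduction acts only inside untouched regions $W_i$, and touched slack nodes only appear as the endpoint $z_{h+1}$. But the loop considers only effective edges $(u,v)$ with $v\notin D_t$, and the while-loop after applying a chain adds to $D_t$ every newly touched saturated node. Hence the only concern is a touched slack node whose degree rises to $k$ later, for instance by being used as $z_{h+1}$ via $Z_{h+1}=\{z_{h+1}\}$. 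In that case we first make it slack and then add one edge, so it may become saturated.

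This is the main obstacle. I would handle it by rereading the rule literally. Dirty nodes are defined as "touched nodes saturated at the time of getting touched," so a touched, non-dirty, now-saturated $v^\star$ could break the claim. I would therefore try to show that the first alternative of an effective edge suffices without slackness, namely that $v^\star$ untouched or not is irrelevant once we use a different effective pair. Alternatively, I would accept a weaker convention from the full version, where nodes that become saturated are also put in $D$. I will assume the intended reading: non-dirty touched nodes are slack, since Assumption~\ref{new:assumption:full-degree} treats $B\cup D$ as exactly the saturated nodes outside untouched regions.

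\textbf{Step 2: exhibit a forbidden effective edge.} Consider the ordered pair $(u^\star, v^\star)$. The node $u^\star$ lies in an untouched region inside $T_i$, and $v^\star$ lies outside $T_i$. Since $v^\star\notin B_t^\final$ and $T_i$ is a union of whole regions and junctions, the two nodes are in different $(\calF_t^\final,B_t^\final)$-regions.

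By Observation~\ref{new:obs:expanding-types}, $(u^\star,v^\star)\notin E(\calF_t^\final)$ unless $v^\star$ is the root of $T_i$. The root is a junction $z_h\in B_t^\final$, which is excluded by our assumption $v^\star \notin B_t^\final$. For a length-$0$ chain $T_i$ is a whole component and has no root, so in that case there is nothing to exclude.

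Since $v^\star$ is untouched or touched and slack, $(u^\star,v^\star)$ is an effective edge with $v^\star\notin D_t^\final$. This contradicts Observation~\ref{new:obs:cond-end-round}, and so $v^\star\in B_t^\final\cup D_t^\final = B_{t+1}^\init$.
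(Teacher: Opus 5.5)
Your Step 2 is the paper's proof. If the expanding pair is a forest edge, then $v^\star$ is the root $z_h$ of $T_i$ and hence a junction. Otherwise, $v^\star\notin B_t^{\final}\cup D_t^{\final}$ would make $(u^\star,v^\star)$ an effective edge with $v^\star\notin D_t^{\final}$, contradicting \Cref{new:obs:cond-end-round}. Your check that $u^\star$ and $v^\star$ lie in different regions, because $T_i$ is a union of whole regions and junctions, is a detail the paper leaves implicit.

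The weak point is Step 1. You end up assuming it rather than proving it, and the justification you give is wrong. \Cref{new:assumption:full-degree} only says that every node of $B\cup D$ is saturated. It does not give the converse, that every saturated touched node is in $D$. Your worry is real under the extended abstract's literal rule: a touched, slack, non-dirty $z_{h+1}$ can become saturated when $(w_h,z_{h+1})$ is inserted. Your fallback of using ``a different effective pair'' cannot work, because if such a node existed, no effective edge could end at it and the claim would simply fail.

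What settles this is bookkeeping on $D$, not persistence of slackness. In the full version, a configuration is required to have every touched node of degree $k$ in $D$, and the update rules maintain this:
\begin{itemize}
\item Applying a chain sets $D'=D\cup\{x\in W:\deg_{\calF'}(x)=k\}$ with $W=V(C_{z_{h+1}})\cup\bigcup_i V(C_{w_i})$. Here $C_{z_{h+1}}=\{z_{h+1}\}$ in the touched case, which is exactly your scenario.
\item Every other node whose degree can rise during a chain application lies in $W$ or is a junction; pivots $y_i$ only lose degree.
\item Touching edges add the saturated nodes of the newly touched regions, and merging edges change neither $\calF$ nor the touched set.
\end{itemize}
With this invariant, condition 4(b) of \Cref{full:def:effective-edge} only asks that $v\notin D$, and Step 2 goes through verbatim.
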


\begin{proof}
Consider any node $v^\star \in B_t^\out$.
By definition, there is an expanding edge of the form $(u^\star, v^\star)$.
If $(u^\star, v^\star) \in E(\calF_{t}^\final)$, then   $v^\star$ is the root of the tail of some stable $t$-chain (see \Cref{new:obs:expanding-types}).
In this case, we have $v^\star \in B_t^\final$ (see \Cref{new:def:p:aug}).

For the rest of the proof, assume that $(u^\star, v^\star) \notin E(\calF_{t}^\final)$.
If $v^\star \notin B^\final_t \cup D_t^\final$, then  $(u^\star,v^\star)$ is an effective edge w.r.t.~$\left(\calF^\final_t, B^\final_t, \calC^\final_t \right)$ with $v^\star \notin D_t^\final$, which contradicts \Cref{new:obs:cond-end-round}.
Hence, it must be the case that $v^\star \in B^\final_t \cup D_t^\final = B_{t+1}^\init$.

In other words, every node in $B_t^\out$ also belongs to $B_{t+1}^\init$. This concludes the proof.
\end{proof}

\subsubsection{Singly-Attached Regions}\label{sec:stable-t+1-chains}

Recall \Cref{new:claim:B-out-in-B-init}. Consider any $(\calF^\init_{t+1}, B^\out_{t})$-region $R$ (say).
We say that $R$ is a \textbf{singly-attached} $(\calF^\init_{t+1}, B^\out_{t})$-region iff it  is adjacent to \textbf{exactly} one node (say) $r$ in $B^\out_{t}$  in the forest $\calF^\init_{t+1}$.
If $R$ is singly-attached, then we refer to the node $r$ as the \textbf{root} of $R$.
Let $\calS_{t+1}$ denote the collection of all singly-attached $(\calF^\init_{t+1}, B^\out_{t})$-regions.

\begin{claim}\label{new:claim:size-singly-attached}
    We have that $|\calS_{t+1}| \geq (2\Delta^\star-1) \cdot |B^\out_{t}| $.
\end{claim}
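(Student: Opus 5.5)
We plan to use a counting argument on the forest $\calF^\init_{t+1}$ restricted to the node set $B^\out_t$, in the same spirit as the counting in the proof of \Cref{new:claim:exist-expanding-edge}. By \Cref{new:claim:B-out-in-B-init}, $B^\out_t \subseteq B^\init_{t+1} = B^\final_t \cup D^\final_t$. Hence, under \Cref{new:assumption:full-degree}, every node $x \in B^\out_t$ satisfies $\deg_{\calF^\init_{t+1}}(x) = k = 2\Delta^\star+1$. (Note that $\calF^\init_{t+1} = \calF^\final_t$.)

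The main step is a per-component count. Fix a connected component $K$ of $\calF^\init_{t+1}$ with $b := |V(K)\cap B^\out_t| \geq 1$. Consider the auxiliary tree $\hat K$ obtained from $K$ by contracting each $(\calF^\init_{t+1}, B^\out_t)$-region inside $K$ into a single vertex. Every edge of $K$ incident on a node of $B^\out_t$ goes either to a region or to another node of $B^\out_t$, and $\hat K$ is again a tree. It is bipartite-like, except for edges between two $B^\out_t$ nodes. Each $B^\out_t$ node has degree exactly $2\Delta^\star+1$ in $\hat K$, because distinct forest neighbours of $x$ lie in distinct regions: otherwise $K$ would contain a cycle. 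A region vertex of $\hat K$ is a leaf exactly when the region is singly-attached.

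We would then count leaves of $\hat K$ by the standard degree-sum identity for a tree. Write $\ell$ for the number of leaves among region vertices. The non-leaf region vertices have degree at least $2$, and the $b$ junction-type vertices have degree $2\Delta^\star+1 \geq 2$, so all leaves of $\hat K$ are region vertices. For a tree, $\sum_v (\deg(v) - 2) = -2$. The $B^\out_t$ vertices contribute $b(2\Delta^\star-1)$, the leaves contribute $-\ell$, and the other region vertices contribute $\geq 0$. Therefore $\ell \geq b(2\Delta^\star-1)+2 \geq (2\Delta^\star-1)\, b$. If $b=1$ and $K$ is just a star-like structure, the same bound holds directly. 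Edges between two $B^\out_t$ nodes need no special treatment, since such vertices still have degree $2\Delta^\star+1$ in $\hat K$.

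Finally, we sum over all components $K$ with $b \geq 1$. Singly-attached regions in different components are distinct, so $|\calS_{t+1}| \geq \sum_K (2\Delta^\star-1) b_K = (2\Delta^\star-1)|B^\out_t|$.

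The only delicate point is the correspondence \emph{leaf of $\hat K$ $\Leftrightarrow$ singly-attached region}. We must check that a region adjacent to a single node $r$ through two different edges is impossible, which again follows from acyclicity. We must also check that a region in a component with no $B^\out_t$ node is never counted, which holds since it is adjacent to zero such nodes. We do not expect any other obstacle, given \Cref{new:assumption:full-degree}.
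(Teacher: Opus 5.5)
Your proof is correct and follows the paper's route. Like the paper, you use $B^\out_t \subseteq B^\init_{t+1}$ and \Cref{new:assumption:full-degree} to get $\deg_{\calF^\init_{t+1}}(x) = 2\Delta^\star+1$ for every $x \in B^\out_t$, count per component of $\calF^\init_{t+1}$, and sum. The only difference is the counting device: the degree-sum identity on the region-contracted tree instead of the paper's iterative peeling of $B^\out_t$ nodes from a rooted tree, which gives the slightly stronger additive $+2$ in place of the paper's $+1$.
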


\begin{proof}
    Let $X$ be a connected component of $\calF_{t+1}^\init$ such that $V(X) \cap B_t^\out \neq \emptyset$.
     Using Assumption \ref{new:assumption:full-degree}, with a simple counting argument (by considering $X$ as a rooted tree, removing the nodes in $B_t^\out \cap V(X)$ iteratively, and counting the number of resulting regions), we can show that the number of singly-attached $(\calF^\init_{t+1}, B^\out_{t})$-regions in $X$ is at least $1 + (2\Delta^\star- 1 ) \cdot |B_t^\out \cap V(X)|$.
    The claim follows by adding up these inequalities over all such connected components $X$ of $\calF_{t+1}^\init$.
\end{proof}

\subsubsection{Constructing Potential Stable $(t+1)$-Chains}\label{sec:defining-potential-chains}

Assume that $U \in \calS_{t+1}$ is an arbitrary singly-attached $(\calF^\init_{t+1}, B^\out_{t})$-region. Denote the root of $U$ by $x$. 
By definition, there exists a stable $t$-chain $A_i = (w_0,z_1,w_1,\ldots, z_{h}, w_h) \in M(t)$ of length $h \leq t-1$ whose tail is $T_i$, and there is a $T_i$-expanding pair $(u^\star,v^\star)$ with $ u^\star \in T_i$ and $v^\star = x \notin T_i$.

\medskip
We now define a pseudo chain $A_U$ (whose tail is $U$), which will be referred to as a {\bf potential stable $(t+1)$-chain}. Specifically, we consider two cases.

\begin{itemize}
\item 
If $(u^\star, v^\star) \notin E(\calF^\init_{t+1})$, then we consider any arbitrary leaf node $y$ of $\calF^\init_{t+1}$ inside $U$, and define $A_U := (w_0,z_1,w_1,\ldots, z_h, u^\star, x, y)$.
In this case, we refer to $A_U$ as an \textbf{extended} chain.
\item
Else if $(u^\star, v^\star) \in E(\calF^\init_{t+1})$, then $v^\star$ is the root of $T_i$ according to \Cref{new:obs:expanding-types}, and we define $A_U := A_i$.
In this case, we refer to $A_U$ as a \textbf{repetitive} chain.
\end{itemize}

\begin{observation}
\label{new:ob:pseudo}
    $A_U$ is a pseudo chain w.r.t.~$\left(\calF^\init_{t+1}, B^\init_{t+1}, \calC^\init_{t+1} \right)$, for all $U \in \calS_{t+1}$.
\end{observation}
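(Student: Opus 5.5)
We verify the properties of Definition~\ref{new:def:p:aug} for $A_U$ with respect to $\left(\calF^\init_{t+1}, B^\init_{t+1}, \calC^\init_{t+1}\right)$, separately for repetitive and extended chains. The key transfer principle is stability. Because $A_i$ is a stable $t$-chain, from the start of round $t+1$ onward each root $z_j$ stays in $B$, and each edge $(z_j,y_j)$ stays in $\calF$. Moreover, by condition~\ref{P1:cond:stable-2} of Definition~\ref{new:def:stable:chain}, every block $A_i(j)$ consists of regions that remained untouched during round $t$. Nothing in round $t+1$ has happened yet at initialization, and $\calF^\init_{t+1}=\calF^\final_t$. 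Also $B^\init_{t+1}=B^\final_t\cup D^\final_t \supseteq B^\final_t$, and every $(\calF^\init_{t+1},B^\init_{t+1})$-region belongs to $\calC^\init_{t+1}$.

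We first note that every non-junction node is untouched w.r.t.\ the initial configuration of round $t+1$, since all regions are placed in $\calC^\init_{t+1}$. So the untouchedness requirements on the $w_j$ reduce to checking $w_j\notin B^\init_{t+1}$. For $w_j$ in $A_i$, $w_j$ lay in an untouched region at the end of round $t$, so $w_j\notin D^\final_t$ (dirty nodes are touched) and $w_j\notin B^\final_t$. The same argument applies to $u^\star$, which lies in an untouched region inside $T_i$ by the definition of an expanding pair.

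For a repetitive chain, $A_U=A_i$. The forest and the roots are unchanged, so the separation conditions (\ref{p:aug:3}), the non-forest edges, and the component conditions all carry over verbatim from $A_i$ being a pseudo chain w.r.t.\ the round-$t$ final configuration. The untouchedness conditions follow from the previous paragraph.

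For an extended chain $A_U=(w_0,\dots,z_h,u^\star,x,y)$, the prefix is checked exactly as above. The new items to verify are the following.
\begin{itemize}
\item[(i)] $(u^\star,x)$ is a non-forest edge with $u^\star$ untouched. This holds by the case assumption and the argument above.
\item[(ii)] The new tail condition requires $y$ untouched, $x\in B^\init_{t+1}$, and $x,y$ in the same component. We get $x\in B^\init_{t+1}$ from Claim~\ref{new:claim:B-out-in-B-init}, and $y$ is in the same component because $U$ is adjacent to $x$ in the forest. For $y$ untouched we need $y\notin B^\init_{t+1}$. Here we use that $y$ is a leaf: under Assumption~\ref{new:assumption:full-degree} every node of $B\cup D$ has degree $k\ge 2$, so a leaf is not a junction. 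If $U=\{y\}$ is a single node adjacent to $x$, this still holds. We must also take care that $U$ is a region w.r.t.\ $B^\out_t$ rather than $B^\init_{t+1}$, which is why the leaf choice is needed.
\item[(iii)] The separation condition (\ref{p:aug:3}) at index $h$ requires $u^\star=w_h'$ and $x=z_{h+1}$ to lie in different components of $\calF-z_h$. Since $u^\star\in T_i$ and $x\notin T_i$, and $T_i$ is the component of $\calF-\{(z_h,y_h)\}$ containing $y_h$, every path from $u^\star$ to $x$ leaves $T_i$ through the edge $(y_h,z_h)$ and hence passes through $z_h$. The case $x=z_h$ cannot occur in the extended case: if $x=z_h$, the edge $(u^\star,x)$ would connect $T_i$ to its root. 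In the degenerate case where $u^\star$ and $x$ lie in different trees, the condition holds trivially. When $h=0$, the tail is an entire component, and the condition becomes property~\ref{p:aug:3} for $w_0$, which holds since $x$ lies outside that component.
\end{itemize}

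The main obstacle is bookkeeping the edge cases: $h=0$, the case $x=z_h$ being a non-forest neighbor, and $y$ possibly lying in $B^\init_{t+1}\setminus B^\out_t$. I would handle the last case by choosing $y$ as a leaf, as above.
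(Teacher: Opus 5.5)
Your route matches the paper's. The full version's proof of \Cref{full:claim:well-defined-AU} inherits every pseudo-chain property of $A_U$ either from $A_i$ (same forest, $B^\final_t\subseteq B^\init_{t+1}$, all regions in $\calC^\init_{t+1}$) or from the expanding pair together with \Cref{new:claim:B-out-in-B-init}. Its only possible failure is the leaf $y$ being a junction, which \Cref{new:assumption:full-degree} excludes exactly as you argue. There is, however, one incorrect step in (iii).

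Your claim that $x=z_h$ cannot occur in the extended case is false. \Cref{new:obs:expanding-types} says only that an expanding pair which \emph{is} a forest edge must be root--pivot. It does not say that every pair $(u^\star,z_h)$ with $u^\star\in T_i$ is a forest edge.

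\begin{itemize}
\item Since $T_i$ hangs off $z_h$ through the single forest edge $(z_h,y_h)$, any edge $(u^\star,z_h)\in E(T^\star)$ with $u^\star\neq y_h$ in an untouched region of $T_i$ is a \emph{non-forest} $T_i$-expanding pair with $v^\star=z_h$.
\item Such an edge is not effective, because its head is a junction, so \Cref{new:obs:cond-end-round} does not exclude it.
\item A singly-attached $U$ with root $z_h$ can therefore be assigned this pair, giving $A_U=(w_0,\dots,z_h,u^\star,z_h,y)$ with $z_{h+1}=z_h$.
\end{itemize}

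In this case your argument via ``different components of $\calF-z_h$'' breaks down. Under that literal phrasing the condition actually fails, since $z_h$ is not even a node of $\calF-z_h$. So the case cannot be dismissed; it must be handled.

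The fix is to use the separation condition in its block form, $z_{h+1}\notin A(h)$. This is what the blocks encode, what the comment after \Cref{new:def:p:aug} allowing repeated junctions $z_i=z_j$ presupposes, and how the full version states it (Property~\ref{full:pseudo-chain-P3} of \Cref{full:def:pseudo-chain}). Because $u^\star\in T_i$, the path $P^{\calF}_{z_h,u^\star}$ begins with $(z_h,y_h)$, so $A_U(h)=T_i$. Then $x\notin T_i$ gives the condition uniformly, including when $x=z_h$. With that change, the rest of your verification goes through.
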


Recall that the definition of a stable $(t+1)$-chain is w.r.t.~the configuration at the end of round $t+1$.
In contrast, we have defined $A_U$ w.r.t.~the configuration at the beginning of round $t+1$.
In subsequent sections, we will show that a large fraction of these potential stable $(t+1)$-chains actually happen to be stable $(t+1)$-chains.

\subsubsection{Definition of $M(t+1)$}\label{sec:def-M(t+1)}

Consider the potential stable $(t+1)$-chain $A_U = (w_0, z_1, w_1, \ldots, z_{h}, w_{h})$, for some $U \in \calS_{t+1}$.
We aim to see under what circumstances  $A_U$ might {\em not} be a stable $(t+1)$-chain.
Let $(u,v)$ be an effective edge that gets applied at some time-step after the beginning of round $t+1$ and before the end of round $H$.
We say that $(u,v)$ \textbf{affects} $A_U$ iff one of the following occurs while applying $(u, v)$.
\begin{enumerate}
    \item\label{destroy-P1} Either $z_{h}$ is removed from $B$, or the edge connecting $z_h$ to the pivot of the $h^{th}$ block of $A_U$ is removed from the forest $\calF$.

    \item\label{destroy-P2} Either an edge crossing $U$ (i.e., with one endpoint inside and the other endpoint outside of $U$) is inserted into the forest $\calF$, or a region inside $U$ becomes touched.
\end{enumerate}
Note that the above definition does not include applying a merging edge between two untouched regions in $U$, i.e., if $(u,v)$ is a merging edge between two untouched regions inside $U$, then we do \emph{not} say that $(u,v)$ affects $A_U$.
We say that $(u,v)$ \textbf{destroys} $A_U$ iff during the time-interval from the start of round $t+1$ until the end of round $H$, $(u,v)$ is the very \textit{first edge} that affects $A_U$.

\medskip
\noindent
\textbf{Definition of $M(t+1)$.}
Now, we define $M(t+1)$ to be the set of all potential stable $(t+1)$-chains $A_U$, for $U \in \calS_{t+1}$, that \emph{are not destroyed} during the time-interval from the start of round $t+1$ until the end of round $H$.
In the following, we show the correctness of our definition of $M(t+1)$.

\begin{claim}\label{new:claim:not-destroyed-is-stable}
    Each $A_U \in M(t+1)$ is a stable $(t+1)$-chain.
\end{claim}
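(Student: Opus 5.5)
We must show that every potential stable $(t+1)$-chain $A_U = (w_0,z_1,\ldots,z_h,w_h)$ that survives (is not destroyed) has three properties: (a) it is a pseudo chain w.r.t.~the final configuration $\left(\calF^\final_{t+1}, B^\final_{t+1}, \calC^\final_{t+1}\right)$, (b) its length is at most $t$, and (c) conditions \ref{P1:cond:stable-1} and \ref{P1:cond:stable-2} of \Cref{new:def:stable:chain} hold from the start of round $t+2$ onward.

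Length is immediate. A repetitive chain equals $A_i$, which has length at most $t-1$. An extended chain appends one pair to $A_i$, so its length is at most $t$.

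The plan is to split $A_U$ into its prefix and its last block, and handle each with a different tool.

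\emph{The prefix.} For $i<h$, the prefix data (junctions $z_1,\ldots,z_{h-1}$, their pivot edges, and blocks $A_U(0),\ldots,A_U(h-1)$) coincides with data of $A_i$ in the extended case. In the repetitive case all blocks coincide with those of $A_i$. Since $A_i$ is a stable $t$-chain, its conditions hold from round $t+1$ until round $H$, hence in particular from round $t+2$ onward. This covers the prefix's junctions, pivot edges and untouched regions.

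\emph{The last block.} For index $h$ we use non-destruction. The junction $z_h$ is $x$ (extended case) or the root of $T_i$ (repetitive case), and it stays in $B$ with its pivot edge kept, because otherwise item \ref{destroy-P1} would occur. In the extended case the tail is $U$; in the repetitive case $U$ lies inside the tail, and the portion outside $U$ is still covered by $A_i$'s stability. No region of $U$ becomes touched and no crossing edge of $U$ is inserted, by item \ref{destroy-P2}. It remains to argue that the regions inside $U$ at the end of round $t+1$ are untouched and that the tail remains a fixed block. I would show the following:
\begin{itemize}
\item No forest edge inside $U$ is deleted. The only deletions come from degree reduction inside touched regions and from removing critical/pivot edges at junctions on chains, and any chain touching $U$ would touch a region of $U$.
\item No edge crossing $U$ is deleted except possibly the pivot edge at the root. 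Since the root has a unique attachment, the block $A_U(h)$ equals $U$ plus the pivot structure throughout.
\item Merges inside $U$ only coarsen untouched regions into untouched regions. This is allowed, because Condition \ref{P1:cond:stable-2} is stated w.r.t.~the regions at the end of round $t+1$, and a later merge keeps them untouched. I would need to check that the definition's ``structure does not change'' tolerates this, or else argue that after round $t+1$ no such merge happens. The latter is plausible because the round-$(t+1)$ loop exhausted effective edges, but new rounds reset $B$ with $D$ added.
\end{itemize}

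\emph{The pseudo-chain properties.} Property \ref{p:aug:1} needs $w_h$ untouched with $z_h\in B$ in the same component. Property \ref{p:aug:2} needs the non-forest edges $(w_{i},z_{i+1})$, in particular $(u^\star,x)$, to remain non-forest. Inserting $(u^\star,x)$ would insert an edge crossing $U$ or crossing $T_i$, destroying $U$ or $A_i$. The separation condition follows because pivot edges persist, so $w_i$ and $z_{i+1}$ stay on opposite sides of $z_i$. Property 3 needs $w_0$ and $z_1$ in different components; the block $A(0)$ is a whole component, and no crossing edge is ever added to it by stability of $A_i$.

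\emph{Main obstacle.} The hardest step is the claim that the component structure and separation survive edge insertions elsewhere. Applied chains may add edges between distinct other components and thereby merge them. I would argue that each relevant separation is certified by a pivot edge together with the untouched, uncrossed blocks on either side, and that both of these are preserved.
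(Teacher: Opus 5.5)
Your decomposition (prefix via the stability of $A_i$, tail via non-destruction) is the same as the paper's. However, you leave unproved exactly the step the claim rests on.

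Condition~\ref{P1:cond:stable-2} requires that the structure of the junctions and untouched regions inside the tail does not change after round $t+1$. In the full version this reads: these regions are never removed from $\calC$. A merging edge with both endpoints in $U$ does \emph{not} affect $A_U$, so non-destruction alone cannot exclude it. Your first fallback, that the definition tolerates later merges, does not work. Applying a merging edge deletes junctions of $U$ from $B$ and replaces the boundary regions by $\calF[X]$, i.e., removes them from $\calC$. So you must prove your second alternative, which you only call plausible.

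The missing argument has three parts.
\begin{enumerate}
\item By item~\ref{destroy-P2}, no node of $U$ is ever touched, and $D$ only contains touched nodes, so $D^\final_{t+1}\cap V(U)=\emptyset$. Hence the reset $B^\init_{t+2}=B^\final_{t+1}\cup D^\final_{t+1}$ leaves the junctions and regions inside $U$ exactly as at the end of round $t+1$. Your worry about the reset disappears.
\item By \Cref{new:obs:cond-end-round}, at the end of round $t+1$ no non-forest edge joins two distinct untouched regions of $U$. Such an edge would be effective with its second endpoint outside $D^\final_{t+1}$.
\item As long as $A_U$ is not destroyed, the forest inside $U$ and $B\cap V(U)$ do not change.
\begin{itemize}
\item The pivot edge at $x$ is the only forest edge leaving $U$. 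So the path of a merging or touching edge that enters $U$ from outside passes through $x$, and applying it removes $x$ from $B$ (item~\ref{destroy-P1}).
\item Paths between two nodes outside $U$ avoid $U$ entirely.
\item An applied augmenting chain that modifies the forest inside $U$ must touch a region of $U$, insert an edge crossing $U$, or remove the pivot edge.
\item An internal merging edge would need a non-forest edge between two distinct regions of $U$. By the first two parts and induction over time, no such edge exists after round $t+1$.
\end{itemize}
\end{enumerate}
Hence the structure of $U$ is frozen from the end of round $t+1$ onward. This is precisely the paper's argument: all such edges must have been applied by the end of round $t+1$, so the structure of $U$ is settled.
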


\begin{proof}
(Sketch)
Consider any $U \in \calS_{t+1}$ such that $A_U \in M(t+1)$. By \Cref{new:ob:pseudo}, we have that $A_U = (w_0, z_1, w_1, \ldots, z_{h}, w_{h})$ is a pseudo chain w.r.t.~$\left(\calF^\init_{t+1}, B^\init_{t+1}, \calC^\init_{t+1} \right)$.
Moreover, $A_U$ is not destroyed, meaning that every effective edge hitting $A_U$ that have been applied after the start of round $t+1$ is a merging edge $(u,v)$ with both endpoints  $u, v \in U$.
At the end of round $t+1$, all such edges must have been applied (see \Cref{new:obs:cond-end-round}), meaning that the structure of the junction and regions in $U$ is settled.
Hence, $A_U$ is a stable $(t+1)$-chain, because of the following two reasons: (i) The structure of its tail does not change. (ii) The structure of the $i^\th$ block of $A_U$, for all $i \leq h-1$, also does not change, due to $(w_0,\ldots,z_{h-1},w_{h-1}) \subseteq A_U$ being a stable $t$-chain.
\end{proof}

\medskip
\Cref{new:claim:not-destroyed-is-stable} shows that our set $M(t+1)$ consists of a collection of stable $(t+1)$-chains.
Moreover, the tails of these stable $(t+1)$-chains are singly-attached $(\calF_{t+1}^\init, B_t^\out)$-regions (elements of $\calS_{t+1}$) that are mutually node-disjoint.
Hence, $M(t+1)$ is indeed a tail-independent set of stable $(t+1)$-chains.
It now remains to analyze the size of $M(t+1)$.

\subsubsection{Lower Bounding the Size of $M(t+1)$: Concluding the Proof}\label{sec:count-true-chains}

We start with the following important claim.

\begin{claim}\label{new:claim:destroying-is-improving}
    If $(u,v)$ destroys at least one sequence $A_U$ for some $U \in \calS_{t+1}$, then $(u,v)$ must be an improving edge.
\end{claim}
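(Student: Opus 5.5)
We argue by contradiction: if the edge $(u,v)$ that destroys $A_U$ were a merging or touching edge, it would have to affect $A_U$ in a way such edges cannot. So suppose $(u,v)$, applied at some time after the start of round $t+1$, is the first edge affecting $A_U = (w_0,z_1,\ldots,z_h,w_h)$ and is not improving. Then $(u,v)$ is merging or touching. Neither type changes $\calF$; they only shrink $B$ (removing the junctions on $P^{\calF}_{u,v}$) and shrink $\calC$ (the regions in $\bound_{(\calF,B)}(P^{\calF}_{u,v})$). The edge-based conditions in \ref{destroy-P1} and \ref{destroy-P2} therefore cannot fire: no edge incident to $z_h$ is deleted and no edge crossing $U$ is inserted. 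So $(u,v)$ must either remove $z_h$ from $B$, or make some region inside $U$ touched, where we exclude a merging edge whose whole action lies inside $U$.

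First I would pin down the state just before $(u,v)$ is applied. Since nothing has affected $A_U$ yet, the edge $(z_h,y_h)$ is still in $\calF$ and $z_h\in B$. No edge crossing $U$ has been inserted. Edges inside $U$ can only change through degree reduction inside reducible untouched regions, and that would make those regions touched, which has not happened. So every $(\calF,B)$-region inside $U$ is still untouched, and $U$ is still connected to the rest of its component only through forest edges to its root $x$ (or to $B^\out_t$-nodes). This is the "frozen" picture we will contradict.

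Case (a): $(u,v)$ removes $z_h$ from $B$, so $z_h\in V(P^{\calF}_{u,v})$. Then the pivot $y_h$ lies in $U$ (for an extended chain $z_h=x$ is the root of $U$, and for a repetitive chain $z_h$ is the root of $T_i$). We would like to show $\bound(P^{\calF}_{u,v})$ contains a region of $U$, which reduces this case to case (b). I would argue that the path, passing through the junction $z_h$ adjacent to $U$, has $z_h$'s neighbourhood in its boundary, and $y_h\in U$ is such a neighbour. So the $(\calF,B)$-region containing $y_h$ lies in the boundary.

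Case (b): some region $R\subseteq U$ lies in $\bound(P^{\calF}_{u,v})$. If $(u,v)$ is touching, $R$ becomes touched, which affects $A_U$. If it is merging, the definition excludes only merges with both endpoints in $U$. Otherwise the path $P^{\calF}_{u,v}$ leaves $U$, so it passes through the root $x\in B^\out_t\subseteq B^\init_{t+1}$ (using \Cref{new:claim:B-out-in-B-init}). Then the boundary contains the region on the other side of $x$, outside $U$, and the merged region $\calF[X]$ straddles $U$, so regions inside $U$ are absorbed into a new region. I would classify this as "a region inside $U$ becomes touched or changes," contradicting first-affecting.

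The main obstacle is making this last sub-case rigorous. We must show that a merging edge whose path exits $U$ genuinely counts as affecting $A_U$, or else is impossible. I would first try to show it is impossible. Before round $t+1$ the region on the far side of $x$, toward $T_i$, contains touched material, or is blocked by the junction structure of the stable chain. That would make such an edge touching rather than merging. Failing that, I would interpret "a region inside $U$ becomes touched" as including being merged with regions outside $U$, which is how the definition is evidently meant.
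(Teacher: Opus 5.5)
There is a genuine gap, and it sits exactly at what you call the main obstacle. The contradiction you set up never materializes. In case (b) you conclude that a touching edge makes $R$ touched, ``which affects $A_U$'', and that a merging edge through the root absorbs regions of $U$, ``contradicting first-affecting''. But being the first edge that affects $A_U$ is the hypothesis, not a contradiction. To finish, you would need that no merging or touching edge can affect a still-intact $A_U$, and this cannot be shown structurally. You correctly note that, apart from merges internal to $U$, such an edge can only reach $U$ if $P^{\calF}_{u,v}$ passes through the root $x$ of $U$. Yet nothing in the forest, junction, or region structure forbids, say, $u$ and $v$ lying in regions outside $U$ with $x$ on their path. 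Such an edge removes $x$ (the last junction of the extended chain) from $B$ and puts the region of $U$ adjacent to $x$ into the boundary, so it does affect $A_U$. Your fallbacks do not help. Showing the edge is touching rather than merging still leaves it non-improving and still affecting $A_U$. Widening the meaning of ``a region inside $U$ becomes touched'' only enlarges the set of affecting edges.

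The missing idea is algorithmic. Whether $(u,v)$ is improving is defined solely by whether $\findChain(r,(u,v))$ succeeds, so an argument that never looks at the search cannot conclude it. The paper's proof (\Cref{full:sec:proof-destroying-is-improving}) handles precisely the case $x \in V(P^{\calF}_{u,v})$ by showing the search cannot fail. First, $x$ is a junction on the path whose level, by the invariant of \Cref{full:claim:levels-are-bounded}, stays low enough that the search on $(u,v)$ always has the option of recursing on $x$. Second, since nothing has yet affected $A_U$, the chain itself supplies a route of recursive calls that must succeed: the backward edge $(u^\star,x)$ with $u^\star$ untouched, then the junction $z_h$ on $P^{\calF}_{u^\star,x}$, and so on down the stable $t$-chain, with the level invariants proved by induction on $t$. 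Hence $\findChain$ returns an augmenting chain and $(u,v)$ is improving. Your remaining cases agree with the paper: a path inside $U$ sees only untouched regions, so it is an excluded internal merge, and a path avoiding $x$ cannot reach $U$. Repetitive chains need no work, since they are already stable $t$-chains and cannot be destroyed.
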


\begin{proof}(Sketch) 
First, all of the regions inside $U$ are untouched so far. Second, the edge $(u,v)$ must cross $U$, which means that the subroutine $\findChain(\cdot,\cdot)$, when called on $(u,v)$, has the option to call a $\backSearch$ on the root of $U$.
The nodes and edges corresponding to the sequence $A_U$ are possible choices for the recursive calls inside $\backSearch$, which leads to successful calls.
As a result, the $\backSearch$ can not fail, and return an augmenting chain (possibly a subsequence of $A_U$).
This implies that $(u,v)$ is an improving edge. 
\end{proof}

\begin{claim}\label{new:claim:number-of-destroys}
    Each improving edge $(u,v)$ can destroy at most $5H$ pseudo chains from $\{A_U\}_{U \in \calS_{t+1}}$.
\end{claim}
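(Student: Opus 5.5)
The plan is to charge each destroyed pseudo chain $A_U$ to a specific structural change made when the improving edge $(u,v)$ is applied, and then to count those changes. Let $\Gamma = (w_0, z_1, \ldots, z_r, w_r, z_{r+1})$ be the augmenting chain applied, with $r+1 \le H$. By \Cref{new:claim:destroying-is-improving}, only improving edges can destroy chains at all. By the definition of ``affects'', $(u,v)$ destroys $A_U$ only if, while $\Gamma$ is applied, one of two things happens. Either (a) the last junction $z_h$ of $A_U$ loses its edge to the pivot of the $h$-th block, or $z_h$ leaves $B$. Or (b) an edge crossing $U$ is inserted into $\calF$, or a region inside $U$ becomes touched. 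Since $B' = B$ by \Cref{new:lem:apply-chain}, junctions never leave $B$ under an improving edge. So the only events to count are removed critical edges, inserted edges, and newly touched regions.

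The key structural fact I would use is that the tails $U \in \calS_{t+1}$ are pairwise node-disjoint. They are distinct $(\calF^\init_{t+1}, B^\out_t)$-regions, and any region inside $U$ is contained in $U$. Three counts follow.

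\begin{enumerate}
\item \textbf{Newly touched regions.} By \Cref{new:lem:apply-chain}, at most $h+2 \le H+1$ regions become touched, namely the $W_i$ and possibly $Z_{r+1}$. Each lies in at most one $U$, so these events destroy at most $H+1$ chains.
\item \textbf{Inserted forest edges.} There are two kinds. Those inserted inside the degree-reduction calls have both endpoints in a single reducible region $W_i$ or $Z_{r+1}$. If such a region lies in some $U$, that $U$ already contains a newly touched region and was charged in item 1. If it lies outside every $U$, then a $U$ could only be crossed if the region itself meets $U$, which is impossible since $U$ is a union of untouched regions of the current configuration. The remaining inserted edges are the $r+1 \le H$ chain edges $(w_i, z_{i+1})$. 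Each has two endpoints, so each crosses at most two tails, giving at most $2H$ destroyed chains.
\item \textbf{Removed critical edges $(z_i, y_i)$.} There are at most $r \le H$ of them. Each removed edge $(z, y)$ can be the root--pivot edge of the last block of $A_U$ only if $y$, the pivot, lies in $U$, or is the unique neighbour of $z$ in the direction of $U$. Because the tails are disjoint and $y$ determines the side, each removed edge accounts for at most two chains, giving at most $2H$.
\end{enumerate}

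Summing gives at most $(H+1) + 2H + 2H \le 5H$ destroyed chains, after absorbing the constant using $H \ge 1$ (or with a slightly sharper count for the $W_i$ already counted).

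The main obstacle is item 3. The pivot edge of the last block of $A_U$ is $(z_h, y_h)$, where $y_h$ is the first vertex on $P_{z_h, w_h}$. For extended chains, $w_h$ is a leaf $y$ of $U$ and $z_h = x$ is the root of $U$, so $y_h$ lies in $U$. Hence distinct $U$ sharing the same root have distinct pivots, and a removed edge $(x, y_h)$ identifies at most one $U$. For repetitive chains the last block is $T_i$ and the pivot edge is the edge joining $T_i$ to its root, which again is a single edge pointing into the tail. I would verify this injectivity carefully and handle the $z_i = z_j$ repetitions allowed in pseudo chains. The dependence on $\lvl$ and dirty nodes should play no role here.
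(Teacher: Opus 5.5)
Your proposal is correct and uses the same charging argument as the paper (and its full-version counterpart): each destroyed $A_U$ is charged to a newly touched region, an inserted chain edge $(w_i,z_{i+1})$, or a removed critical edge $(z_i,y_i)$, and node-disjointness of the tails $U\in\calS_{t+1}$ bounds how many tails each such event can hit. The only slip is the final arithmetic: $(H+1)+2H+2H=5H+1$ cannot be absorbed using $H\ge 1$, but the applied chain has length $r+1\le H$, so there are only $r\le H-1$ removed edges $(z_i,y_i)$ and the total is at most $(H+1)+2H+2(H-1)=5H-1\le 5H$ (alternatively, each $(w_i,z_{i+1})$ can newly hit only the tail containing $z_{i+1}$, because $w_i$'s region is already counted as touched, giving $(H+1)+H+2H=4H+1\le 5H$).
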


\begin{proof}(Sketch)
Let $A$ be the augmenting chain that gets applied when the algorithm decides to apply the improving edge $(u, v)$.
Note that the length of any augmenting chain applied by the algorithm is at most $H$.
So, at most $H+2$ many $(\calF, B)$-regions get touched while applying $A$, and there are at most $2H$ many critical edges of $A$  (see  \Cref{new:lem:apply-chain}).
These are the only problematic updates affecting $A_U = (w_0,\ldots,z_{h},w_h)$.
Moreover, $(w_0,\ldots,z_{h-1},w_{h-1}) \subseteq A_U$ is a stable $t$-chain, meaning that $A_U$ is destroyed iff its tail $U$ is affected by an update.
These tails $U \in \calS_{t+1}$ are mutually node-disjoint, meaning that applying $A$ can destroy $3H+2 \leq 5H$ many sequences $A_U$.
\end{proof}

\medskip
\noindent
\textbf{Bounding the Size of $M(t+1)$.}
Combining Claims \ref{new:claim:size-B-outtt} and \ref{new:claim:size-singly-attached}, we get  
$$|\calS_{t+1}| \geq \frac{2\Delta^\star-1}{\Delta^\star} \cdot q \geq (3/2)\cdot q, $$
From  \Cref{new:claim:destroying-is-improving} and \Cref{new:claim:number-of-destroys}, we conclude that  at most $\tau \cdot 5H \leq f/5$ many pseudo chains from $\{ A_U \}_{U \in \calS_{t+1}}$ get destroyed  (see \Cref{new:assumption:small-improvement}).
Hence, we have $|M(t+1)|\geq 
|\calS_{t+1}| - f/5 \geq (3/2)\cdot q - f/5 \geq (5/4)\cdot q  $.
The last inequality follows since $q = |M(t)| \geq |M(1)| \geq 4f/5$, as per Property  \ref{new:lem:potential-1} of \Cref{new:lem:potential}.
This concludes the proof of Property \ref{new:lem:apply-chain-P2} of \Cref{new:lem:apply-chain}.

\subsection{Running Time Analysis}
Almost all parts of the algorithm are quite easy to implement with elementary data structures.
There are a few operations where we need to deal with the unique path between two given nodes in a dynamic forest (after applying an augmenting chain the forest will change).
For those operations, we use the Top-Tree data structure \cite{TopTree}.
Equipped with this data structure, it is fairly straightforward to implement all parts of every round of the algorithm in $\tilde O(m)$ time, except the calls to the $\findChain$ subroutine.

\medskip
\noindent
\textbf{Time Spent on Find-Chain.}
To analyze the running time of the $\findChain$ subroutine and recursive calls to $\backSearch$, we charge the time of every failed $\backSearch$ to the increased value in the level of the object on which it is called.
Since the levels of the nodes and edges are bounded by $H$ throughout the entire algorithm, the total time spent on all calls to $\findChain$ is bounded by $\tilde{O}(m\cdot H)$ as well.
We refer the reader to the full version (\Cref{full-version}) for a comprehensive implementation and running time analysis.

\medskip
The above discussion, along with \Cref{new:lem:potential}, implies the proof of \Cref{new:key:lemma}.

\iffalse
\subsection{Proof of \Cref{lem:potential}: Part (ii)}\label{sec:proof-incresing-potential2}

Fix a value $t \in [1, H]$, and assume that $M(t) = \{A_1,A_2, \ldots, A_q\}$ for some $q \in \mathbb{N}$.
The goal is to construct the set $M(t+1)$ of tail-independent set of stable $(t+1)$-chains satisfying $|M(t+1)| \geq (5/4) \cdot q $.
We denote the tail of $A_i$ by $T_i$ for each $i \in [1, H]$.
\fi

\newpage

\newpage

\part{Full Version}\label{part:full}\label{full-version}

\begin{remark}
    In this part of the paper, we use different notations and definitions than \Cref{part:EA}.
    The goal of this part is to provide a comprehensive and precise description, analysis, and implementation of our algorithm.
    Although we use different notations than \Cref{part:EA}, this part of the paper is self-contained.
\end{remark}

\section{Introduction}

\subsection{Basic Notations}\label{full:sec:notation}
Throughout this part of the paper, we let $n := |V(G)|$ and $m := |E(G)|$ be the number of nodes and edges of the input graph. We use $\tilde{O}(\cdot)$ to hide $\mathrm{poly}\log n$ factors.

For each subgraph $K \subseteq G$, we denote by $N_K(x) \subseteq V$ and $N^+_K(x) \subseteq V$, the set of neighbors of $x$ in $K$ excluding and including $x$ itself, respectively. 
For each subset of nodes $U \subseteq V(G)$ and each subgraph $K \subseteq G$, we denote by $K[U]$ the induced subgraph of $K$ on the nodes $U$.

\medskip
\noindent
\textbf{Mostly Used Parameters.}
Throughout the paper, we use the following parameters;
\begin{align*}
    \Delta^\star &:= \min_{\substack{T \subseteq G \\ \text{is a spanning tree of $G$}}} \max_{u \in V(G)} \deg_T(u), \nonumber  \\
    k^\add &:= \Delta^\star + 1 
    \text{ and } 
    k^\mult := \lceil (1+\varepsilon)\cdot \Delta^\star\rceil + 1 \text{ for some arbitrary $0 < \varepsilon < 1$,}
    \\
    H^\add &:= \lceil 10n/f \rceil  \text{ and }
    H^\mult := \lceil \log_{1 + \varepsilon}(10n/f) \rceil + 1,
    \text{where $f$ is the number of components of $\calF$} \nonumber
\end{align*}
$\Delta^\star$ is the maximum degree of the optimum spanning tree in $G$.
We consider a general value of $k$ that represents the maximum degree of the spanning tree that we are searching.
$k$ can be either equal to $k^\add$, which provides our additive one approximation algorithm, or can be equal to $k^\mult$, which provides our multiplicative approximation algorithm.
We consider a generic parameter $H$ as well that takes the value of $H^\add$ or $H^\mult$ corresponding to the value of $k$, i.e., $H = H^\add$ if $k=k^\add$ and $H=H^\mult$ if $k=k^\mult$.
$H$ is an upper bound on the length of some object called \textit{augmenting chain} (see \Cref{full:def:aug-chain}) that we use throughout our algorithm.

From this point forwards, we provide our algorithm, and the analysis w.r.t.~the generic parameters $k$ and $H$.
Whenever a distinction between $ k = k^\add$ and $k = k^\mult$ as well as $H=H^\add$ and $H=H^\mult$ is needed, we explicitly note that.
So, throughout the paper, we assume
\begin{equation}\label{full:eq:param}
    \text{either } (k,H) = \left(k^\add, H^\add\right) \text{, or } (k,H) = \left(k^\mult, H^\mult\right).
\end{equation}

\medskip
\noindent
\textbf{Notations for Forest.}
Whenever we refer to a forest $\calF$, we consider $\calF$ as a forest of the original graph $G$ that contains all nodes of the graph, i.e., $V(\calF) = V(G)$.

For each forest $\calF$ of $G$ and each pair of nodes $u,v \in V(G)$ that are in the same connected component of $\calF$, we denote by $P^\calF_{u,v}$ the unique path in $\calF$ connecting $u$ and $v$.
We also let $T_{u \leftarrow v}^\calF$ be the connected sub-tree of $\calF$ containing $u$, obtained by removing the unique edge incident on $v$ in $P_{u,v}^\calF$.
We refer to a forest $\calF$ as a \textbf{feasible} forest, if for all $u \in V(G)$, we have $\deg_\calF(u) \leq k$.
For any arbitrary $X \subseteq V(G)$, we refer to each connected component of $\calF[V(G) - X]$ as a $(\calF, X)$-\textbf{region}.

\subsection{The Main Subroutines}\label{full:sec:two-main-subroutines}

We have the following two main lemmas.

\begin{lemma}\label{full:lem:main}
    There exists a deterministic algorithm that, given a forest $\calF$ with $f \geq 50$ components whose maximum degree is $\leq k$, returns another forest of maximum degree $\leq k$ with at least $f/(100 \cdot H)$ many fewer components in a total of $\tilde{O}(m \cdot H)$ time. The parameters $k$ and $H$ are defined in \Cref{full:eq:param}.
\end{lemma}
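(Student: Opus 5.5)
The plan is to run the round-based algorithm of \Cref{new:sec:template} for $H$ rounds and show two things: the output has at least $f/(100H)$ fewer components, and the total running time is $\tilde{O}(mH)$.

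\textbf{Correctness and feasibility.} Each round starts from a valid configuration. By \Cref{new:claim:apply-effective-remain-configuration}, every iteration preserves validity, and by \Cref{new:lem:apply-chain} every applied augmenting chain removes exactly one component. So the number of components drops by exactly $\tau$, the number of chains applied. Suppose, for contradiction, that $\tau < f/(100H)$.

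\textbf{The potential argument.} We build tail-independent sets $M(t)$ of stable $t$-chains as in \Cref{new:lem:potential}, now with general $(k,H)$ and without \Cref{new:assumption:full-degree}.

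For the base case, at least $f-2\tau \ge (1-1/(50H))f$ components are never touched by an applied chain. Each of them gives a length-$0$ stable $1$-chain, so $|M(1)|$ is close to $f$.

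For the step from $t$ to $t+1$:
\begin{itemize}
\item Repeat the counting of \Cref{new:claim:exist-expanding-edge}. Now each junction has degree $k-\delta_x$, where $\delta_x$ is its deficit from removed critical edges, and the total deficit is at most $2H\tau \le f/50$. A tail $T_i$ with no expanding edge would force $|\calC^{(i)}|-1 \le \Delta^\star|B^{(i)}|$, while the counting gives $|\calC^{(i)}| \ge 1+(k-2)|B^{(i)}| - (\text{deficit in }T_i)$. With $k-2 \ge \Delta^\star$ (indeed $k-2 \ge (1+\epsilon)\Delta^\star - 1$), this leaves at most deficit-many bad tails.
\item Then $|B^\out_t| \ge (q-\text{deficit})/\Delta^\star$ by the argument of \Cref{new:claim:size-B-outtt}.
\item By \Cref{new:claim:B-out-in-B-init}, these roots are saturated junctions at the start of round $t+1$. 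The singly-attached region count becomes $|\calS_{t+1}| \ge (k-1)|B^\out_t| - \text{deficit}$.
\item By \Cref{new:claim:destroying-is-improving} and \Cref{new:claim:number-of-destroys}, the destroyed potential chains number at most $5H\tau \le f/20$.
\end{itemize}
Altogether,
\[
|M(t+1)| \ge \tfrac{k-1}{\Delta^\star}\bigl(q - O(f/50)\bigr) - f/20 .
\]

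\textbf{Main obstacle: the growth ratio, with parameters chosen per case.} In the multiplicative case $(k-1)/\Delta^\star \ge 1+\epsilon$. The additive losses must be absorbed: I would show inductively that $q \ge f/2$, so the losses are a small constant fraction of $q$, and tune the constants (possibly replacing $100$ and the counts above) so that the ratio stays at least $1+\epsilon/2$. Then $H^\mult = \lceil\log_{1+\epsilon}(10n/f)\rceil+1$ rounds force $|M(H)| > n$, a contradiction with tail-independence.

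In the additive case $(k-1)/\Delta^\star = 1$, so there is no multiplicative growth. Instead I would track the slack exactly: $|\calS_{t+1}| - q$ should gain at least one per component $X$ that carries a $B^\out_t$ node (the ``$+1$'' in the counting of \Cref{new:claim:size-singly-attached}). Showing that these additive gains of order $f$ per round dominate the losses, giving $|M(t+1)| \ge |M(t)| + f/10$, is the delicate step. If it holds, $H^\add = \lceil 10n/f\rceil$ rounds push $|M(H)|$ past $n$, again a contradiction.

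\textbf{Running time.} Each round, apart from $\findChain$, takes $\tilde{O}(m)$ using top trees: restoring $\calC$, scanning effective edges, and performing merges and degree reductions. Degree reductions are charged to regions that become touched, and each region is touched at most once per round. Failed $\backSearch$ calls are charged to level increments of nodes and edges, whose levels are bounded by $H+1$, giving $\tilde{O}(mH)$ in total. Successful searches are charged to the chain applied, whose length is at most $H$. Summing over the $H$ rounds gives $\tilde{O}(mH)$.
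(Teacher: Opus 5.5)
Your correctness and running-time parts match the paper, but the growth step, which is the heart of the lemma, has a genuine gap. You also leave the additive case explicitly open.

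\textbf{Wrong counts and the missing $\Delta^\star-1$ bound.} Your bound $|\calS_{t+1}|\ge (k-1)|B^{\out}_t|-d$ is not justified. The counting of \Cref{full:claim:number-leafs} gives only $1+(k-2)|B^{\out}_t\cap V(X)|$ singly-attached regions per component $X$. For example, a path of $b$ saturated $B^{\out}_t$-nodes with pendant leaves has exactly $(k-2)b+2$ such regions, which is less than $(k-1)b$ for $b\ge 3$. With the correct count, your estimate $|B^{\out}_t|\ge (q-d)/\Delta^\star$ gives a growth ratio of only $(k-2)/\Delta^\star$:
\begin{itemize}
\item in the additive case this is $(\Delta^\star-1)/\Delta^\star<1$;
\item in the multiplicative case it equals $1$ whenever $\epsilon\Delta^\star\le 1$, which is exactly the regime $\epsilon=1/\Delta^\star$.
\end{itemize}
Your per-tail contradiction also fails for $k=\Delta^\star+1$, since then $k-2<\Delta^\star$.

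The missing idea is the sharper bound of \Cref{full:claim:size-B-out}. In the $T^\star$-connectivity count, take as pieces the regions inside the tails and also every node of $B^{\inn}_t\cup B^{\out}_t$ as a singleton. Every $T^\star$-edge leaving a piece has an endpoint in $B^{\inn}_t\cup B^{\out}_t$. Hence $(|B^{\inn}_t|+|B^{\out}_t|)\Delta^\star\ge q+(k-1)|B^{\inn}_t|+|B^{\out}_t|-f/50-1$. Using $k-1\ge\Delta^\star$ to cancel the $B^{\inn}_t$ terms gives $|B^{\out}_t|\ge (q-f/10)/(\Delta^\star-1)$. The resulting ratio $(k-2)/(\Delta^\star-1)$ is exactly $1$ for $k^{\add}$ and at least $1+\epsilon$ for $k^{\mult}$.

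\textbf{Absorbing the losses.} You cannot absorb the $\Theta(f)$ losses into the multiplicative ratio. Destroyed chains alone can cost up to $20H\tau\le f/5$ per round, a constant fraction of $q$ when $q=\Theta(f)$. That exceeds the $\epsilon q$ gain for small $\epsilon$, unless you replace the constant $100$ by something $\epsilon$-dependent, which changes the statement.

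The paper pays for all losses with an additive term instead. $\calS_{t+1}$ consists of regions adjacent to \emph{at most} one node of $B^{\out}_t$. So every component of $\calF^{\init}_{t+1}$ containing no $B^{\out}_t$-node is itself a rootless singly-attached region, and it yields a length-$0$ new chain $(y)$. Summing the $+1$ of \Cref{full:cor:number-leafs} over all at least $f-f/(100H)$ components gives $|\calS_{t+1}|\ge (k-2)|B^{\out}_t|+9f/10$ (\Cref{full:claim:size-singly-attached}). This $9f/10$ covers three losses:
\begin{itemize}
\item the deficit;
\item the at most $f/10$ unqualified potential chains, whose chosen leaf is a junction (a case you also do not handle);
\item the destroyed chains.
\end{itemize}
The result is $|M(t+1)|\ge \frac{k-2}{\Delta^\star-1}|M(t)|+f/5$, which works uniformly in both parameter regimes.

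Your proposed additive-case fix, gaining one per component that carries a $B^{\out}_t$-node, is not enough. $B^{\out}_t$ may lie in $O(1)$ components, and with your $1/\Delta^\star$ denominator you would still be losing about $q/\Delta^\star\gg f$ per round.
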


\begin{lemma}[Lemma 3.1 in the arXiv version of \cite{BFW26}]\label{full:lem:reduce-one}
    There exists a deterministic algorithm that, given a forest $\calF$ with $f \geq 2$ components whose maximum degree is $\leq k$, returns another forest of maximum degree $\leq k$ with one fewer component in a total of $\tilde{O}(m)$ time. The parameters $k$ and $H$ are defined in \Cref{full:eq:param}.
\end{lemma}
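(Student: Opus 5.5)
The statement immediately above is \Cref{full:lem:reduce-one}, the classical Fürer--Raghavachari single-merge step. I would prove it with one ``global'' version of the degree-reduction procedure from \Cref{new:lem:degree-reduction}, run on the whole forest instead of on a single subtree.

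\textbf{Setup.} Initialize $B$ as the set of saturated nodes of $\calF$ (degree $k$). Every non-junction node is then in some $(\calF,B)$-region and is slack.

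\textbf{Merging loop.} Repeatedly process non-forest edges $(x,y)$ with $x,y\notin B$ lying in different $(\calF,B)$-regions.
\begin{itemize}
\item If $x$ and $y$ are in different components of $\calF$, stop. Use the standard Fürer--Raghavachari back-propagation: each junction that was removed from $B$ when merging along some cycle $P_{x',y'}^{\calF}+(x',y')$ can be made slack by swapping edges, recursively reducing the degree of the endpoints $x',y'$ inside the merged regions. This makes $x$ and $y$ slack, after which we insert $(x,y)$, and the number of components drops by one with feasibility preserved. This recursion is exactly what \Cref{new:lem:degree-reduction} encapsulates, applied to the two disjoint reducible regions containing $x$ and $y$.
\item Otherwise, remove $V(P_{x,y}^{\calF})\cap B$ from $B$, which merges the regions along the path into one reducible region.
\end{itemize}

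\textbf{Termination without a merge.} If the loop ends with no such cross-component edge, I derive a contradiction with $k\ge\Delta^\star+1$, using the witness-set counting argument exactly as in \Cref{new:claim:exist-expanding-edge}. The final $B$ consists of nodes of degree $k$. Removing $B$ leaves at least $(k-2)|B|+f+|B|$ regions, which counts via $\sum_{x\in B}\deg(x) - (\text{edges inside } B) + f$. Every $T^\star$-edge joining two distinct regions must touch $B$, so
\[
\#\text{regions}-1 \le \Delta^\star |B|.
\]
This contradicts $k-1\ge\Delta^\star$ once $f\ge2$. The \Cref{new:assumption:full-degree}-style equality holds here exactly, since nothing has been modified yet.

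\textbf{Running time.} To get $\tilde O(m)$, I would maintain the regions with union-find and use link-cut or top trees \cite{TopTree} to find the $B$-nodes on $P_{x,y}^{\calF}$. Each junction is removed at most once, and each edge is scanned $O(1)$ times after each region merge, via amortized edge lists where the smaller list is merged into the larger.

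\textbf{Main obstacle.} The delicate step is the back-propagation: implementing the recursive degree reduction so that its total cost is linear, and checking that the swaps stay inside disjoint regions so no cycles are created. I would follow the proof of \Cref{new:lem:degree-reduction} (Lemma 2.1 of \cite{BFW26}), recording for each removed junction the cycle edge that freed it.
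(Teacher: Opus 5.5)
Your route is the one the paper relies on. The paper does not reprove this lemma: it cites \cite{BFW26} and sketches the \cite{FR92} step in \Cref{sec:O(mn)-time}. That step grows reducible regions by the F\"urer--Raghavachari closure, shows some non-forest edge joins reducible nodes in different components, degree-reduces both endpoints via \Cref{full:lem:degree-reduction}, and inserts the edge. Stopping your global closure at the first cross-component edge is a harmless variant. The gap is in the counting step that is supposed to guarantee such an edge exists: as written, it is false.

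Removing $B$ from $\calF$ leaves exactly
$r = f + \sum_{x\in B}(\deg_{\calF}(x)-1) - e_{\calF}(B) = f+(k-1)|B|-e_{\calF}(B)$ regions, where $e_{\calF}(B)$ is the number of forest edges with both endpoints in $B$. Your formula $\sum_{x\in B}\deg(x)-e_{\calF}(B)+f$ is off by $|B|$. Your claimed bound $r\ge f+(k-1)|B|$ therefore fails as soon as two final junctions are adjacent in $\calF$, and nothing rules that out. With the correct $r$, your inequality $r-1\le\Delta^\star|B|$ only gives $f-1\le e_{\calF}(B)$ when $k=\Delta^\star+1$, which is not a contradiction. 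In \Cref{new:claim:exist-expanding-edge}, which you invoke, that weak inequality suffices only because $k=2\Delta^\star+1$ there.

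The fix is to make $T^\star$ connect the junctions as well. Partition $V$ into the $r$ regions and the $|B|$ singletons $\{b\}$, $b\in B$. At termination no edge of $G$ joins two distinct regions: forest edges cannot by the definition of regions, and non-forest ones cannot by your stopping rule. Hence every $T^\star$-edge between distinct parts has an endpoint in $B$, and $r+|B|-1\le\Delta^\star|B|$. Since $e_{\calF}(B)\le|B|$ (or by \Cref{full:cor:number-leafs}), we also have $r\ge f+(k-2)|B|$. Together these give $f-1+(k-1-\Delta^\star)|B|\le 0$, contradicting $f\ge 2$ and $k-1\ge\Delta^\star$. This is the partition argument the paper uses in the proof of \Cref{full:claim:size-B-out}. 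The expression $(k-2)|B|+f+|B|$ you wrote is the correct lower bound on the number of parts $r+|B|$, not on $r$.
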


In \Cref{full:sec:concluding-main-results}, we explain how we achieve the main results of our paper (\Cref{thm:main1,thm:main2}) using the above lemmas.
The rest of the paper from \Cref{full:sec:building-blocks} forward is devoted to proving \Cref{full:lem:main}.

\subsection{Concluding Main Results}\label{full:sec:concluding-main-results}

As a corollary of \Cref{full:lem:main,full:lem:reduce-one}, we get the main results of our paper as follows.

\subsubsection{Proof of \Cref{thm:main1}}
Fix the value of $k := k^\mult = \lceil (1+\varepsilon)\Delta^\star \rceil + 1$ and $H := H^\mult = \lceil \log_{1+\varepsilon} (10n/f) \rceil + 1 \leq 2\varepsilon^{-1}\log(10n)$.
We start with an empty forest $\calF$.
While $\calF$ has $f \geq 50$ components, then we call \Cref{full:lem:main} and obtain another forest containing at most
$f - f/(100 \cdot H) \leq f(1-\varepsilon/(200 \log(10n)))$ components.
This call takes $\tilde{O}(m \cdot H) = \tilde{O}(m\cdot \varepsilon^{-1})$ time.
After  $O(\varepsilon^{-1}\log^2 n)$ iterations, the number of components of the forest eventually reduces to less than $50$.
Then, by repeatedly calling \Cref{full:lem:reduce-one} at most $50$ times, we get a spanning tree of $G$ of maximum degree at most $k = \lceil (1+\varepsilon)\Delta^\star \rceil + 1$.
The total running time of the algorithm is $\tilde{O}(m\cdot \varepsilon^{-2})$.

\subsubsection{Proof of \Cref{thm:main2}}

Fix the value of $k := k^\add = \Delta^\star + 1$ and $H := H^\add  = \lceil 10n/f \rceil$.
We start with an empty forest $\calF$.
While $\calF$ has $f \geq n^{2/3}$ components, then we call \Cref{full:lem:main} and obtain another forest containing at most
$f - f/(100 \cdot H) \le f(1-f/1000n) = f(1-\Omega(n^{-1/3}))$ components. That is, the number of components is reduced by a $(1-\Omega(n^{-1/3}))$ factor.
This call takes $\tilde{O}(m \cdot H) = \tilde{O}(m n^{1/3})$ time.
After  $O(n^{1/3}\log n)$ iterations, the number of components of the forest eventually reduces to less than $n^{2/3}$.
Then, by repeatedly calling \Cref{full:lem:reduce-one} at most $n^{2/3}$ times, we get a spanning tree of $G$ of maximum degree at most $k = \Delta^\star+ 1$.
The total running time of the algorithm is $\tilde{O}(m n^{2/3})$.

\section{Building Blocks}\label{full:sec:building-blocks}

\paragraph{Roadmap.}
In this section, we provide the main objects used in our algorithm.
We start by defining the notion of \textit{reducible} and \textit{non-reducible} nodes in \Cref{full:sec:reducible}.
This is similar to \cite{FR92} and \cite{BFW26}.
We continue by defining a \textit{configuration} in \Cref{full:sec:configutation} that at any point in time, captures the structure of the forest together with some other objects and their relations.
For simplicity in describing our algorithm, we omit the details of the definition of the most critical objects of our paper called \textit{augmenting chain}.
In \Cref{full:sec:aug-chain-black-box}, we briefly discuss what are the main properties of an augmenting chain that we need to describe our algorithm without dealing with the details of such a complex object.

\subsection{Reducible vs Non-Reducible Nodes}\label{full:sec:reducible}

Let $T$ be any arbitrary connected subgraph of a forest $\calF$, where the maximum degree of the nodes in $\calF$ is at most $k$.
Consider the following procedure, which is essentially the algorithm of \cite{FR92} running on $T$.
Initially, we mark all of the nodes $u \in V(T)$ satisfying $\deg_{\calF}(u) = k$ as  \textit{saturated nodes}, and mark all $(T,B)$-regions as \textit{slack components} where $B$ is the set of saturated nodes.
Then, as long as there exists a non-forest edge $(x,y) \in E(G[T]) - E(\calF)$ between two nodes $x$ and $y$ that are contained in two different slack components, we consider the path $P_{x,y}^T$ between $x$ and $y$ in $T$, and {\em merge} all of the slack components {\em hitting} or {\em adjacent to} $P_{x,y}^T$ together with all saturated nodes on $P_{x,y}^T$, to form a new larger slack component.
Throughout this process, each slack component remains a sub-tree of $T$, and different slack components remain mutually node-disjoint separated by saturated nodes.

\begin{definition}
    At the end of the above process, we refer to every node inside a slack component as a \textbf{reducible} node w.r.t.~$T$, and refer to every remaining saturated node as a \textbf{non-reducible} node w.r.t.~$T$.
\end{definition}

Morally, a node $u$ is reducible w.r.t.~$T$ if we can locally change $\calF$, by inserting and removing some edges inside $T$, to achieve another forest $\calF'$ that is feasible and moreover, $\deg_{\calF'}(u) \leq k-1$.
This key property is summarized in \Cref{full:lem:degree-reduction}.

\begin{lemma}[Degree Reduction: Lemma 2.1 in the arXiv version of \cite{BFW26}]\label{full:lem:degree-reduction}     
    There is a {\bf degree-reduction} subroutine which works as follows.
    Let $C$ be any connected subgraph of a feasible forest $\calF$.
    The subroutine takes $(C, u)$ as input, where $u \in C$ is reducible w.r.t.~$C$, and modifies $\calF$ by inserting/deleting some edges $e \in E$ whose {\em both} endpoints lie in $V(C)$. Let $\calF^+$ denote the state of $\calF$ just after the subroutine finishes execution.
    The subroutine runs in $\tilde{O}\left(\sum_{x \in V(C)} \deg_G(x)\right)$ time,\footnote{More specifically, we need $\tilde{O}\left(\sum_{x \in V(C)} \deg_G(x)\right)$ time to identify explicitly the set of edges $E(G[V(C)])$ whose both endpoints lie on $V(C)$, then, the rest of the subroutine takes only $\tilde{O}\left(\left| E\big( G[V(C)] \big) \right| \right)$ time.} and guarantees that:
    \begin{enumerate}
        \item $\deg_{\calF^+}(u) \leq k-1$.
        \item $\calF^+$ remains a feasible forest. 
    \end{enumerate}
\end{lemma}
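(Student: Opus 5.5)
The plan is to prove the degree-reduction lemma with the Fürer--Raghavachari style process run on $C$. Recall the marking procedure of \Cref{full:sec:reducible}. It starts with the saturated nodes of $C$ as the bad set. Whenever a non-forest edge $(x,y)$ of $G[V(C)]$ joins two distinct slack components, the path $P^{C}_{x,y}$ is taken, and every saturated node on it is merged, together with the adjacent slack components, into one slack component.

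First, record a witness for each merge. For every saturated node $z$ that becomes reducible, store the edge $(x,y)$ that caused its merge and the time of that merge. Second, identify the relevant edges with a single $\tilde{O}(\sum_{x\in V(C)}\deg_G(x))$ scan, as in the footnote. Third, implement the process in $\tilde{O}(|E(G[V(C)])|)$ time. A union--find structure over components handles the merges. Walking the tree path and skipping already-merged stretches (path compression over the rooted tree $C$) ensures each saturated node is charged once.

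The key invariant, proved by induction on time in reverse, is the following. Fix any node $u$ in a slack component $S$ at time $s$. We can modify $\calF$ using only edges inside the union of components involved up to time $s$ so that $u$ becomes slack and the forest stays feasible.

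The base case is a node that was never saturated, which needs no modification. Now let $u=z$ have been merged at time $s$ via the edge $(x,y)$, where $x$ and $y$ lay in distinct slack components $S_x$ and $S_y$ existing before time $s$. Since $z$ lies on $P_{x,y}$, we first reduce $x$ inside $S_x$ and $y$ inside $S_y$ by induction. These two recursive fixes use disjoint edge sets, because $S_x$ and $S_y$ are disjoint subtrees. Each fix only rearranges edges inside its own component, so the tree path between $x$ and $y$ still passes through $z$. Connectivity among the components is preserved since each fix keeps its component spanning-connected, and the edges on $P_{x,y}$ between components are saturated-node edges that are untouched. We then add $(x,y)$ and delete one of $z$'s edges on the resulting cycle. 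This makes $z$ slack, makes $x$ and $y$ at most saturated, and leaves the degrees of all other nodes unchanged or decreased.

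For the running time of the reduction itself, the recursion visits each merge at most once along a chain of dependencies. This gives total time $\tilde{O}(|E(G[V(C)])|)$ using a dynamic tree (top tree) to locate the cycle edge.

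The main obstacle is the induction step: showing that the recursive fixes on $S_x$ and $S_y$ do not destroy the fact that $z$ lies on the $x$--$y$ path. The point to establish is that each recursive modification only swaps edges within one component and keeps that component a spanning tree of its node set. Since $z$ is outside both components, the path structure through $z$ is unaffected. Since this is exactly Lemma 2.1 of \cite{BFW26} and the original \cite{FR92} argument, I would mostly cite it and verify this invariant.
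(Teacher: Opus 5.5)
Your proposal is correct and is essentially the F\"urer--Raghavachari recursive edge-swap argument that the paper relies on when it cites Lemma 2.1 of \cite{BFW26}; the paper gives no separate proof. The core is strong induction on merge time, with the invariant that each fix only swaps edges inside the component $S$ where the node became reducible and keeps $\calF[V(S)]$ a spanning tree of $V(S)$. One small correction: the recursive fixes form a binary tree of calls over laminar components, not a single chain. Sibling components are disjoint and a call's own node lies in neither child's component, so each node of $C$ is still handled at most once and your $\tilde{O}(|E(G[V(C)])|)$ bound stands.
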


\subsection{Configuration and Related Objects}\label{full:sec:configutation}

\begin{definition}
    We refer to $(\calF, B, \calC, D)$ as a configuration if the following holds;
    \begin{enumerate}
        \item $\calF$ is a feasible forest of $G$, i.e., for each $u \in V(G)$, $\deg_\calF(u) \leq k$.
        
        \item $B \subseteq V(G)$.
        We refer to nodes in $B$ as \textbf{junctions}.

        \item $\calC$ is a collection of subgraphs of $\calF$ such that every $C \in \calC$ satisfies the following properties;
        \subitem(a) $C$ is an $(\calF,B)$-regions, i.e., a connected component of the forest $\calF[V(G) - B]$.
        \subitem(b) Each $u \in C$ is reducible w.r.t.~the sub-tree $\calF[C]$.

        Elements of $\calC$ are called $(B, \calC)$-\textbf{reducible} regions.
        The rest of the regions are called $(B, \calC)$-\textbf{touched} region.
        We also refer to nodes inside a $(B, \calC)$-touched region as $(B, \calC)$-touched nodes.
        When it is obvious from the context what $\calF$, $B$, and $\calC$ are, we simply use the terms region, reducible, and touched.

        \item $D$ is a subset of $(B, \calC)$-touched nodes that contains all $(B, \calC)$-touched nodes $u$ satisfying $\deg_\calF(u) = k$.
        Note that $D$ can contain some other $(B, \calC)$-touched nodes with degree $\leq k-1$ as well.
        Later, we describe that the algorithm runs in rounds and updates the configuration by some operations.
        The set $D$ will satisfy the following; 1) it is empty at the start of each round, 2) it is incremental throughout each round, 3) when a node of degree $k$ becomes touched, it will be inserted to $D$, and 4) it will be reset to empty at the end of the round (beginning of next round).
    \end{enumerate}
\end{definition}

\begin{definition}[Deficit]
For each configuration $(\calF,B,\calC,D)$, the \textbf{deficit} of $(\calF,B,\calC,D)$ denoted by $\deficit(\calF,B,\calC,D)$ is defined as the amount by which $\sum_{x \in B \cup D} \deg_\calF(x)$ differs from $k \cdot |B \cup D|$.
More precisely, we define $\deficit(\calF, B,\calC,D) := \sum_{x \in B \cup D} (k - \deg_\calF(x))$.    
\end{definition}

\begin{definition}[Boundary]
For each forest $\calF$, set of junctions $B \subseteq V(G)$, and each $x \in V(G)$, we denote by $\bound_{(\calF,B)}(x)$ the collection of all $(\calF,B)$-regions $T$ satisfying $V(T) \cap N^+_\calF(x) \neq \emptyset$.
For any subgraph $H \subseteq G$, we define $$\bound_{(\calF,B)}(H) := \bigcup_{x \in V(H)} \bound_{(\calF,B)}(x) . $$
\end{definition}

\begin{definition}[Effective Edge]\label{full:def:effective-edge}
    Assume $\conf{}{}$ is a configuration.
    We refer to $(u, v) \in E(G)$ as an \textbf{effective} edge w.r.t.~$\conf{}{}$ if the following holds;
    \begin{enumerate}
        \item $u$ belongs to a reducible region in $\calC$ denoted by $C_u$.
        
        \item $v \notin C_u$.

        \item $(u,v) \in E(G) - E(\calF)$.

        \item $v$ satisfies one of the following;
        \subitem(a) it is contained in a reducible region in $\calC$,
        \subitem(b) it is a touched node satisfying $v \notin D$.
        Accordingly, $\deg_\calF(v) \leq k-1$.
    \end{enumerate}
\end{definition}

\subsection{Augmenting Chains as Black Box}\label{full:sec:aug-chain-black-box}

For simplicity in the description of the overview of our algorithm in \Cref{full:sec:alg-overview}, we omit the details of the most critical and complex object of our algorithm, called the augmenting chain.
This is the main object that we use in order to reduce the number of components of the forest.
The comprehensive definition of this object and how we use it is deferred to \Cref{full:sec:aug-chain}.
Here, we consider this object as a black box.
The only property of augmenting chains that we need to describe our algorithm is that each augmenting chain $A$ is defined w.r.t.~a configuration (i.e., if $A$ is an augmenting chain w.r.t.~$\conf{}{}$, it is not necessarily an augmenting chain w.r.t.~another configuration $\conf{\prime}{}$), and it has a length $h \in [1, H]$.

\medskip
\noindent
\textbf{Finding Augmenting Chains.}
Since we are considering augmenting chains as black box objects in this section, we also need to consider the subroutine that searches for augmenting chains as a black box.
This subroutine is called $\findChain(h,(u,v))$ that takes two parameters, an integer $h \in [1, H]$ and an effective edge $(u, v)$.
Then, it either successfully finds an augmenting chain of length at most $h$, or it fails.
We defer the description of this subroutine to \Cref{full:sec:alg-find-chain}.
The only guarantee of this subroutine that we need here is the following:
\begin{equation}\label{full:eq:guarantee-find-chain}
    \text{If $\findChain(h,(u,v))$ fails, then $u$ and $v$ lie in the same connected component of $\calF$.}
\end{equation}

\medskip
\noindent
\textbf{Applying Augmenting Chains.}
Augmenting chains are used to reduce the number of components of the graph in our algorithm.
If $A$ is an augmenting chain w.r.t.~$\conf{}{}$, there is a procedure called \textit{applying $A$ to $\conf{}{}$}, which returns another configuration $\conf{\prime}{}$ such that $\calF'$ has exactly one fewer component that $\calF$.
We consider this subroutine as a black box here, and defer its comprehensive description to \Cref{full:sec:apply-chain}.

\section{Overview of Our Algorithm for \Cref{full:lem:main}}\label{full:sec:alg-overview}

\textbf{Roadmap.}
In this section, we describe the overview of our algorithm by considering augmenting chains as a black box.
In \Cref{full:sec:apply-effective-edge}, we start by providing a procedure that, given an effective edge $(u,v)$ w.r.t.~some configuration, updates the configuration towards an improving direction.
This improving direction can be either applying an augmenting chain (that reduces the number of components of the forest), or updating the configuration such that more potential augmenting chains can be found for the rest of the algorithm.
Next, in \Cref{full:sec:alg-description}, we describe our algorithm.

\subsection{Applying Effective Edges}\label{full:sec:apply-effective-edge}

We classify every effective edge $(u,v)$ w.r.t.~$(\calF, B, \calC, D)$ as one of the following types;
If the call to the subroutine $\findChain(h, (u,v))$ successfully finds an augmenting chain of length at most $h$ w.r.t.~$\conf{}{}$, we call $(u,v)$ an \textbf{improving} edge.
If $\findChain(h, (u,v))$ fails, we conclude that $u$ and $v$ are in the same connected component of $\calF$ (according to Guarantee (\ref{full:eq:guarantee-find-chain})).
Now, consider the unique path $P^\calF_{u,v}$ connecting $u$ and $v$ in $\calF$.
If $\bound_{(\calF,B)}(P^\calF_{u,v})$ does \emph{not} contain any $(B, \calC)$-\emph{touched} regions, then we call $(u,v)$ a \textbf{merging} edge.
If $\bound_{(\calF,B)}(P^\calF_{u,v})$ contains at least one $(B, \calC)$-touched region, we call $(u,v)$ a \textbf{touching} edge.

In the following, we provide a procedure that is called applying an effective edge $(u,v)$ to the configuration $\conf{}{}$.
This process returns another configuration $\conf{\prime}{}$ that is defined in the following.

\subsubsection{Applying an Improving Edge}\label{full:sec:apply-improving}
If $(u,v)$ is an improving edge, then $\findChain$ has successfully returned an augmenting chain $A$ w.r.t.~$\conf{}{}$.
We define $(\calF',B',\calC',D')$ to be the result of applying $A$ to $\conf{}{}$.
We provide the procedure of applying an augmenting chain to a configuration in \Cref{full:sec:apply-chain}, after providing the details of an augmenting chain.

\subsubsection{Applying a Merging Edge}\label{full:sec:apply-merging}
If $(u,v)$ is a merging edge, then we define $(\calF',B',\calC',D')$ as follows;
1) $\calF' = \calF$, 2) $B' := B \setminus V(P^{\calF}_{u,v})$, 3) $\calC' := (\calC \setminus \bound_{(\calF, B)}(P^{\calF}_{u,v})) \cup \{\calF[W]\}$, where 
$$W := \left( V(P^{\calF}_{u,v}) \right) \cup \left( \cup_{C \in \bound_{(\calF, B)}(P^{\calF}_{u,v})} V(C) \right),$$
and 4) $D' := D$.

\subsubsection{Applying a Touching Edge}\label{full:sec:apply-touching}
If $(u,v)$ is a merging edge, then we define $(\calF',B',\calC',D')$ as follows;
1) $\calF' = \calF$, 2) $B' := B \setminus V(P^{\calF}_{u,v})$, 3) $\calC' := \calC \setminus \bound_{(\calF, B)}(P^{\calF}_{u,v}) $, and
4) $D' := D \cup \left(B \cap V(P^\calF_{u,v}) \right) \cup  \{x \in W \mid \deg_{\calF'}(x) = k\}$, where 
$$ W :=  \cup_{C \in \bound_{(\calF, B)}(P^{\calF}_{u,v})} V(C) . $$

\subsubsection{Correctness}

\begin{claim}\label{full:claim:apply-effective-remain-configuration}
    For any effective edge $(u,v)$, the result $(\calF', B', \calC', D')$ of applying $(u, v)$ to $(\calF, B, \calC, D)$ is a valid configuration.
\end{claim}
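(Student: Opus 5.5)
The plan is a case split on the type of the effective edge $(u,v)$: improving, merging, or touching. For each case I check the four conditions in the definition of a configuration for $(\calF',B',\calC',D')$: feasibility of $\calF'$; $B'\subseteq V(G)$; every $C\in\calC'$ is an $(\calF',B')$-region all of whose nodes are reducible w.r.t.\ $\calF'[C]$; and $D'$ consists of touched nodes and contains every touched node of degree $k$. The improving case is deferred to the guarantee of applying an augmenting chain, which \Cref{full:sec:apply-chain} will establish. I would state that guarantee explicitly as the black-box property here, in the same way that \Cref{new:lem:apply-chain} was used in \Cref{new:claim:apply-effective-remain-configuration}.

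In the merging and touching cases we have $\calF'=\calF$, so feasibility is inherited. The main structural fact to prove is that removing $V(P^\calF_{u,v})\cap B$ from $B$ changes the region structure only locally. Every $(\calF,B')$-region not meeting $W' := V(P^\calF_{u,v})\cup\bigcup_{C\in\bound(P^\calF_{u,v})}V(C)$ is an unchanged $(\calF,B)$-region. The set $W'$ itself is the vertex set of a single $(\calF,B')$-region, for two reasons. First, it is connected, since it consists of the path together with the regions touching the path or its neighbours. Second, its $\calF$-neighbourhood outside $W'$ lies in $B'$: any neighbour of a path node is either in $B$ and off the path, or lies in a region of the boundary; and any neighbour of a boundary region lies in $B$, and if such a neighbour is on the path it has been absorbed. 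Consequently every $C\in\calC\setminus\bound(P^\calF_{u,v})$ is still an $(\calF',B')$-region. Its reducibility is an intrinsic property of $\calF[C]$ and $G[V(C)]$, so it is unaffected.

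In the merging case, the one new element of $\calC'$ is $\calF[W]$ with $W=W'$, and I must show that all its nodes are reducible w.r.t.\ $\calF[W]$. I would run the procedure of \Cref{full:sec:reducible} on $T=\calF[W]$ in a chosen order. Each boundary region $C$ is untouched, hence in $\calC$ and reducible. Running the merges inside each $C$ first, using its own non-forest edges, collapses $C$ into one slack component. The resulting picture has slack components equal to the boundary regions, and the remaining saturated nodes lie on $P^\calF_{u,v}$, namely the junctions on it. The endpoints $u$ and $v$ lie in slack components, since $u\in C_u\in\calC$ and $v$ lies in a reducible region because the edge is merging rather than touching. In the merging case $v$ cannot be touched: its own region would then be a touched region in the boundary. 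So the non-forest edge $(u,v)$ joins two distinct slack components, and one merge step absorbs every slack component hitting or adjacent to $P^\calF_{u,v}$, together with all saturated nodes on the path, into a single component equal to $W$. I also need that a saturated path node which is not in $B$ lies in a boundary region, which is immediate since the boundary includes the region of each path node. One care point is that the procedure's outcome is order-independent, because merging only enlarges components. Establishing this monotonicity cleanly is the step I expect to be the main obstacle, and I would prove it by noting that the final set of slack components is the finest fixpoint. Finally, $D'=D$ is still valid, because touched nodes are unchanged apart from no new touched nodes appearing, and degrees are unchanged.

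In the touching case, $\calC'$ is just $\calC$ with some regions removed, so condition 3 is immediate from the locality fact. The newly touched nodes are exactly those in $W'$: the former junctions on the path and the nodes of formerly untouched boundary regions. Nodes of boundary regions that were already touched remain touched and are already covered by $D$. The set $D'$ adds all former path junctions and all nodes of $W$ with degree $k$. Since $\calF'=\calF$, every touched node of degree $k$ is therefore in $D'$, and $D'$ contains only touched nodes. This completes all four conditions in each case.
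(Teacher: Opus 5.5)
Your proposal is correct and follows the paper's own proof. It uses the same three-way case split with the improving case deferred to the augmenting-chain lemma, and the same observation that removing the path junctions from $B$ only fuses $P^{\calF}_{u,v}$ with its boundary regions into one region. It also justifies reducibility of $\calF[W]$ in the same way, as the outcome of the reducibility procedure: each boundary region collapses by its own reducibility, then a single merge along $P^{\calF}_{u,v}$ is made via $(u,v)$. You spell out the locality and order-independence details that the paper leaves implicit.

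One minor correction: junctions may be unsaturated in the full version, so after the per-region collapses the slack components can be unions of boundary regions glued through unsaturated path junctions, and $(u,v)$ need not join two distinct slack components. In that case, however, no saturated node remains and the conclusion holds anyway.
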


\begin{proof}
    If $(u,v)$ is an improving edge, the claim follows from the guarantee of applying an augmenting chain to a configuration.
    We provide the details in \Cref{full:lem:apply-min-chain} after describing the details of augmenting chains.
    
    If $(u,v)$ is a merging edge, the only non-trivial property of $(\calF',B',\calC', D')$ being a configuration, is why every $C \in \calC'$ is a $(\calF',B')$-region, and why every $u \in C$ is reducible w.r.t.~$C$.
    Note that the forest $\calF' = \calF$ does not change.
    According to the definition of $B'$, none of the junctions of $B$ lying on $P^{\calF}_{u,v}$ are part of $B'$, and we have combined all $(\calF,B)$-regions hitting or adjacent to $\bound_{(\calF,B)}(P^\calF_{u,v})$ into a single component $\calF[W]$ and add it to $\calC'$.
    Hence, every $C \in \calC'$ is indeed a connected component of $\calF'[V(G) - B']$ (i.e., a $(\calF', B')$-region).
    
    Finally, if $C \in \calC' \cap \calC$, every node $u \in C$ is reducible w.r.t.~$C$ since the forest $\calF'=\calF$ did not change and $(\calF,B,\calC,D)$ is a configuration.
    If $C \in \calC' - \calC$, we must have $C = \calF[W]$.
    In this case, every $u \in C$ is reducible w.r.t.~$C$ because of the process of defining reducible nodes.
    Note that there is a non-forest edge $(u,v)$ and we are combining all reducible regions hitting or adjacent to $P^{\calF}_{u,v}$ (this is directly aligned with the procedure that we described in \Cref{full:sec:reducible}).

    If $(u,v)$ is a touching edge, the claim follows from a similar argument as when $(u,v)$ is merging.
\end{proof}

\subsection{Description of the Algorithm for \Cref{full:lem:main}}\label{full:sec:alg-description}

Assume the given forest is $\calF$ whose maximum degree is at most $k$.
The algorithm runs in $H$ rounds.
Initially, we define $\calF_{0}^\final = \calF$, $B_{0}^\final = \{x \in V \mid \deg_{\calF}(x) = k\}$, $\calC_{0}^\final = \emptyset$, and $D_0^\final = \emptyset$ (this notation is for consistency of the algorithm description).
Now, we explain the procedure of our algorithm in round $t \in [1, H]$.

\subsubsection{Initialization of Round $t$}
At the beginning of round $t$, we let $\calF^\init_t = \calF^\final_{t-1}$, $B^\init_t := B_{t-1}^\final \cup D^\final_{t-1}$, and $D_t^\init := \emptyset$.
Then, we define $\calC^\init_t$ to be the collection of \emph{all} $(\calF^\init_t, B^\init_t)$-regions.

\subsubsection{Main Body of Round $t$}
In the main body of the $t^\th$ round, the goal is to apply as many augmenting chains of length at most $t$ as possible.
We start with the configuration $(\calF_t, B_t, \calC_t, D_t) := (\calF^\init_t, B^\init_t, \calC^\init_t, D^\init_t)$.
Then, as long as there exists an effective edge $(u,v)$ w.r.t.~the current configuration $(\calF_t, B_t, \calC_t, D_t)$, we first call the subroutine $\findChain(t, (u,v))$ to decide the type of $(u,v)$.
Then, we apply this edge to the configuration immediately after termination of $\findChain(t, (u,v))$, and update the configuration.
As a result, either an augmenting chain of length at most $t$ (returned by $\findChain(t, (u,v))$) is applied to the configuration, or we apply a merging or touching edge (if $\findChain(t, (u,v))$ fails).

We will continue this process until there is no more effective edge w.r.t.~the current configuration $(\calF_t, B_t, \calC_t, D_t)$.
In that case, we finish round $t$, denote by $ (\calF^\final_t, B^\final_t, \calC^\final_t, D^\final_t) := (\calF_t, B_t, \calC_t, D_t)$ the final configuration of round $t$, and start round $(t+1)$.

\begin{observation}\label{full:obs:cond-end-round}
    At the end of any round $t \in [1, H]$, there does not exists any effective edge w.r.t.~the final configuration $(\calF^\final_t, B^\final_t, \calC^\final_t, D^\final_t)$.
    Accordingly, if an edge $(u,v)$ satisfies the first three properties of an effective edge (\Cref{full:def:effective-edge}) w.r.t.~$(\calF^\final_t, B^\final_t, \calC^\final_t, D^\final_t)$, then we must have that $v \in B_t^\final \cup D_t^\final$.
\end{observation}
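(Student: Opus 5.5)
The first sentence of the observation is immediate from the termination condition of the main body of round $t$. The loop runs while some effective edge w.r.t.~the current configuration $(\calF_t, B_t, \calC_t, D_t)$ exists, and it stops only when none does. At that moment we set $(\calF^\final_t, B^\final_t, \calC^\final_t, D^\final_t) := (\calF_t, B_t, \calC_t, D_t)$. So no effective edge exists w.r.t.~the final configuration. This does assume the loop terminates, which follows from the running-time analysis: each iteration either reduces the number of components or strictly shrinks $B$ or $\calC$ (or enlarges $D$). We take termination as given here.

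For the second sentence, let $(u,v)$ satisfy Properties 1--3 of \Cref{full:def:effective-edge} w.r.t.~the final configuration:
\begin{itemize}
\item $u$ lies in a reducible region $C_u \in \calC^\final_t$;
\item $v \notin C_u$;
\item $(u,v) \notin E(\calF^\final_t)$.
\end{itemize}
Since $(u,v)$ is not effective, Property 4 must fail. Hence $v$ is not in any reducible region of $\calC^\final_t$, and it is also not a touched node with $v \notin D^\final_t$.

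We now partition $V$ according to the configuration. Every node is either a junction in $B^\final_t$ or lies in some $(\calF^\final_t, B^\final_t)$-region. Each such region is either reducible (in $\calC^\final_t$) or touched. This gives three cases for $v$:
\begin{itemize}
\item $v$ lies in a reducible region: this is excluded, since Property 4(a) would then hold.
\item $v$ is touched: then since Property 4(b) fails, $v \in D^\final_t$.
\item Otherwise $v$ is a junction, so $v \in B^\final_t$.
\end{itemize}
In all surviving cases, $v \in B^\final_t \cup D^\final_t$.

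The only point needing care is this case split, namely that junctions, reducible nodes and touched nodes exhaust $V$. This holds directly from the definition of a configuration, since regions are exactly the components of $\calF[V - B]$. The remark in Property 4(b) that $\deg_\calF(v)\le k-1$ is a consequence of the definition of $D$, not an extra condition. So there is no genuine obstacle.
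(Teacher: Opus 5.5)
Your argument is correct and matches the reasoning the paper leaves implicit, since it states this as an observation without proof: the first sentence is the exit condition of the \textbf{while} loop of round $t$, and the second holds because junctions, nodes of reducible regions, and touched nodes partition $V$, so a failure of Property~4 forces $v \in B^{\final}_t \cup D^{\final}_t$. Your aside on termination is also sound, since every merging or touching edge removes at least one junction on $P^{\calF}_{u,v}$ and every improving edge removes a component.
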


\subsubsection{The Output}
After running the algorithm for $H$ rounds as explained above, we return the final forest $\calF_{H}^\final$ as the final output of the algorithm for \Cref{full:lem:main}.
\Cref{full:code:rounds} summarizes the procedure of our algorithm.

\begin{algorithm}[h]
  \DontPrintSemicolon
  $\conf{}{0} := \left( \calF, \{x\in V \mid \deg_\calF(x) = k\}, \emptyset, \emptyset \right)$.
  
  \For{$t = 1$ to $H$}
  {
  $\calF_t := \calF_{t-1}$,
  $B_t := B_{t-1} \cup D_{t-1}$,
  $\calC_t :=$ the collection of all $(\calF_t, B_t)$-regions, and
  $D_t := \emptyset$.
  
  \While{there exist an effective edge $(u,v)$ w.r.t.~$\conf{}{t}$}{
    Call $\findChain(t,(u,v))$.
    
    \If{$(u,v)$ is improving, i.e., $\findChain(t,(u,v))$ returns $A = (w_0,z_1,\ldots, z_r, u,v)$}
    {
    Apply $A$ to $\conf{}{t}$ according to \Cref{full:sec:apply-improving} and update the configuration.
    }
    \ElseIf{$(u,v)$ is merging}{
        Apply $(u,v)$ to $\conf{}{t}$ according to \Cref{full:sec:apply-merging} and update the configuration.
    }\ElseIf{$(u,v)$ is touching}{
        Apply $(u,v)$ to $\conf{}{t}$ according to \Cref{full:sec:apply-touching} and update the configuration.
    }
  }
  }
    \Return $\calF_H$.
    \caption{Our Algorithm for \Cref{full:lem:main}, Input: forest $\calF$}\label{full:code:rounds}
\end{algorithm}

\section{Augmenting Chains}\label{full:sec:aug-chain}

\textbf{Roadmap.}
In this section, we provide the details of the most intricate and complex object in our paper, called an augmenting chain.
We start with the definition of an augmenting chain in \Cref{full:sec:def-aug-chain}.
Augmenting chains are used to reduce the number of components of the forest.
This is done by a procedure called applying the augmenting chain.
We provide this procedure and its analysis in \Cref{full:sec:apply-chain}.

\subsection{Augmenting Chain and Its Application}\label{full:sec:def-aug-chain}

\begin{definition}[Augmenting Chain]\label{full:def:aug-chain}
    Let $(\calF, B, \calC, D)$ be a configuration and $h \geq 0$ be an integer.
    We call a sequence of nodes $A = \left(w_0, z_1, w_1, z_2, w_2, \ldots, z_h, w_h, z_{h+1}\right)$ an {\em augmenting chain} of {\em length} $h + 1$ w.r.t.~$(\calF, B, \calC, D)$ iff the following properties hold; 
    \begin{enumerate}

        \item\label{full:aug-chain-prop-1}
        $z_1, w_1, \ldots, z_h, w_h$, and $z_{h+1}$ are in the same connected component of $\calF$, while $w_0$ is not in this component.
    
        \item\label{full:aug-chain-prop-2} For all $i \in [1, h]$, $z_{i} \in B$, i.e., each $z_i$ is a junction.
        
        \item\label{full:aug-chain-prop-3}
        For every $i \in [0, h]$, $w_i$ belongs to a reducible region in $\calC$.
       
        For every $i \in [1, h]$, we denote the reducible region containing $w_i$ by $C_{w_i}$.
        We refer to $T_{w_i \leftarrow z_i}^\calF$ as the {\em $i^{\th}$ block of $A$} that contains $C_{w_i}$ as well.
        We also refer to the entire component of $\calF$ containing $w_0$ as the $0^\th$ block of $A$.

        \item\label{full:aug-chain-prop-4}
        $z_{h+1}$ satisfies one of the following; 
        \subitem (a) it is contained in a reducible region in $\calC$ denoted by $C_{z_{h+1}}$,
        \subitem (b) it is a $(B, \calC)$-touched node and $z_{h+1} \notin D$.
        Accordingly, $\deg_\calF(z_{h+1}) \leq k-1$. 
        
        In case (a), we denote the reducible region containing $z_{h+1}$ by $C_{z_{h+1}}$, and in case (b), we define $C_{z_{h+1}} := \{z_{h+1}\}$.

        \item\label{full:aug-chain-prop-5} For each $i \in [1, h]$, we have that $z_i \in P^\calF_{w_i, z_{i+1}}$, in other words, $z_{i+1}$ is outside of the $i^\th$ block.

        \item\label{full:aug-chain-prop-6} 
        For every $i \in [1, h]$, all of the nodes $ z_{i+1}, w_{i+1}, z_{i+2}, \ldots, w_h,z_{h+1}$
        are contained in $T^\calF_{z_{i+1}\leftarrow z_i}$.

        \item\label{full:aug-chain-prop-7} For all $i \in [0, h]$, we have $(w_{i}, z_{i+1}) \in E(G)- E(\calF)$, i.e., $(w_{i}, z_{i+1})$ is a non-forest edge.
    \end{enumerate}
\end{definition}

\subsection{Applying an Augmenting Chain to a Configuration}\label{full:sec:apply-chain}

This section is devoted to proving the following.

\begin{lemma}\label{full:lem:apply-chain}
    Assume $(\calF, B, \calC, D)$ is a configuration and $A = \left(w_0, z_1, w_1, z_2, w_2, \ldots, z_{h+1}\right)$ is an augmenting chain w.r.t.~$(\calF, B, \calC, D)$.
    There exists a procedure called applying $A$ to $(\calF, B, \calC, D)$ that returns another configuration $(\calF', B', \calC', D')$ with the following guarantees;
    \begin{enumerate}
        \item\label{full:lem:apply-chain-P1} $\calF'$ has exactly one fewer component than $\calF$.
        
        \item\label{full:lem:apply-chain-P2} There are at most $2h+1$ edges $(u,v) \in E(\calF') \oplus E(\calF)$ that $\{u,v\} \cap B \neq \emptyset$.

        (we refer to such edges as \textbf{critical} edges of $A$.)

        \item\label{full:lem:apply-chain-P3} $B' = B$, $D' \supseteq D$, and $ 0 \leq \deficit(\calF', B',\calC',D') - \deficit(\calF, B,\calC,D) \leq 2h$.

        \item\label{full:lem:apply-chain-P4} $\calC' \subseteq \calC$ and $\calC-\calC' \subseteq \{C_{z_{h+1}}, C_{w_0}, C_{w_1}, \cdots, C_{w_h}\}$, which concludes $ |\calC - \calC'| \leq h+2 $. 
    \end{enumerate}
\end{lemma}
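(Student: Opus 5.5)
We apply $A$ exactly as in the extended abstract, now also tracking $D$. First, call \Cref{full:lem:degree-reduction} on $(C_{z_{h+1}}, z_{h+1})$; this is unnecessary in case (b) of Property~\ref{full:aug-chain-prop-4}, where $\deg_\calF(z_{h+1})\le k-1$ already. Next, for $i=h,\dots,0$, call it on $(C_{w_i},w_i)$. After these calls, every node of $\{w_0,\dots,w_h,z_{h+1}\}$ is slack. Then, for $i=h,\dots,1$, delete $(z_i,y_i)$, the edge incident on $z_i$ in $P^\calF_{z_i,w_i}$, and insert $(w_i,z_{i+1})$. Finally, insert $(w_0,z_1)$. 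For the configuration, set $B'=B$ and $\calC'=\calC\setminus\{C_{z_{h+1}},C_{w_0},\dots,C_{w_h}\}$. Set $D'=D\cup\{x \text{ newly touched}: \deg_{\calF'}(x)=k\}$, so that $D'$ satisfies the configuration requirement.

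\textbf{Step 1: the degree-reduction calls are consistent.} I will show that the regions $C_{z_{h+1}},C_{w_0},\dots,C_{w_h}$ are pairwise node-disjoint. For $i<j$, Properties \ref{full:aug-chain-prop-5} and \ref{full:aug-chain-prop-6} give that $w_j$ lies in $T^\calF_{z_{i+1}\leftarrow z_i}$ while $w_i$ lies in the $i$-th block $T^\calF_{w_i\leftarrow z_i}$. So the junction $z_i\in B$ separates them, and they lie in different $(\calF,B)$-regions. The same argument handles $z_{h+1}$, and $w_0$ lies in a different tree of $\calF$ altogether. Each call edits only edges inside its own region, and all these regions are $(\calF,B)$-regions avoiding $B$. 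Hence later regions stay intact, their nodes stay reducible with respect to their own subtree, and no junction's degree or incident edges change. The forest stays feasible.

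\textbf{Step 2: the swaps keep a forest that is feasible.} This is the main obstacle: we must show acyclicity and that the component count drops by exactly one. I would argue by induction on $i$ from $h$ down to $1$. Invariant: after processing index $i$, the nodes $z_i,w_i,\dots,z_{h+1}$ all lie in a single tree of the current forest, and the blocks with smaller index are unchanged. Deleting $(z_i,y_i)$ detaches the block $T^\calF_{w_i\leftarrow z_i}$, which contains $w_i$. By Property~\ref{full:aug-chain-prop-6}, all later chain nodes lie on the $z_i$ side, so the detached block contains none of them. The invariant must be checked against the earlier swaps at indices $j>i$: each of those removed an edge only inside a block on the far side of $z_j$, and that block was reattached through $w_j$. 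So adding $(w_i,z_{i+1})$ reconnects the detached block to the tree containing $z_{i+1}$ without forming a cycle. Each swap preserves the component count. The final edge $(w_0,z_1)$ joins two distinct trees by Property~\ref{full:aug-chain-prop-1}, which gives P1.

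For degrees, each $w_i$ and $z_{h+1}$ gains exactly one edge, and each was slack after Step 1. Each $z_i$ for $i\le h$ loses $(z_i,y_i)$ and gains $(w_{i-1},z_i)$, so its degree is unchanged, or it decreases if the same $z$ occurs multiple times. Hence feasibility holds.

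\textbf{Step 3: P2, P3 and P4.} The symmetric difference touching $B$ consists of $\{(z_i,y_i),(w_{i-1},z_i)\}_{i\le h}$. The $y_i$ and $w_i$ are not in $B$, since $w_i$ lies in a reducible region. The degree-reduction calls avoid $B$ by Step 1. This gives at most $2h\le 2h+1$ critical edges, proving P2.

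For P3, $B'=B$ and $D'\supseteq D$ hold by construction. Nodes added to $D'$ have degree exactly $k$ and so contribute zero deficit. Could a node of $D$ lose degree during the degree-reduction calls? Nodes of $D$ are touched, so they do not lie in the reducible regions being modified, and they are unaffected. Thus the deficit changes only at junctions $z_i$. Each $z_i$ loses at most one degree per occurrence, and the deficit is nonnegative, so $0\le\Delta\deficit\le h\le 2h$.

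P4 is immediate from the definition of $\calC'$. The remaining regions in $\calC'$ are untouched by every edit: the swaps alter only edges incident to $z_i\in B$ or to the $w_i$. Therefore they remain $(\calF',B)$-regions and remain reducible.
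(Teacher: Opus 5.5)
Your procedure, the disjointness of the regions via the separating junctions, and the downward induction showing the swaps keep a forest are the same as the paper's. Your count of $2h$ critical edges is also fine. There are, however, two concrete errors in the bookkeeping for $D'$ and the deficit.

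\textbf{First error: $D'$ need not make the result a configuration.} In case (b) of Property 4, $z_{h+1}$ is already touched, no degree reduction is performed on it, and it still receives the new edge $(w_h,z_{h+1})$. If $\deg_\calF(z_{h+1})=k-1$, then after applying $A$ it is a touched node of degree $k$. It is not newly touched, so your rule leaves it out of $D'$. This violates the requirement that $D'$ contain every touched node of degree $k$. The violation matters: the effective-edge test assumes that touched nodes outside $D$ are slack. The paper avoids this by drawing the new $D$-nodes from $V(C_{z_{h+1}})\cup\bigcup_{i}V(C_{w_i})$, with $C_{z_{h+1}}=\{z_{h+1}\}$ in case (b).

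\textbf{Second error: your P3 argument rests on false claims.} The node $y_i$ is just the forest neighbour of $z_i$ on $P^\calF_{z_i,w_i}$. Nothing prevents it from being another junction on that path, or a degree-$k$ touched node of $D$ inside the $i$-th block. So ``the $y_i$ are not in $B$'' is wrong. Conversely, the $z_i$ are distinct: a repeat $z_i=z_j$ with $i<j$ would contradict $z_j\in T^\calF_{z_{i+1}\leftarrow z_i}$. Hence each $z_i$ loses exactly $(z_i,y_i)$ and gains exactly $(w_{i-1},z_i)$, so its degree is unchanged, as you yourself say in Step~2. The deficit therefore does not change at the $z_i$ at all. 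It changes at those $y_i$ that lie in $B\cup D$. Your remark that nodes of $D$ are unaffected covers only the degree-reduction calls, not these deletions.

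\textbf{Correct accounting for P3.} Degree-reduction edits stay inside reducible regions, which avoid $B\cup D$. The only unbalanced gains are at the $w_i$ and at $z_{h+1}$, none of which lies in $B\cup D$. So no node of $B\cup D$ gains net degree, and nodes of $D'\setminus D$ have degree exactly $k$. This gives a nonnegative change in deficit; nonnegativity of the deficit itself is not the reason. Each of the $h$ deleted edges $(z_i,y_i)$ lowers the degree of at most one node of $B\cup D$ net, namely $y_i$, so the increase is at most $h\le 2h$.

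\textbf{Related fix to your P4 sentence.} The deleted edge $(z_i,y_i)$ can end in a region of $\calC'$. That region is still an $(\calF',B)$-region, because the edge goes to the junction $z_i$. It stays reducible, because lowering a degree only shrinks the initial saturated set.
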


\subsubsection{Description of Applying an Augmenting Chain}
Assume $(\calF, B, \calC, D)$ is a configuration and $A = (w_0, z_1, w_1, \ldots, w_h, z_{h + 1})$ is an augmenting chain w.r.t.~$(\calF, B, \calC, D)$.
We now define a new configuration $(\calF', B', \calC', D')$ in the following which is called \textit{the result of applying $A$ to $(\calF, B, \calC, D)$}.
\begin{itemize}
    \item Defining $\calF'$: first, we apply \Cref{full:lem:degree-reduction} on each $(C_{w_i}, w_i)$.
    If $C_{z_{h+1}}$ is a reducible region, we also apply \Cref{full:lem:degree-reduction} on $(C_{z_{h+1}}, z_{h+1})$.
    Note that it is possible to do all these degree reductions simultaneously (see \Cref{full:claim:property-minimal-chain}).
    Let $\calF^+$ be the new forest after these calls to \Cref{full:lem:degree-reduction}.
    For each $i \in [1, h]$ assume $(z_i,y_i)$ is the unique edge that connects $z_i$ to $T^\calF_{w_i \leftarrow z_i}$, i.e., the unique edge in $P^\calF_{z_i,w_i}$ incident on $z_i$.
    We define $\calF' := \calF^+ - \{(z_i,y_i)\}_{i=1}^h + \{(w_i,z_{i+1})\}_{i=0}^h$.

    \item Defining $B'$: we let $B' := B$.

    \item Defining $\calC'$: we let $\calC' = \calC - \{C_{w_0}, C_{w_1}, \ldots, C_{w_h}, C_{z_{h+1}}\}$.
    In words, we remove from the collection of reducible regions those regions that have already been affected by the degree-reduction process in \Cref{full:lem:degree-reduction}.
    
    \item Defining $D'$: we let $D' := D \cup \{x \in W \mid \deg_{\calF'}(x) = k \}$, where $W = V(C_{h+1}) \cup \cup_{i=0}^h V(C_{w_i})$.
    In words, if we have a touched node whose degree is $k$ and is not already in $D$, we add it to $D'$.
\end{itemize}

The intuition for defining $\calC'$ is that if a reducible region is used to reduce the degree of $w_i$ for $i \in [0,h]$ or the degree of $z_{h+1}$, then we do not know whether the rest of the nodes inside this region remain reducible w.r.t.~this region.
Hence, we remove them from the collection of reducible regions.

In the definition of $D'$, we just make sure that if a node is both touched and has degree $k$ at the same time, it will become part of $D$, and remains there.
But it is possible that in the future, the algorithm applies more augmenting chains, and the degree of the nodes in $D$ reduces below $k$.

\subsubsection{A Property of Augmenting Chains}

\begin{claim}\label{full:claim:property-minimal-chain}
    Assume $(\calF, B, \calC, D)$ is a configuration and $A = (w_0, z_1, w_1, \ldots, w_h, z_{h + 1})$ is an augmenting chain w.r.t.~$(\calF, B, \calC, D)$.
    Then, we have the following properties;
    \begin{enumerate}
        \item\label{full:property-minimal-chain-1} The nodes $w_0, z_1, w_1, \ldots, w_h, z_{h + 1}$ are all distinct.

        \item\label{full:property-minimal-chain-2} $C_{z_{h+1}}$ and all the reducible regions $C_{w_i} \in \calC$ for $i \in [0, h]$ are mutually node-disjoint.
    \end{enumerate}
\end{claim}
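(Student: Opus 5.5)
Both parts of \Cref{full:claim:property-minimal-chain} reduce to understanding where each node of the chain sits in $\calF$, using Properties~\ref{full:aug-chain-prop-5} and~\ref{full:aug-chain-prop-6} of \Cref{full:def:aug-chain}. I would first establish a structural fact: for every $i \in [1,h]$, the $i^\th$ block $T^\calF_{w_i\leftarrow z_i}$ is disjoint from $T^\calF_{z_{i+1}\leftarrow z_i}$. This holds because $z_i \in P^\calF_{w_i,z_{i+1}}$, so removing $z_i$ separates $w_i$ from $z_{i+1}$, and the two subtrees lie in different components of $\calF - z_i$. Combining this with Property~\ref{full:aug-chain-prop-6}, all of $z_{i+1},w_{i+1},\ldots,w_h,z_{h+1}$ lie in $T^\calF_{z_{i+1}\leftarrow z_i}$. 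Hence the $i^\th$ block, which contains $w_i$ and $C_{w_i}$ but not $z_i$, contains none of the later chain nodes. Inducting along the chain, I expect to show the nested statement that $T^\calF_{z_{j}\leftarrow z_{j-1}} \subseteq T^\calF_{z_{i+1}\leftarrow z_i}$ for $j>i+1$. At minimum, everything with larger index lies on the ``far side'' of $z_i$ from the $i^\th$ block.

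\textbf{Part 1 (distinctness).} The node $w_0$ lies in a different component of $\calF$ from all the others by Property~\ref{full:aug-chain-prop-1}. Among the rest, the $w$'s are in reducible regions while the $z_i$ with $i\le h$ are junctions, so $w_i\neq z_j$ for $j\le h$; also $w_i\neq z_{h+1}$ when $i=h$, since $(w_h,z_{h+1})$ is an edge. For indices $i<j$, any node $x \in \{w_j, z_{j+1}\}$ lies in $T^\calF_{z_{i+1}\leftarrow z_i}$, whereas $w_i$ lies in the $i^\th$ block, which is disjoint from it. This gives $w_i\neq w_j$ and $w_i\neq z_{h+1}$. For distinctness among the $z$'s, $z_i \notin T^\calF_{z_{i+1}\leftarrow z_i}$ by definition, but $z_j$ for $j>i$ is inside it, so $z_i\neq z_j$.

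\textbf{Part 2 (disjointness of regions).} $C_{w_0}$ is in a different tree from all the others. For $1\le i<j$, the region $C_{w_i}$ is contained in the $i^\th$ block, because it is a connected subgraph avoiding the junction $z_i$ and contains $w_i$. I would then argue that $C_{w_j}$ and $C_{z_{h+1}}$ lie in $T^\calF_{z_{i+1}\leftarrow z_i}$: each is connected, avoids $z_i\in B$ (for $C_{z_{h+1}}=\{z_{h+1}\}$ this is Part 1), and contains a node inside $T^\calF_{z_{i+1}\leftarrow z_i}$. It remains to compare $C_{w_h}$ with $C_{z_{h+1}}$. Since $(w_h,z_{h+1})$ is a non-forest edge, the two nodes could in principle lie in the same region. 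I expect this to be the main obstacle, and I plan to resolve it using the effective-edge requirement $v\notin C_u$ from \Cref{full:def:effective-edge}, together with the fact that $\findChain$ is called on an effective edge $(w_h,z_{h+1})$. If $C_{z_{h+1}}$ is a touched singleton, it is trivially disjoint from the reducible region $C_{w_h}$. If the definition alone does not force $z_{h+1}\notin C_{w_h}$, I would note that the chain produced by the algorithm always satisfies it, and add it as an implicit condition.

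Distinct regions in $\calC$ are automatically node-disjoint, since they are distinct components of $\calF[V-B]$. So the whole of Part 2 reduces to showing the regions are pairwise distinct, which the containment arguments above provide.
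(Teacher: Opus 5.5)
\textbf{Gap: the pair $C_{w_h}$, $C_{z_{h+1}}$.} Your overall route is the paper's. Property~5 puts the $i$-th block and $T^{\calF}_{z_{i+1}\leftarrow z_i}$ in different components of $\calF - z_i$, Property~6 puts every later chain node in the latter, and a region, being connected and junction-free, lies wholly on one side of $z_i$. Part~1 and most of Part~2 go through as you describe. The real problem is how you handle the pair $C_{w_h}$, $C_{z_{h+1}}$. You call it the main obstacle and propose to close it in one of two ways: using the condition $v\notin C_u$ for the effective edge handed to \findChain, or adding $z_{h+1}\notin C_{w_h}$ as an implicit hypothesis. Neither is legitimate here. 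The claim concerns an arbitrary augmenting chain as in \Cref{full:def:aug-chain}, which says nothing about effective edges or how the chain was found. Adding a hypothesis would prove a weaker statement than the one claimed.

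\textbf{The repair uses only your own tools.} You unnecessarily restricted the comparison with $C_{z_{h+1}}$ to indices $i<j\le h$.
\begin{itemize}
\item If $h\ge 1$, take $i=h$. Property~5 gives $z_h\in P^{\calF}_{w_h,z_{h+1}}$. Moreover $z_h\neq w_h,z_{h+1}$, since $z_h\in B$ and the other two lie in regions. Hence $T^{\calF}_{w_h\leftarrow z_h}$ and $T^{\calF}_{z_{h+1}\leftarrow z_h}$ are different components of $\calF-z_h$. $C_{w_h}$ lies in the first. $C_{z_{h+1}}$ lies in the second, being either a connected junction-free set containing $z_{h+1}$ or the singleton $\{z_{h+1}\}$.
\item If $h=0$, then $C_{w_h}=C_{w_0}$. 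Property~1 puts $w_0$ and $z_1$ in different components of $\calF$, which you already used in the first sentence of your Part~2.
\end{itemize}
So compare $C_{w_i}$ with $C_{z_{h+1}}$ for every $i\in[1,h]$; this is exactly the paper's uniform argument. Delete both the effective-edge detour and the extra hypothesis.

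\textbf{The nested inclusion is false.} The inclusion $T^{\calF}_{z_j\leftarrow z_{j-1}}\subseteq T^{\calF}_{z_{i+1}\leftarrow z_i}$ that you expect to prove fails in general. Take $h=2$ and let the relevant component of $\calF$ be the path with consecutive nodes $w_1,z_1,z_3,z_2,w_2$. Set $z_1,z_2\in B$, make the singletons $\{w_1\},\{z_3\},\{w_2\}$ reducible regions, put $w_0$ in another component, and let $(w_0,z_1),(w_1,z_2),(w_2,z_3)$ be non-forest edges of $G$. All seven properties hold. Yet $T^{\calF}_{z_3\leftarrow z_2}=\{z_3,z_1,w_1\}$ is not contained in $T^{\calF}_{z_2\leftarrow z_1}=\{z_3,z_2,w_2\}$. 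You never actually use this inclusion; you only need Property~6 verbatim, so drop it.
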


\begin{proof}
    Both of these properties can be easily proved by the definition of an augmenting chain.
    Assume property 1 does not hold.
    Since $z_i$s are junctions, $z_{h+1}$ and $w_i$s are not junctions, we have one of the following;
    \begin{itemize}
        \item $z_i = z_j$ for $1 \leq i < j \leq h$. In this case, $z_j$ is not contained in $T^\calF_{z_{i+1} \leftarrow z_i}$ since $T^\calF_{z_{i+1} \leftarrow z_i}$ does not contain its root $z_i = z_j$. 

        \item $w_i = w_j$ for some $0 \leq i \leq h$. In this case, according to the definition of an augmenting chain, we have $z_i \in P^\calF_{w_i, z_{i+1}}$, which concludes $w_j = w_i \notin T^\calF_{z_{i+1} \leftarrow z_i}$.

        \item $w_i = z_{h+1}$ for some $0 \leq i \leq h$.
        Similar to the argument of the previous case, we conclude that $z_{h+1}$ is not in $T^\calF_{z_{i+1} \leftarrow z_i}$.
    \end{itemize}
    All of the above cases contradict the definition of $A$ being an augmenting chain. 
    This concludes Property \ref{full:property-minimal-chain-1}.
    Property \ref{full:property-minimal-chain-2} is similar to the proof of why $z_{h+1}$ and $w_i$s are distinct.
    The only important note is that the regions are separated by junctions.
    Hence, a region is either completely contained in $T^\calF_{z_{i+1} \leftarrow z_i}$, or has empty intersection with it.
\end{proof}

\subsubsection{Correctness of Applying an Augmenting Chain}

\begin{claim}\label{full:lem:apply-min-chain}
    Assume $A = (w_0, z_1, w_1, \ldots, w_h, z_{h + 1})$ is an augmenting chain w.r.t.~the configuration $(\calF, B, \calC, D)$.
    The result $(\calF', B', \calC', D')$ of applying $A$ to $(\calF, B, \calC, D)$ is a configuration.
\end{claim}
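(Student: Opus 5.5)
We verify the four defining properties of a configuration for $(\calF',B',\calC',D')$ in turn, with feasibility and acyclicity of $\calF'$ as the substantive parts.

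\textbf{Step 1 (the degree reductions are consistent).} By \Cref{full:claim:property-minimal-chain}, the regions $C_{w_0},\dots,C_{w_h},C_{z_{h+1}}$ are pairwise node-disjoint. Each call of \Cref{full:lem:degree-reduction} only inserts or deletes edges with both endpoints inside its own region. Hence the calls do not interfere, and each region stays reducible for its own call, since nothing inside it changed beforehand. After all calls, $\calF^+$ is a feasible forest in which every $w_i$ and $z_{h+1}$ has degree at most $k-1$. In case (b), $z_{h+1}$ was already slack by \Cref{full:def:aug-chain}, so no call is needed for it.

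We also check that $\calF^+$ has the same connected components as $\calF$ and the same restriction to the outside of each region. A degree-reduction inside a subtree $C$ replaces edges of $C$ by other edges within $V(C)$ while keeping a forest. Because the edge count inside $C$ is preserved and $C$ stays spanning-connected, $C$ remains a connected subtree on the same vertex set. Therefore the path structure between junctions is unchanged. In particular, each edge $(z_i,y_i)$, with $y_i\in C_{w_i}$ or more generally in the $i$-th block, is still present. Moreover, $T^{\calF^+}_{w_i\leftarrow z_i}$ and $T^{\calF^+}_{z_{i+1}\leftarrow z_i}$ have the same vertex sets as in $\calF$. I expect this invariance to be the delicate point, and I would state it explicitly as a subclaim.

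\textbf{Step 2 (acyclicity and component count).} Process the swaps for $i=h,h-1,\dots,1$. In each swap, delete $(z_i,y_i)$, which separates the $i$-th block $T_{w_i\leftarrow z_i}$ from the rest of the tree. Then add $(w_i,z_{i+1})$. By Property~\ref{full:aug-chain-prop-6}, $z_{i+1}$ lies in $T_{z_{i+1}\leftarrow z_i}$, i.e.\ on the other side of the cut, and by Property~\ref{full:aug-chain-prop-5} $w_i$ lies inside the block. Hence each swap reconnects across exactly the cut just created, which is a valid tree exchange. To run this as an induction, I need that the later swaps ($j<i$) still see the blocks and the sides of the cuts as in $\calF$. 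This follows from Property~\ref{full:aug-chain-prop-6}: all of $z_{i+1},\dots,z_{h+1}$, and hence all earlier reattachments, lie in $T_{z_{i+1}\leftarrow z_i}$. Consequently the removed edge $(z_{i-1},y_{i-1})$ still separates $w_{i-1}$ from $z_i$. This bookkeeping is the main obstacle.

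Finally, add $(w_0,z_1)$. By Property~\ref{full:aug-chain-prop-1} it joins two distinct components. So $\calF'$ is a forest, with component count decreased by one. Each $w_i$ ($0\le i\le h$) and $z_{h+1}$ gains exactly one edge, using distinctness from \Cref{full:claim:property-minimal-chain}. Each $z_i$ with $i\ge1$ loses one edge and gains one. Thus all degrees are at most $k$, and $\calF'$ is feasible.

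\textbf{Step 3 (remaining properties).} Take $B'=B$. Every $C\in\calC'$ lies outside all modified regions, and no added or removed edge has both endpoints in $C$: the added edges have endpoints $w_i$ or $z_{i+1}$, and the removed edges touch junctions. Since $V(C)$ avoids $B$ and the modified regions, the $\calF'$-edges inside $C$ are unchanged. Edges leaving $C$ go only to junctions, so $C$ remains a connected component of $\calF'[V-B']$ and stays reducible. The touched regions now also include the removed ones, together possibly with merged pieces in case $z_{h+1}$ lies in a touched region. For $D'$, every touched node of degree $k$ was either touched already, in which case it is in $D$ or its degree did not increase (only $z_{h+1}$ gains, and $z_{h+1}\notin D$ was slack), or it lies in $W$ and is added explicitly.
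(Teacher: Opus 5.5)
Your proposal is correct and is essentially the paper's proof: you perform the degree reductions simultaneously on the pairwise-disjoint regions, which leaves the edges $(z_i,y_i)$ and the subtrees $T_{z_{i+1}\leftarrow z_i}$ intact; you then do the exchanges for $i=h,\dots,1$, where Property~6 (at index $i-1$) keeps all earlier changes inside $T^{\calF}_{z_i\leftarrow z_{i-1}}$, so deleting $(z_{i-1},y_{i-1})$ still separates $w_{i-1}$ from $z_i$; and finally you add $(w_0,z_1)$, which joins two distinct components. Your Step~3 goes beyond the paper by checking the $\calC'$ and $D'$ conditions explicitly; there, note that in case~(b) a $z_{h+1}$ that becomes saturated lands in $D'$ because $z_{h+1}\in W$ (as $C_{z_{h+1}}=\{z_{h+1}\}$), not because it was slack before.
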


\begin{proof}
    According to the definition of the result $(\calF', B', \calC', D')$ of applying $A$ to $(\calF, B, \calC, D)$, the only non-trivial property regarding $(\calF', B', \calC', D')$ being a configuration is why $\calF'$ is a feasible forest.
    First of all, as explained in the definition of $\calF'$, we do the degree reduction process (\Cref{full:lem:degree-reduction}) on all the nodes $z_{h+1},w_0,w_1,\ldots,w_h$ simultaneously in order to get a new forest $\calF^+$ such that $\deg_{\calF^+}(x) \leq k-1$ for each $x \in \{z_{h+1},w_0,w_1,\ldots,w_h\}$.
    Since the degree reduction process only affects the $(\calF,B)$-regions internally, we still have the following properties for $\calF^+$;
    \begin{itemize}
        \item $z_i \in P^{\calF^+}_{w_i, z_{i+1}}$ for all $i \in [1,h]$.
        \item $T^{\calF^+}_{z_{i+1} \leftarrow z_{i}} = T^{\calF}_{z_{i+1} \leftarrow z_{i}}$ for all $i \in [1,h]$.
        \item $(w_i,z_{i+1}) \in E(G) - E(\calF^+)$ for $i \in [0,h]$.
    \end{itemize}
    Now, we iteratively insert the edge $(w_i,z_{i+1})$ into and remove the edge $(z_i,y_i)$ from the forest for $i$ going down from $h$ to $1$, and show that the forest remains a feasible forest such that the degree of $z_i$ is at most $k-1$ (before inserting $(w_{i-1},z_i)$ to the forest).
    
    This is obvious for $i=h \geq 1$ since $z_h \in P^{\calF^+}_{w_h, z_{h+1}}$, and the degree of both $w_h$ and $z_{h+1}$ is at most $k-1$ in $\calF^+$.
    Moreover, after removing $(z_h,y_h)$, the degree of $z_h$ in the forest becomes at most $k-1$.

    Now, assume that we have inserted $\{(w_i,z_{i+1})\}_{i=j}^h$ into and removed $\{(z_i,y_{i})\}_{i=j}^h$ from $\calF^+$ and it remains a feasible forest such that the degree of $z_j$ is at most $k-1$.
    According to the definition of an augmenting chain, all of the edges that we inserted into and removed from $\calF^+$ are completely contained in $T^{\calF^+}_{z_{j} \leftarrow z_{j-1}}$.
    As a result, the path connecting $w_{j-1}$ and $z_j$ in the forest, still contains the node $z_{j-1}$ and the edge $(z_{j-1}, y_{j-1})$.
    As a result if we insert the edge $(w_{j-1},z_{j})$ and remove the edge $(z_{j-1}, y_{j-1})$, we will still have a forest.
    Moreover, the degree of $w_{j-1}$ and $z_{j}$ were at most $k-1$ before inserting $(w_{j-1},z_{j})$, hence the forest remains a feasible forest.

    Finally, we insert the edge $(w_0, z_1)$ into the forest that does not create any cycle.
    The reason is that $w_0$ and $z_1$ are in different components of the forest initially (according to the definition of an augmenting chain), and they remain in different connected components of the forest since all changes on the forest so far were completely inside the component containing $z_1,w_1,\ldots,z_h,w_h,z_{h+1}$.
    Hence, the final forest $\calF'$ is a feasible forest.
\end{proof}

\subsubsection{Concluding the Proof of \Cref{full:lem:apply-chain}}

All of the guarantees of \Cref{full:lem:apply-chain} follow from the process of applying the augmenting chain $A$ to $(\calF, B, \calC, D)$.
Properties \ref{full:lem:apply-chain-P1} and \ref{full:lem:apply-chain-P4} of \Cref{full:lem:apply-chain} are trivial.
Property \ref{full:lem:apply-chain-P2} follows since the edges that are updated inside the regions due to the degree reduction (see \Cref{full:lem:degree-reduction}) for $z_{h+1}, w_0,w_1,\ldots, w_h$ are not incident on $B$ (are not critical edges of $A$).
Hence, the only edges that can be critical are $(w_i,z_{i+1})$ for $i \in [0, h]$ and $(z_i,y_i)$ for $i \in [1, h]$.
To show Property \ref{full:lem:apply-chain-P3}, it is straightforward to see that the deficit of the configuration does not decrease while applying the augmenting chain, i.e., $\deficit(\calF',B',\calC',D') - \deficit(\calF,B,\calC,D) \geq 0$.
Moreover, the degree of a node in $x \in B$ can reduce because of only one reason; $x=y_i$ for some $i \in [1, h]$ and after removing the edge $(z_i,y_i)$, the degree of $y_i$ is reduced by one unit.
Together with the fact that every node $x \in D' - D$ has degree $k$, we conclude Property \ref{full:lem:apply-chain-P3}.

\section{Find-Chain Subroutine}\label{full:sec:alg-find-chain}

\textbf{Roadmap.}
In this section, we describe the main subroutine of our algorithm called $\findChain$.
The goal of this subroutine is to search for augmenting chains.
It will recursively call another critical subroutine of our algorithm called $\backSearch$.
We describe $\findChain$ in \Cref{full:sec:find-chain-describe} that uses another subroutine called $\backSearch$.
Next, we describe the $\backSearch$ subroutine in \Cref{full:sec:backward-search}.
Finally, in \Cref{full:sec:correctness-find-chain}, we show the correctness of this subroutine.

\medskip
\noindent
\textbf{Level of Nodes and Edges.}
Throughout our algorithm, for each node $x \in V(G)$, we maintain an integer $\lvl_x \in [1, H+1]$.
Similarly, for each edge $(u,v) \in E(G)$, we maintain an integer $\lvl_{(u,v)} \in [0, H+1]$.
We call $\lvl_x$ the \textbf{level} of $x$, and $\lvl_{(x,y)}$ the level of $(x,y)$.
At the beginning of round $0$ of the algorithm, all nodes have level $1$, and all edges have level $0$.
These levels remain increasing throughout the entire algorithm until the end of round $H$.

\subsection{Description of Find-Chain}\label{full:sec:find-chain-describe}

The subroutine $\findChain$ takes two parameters; and integer $h \in [1, H]$, and an effective edge $(u,v)$.
The goal is to find an augmenting chain $A = (w_0, z_1, \ldots, z_r, u, v)$, whose last edge is $(u,v)$.
The condition that the input $(u,v)$ to $\findChain$ is an effective edge is directly aligned with Properties \ref{full:aug-chain-prop-3} and \ref{full:aug-chain-prop-4} of an augmenting chain.

\medskip
\noindent
\textbf{Description.}
As long as $\lvl_{(u,v)} \leq h-1$, we call the subroutine $\backSearch$ on the edge $(u,v)$.
Each call to a $\backSearch$ on an edge either succeeds (which returns a sequence $A = (w_0, z_1, \ldots, z_r, u, v)$) or fails (as you will see in the description of $\backSearch$, if it fails, it would increment the level of $(u,v)$).
If at least one of these calls successfully returns a sequence $(w_0, z_1, \ldots, z_r, u, v)$, we say that $\findChain$ is successful and return the same sequence as the output of $\findChain$.
If all of these calls are failed, and the level of $(u,v)$ is increased to $h$, we say that $\findChain$ has failed.
\Cref{full:code:find-chain} summarizes the $\findChain$ subroutine.

\begin{algorithm}[h]
  \DontPrintSemicolon
  \While{$\lvl_{(u,v)} \leq h-1$}{
    Call $\backSearch$ on $(u,v)$.
    
    \If{$\backSearch$ on $(u,v)$ returns $(w_0,z_1,w_1,\ldots, z_r, u,v)$}{\Return $(w_0,z_1,w_1,\ldots, z_r, u,v)$.}
    }

    \Return null
    \caption{$\findChain(h,(u,v))$}\label{full:code:find-chain}
\end{algorithm}

\subsection{Backward-Search}\label{full:sec:backward-search}

In this section, we provide a DFS-based procedure called $\backSearch$.
We have two types of $\backSearch$, which can be either called on a non-forest edge $(u,v) \notin \calF$, or on a junction $x \in B$.
The main purpose of a $\backSearch$ is to find a partial sequence of nodes that can be completed to an augmenting chain.
If such a partial sequence exists, it is possible that $\backSearch$ finds it, but it is also very well possible that $\backSearch$ misses it.
The main reason is our limited running time of $\tilde{O}(m \cdot H)$ for the final algorithm in \Cref{full:lem:main}.
But in \Cref{full:sec:analysis}, we show that this special type of search is delicate enough to get a large number of augmenting chains.

\subsubsection{Description of Backward-Search on an Edge}

The main purpose of the subroutine $\backSearch$ on a non-forest edge $(y, x)$ is to find a partial sequence $(w_0,z_1,\ldots,w_r,z_{r+1})$ such that $y = w_r$ and $z_{r+1} = x$ and this sequence is potentially a subsequence of an augmenting chain.

\medskip
\noindent
\textbf{Description.}
First, we check if $x$ and $y$ are in different connected components of $\calF$ (this is aligned with Property \ref{full:aug-chain-prop-1} of an augmenting chain);
If $y$ is outside of the connected component of $\calF$ containing $x$, then we say that the $\backSearch$ on $(y,x)$ is successful, and return $(y, x)$ as the output.

Now, assume that $y$ and $x$ are in the same connected component of the forest $\calF$.
We consider the path $P^\calF_{y,x}$ and iterate over all junctions $x' \in B \cap P^\calF_{y,x}$ on this path (this is directly aligned with Properties \ref{full:aug-chain-prop-2} and \ref{full:aug-chain-prop-5} of an augmenting chain) whose levels are at most $\lvl_{(x,y)}$ (this condition is for having fast running time).
We consider such $x^\star$ with \textbf{minimum} level (among all such $x'$) and do a recursive call to $\backSearch$ on $x^\star$
(the condition of $x^\star$ having minimum level is for achieving Property \ref{full:aug-chain-prop-6} of an augmenting chain).
There are two cases for this recursive call;
\begin{itemize}
    \item\label{full:case:successful} The recursive call successfully finds $(w_0,z_1,w_1,\ldots,w_{r-1}, x^\star)$.
    In this case, we attach $(y, x)$ to the end of this sequence and return it as the output of the $\backSearch$ on $x$, i.e., we say that the $\backSearch$ on $x$ is successful and output $(w_0,z_1,w_1,\ldots, w_{r-1}, x^\star, y, x)$.

    \item The recursive call on $x^\star$ fails.
    In this case, we continue by considering the next junction $x^\star_\new \in B \cap P^\calF_{y,x}$ with minimum level.
\end{itemize}
We continue the above process until at least one of the recursive calls is successful, or we end up in a situation that there does not exist any $x' \in B \cap P^{\calF}_{y,x}$ satisfying $\lvl_{x'} \leq \lvl_{(y,x)}$.
In the second case, we say that the $\backSearch$ has failed.
Then, we increment the level of $(y,x)$ and terminate the $\backSearch$.
\Cref{full:code:backward-search-edge} summarizes this procedure.

\begin{algorithm}[h]
\caption{$\backSearch$ on an edge $(y,x)$\label{full:code:backward-search-edge}}
  \DontPrintSemicolon
    \If{$y$ and $x$ are in different connected components of $\calF$\label{full:line:dif-comps}}
    {\Return $(y,x)$.}
    \While{there exists $x' \in B \cap P^{\calF}_{y,x}$ such that $\lvl_{x'} \leq \lvl_{(y,x)}$\label{full:line:condition-back-on-node}}{
    
    Let $x^\star = \arg\min\left\{\lvl_{x'} \mid x' \in B \cap P^\calF_{y,x} \text{ and } \lvl_{x'} \leq \lvl_{(y,x)} \right\}$

        Call $\backSearch$ on $x^\star$.

        \If{$\backSearch$ on $x^\star$ returns $(w_0,z_1,w_1,\ldots, w_{r-1},x^\star)$}{\Return $(w_0,z_1,w_1,\ldots, w_{r-1},x^\star,y,x)$.}
    }
    $\lvl_{(y,x)} \gets \lvl_{(y,x)} + 1$.
  
  \Return null.
\end{algorithm}

\subsubsection{Description of Backward-Search on a Node}
The main purpose of the subroutine $\backSearch$ on a junction $x$ is to find a partial sequence $(w_0,z_1,\ldots,w_r,z_{r+1})$ such that $z_{r+1} = x$ and this sequence is potentially a subsequence of an augmenting chain.

\medskip
\noindent
\textbf{Description.}
We iterate over all edges $(y, x)$ incident on $x$ satisfying the following conditions;
\begin{itemize}\label{full:cond:neighbor}
    \item $(y, x) \in E(G) - E(\calF)$ (this is directly aligned with Property \ref{full:aug-chain-prop-7} of an augmenting chain).
    
    \item $y$ is contained in a reducible region $C_y \in \calC$ (this is directly aligned with Property \ref{full:aug-chain-prop-3} of an augmenting chain).

    \item $\lvl_{(y,x)} \leq \lvl_x - 1$ (this condition is for two purposes: first, having a fast running time, and second, achieving Property \ref{full:aug-chain-prop-6} of an augmenting chain).
\end{itemize}
We refer to such an edge as a \textbf{backward} edge.
If there exists at least one backward edge, we find the backward edge $(y^\star, x)$ with the \textbf{minimum} possible level and make a recursive call to the $\backSearch(y^\star, x)$.
There are two cases for this call;
\begin{itemize}
    \item $\backSearch(y^\star, x)$ successfully returns a sequence $(w_0,z_1,\ldots,w_{r-1},z_r, y^\star, x)$.
    In this case, we say that $\backSearch$ on $x$ is successful, and return this sequence as the output.

    \item $\backSearch(y^\star, x)$ fails.
    In this case, we consider the next backward edge $(y^\star_\new, x)$.
\end{itemize}
As long as there exists at least one backward edge, we repeatedly call $\backSearch$ on such an edge with minimum level.
Every time that $\backSearch$ on an edge $(y,x)$ fails, it will increase the level of $(y,x)$.
As a result, this process will finish, and the algorithm eventually halts. 
If at least one of these recursive calls is successful, we say that this $\backSearch$ on $x$ is successful as well.
If we end up in a situation where there is no more backward edge, we say that $\backSearch$ on $x$ has failed.
Then, we increment the level of $x$, and terminate the $\backSearch$.
\Cref{full:code:backward-search-node} summarizes this procedure.

\begin{algorithm}[h]
  \DontPrintSemicolon
  \While{there exists $y \in N_{G}(x)$ such that [$(y, x) \notin E(\calF)$], [$y$ in contained in a reducible region in $\calC$], and [$\lvl_{(y, x)} \leq \lvl_x - 1$]\label{full:line:conditions-back-on-edge}}{
    Let $(y^\star, x) = \arg \min \left\{ \lvl_{(y,x)} \mid (y,x) \notin E(\calF) \text{ and } y \text{ is in a reducible region in } \calC \text{ and } \lvl_{(y,x)} \leq \lvl_x - 1\right\}$.
    
    Call $\backSearch$ on $(y^\star, x)$.

    \If{$\backSearch(y^\star, x)$ returns $(w_0,z_1,w_1,\ldots, w_{r-1},z_r,y^\star,x)$}{\Return $(w_0,z_1,w_1,\ldots, w_{r-1},z_r,y^\star,x)$.}
    }
  $\lvl_{x} \gets \lvl_{x} + 1$.
  
  \Return null.
  \caption{$\backSearch$ on a node $x$}\label{full:code:backward-search-node}
\end{algorithm}

\subsection{Correctness of Find-Chain}\label{full:sec:correctness-find-chain}

In this section, we show the following.

\begin{claim}\label{full:claim:return-chain}
    If $\findChain(h,(u,v))$ returns a sequence $(w_0,z_1,\ldots, z_h, u, v)$, then it must be an augmenting chain w.r.t.~$(\calF, B, \calC, D)$. 
\end{claim}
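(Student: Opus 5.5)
We verify the seven properties of \Cref{full:def:aug-chain} one at a time by reading off the recursion that produced the returned sequence. Properties \ref{full:aug-chain-prop-1}--\ref{full:aug-chain-prop-5} and \ref{full:aug-chain-prop-7} are local to a single recursive step; Property \ref{full:aug-chain-prop-6} is the real content. Unwind the successful call stack: $\findChain$ calls $\backSearch$ on $(w_h,z_{h+1})=(u,v)$. That call selects junction $z_h\in B\cap P^\calF_{w_h,z_{h+1}}$ and then $\backSearch$ on $z_h$. That call selects a backward edge $(w_{h-1},z_h)$, and so on, until the edge call on $(w_0,z_1)$ succeeds because $w_0$ and $z_1$ lie in different components. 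None of these calls modifies $\calF,B,\calC$ (only levels change), so all conditions refer to one fixed configuration.

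The local properties follow as below.
\begin{itemize}
\item[] \textbf{Property \ref{full:aug-chain-prop-2}:} each $z_i\in B$.
\item[] \textbf{Property \ref{full:aug-chain-prop-5}:} each $z_i\in P^\calF_{w_i,z_{i+1}}$.
\item[] \textbf{Property \ref{full:aug-chain-prop-7}:} each $(w_i,z_{i+1})\notin E(\calF)$. For $i<h$ this comes from the backward-edge conditions; for $i=h$ it comes from effectiveness of $(u,v)$.
\item[] \textbf{Property \ref{full:aug-chain-prop-3}:} each $w_i$ is in a region of $\calC$. For $w_h=u$ this is \Cref{full:def:effective-edge}; otherwise it is the backward-edge condition.
\item[] \textbf{Property \ref{full:aug-chain-prop-4}:} this is item 4 of \Cref{full:def:effective-edge}.
\item[] \textbf{Property \ref{full:aug-chain-prop-1}:} the terminating check gives that $w_0$ and $z_1$ lie in different components. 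Each $z_i$ lies on a forest path from $w_i$ to $z_{i+1}$, so all of $z_1,\dots,z_{h+1}$ and $w_1,\dots,w_h$ are in one component.
\end{itemize}

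Next we record the level inequalities. The selection rules give $\lvl_{z_i}\le \lvl_{(w_i,z_{i+1})}$, since $z_i$ is selected on the edge $(w_i,z_{i+1})$. They also give $\lvl_{(w_{i-1},z_i)}\le \lvl_{z_i}-1$, since that is a backward edge of $z_i$. Chaining these yields $\lvl_{z_i}<\lvl_{z_{i+1}}$, hence $\lvl_{z_i}<\lvl_{z_j}$ for all $i<j$. Some care is needed about \emph{when} levels are read: a level may have been incremented by earlier failed calls. The inequalities hold, however, at the moment each selection was made inside the successful stack. Levels only increase, and the successful branch never increments the levels of the objects on its own path. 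So all the relevant values are frozen during the successful path, and we use them at that time.

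The main obstacle is Property \ref{full:aug-chain-prop-6}, which we prove by contradiction along the lines of \Cref{claim:return-chain}. Fix $i$, and let $x$ be the first node in the order $z_{i+1},w_{i+1},\dots,z_{h+1}$ that lies outside $T^\calF_{z_{i+1}\leftarrow z_i}$. Here $z_{i+1}$ itself is inside by definition.
\begin{itemize}
\item[] \textbf{Case $x=w_j$ with $j\ge i+1$.} Then $z_j$ is inside the subtree while $w_j$ is outside. The path $P^\calF_{w_j,z_j}$ therefore crosses $z_i$, and since $z_j\in P^\calF_{w_j,z_{j+1}}$ we get $z_i\in P^\calF_{w_j,z_{j+1}}$. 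Moreover $\lvl_{z_i}<\lvl_{z_j}\le \lvl_{(w_j,z_{j+1})}$, so $z_i$ was also eligible when $z_j$ was selected on edge $(w_j,z_{j+1})$, and it had strictly smaller level. This contradicts the $\arg\min$ choice of $z_j$.
\item[] \textbf{Case $x=z_j$ with $j\ge i+2$.} Then $w_{j-1}$ is inside and $z_j$ is outside, so $z_i\in P^\calF_{w_{j-1},z_j}$. Also $z_i\in B$ and $\lvl_{z_i}<\lvl_{z_{j-1}}\le\lvl_{(w_{j-1},z_j)}$. This contradicts the selection of $z_{j-1}$ as the minimum-level junction on that path.
\end{itemize}

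The step we expect to be most delicate is justifying that the levels compared in this argument are exactly the values in force at the selection moments. We address this by observing that the selections on the successful branch occur in the order $z_h, z_{h-1},\dots$, each nested inside the previous call. Each $\lvl_{z_i}$ is unchanged from its selection until its success, because levels are bumped only on failure.
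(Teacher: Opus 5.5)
Your proof is correct and follows the paper's own argument. The local properties are read off the selection conditions of $\findChain$/$\backSearch$ and the definition of an effective edge, and Property~\ref{full:aug-chain-prop-6} is proved by taking the first node outside $T^\calF_{z_{i+1}\leftarrow z_i}$ and contradicting the minimum-level choice in the same two cases. In Case~II you rightly contradict the selection of $z_{j-1}$, the junction actually chosen on $P^\calF_{w_{j-1},z_j}$ (the paper's text says $z_j$); your timing discussion makes explicit what the paper leaves implicit, and it only needs the remark that no node or edge is re-entered while its own call is still active, which holds because levels strictly decrease along the recursion stack, so ``bumped only on failure'' does freeze them.
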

\begin{proof}
Properties \ref{full:aug-chain-prop-1}, \ref{full:aug-chain-prop-2}, \ref{full:aug-chain-prop-3}, \ref{full:aug-chain-prop-4}, \ref{full:aug-chain-prop-5}, and \ref{full:aug-chain-prop-7} of an augmenting chain for the sequence $(w_0,z_1,\ldots, z_h, u, v)$ easily follow from the definition of an effective edge $(u,v)$ and the conditions in the descriptions of $\backSearch$ and $\findChain$.
While describing $\backSearch$ and $\findChain$, we explicitly mentioned the correspondence between each property of an augmenting chain and each condition in the procedure of $\backSearch$ and $\findChain$.
The only non-trivial property is Property \ref{full:aug-chain-prop-6}, which we prove below.

\medskip
\noindent
\textbf{Proof of Property \ref{full:aug-chain-prop-6}.}
Let $A =(w_0,z_1,\ldots,z_h,w_h,z_{h+1}) = (w_0,z_1,\ldots,z_h,u,v)$ and assume that Property \ref{full:aug-chain-prop-6} is not satisfied for an index $i \in [1, h]$.
Consider the first node $x$ in the following order $ z_{i+1}, w_{i+1}, z_{i+2}, \ldots, w_h, z_{h + 1}$ that is not contained in $T^\calF_{z_{i+1}\leftarrow z_i}$.
Obviously, $z_{i+1} \in T^\calF_{z_{i+1}\leftarrow z_i}$.
We have the following two cases for $x$.

\medskip
\noindent
\textbf{Case I:} $x = w_j$ for some $j \in [i+1,h]$, which means $w_j \notin T^\calF_{z_{i+1} \leftarrow z_i}$.
According to the definition of $x$, we have $z_j \in T^\calF_{z_{i+1} \leftarrow z_i}$.
The two conditions $w_j \notin T^\calF_{z_{i+1} \leftarrow z_i}$ and  $z_j \in T^\calF_{z_{i+1} \leftarrow z_i}$ conclude $z_i \in P^\calF_{w_j, z_j}$.
According to Property \ref{full:aug-chain-prop-5} for $A$ (that we showed above), we have that $z_j \in P^\calF_{w_j, z_{j+1}}$.
The two conditions $z_i \in P^\calF_{w_j, z_j}$ and $z_j \in P^\calF_{w_j, z_{j+1}}$ imply that $z_i$ is also contained in the path $P^\calF_{w_j, z_{j+1}}$.
Moreover, the descriptions of $\findChain$ and $\backSearch$ conclude $\lvl_{z_i} \leq \lvl_{(w_i, z_{i+1})} \leq \lvl_{z_{i+1}} - 1 \leq \lvl_{z_j}-1$.

We now have a contradiction with the procedure of $\findChain$ and $\backSearch$ since they always call a recursive $\backSearch$ on a node with \textbf{minimum} possible level.
In other words, the node $z_i$ in the path $P^\calF_{w_j, z_{j+1}}$ have priority than the node $z_j$.
So, the subroutine must have called a $\backSearch$ directly on $z_i$ instead of $z_j$.

\medskip
\noindent
\textbf{Case II:}
$x = z_j$ for some $j\in[i+2,h+1]$, which means $z_j \notin T^\calF_{z_{i+1} \leftarrow z_i}$.
According to the definition of $x$, we have that $w_{j-1} \in T^\calF_{z_{i+1} \leftarrow z_i}$.
The two conditions $z_j \notin T^\calF_{z_{i+1} \leftarrow z_i}$ and $w_{j-1} \in T^\calF_{z_{i+1} \leftarrow z_i}$ conclude that $z_i \in P^\calF_{w_{j-1}, z_j}$.
Similar to Case I, we get a contradiction since $\lvl_{z_i} < \lvl_{z_j}$ and the subroutine must have called a $\backSearch$ directly on $z_i$ instead of $z_j$.

\medskip
In both cases, we have a contradiction.
This shows that Property \ref{full:aug-chain-prop-6} must be correct for $A$.
\end{proof}

\section{Analysis}\label{full:sec:analysis}

\textbf{Roadmap.}
In \Cref{full:sec:alg-correctness}, we provide the correctness of our algorithm, i.e., why the final output is a forest of maximum degree at most $k$.
Next, in \Cref{full:sec:reduced-num-comps}, we show that the output forest has the specified number of components less than the input forest.
The proof relies on the crucial \Cref{full:lem:potential}.
The proof of \Cref{full:lem:potential} is provided in \Cref{full:sec:proof-incresing-potential1,full:sec:proof-incresing-potential2}.
\Cref{full:sec:deferred-proofs} contains the omitted proofs of the claims of \Cref{full:sec:proof-incresing-potential1,full:sec:proof-incresing-potential2}.

\subsection{Correctness}\label{full:sec:alg-correctness}

\begin{observation}\label{full:obs:valid-init-conf}
    For any $t \in [1, H]$, if $\conf{\final}{t-1}$ is a valid configuration, then $\conf{\init}{t}$ is also a valid configuration.
\end{observation}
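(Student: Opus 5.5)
We have to check that $\conf{\init}{t}$, built from $\conf{\final}{t-1}$ by $\calF^\init_t := \calF^\final_{t-1}$, $B^\init_t := B^\final_{t-1}\cup D^\final_{t-1}$, $D^\init_t := \emptyset$, and $\calC^\init_t :=$ all $(\calF^\init_t,B^\init_t)$-regions, satisfies the four conditions of a configuration. The plan is to go through them in order.

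\emph{Forest feasibility and junctions.} Feasibility of $\calF^\init_t$ is inherited, since the forest is unchanged and $\conf{\final}{t-1}$ is valid. $B^\init_t$ is a subset of $V(G)$ by construction.

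\emph{The set $D$.} Set $D^\init_t = \emptyset$. By definition $\calC^\init_t$ contains every region, so there are no touched nodes, and the requirement that $D$ contain all touched nodes of degree $k$ holds vacuously.

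\emph{The regions.} This is the only real content. Every element of $\calC^\init_t$ is an $(\calF^\init_t,B^\init_t)$-region by definition, so what remains is to show that every node $u$ of such a region $C$ is reducible w.r.t.~$C$. The key fact to establish first is that every node $x\in C$ has $\deg_{\calF^\init_t}(x)\le k-1$. To prove it, take $x\notin B^\init_t$ and split on whether $x$ was touched at the end of round $t-1$.
\begin{itemize}
\item If $x$ was touched, then $x\notin D^\final_{t-1}$. The configuration property of $D$ says $D$ contains every touched node of degree $k$, so $\deg(x)\le k-1$.
\item If $x$ lay in a reducible region $C'\in\calC^\final_{t-1}$, we need $\deg(x)<k$. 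One route is that saturated nodes are never reducible w.r.t.~a region consisting of them alone. This fails, though: an $(\calF,B)$-region can contain saturated nodes that became reducible through merging. So this case needs care.
\end{itemize}

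\emph{Main obstacle.} The hard part is the second case: showing that a saturated node inside an untouched reducible region is still reducible w.r.t.~the new, possibly smaller, region $C\subseteq C'$. Note that $C\subseteq C'$, since $B^\init_t\supseteq B^\final_{t-1}$ and $D$ consists only of touched nodes, so the refinement cannot split an untouched region; hence in fact $C=C'$. Indeed, $D^\final_{t-1}$ is disjoint from untouched regions, so an untouched region $C'$ has no neighbours in $D^\final_{t-1}\setminus B^\final_{t-1}$ inside it. Therefore $C'$ remains a full $(\calF^\init_t,B^\init_t)$-region, and its nodes are reducible w.r.t.~$C'$ by the validity of the previous configuration. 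For touched regions, the first bullet shows all nodes are slack, so each resulting region $C$ consists entirely of slack nodes. The Fürer--Raghavachari marking procedure then starts with no saturated nodes in $C$, and every node of $C$ is reducible.

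\emph{Conclusion.} Combining the untouched and touched cases, every node of every region in $\calC^\init_t$ is reducible w.r.t.~its region, which completes the verification that $\conf{\init}{t}$ is a valid configuration.
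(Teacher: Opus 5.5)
Your proof is correct, and it is the routine check the paper leaves implicit by stating this as an observation without proof. Since $D^{\final}_{t-1}$ contains only touched nodes, every region of $\calC^{\final}_{t-1}$ survives unchanged as an $(\calF^{\init}_{t},B^{\init}_{t})$-region and its nodes stay reducible. Every other new region lies inside an old touched region with $D^{\final}_{t-1}$ removed, so all its nodes have degree at most $k-1$ and it is trivially reducible.

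The only flaw is in the exposition. The ``key fact'' you announce, that every node of every new region is slack, is false for untouched regions, as you notice yourself. Drop it rather than present it as a step, since your final argument never uses it.
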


\medskip
\noindent
\textbf{Correctness of the Algorithm.}
Our algorithm starts with $\conf{\init}{t}$ at the beginning of round $t$, which is a valid configuration (according to \Cref{full:obs:valid-init-conf}) at the beginning of each round.
Every time the algorithm updates $(\calF_t, B_t, \calC_t, D_t)$ by applying an effective edge, it remains a valid configuration (see \Cref{full:claim:apply-effective-remain-configuration}).
Hence, the procedure of the algorithm is valid and the final $(\calF_{H}^\final, B_{H}^\final, \calC_{H}^\final, D_{H}^\final)$ at the end of round $H$ is a valid configuration as well.
As a result, the final forest $\calF_{H}^\final$ returned by the algorithm is a feasible forest, i.e., the maximum degree of $\calF_{H}^\final$ is at most $k$.

\subsection{Number of Components of the Final Forest}\label{full:sec:reduced-num-comps}

\begin{lemma}\label{full:lem:main-reduce-comps}
At the end of round $H$ of the algorithm, the number of components of the forest $\calF^\final_{H}$ is at most $f - f/(100 \cdot H)$ where $f$ is the number of components of the input forest $\calF$.
\end{lemma}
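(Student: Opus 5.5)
We argue by contradiction: suppose the algorithm applies only $\tau < f/(100H)$ augmenting chains over all $H$ rounds. Each applied chain reduces the number of components by exactly one (Property \ref{full:lem:apply-chain-P1} of \Cref{full:lem:apply-chain}), and no other operation changes $\calF$. So this assumption is exactly the negation of the lemma.

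\textbf{Step 1: deficit control.} Junctions and dirty nodes enter $B\cup D$ only when saturated. By Property \ref{full:lem:apply-chain-P3}, each chain of length at most $H$ raises the deficit by at most $2H$. Also $B^\init_{t+1}=B^\final_t\cup D^\final_t$, so deficit carries across rounds. Hence throughout the algorithm $\sum_{x\in B\cup D}(k-\deg_\calF(x))\le 2H\tau< f/50$. This replaces \Cref{new:assumption:full-degree}: every counting argument below loses an additive $O(f)$, and the constants must absorb it.

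\textbf{Step 2: the potential.} For each round $t$ we build a tail-independent set $M(t)$ of stable $t$-chains (pseudo-chains w.r.t.\ $\conf{\final}{t}$, as in \Cref{new:def:p:aug,new:def:stable:chain}) and show two bounds.

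The base case is $|M(1)|\ge f-2\tau$: components untouched by any applied chain give length-$0$ chains.

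For the growth step, fix an optimal tree $T^\star$ and let $q=|M(t)|$.
\begin{itemize}
\item Count the regions inside each tail $T_i$ via the degree sum of its junctions. This shows $|\calC^{(i)}|-1\ge (k-2)|B^{(i)}|-(\text{deficit in }T_i)$.
\item By \Cref{full:obs:cond-end-round}, no non-forest edge joins two reducible regions of a tail. So if $T^\star$ had no $T_i$-expanding edge, all $T^\star$-edges needed to connect those regions would hit $B^{(i)}$, giving $|\calC^{(i)}|-1\le\Delta^\star|B^{(i)}|$.
\item Since $k-2\ge\Delta^\star$ (with $k^\add=\Delta^\star+1$ one needs the slightly finer count $k-1$ per junction), this yields expanding edges for all but $O(\text{deficit})$ tails.
\item The endpoints $B^\out_t$ satisfy $|B^\out_t|\ge (q-O(f/50))/\Delta^\star$, and $B^\out_t\subseteq B^\init_{t+1}$ by \Cref{full:obs:cond-end-round}.
\item A second degree count gives $|\calS_{t+1}|\ge (k-1)|B^\out_t|-O(f/50)$.
\end{itemize}
Combining, $|\calS_{t+1}|\ge \frac{k-1}{\Delta^\star}q-O(f/50)$, with ratio $1+\varepsilon$ in the multiplicative case and $\ge 1+1/\Delta^\star$ in the additive case.

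The potential chains $A_U$ are built as extended or repetitive chains. Those not destroyed by the end of round $H$ form $M(t+1)$. The destruction analysis goes as follows.
\begin{itemize}
\item Destroying a not-yet-affected $A_U$ forces the destroying edge to be improving, because the $\backSearch$ through the root of $U$ cannot fail while $A_U$'s structure and levels are intact.
\item Each applied chain destroys at most $O(H)$ of the $A_U$: at most $H+2$ regions become touched, there are at most $2H+1$ critical edges, and the tails are disjoint.
\end{itemize}
This gives $|M(t+1)|\ge \frac{k-1}{\Delta^\star}|M(t)|-O(H\tau)-O(f/50)$.

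\textbf{Step 3: conclude.} In the multiplicative case, $O(H\tau)\le f/20$ while $|M(t)|\gtrsim 4f/5$. So $|M(t)|$ grows by a factor $1+\Omega(\varepsilon)$ per round, and after $H^\mult=\lceil\log_{1+\varepsilon}(10n/f)\rceil+1$ rounds it exceeds $n$. This contradicts disjointness of tails.

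In the additive case, growth is only additive: each round adds at least $|M(t)|/\Delta^\star - O(f)$. This is where the choice $H^\add=\lceil 10n/f\rceil$ must be used. The plan is to show $|M(t+1)|-|M(t)|\ge \Omega(f)$, using that every tail that is a full component contributes at least one new singly-attached region beyond its own. The tails are then disjoint sets totalling more than $n$ vertices after $10n/f$ rounds, again a contradiction.

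\textbf{Main obstacle.} The hardest step is Claim~\ref{new:claim:destroying-is-improving} in full generality. One must show the level-bounded $\backSearch$ actually finds a chain through $A_U$, i.e., that node and edge levels along a stable $t$-chain stay at most $t$ so the recursive calls are permitted. The plan is to prove by induction on $t$ that a failed search from $A_U$'s root would raise a level along the chain past $t$. That would contradict stability, since failed searches at level $\le t$ occur only while a round-$t$ effective edge exists.
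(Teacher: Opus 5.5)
Your Step 1, the base case $|M(1)|\ge f-2\tau$, the destruction analysis, and the level induction you flag as the main obstacle all match the paper (\Cref{full:claim:deficit-bound}, \Cref{full:claim:levels-are-bounded}, \Cref{full:claim:destroying-is-improving}). The gap is in the growth recurrence of Step 2.

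\textbf{The two counts you combine are wrong in opposite directions, and the corrected pair does not grow.} First, your ``second degree count'' overcounts. \Cref{full:cor:number-leafs} gives only $1+\sum_{x}(\deg(x)-2)$ singly-attached regions per component of $\calF^\init_{t+1}$. Summed, this is $|\calS_{t+1}|\ge (k-2)|B^\out_t|+(\text{number of components})-\deficit$, not $(k-1)|B^\out_t|-O(f)$. For example, $b$ nodes of $B^\out_t$ lying consecutively on a path inside one component produce only $b(k-2)+2$ such regions. Second, your per-tail argument (one expanding edge per tail, at most $\Delta^\star$ tails per endpoint) only yields $|B^\out_t|\ge (q-O(f))/\Delta^\star$. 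With the correct region count these combine to the ratio $(k-2)/\Delta^\star$. That ratio is $(\Delta^\star-1)/\Delta^\star<1$ for $k^\add$, and it is $\le 1$ for $k^\mult$ whenever $\varepsilon\Delta^\star\le 1$. Even your own formula gives $(k-1)/\Delta^\star=1$ for $k^\add$, not $1+1/\Delta^\star$.

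The missing idea is the global count of \Cref{full:claim:size-B-out}. Take every reducible region of every tail, every node of $B^\inn_t$, and every node of $B^\out_t$ as a separate part of a partition $\calP$. Then $T^\star$ needs at least $|\calP|-1$ edges leaving parts, and all of them are incident to $B^\inn_t\cup B^\out_t$ by \Cref{full:obs:cond-end-round}. Because each node of $B^\out_t$ is itself a part that must be connected, it absorbs only $\Delta^\star-1$ net. This gives $|B^\out_t|\ge (q-f/10)/(\Delta^\star-1)$.

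\textbf{You are also missing the additive term.} You build only extended and repetitive chains. This discards the rootless singly-attached regions, i.e.\ components of $\calF^\init_{t+1}$ containing no node of $B^\out_t$, which the paper turns into \emph{new} length-$0$ chains. Summing the $+1$ of \Cref{full:cor:number-leafs} over the at least $f-\tau$ components gives $|\calS_{t+1}|\ge (k-2)|B^\out_t|+9f/10$. Two deductions then follow. At most $f/10$ sequences are unqualified because their leaf $y$ is a junction, a case you also need to handle. At most $20H\tau\le f/5$ are destroyed. The result is $|M(t+1)|\ge\frac{k-2}{\Delta^\star-1}|M(t)|+f/5$.

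This additive term is essential in both regimes:
\begin{itemize}
\item \emph{Multiplicative case.} The losses (deficit, unqualified leaves, destroyed chains) are $\Theta(f)$ independently of $\varepsilon$, because the lemma fixes $\tau<f/(100H)$. So your step $(1+\varepsilon)q-\Theta(f)$ with $q=\Theta(f)$ need not grow once $\varepsilon$ is below a constant.
\item \emph{Additive case.} Here $\frac{k-2}{\Delta^\star-1}=1$, so the $+f/5$ per round over $\lceil 10n/f\rceil$ rounds is the entire source of growth.
\end{itemize}
Your sketch for the additive case, that tails which are full components contribute an extra region, does not guarantee an $\Omega(f)$ increment. It also leaves the $(\Delta^\star-1)/\Delta^\star$ contraction in place.
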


The rest of this section is devoted to proving this lemma.
Throughout this section, we consider the following assumption, and show that this assumption leads to a contradiction.
This shows \Cref{full:lem:main-reduce-comps}.

\begin{assumption}\label{full:assumption:small-improvement}
    We assume that all of the augmenting chains that the algorithm has applied are $A^{(1)}, A^{(2)}, \ldots, A^{(\tau)}$, where $\tau < f/(100 \cdot H)$.
\end{assumption}

\subsubsection{Stable Chains}\label{full:sec:def-stable-chains}

\begin{definition}[Pseudo-Chain]\label{full:def:pseudo-chain}
    Let $(\calF, B, \calC, D)$ be a configuration and $h \geq 0$ be an integer.
    We call a sequence $A = \left(w_0, z_1, w_1, z_2, w_2, \ldots, z_h, w_h \right)$ a {\em pseudo-chain} of {\em length} $h$ w.r.t.~$(\calF, B, \calC, D)$ iff the following properties hold; 
    \begin{enumerate}

        \item\label{full:pseudo-chain-P1} For all $i \in [1, h]$, $z_{i} \in B$.

        \item\label{full:pseudo-chain-P2}
        For every $i \in [0, h]$, $w_i$ and $z_i$ are in the same connected component of $z_i$, and $w_i$ belongs to a reducible region in $\calC$.

        We refer to $T_{w_i \leftarrow z_i}^\calF$ as the {\em $i^{\th}$ block of $A$}.
        We also refer to the entire component of $\calF$ containing $w_0$ as the $0^\th$ block of $A$.

        \item\label{full:pseudo-chain-P3} For each $i \in [1, h-1]$, we have that $z_{i+1} \not\in T^\calF_{w_i \leftarrow z_{i}}$, i.e., $z_{i+1}$ is outside of the $i^\th$ block of $A$.
        
        \item\label{full:pseudo-chain-P4} For all $i \in [1, h]$, we have $(w_{i-1}, z_i) \in E(G)- E(\calF)$, i.e., $(w_{i-1}, z_i)$ is a non-forest edge.
    \end{enumerate}
\end{definition}

Note that in the above definition, the nodes $w_0,z_1,w_1,\ldots,z_h,w_h$ are not necessarily distinct.
Moreover, the blocks of a pseudo-chain may overlap, and it does not have a neat structure like an augmenting chain.
We use this definition only for analysis purposes in this section, and the algorithm works with augmenting chains as described in \Cref{full:sec:alg-overview,full:sec:aug-chain,full:sec:alg-find-chain}.

\begin{definition}[Stable $t$-Chain]\label{full:def:stable-chain}
Consider the configuration $\conf{\final}{t}$ at the end of round $t$.
Let $A = (w_0,z_1,w_1,\ldots,w_{h-1}, z_h, w_h)$ be a pseudo-chain of length $h \in [0, t-1]$ w.r.t.~$\conf{\final}{t}$.
We refer to $A$ as a \textbf{stable $t$-chain} if the following hold;
\begin{enumerate}
    \item\label{full:P1:cond:stable-1} Throughout the rest of the algorithm (i.e., from the end of round $t$ until the end of round $H$), for each $i \in [1, h]$, $z_i$ is never removed from the set $B$ of junctions, and the edge $(z_i, y_i)$ connecting $z_i$ to the $i^\th$ block of $A$ is never removed from the forest $\calF$.

    \item\label{full:P1:cond:stable-2} For each $i \in [0, h]$, the $(\calF^\final_t,B^\final_{t})$-regions inside the $i^\th$ block are all reducible regions in $\calC^\final_t$ and they never be removed from $\calC$ throughout the rest of the algorithm.
    In other words, the structure of the junctions and reducible regions inside the $i^\th$ block of $A$ will never change for the rest of the algorithm. 
\end{enumerate}
\end{definition}

\begin{observation}\label{full:obs:remain-pseudo-chain}
    If $A$ is a stable $t$-chain, then $A$ remains a pseudo-chain w.r.t.~any configuration from the beginning of round $t+1$ until the end of round $H$.
\end{observation}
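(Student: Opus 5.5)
The plan is to fix a stable $t$-chain $A=(w_0,z_1,\ldots,z_h,w_h)$ and an arbitrary configuration $(\calF,B,\calC,D)$ occurring between the start of round $t+1$ and the end of round $H$. I will then verify the four properties of \Cref{full:def:pseudo-chain} one at a time. At time end-of-round-$t$ they hold by assumption, so the task is to show that none of them can break afterwards.

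The verification proceeds as follows.

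\textbf{Property 1.} By condition \ref{full:P1:cond:stable-1} of \Cref{full:def:stable-chain}, each $z_i$ is never removed from $B$. The round initialization $B^\init_{t'}=B^\final_{t'-1}\cup D^\final_{t'-1}$ only enlarges $B$, so $z_i\in B$ throughout.

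\textbf{Property 2 and block structure.} Each $z_i$ stays in $B$, and the edge $(z_i,y_i)$ stays in $\calF$. By condition \ref{full:P1:cond:stable-2}, every region inside the $i^\th$ block stays in $\calC$. So no degree reduction is ever run inside those regions, and no augmenting chain touches them, since \Cref{full:lem:apply-chain} removes every region it uses from $\calC$. Likewise, no touching edge has those regions on its boundary.

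I would argue that the node set of $T^\calF_{w_i\leftarrow z_i}$ is therefore invariant. Edges inside the block change only through degree reduction inside regions, or through chain edges $(w_j,z_{j+1})$ and $(z_j,y_j)$ whose regions would be touched. An edge leaving the block other than $(z_i,y_i)$ must be incident either to a node of a region of the block, which forces that region out of $\calC$, or to a junction inside the block.

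The junction case is the delicate one. A new forest edge at such a junction $z$ arises as $(w_{j},z)$ for $z=z_{j+1}$ in an applied chain. That chain would then degree-reduce $C_{w_j}$ and make $z_{j+1}$ slack-compatible, and by property \ref{full:aug-chain-prop-6} of \Cref{full:def:aug-chain} the subsequent nodes $w_{j+1},\ldots$ lie in $T_{z_{j+1}\leftarrow z_j}$. I expect to show that $w_{j+1}$ lies in a region of the block, which would be touched and so give a contradiction. This is the main obstacle: it requires checking that every new edge incident to the block forces some region of the block to leave $\calC$.

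The merging-edge case needs a separate check. A merging edge removes junctions on $P_{u,v}$ and replaces regions, so it would alter $\calC$ inside the block, contradicting condition \ref{full:P1:cond:stable-2}. In particular $w_i$ stays in the same reducible region, and $w_i,z_i$ remain connected through $(z_i,y_i)$.

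\textbf{Properties 3 and 4.} With the block node sets fixed, $z_{i+1}\notin T^\calF_{w_i\leftarrow z_i}$ persists. The non-forest status of $(w_{i-1},z_i)$ persists because inserting it would be a critical edge of a chain whose $C_{w_{i-1}}$ is then touched, again contradicting condition \ref{full:P1:cond:stable-2}. Combining these gives the observation.
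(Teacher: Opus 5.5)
\textbf{Verdict: there is a genuine gap, and it is located exactly at the step you flag as the main obstacle.} Your overall plan is the right shape: freeze each block by induction over the sequence of updates, then read off properties 1--4. The paper itself gives no proof of this observation. It treats it as immediate from \Cref{full:def:stable-chain}, whose condition~\ref{full:P1:cond:stable-2} is explicitly glossed as ``the structure of the junctions and reducible regions inside the $i^{\th}$ block of $A$ will never change.'' Under that reading, each of your four checks is a one-liner.

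If you instead derive the frozen block from the two literal conditions, everything rests on the junction case, and there you only say you ``expect to show'' that $w_{j+1}$ lies in a region of the block. Nothing forces that. Let $T:=T^\calF_{w_i\leftarrow z_i}$, viewed as rooted at the pivot $y_i$. Take an applied chain $A'=(w'_0,z'_1,\ldots,w'_{h'},z'_{h'+1})$ with $z'_{j+1}=z\in B\cap V(T)$. Property~\ref{full:aug-chain-prop-5} only says that $z$ separates $w'_{j+1}$ from $z'_{j+2}$. If $w'_{j+1}$ is on the side of $z$ towards $z_i$, it can lie outside $T$ altogether (beyond $z_i$). Property~\ref{full:aug-chain-prop-6} at index $j$, which you invoke, places the later nodes in $T^\calF_{z\leftarrow z'_j}$, and that says nothing about $T$.

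\textbf{The missing step is a two-case argument.}
\begin{itemize}
\item[(a)] If $w'_{j+1}$ lies in a subtree hanging below $z$ inside $T$, then $C_{w'_{j+1}}$ is a region of $T$, since regions cannot cross $z_i\in B$. The chain removes it from $\calC$.
\item[(b)] Otherwise $z'_{j+2}$ lies in such a subtree $S\subseteq V(T)$.
  \begin{itemize}
  \item[(b1)] If $j+1=h'$, then $z'_{h'+1}$ is a non-junction node of $T$. By the induction hypothesis $T$ is still intact and has no touched nodes, so $z'_{h'+1}$ lies in a reducible region of $T$, which the chain touches.
  \item[(b2)] If $j+1<h'$, property~\ref{full:aug-chain-prop-6} at index $j+1$ gives $w'_{j+2}\in T^\calF_{z'_{j+2}\leftarrow z}=S$. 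So $C_{w'_{j+2}}$ is a region of $T$ that gets touched.
  \end{itemize}
\end{itemize}
In every case condition~\ref{full:P1:cond:stable-2} is violated.

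\textbf{Assembling the frozen block.} The only forest edge leaving $T$ is $(z_i,y_i)$, which condition~\ref{full:P1:cond:stable-1} protects. Degree reductions stay inside regions that then become touched. Together with the two cases above, this gives the frozen block.

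\textbf{The merging/touching case also needs a justification.} A tree path through a junction of $T$ can enter $T$ only via $(z_i,y_i)$, so it must have an endpoint in $T$. That endpoint is a non-junction, so its region belongs to the boundary of the path and is removed from $\calC$. With these additions, your remaining checks of properties 1--4 go through as you wrote them.
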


\subsubsection{Tail-Independent Set of Stable Chains}
Let $t \in [1, H]$ be arbitrary.
Assume $A = (w_0,z_1,w_1, \ldots, w_{h-1},z_{h},w_h)$ is a stable $t$-chain where $h \in [0, t-1]$.
We refer to the $h^\th$ block of $A$ as the \textbf{tail} of $A$.
Consider a set $\{A_1,A_2,\ldots, A_q\}$ of $q$ many stable $t$-chains.
We refer to this set as a \textbf{tail-independent} set of stable $t$-chain, if the tails of these stable $t$-chains are mutually node-disjoint.
Now, we construct a tail-independent set of stable $t$-chains denoted by $M(t)$ for each value of $t$ recursively.
Our construction will satisfy the main property, as in the following lemma.

\begin{lemma}\label{full:lem:potential}
We have the following;
\begin{enumerate}
    \item[(i)]
    $|M(1)| \geq f/5$, and
    \item[(ii)] for each $t \in [1, H - 1]$, we have
    $ |M(t+1)| \geq \frac{k-2}{\Delta^\star - 1} \cdot |M(t)| + f/5 $. 
\end{enumerate}
\end{lemma}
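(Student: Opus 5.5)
The plan is to follow the potential argument of the extended abstract (Lemma~\ref{new:lem:potential}), but with three changes. Assumption~\ref{new:assumption:full-degree} is replaced by an explicit deficit budget. The counting is sharpened so that the growth factor becomes $(k-2)/(\Delta^\star-1)$. Length-$0$ chains coming from untouched forest components are added to $M(t+1)$; this is where the additive $f/5$ comes from.

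\textbf{Step 1: budgets.} By Assumption~\ref{full:assumption:small-improvement} and Lemma~\ref{full:lem:apply-chain}, every applied chain has length at most $H$, so the total deficit ever created is at most $2H\tau < f/50$. The same lemma gives that the total number of critical edges plus newly touched regions, summed over all applied chains, is at most $\tau(3H+3) < f/25$. Each applied chain meets only two components of the input forest. Hence at least $f-2\tau \geq f - f/50$ components $X$ of $\calF$ are never touched by any chain.

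\textbf{Step 2: part (i).} Take every never-touched component $X$ and form the length-$0$ pseudo-chain $(w_0)$, with $w_0 \in X$ an arbitrary untouched node. Its tail is all of $X$, so these tails are disjoint. It is a stable $1$-chain because no region of $X$ is ever touched and no edge crossing $X$ is ever inserted. Here I still need to check that regions of $X$ are in fact untouched at the end of round $1$: a touching edge lying inside $X$ would require a touched region inside $X$, and the only way to create one is through a chain. This gives $|M(1)| \geq f-2\tau \geq f/5$.

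\textbf{Step 3: part (ii), the multiplicative term.} Let $M(t)=\{A_1,\dots,A_q\}$ with tails $T_i$. Claim~\ref{new:claim:exist-expanding-edge} gives a $T_i$-expanding pair for each $i$, and Claim~\ref{new:claim:B-out-in-B-init} (via Observation~\ref{full:obs:cond-end-round}) gives $B_t^\out \subseteq B_{t+1}^\init$.
\begin{itemize}
\item First I would sharpen $q \le \Delta^\star |B^\out_t|$ to roughly $q \le (\Delta^\star-1)|B^\out_t| + (\text{number of components of } T^\star \text{ restricted to } \bigcup_i T_i \cup B^\out_t)$. The idea is to contract each tail, look at the subforest of $T^\star$ on contracted tails together with $B^\out_t$, and count edges of a forest.
\item Next I would redo Claim~\ref{new:claim:size-singly-attached} with general $k$ and deficits. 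In each component $X$ of $\calF^\init_{t+1}$ meeting $B^\out_t$, rooting $X$ and peeling off $B^\out_t$ leaves at least $1+(k-2)|B^\out_t\cap V(X)| - \deficit(X)$ singly-attached regions.
\item Then I would form $A_U$ (extended or repetitive) as in Section~\ref{sec:defining-potential-chains}. By the analogues of Claims~\ref{new:claim:destroying-is-improving} and~\ref{new:claim:number-of-destroys}, only improving edges destroy some $A_U$, and the total number destroyed is at most $f/25$.
\end{itemize}

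\textbf{Step 4: the additive term and the main obstacle.} I would add to $M(t+1)$ a length-$0$ chain for each never-touched component $X$ that contains no node of $B^\out_t$. Components that do contain such a node already pay a $+1$ in the count above. So every never-touched component contributes one extra disjoint tail, which together gives at least $f-2\tau$ minus the deficit and destruction losses, i.e.\ at least $f/5$. The main obstacle is making the bookkeeping in Step~3 close. I expect this to go as follows:
\begin{itemize}
\item The $+1$ per $T^\star$-component in the bound on $q$ should be matched against the $+1$ per forest component in the singly-attached count.
\item The deficit terms must not also be charged against the additive $f/5$.
\item Tails of length-$0$ chains must stay disjoint from the singly-attached regions.
\end{itemize}
If the $T^\star$-component term cannot be absorbed, I would fall back to the cruder bound $q\le \Delta^\star|B^\out_t|$ and use slack in $k$.
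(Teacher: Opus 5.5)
\textbf{There is a genuine gap} at exactly the step that must produce the factor $\frac{k-2}{\Delta^\star-1}$, the first bullet of your Step~3.

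\emph{Why the sharpened bound does not close.} Contracting the tails and keeping only $T^\star$-edges between tails and $B^\out_t$ gives at best $q\le(\Delta^\star-1)|B^\out_t|+c$, where $c$ is the number of components of that contracted graph. Nothing ties $c$ to the number of forest components. In $T^\star$ the tails are linked through junctions inside tails and through nodes outside $\bigcup_i T_i\cup B^\out_t$, and your graph discards both. So $c$ can be of order $|B^\out_t|$: if distinct tails send their expanding edges to distinct single nodes of $B^\out_t$, then $c=q=|B^\out_t|$. Once $q$ has grown, $|B^\out_t|$ is not $O(f)$. The per-component surplus in the singly-attached count is only about $f$, and it is also what pays for the additive $f/5$.

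\emph{Why the fallback does not work.} The bound $q\le\Delta^\star|B^\out_t|$ yields only the factor $\frac{k-2}{\Delta^\star}$. This is strictly below the claimed factor, and below $1$ when $k=\Delta^\star+1$. Since $k$ is fixed by the statement, there is no slack in $k$ to use.

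\emph{A second problem.} Claim~\ref{new:claim:exist-expanding-edge} was proved under Assumption~\ref{new:assumption:full-degree} with $k=2\Delta^\star+1$. Once deficits are present, a tail whose junctions have lost degree need not have any expanding edge.

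\emph{The missing idea} is to keep the interior of the tails and count globally in $T^\star$, as the paper does.
\begin{enumerate}
\item Take the partition $\calP$ whose parts are: the reducible regions inside the tails ($\calC^\inn_t$); the junctions inside the tails ($B^\inn_t$), as singletons; and the nodes of $B^\out_t$, as singletons, with $B^\out_t$ defined to exclude $B^\inn_t$.
\item Since $T^\star$ is spanning and connected, at least $|\calP|-1$ of its edges leave a part.
\item By Observation~\ref{full:obs:cond-end-round} and the definition of $B^\out_t$, every such edge is incident to $B^\inn_t\cup B^\out_t$. Hence $|\calP|-1\le\Delta^\star(|B^\inn_t|+|B^\out_t|)$.
\item The leaf count gives $|\calC^\inn_t|\ge q+\sum_{x\in B^\inn_t}(\deg_{\calF^\final_t}(x)-2)\ge q+(k-2)|B^\inn_t|-f/50$, where $f/50$ is the deficit budget.
\item Because $k-1\ge\Delta^\star$, the $B^\inn_t$ terms cancel, leaving $(\Delta^\star-1)|B^\out_t|\ge q-1-f/50$.
\end{enumerate}
This bound has a single global $-1$ in place of your $-c$, and it needs no per-tail existence of expanding edges.

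With this bound in hand, your Step~4 accounting closes essentially as in the paper: one singly-attached region per component of $\calF^\init_{t+1}$, minus the deficit, unqualified and destroyed chains.
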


\noindent
Before we explain the construction of $M(t)$ and prove \Cref{full:lem:potential}, we show how this lemma implies \Cref{full:lem:main-reduce-comps}.
Then, we construct $M(t)$ and prove \Cref{full:lem:potential} in \Cref{full:sec:proof-incresing-potential1,full:sec:proof-incresing-potential2}.

\subsubsection{Concluding \Cref{full:lem:main-reduce-comps}}
We have two cases as follows;
\begin{itemize}
    \item $k = k^\add = \Delta^\star+1$ and $H = H^\add = \lceil 10n/f \rceil$ accordingly.
    In this case, we have 
    $|M(t+1)| \geq |M(t)| + f/5  $, and by induction, we conclude $|M(H)| \geq H \cdot (f/5) > n$.
    This is clearly a contradiction since the elements of $M(H)$ are mutually node-disjoint, but the number of nodes in the graph is $n$.
    
    \item $k = k^\mult = \lceil (1+\varepsilon) \cdot \Delta^\star \rceil + 1$ and $H = H^\mult = \lceil \log_{1+\varepsilon}(10n/f)\rceil + 1$ accordingly.
    Let $\alpha = (k-2)/(\Delta^\star-1)$ that satisfies $1 + \varepsilon \leq \alpha < 4$ (since $\Delta^\star \geq 2$ and $\varepsilon < 1$).
    We have 
    $|M(t+1)| \geq \alpha \cdot |M(t)| + f/5 $, and by induction, we conclude that $$ |M(H)| \geq \frac{\alpha^H - 1}{\alpha - 1} \cdot (f/5) \geq \alpha^{H-1} \cdot (f/5) \geq (1+\varepsilon)^{\lceil \log_{1+\varepsilon}(10n/f) \rceil} \cdot (f/5) > n. $$
    Similar to the previous case, we have a contradiction.
\end{itemize}
In both cases, we end up with a contradiction.
This contradiction arises from \Cref{full:assumption:small-improvement}, which is the core of the proof of \Cref{full:lem:potential}.
Hence, \Cref{full:assumption:small-improvement} is false and we have \Cref{full:lem:main-reduce-comps}.

\subsection{Proof of \Cref{full:lem:potential}: Part (i)}\label{full:sec:proof-incresing-potential1}

\textbf{Untouched Components of $\calF$.}
Each augmenting chain affects only two connected components of $\calF$ (see Property \ref{full:aug-chain-prop-1} of \Cref{full:def:aug-chain}).
Moreover, the algorithm applied $\tau$ many augmenting chains in total.
We conclude that there are at least $f - 2\tau$ many components of the input forest $\calF_1^\init$ that remain `untouched', i.e., they have empty intersection with all of the augmenting chains applied by the algorithm.

\medskip
\noindent
\textbf{Stable $1$-Chains Corresponding to Untouched Components.}
Assume $T$ is one of these $\geq f-2\tau$ connected components of $\calF$.
Initially, all of the $(\calF_1^\init, B_1^\init)$-regions in $T$ are reducible, and during round $1$ of the algorithm, no touched node will emerge in $T$ since no augmenting chain hitting $T$ is applied.
As a result, during round $1$, only merging edges will be applied internally in $T$.
At the end of round $1$, there is no more edge between the reducible $(\calF_1^\final, B_1^\final)$-regions since the algorithm stops round $1$ only when there is no more effective edge (see \Cref{full:obs:cond-end-round}).

It is obvious that throughout the rest of the algorithm (from the end of round $1$ forward), the structure of the junctions and reducible regions inside $T$ remains unchanged since no augmenting chain hitting $T$ is applied and there does not exist any merging or touching edge inside $T$ anymore.
If $y_T \in V(T)$ is an arbitrary leaf of $F_1^\final$ (that is definitely contained in a reducible region in $\calC_1^\final$), we have that $(y_T)$ is a stable $1$-chain whose tail is $T$. 

\medskip
\noindent
\textbf{Number of $1$-Chains.}
Considering all such components $T$ and the corresponding stable $1$-chains $(y_T)$, we have a set of at least $f-2\tau$ tail-independent stable $1$-chains.
$|M(1)| \geq f/5$ follows since we have $f - 2\tau \geq f - f/(50 \cdot H) > f/5 $
According to Assumption \ref{full:assumption:small-improvement}.

\subsection{Proof of \Cref{full:lem:potential}: Part (ii)}\label{full:sec:proof-incresing-potential2}

Fix a value $t \in [1, H]$, and assume that $M(t) = \{A_1,A_2, \ldots, A_q\}$ for some $q \in \mathbb{N}$.
The goal is to construct $M(t+1)$ such that it is a tail-independent set of stable $(t+1)$-chains and moreover $|M(t+1)| \geq \frac{k-2}{\Delta^\star - 1} \cdot q + f/5$.

\medskip
\noindent
\textbf{Roadmap.}
We start by providing some independent statements in \Cref{full:sec:two-claims} that will be used throughout this section.
Then, in \Cref{full:sec:structure-of-tails}, we elaborate on the structure of the tails of the stable $t$-chains of $M(t)$. 
We continue by providing some important objects in \Cref{full:sec:tail-exapnding,full:sec:stable-t+1-chains} that we need to describe the construction of $M(t+1)$.
In \Cref{full:sec:defining-potential-chains}, we construct a collection of sequences that are potential stable $(t+1)$-chains.
In \Cref{full:sec:def-M(t+1)}, we provide some properties such that if one of the constructed sequences satisfies these properties, then it must be a stable $(t+1)$-chain.
$M(t+1)$ is then defined as the set of such sequences satisfying those properties.
Then, in \Cref{full:sec:count-true-chains}, we provide an upper bound on the number of sequences that violate these properties.
Eventually, in \Cref{full:sec:putting-together}, we show the desired bound on the size of the set $M(t+1)$.

\subsubsection{Two Useful Claims}\label{full:sec:two-claims}
We start with two independent Claims \ref{full:claim:number-leafs} and \ref{full:claim:deficit-bound}.
These Claims and their corollaries will be used throughout this section.   

\begin{claim}\label{full:claim:number-leafs}
    Assume $\calF \subseteq G$ is an arbitrary forest.
    Let $H = T^\calF_{w \leftarrow z}$ be an arbitrary subtree of $\calF$ for $w,z \in V(\calF)$, $B \subseteq V(H)$, and $R$ be the collection of $(H, B)$-regions inside $H$ that are incident (according to the edges in $E(\calF)$) to \emph{at most one node} in $B$.
    Then, we have that $ |R| \geq 1 + \sum_{x \in B} (\deg_\calF(x) - 2)$.
\end{claim}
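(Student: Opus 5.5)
The plan is to contract each region to a single vertex and then count leaves using the degree-sum identity for trees.

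\textbf{Setting up the contracted tree.} Let $T'$ be the graph obtained from $H$ by contracting each $(H,B)$-region to a single vertex. Keep the nodes of $B$ as they are, and keep every edge of $E(H)$ that is incident on $B$. This includes edges with both endpoints in $B$, which stay as $B$--$B$ edges. Since $H$ is a tree and each region is a connected subtree of $H$, the graph $T'$ is again a tree. It has no multi-edges: a node $x\in B$ with two edges into the same region would close a cycle in $H$. Consequently, for every $x\in B$ we have $\deg_{T'}(x)=\deg_H(x)$, and the degree of a region-vertex in $T'$ equals the number of nodes of $B$ adjacent to that region. So $R$ is exactly the set of region-vertices of $T'$ with degree at most $1$.

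\textbf{Relating $\deg_H$ to $\deg_\calF$.} The subtree $H=T^\calF_{w\leftarrow z}$ is obtained by cutting a single edge $(z,y)$ of $\calF$, where $y\in V(H)$ is the neighbor of $z$ on $P^\calF_{w,z}$. Hence $\deg_H(x)=\deg_\calF(x)$ for all $x\in V(H)\setminus\{y\}$, and $\deg_H(y)=\deg_\calF(y)-1$. This gives
\[
\sum_{x\in B}(\deg_{T'}(x)-2)\;\ge\;\sum_{x\in B}(\deg_\calF(x)-2)-1 .
\]

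\textbf{Counting.} First handle the trivial cases. If $B=\emptyset$, then $H$ itself is a single region lying in $R$, so $|R|=1$ and the bound holds. If $T'$ is a single vertex of $B$, then the right-hand side is $1+\deg_\calF(x)-2\le 0$, since $\deg_\calF(x)\le 1$ in that case, so the bound holds trivially.

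Otherwise $T'$ has at least two vertices, so every vertex has degree at least $1$. The tree identity gives
\[
\sum_{v\in V(T')}(\deg_{T'}(v)-2)=-2 .
\]
Split the sum into region-vertices and $B$-vertices. Each region-vertex of degree $\ge 2$ contributes a nonnegative amount, and each region in $R$, which is now a leaf, contributes exactly $-1$. Therefore
\[
-|R|+\sum_{x\in B}(\deg_{T'}(x)-2)\le -2,
\]
that is, $|R|\ge 2+\sum_{x\in B}(\deg_{T'}(x)-2)$. Combining this with the previous display yields $|R|\ge 1+\sum_{x\in B}(\deg_\calF(x)-2)$.

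\textbf{Main subtlety.} The step that needs care is the one unit of loss at the pivot $y$, whose edge to $z$ leaves $H$. This loss is exactly why the claim has $+1$ rather than the $+2$ that the leaf count would otherwise give. A second point to check is that $T'$ is a genuine simple tree, so that the degree-sum identity applies. Both points are handled above: the first by the single cut edge $(z,y)$, the second by acyclicity of $H$.
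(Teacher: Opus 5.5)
Your proof is correct, and it takes a different route from the paper.

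The paper argues incrementally. It roots $H$ toward $z$, deletes the junctions one at a time in order of increasing depth, and maintains an invariant: the number of remaining components touching at most one deleted junction is at least $1+\sum(\deg_\calF(x)-2)$, summed over the deleted junctions. Deleting $x_i$ destroys at most one counted component and creates at least $\deg_\calF(x_i)-1$ child components attached only to $x_i$. The pivot $y$ has only $\deg_\calF(y)-1$ children, because its parent $z$ lies outside $H$.

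Your argument instead contracts the regions into a simple tree $T'$ whose degree-one region vertices are exactly $R$, and applies $\sum_{v}(\deg_{T'}(v)-2)=-2$ once. This gives an exact, non-inductive accounting. When $T'$ has at least two vertices, $|R|$ equals $2+\sum_{x\in B}(\deg_H(x)-2)$ plus the nonnegative contribution $\sum(\deg_{T'}(C)-2)$ of the regions $C\notin R$. So the only loss relative to $+2$ is the single cut edge at $y$; in fact you get $+2$ whenever $B\neq\emptyset$ and $y\notin B$. For the same reason, the paper's corollary for a whole component follows directly, without the dummy-node trick.

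The price is bookkeeping that the peeling argument avoids:
\begin{itemize}
\item checking that $T'$ is simple and acyclic;
\item handling the two degenerate cases separately;
\item noting that incidence via $E(\calF)$ agrees with incidence via $E(H)$. You use this implicitly; it holds because the only $\calF$-edge leaving $H$ ends at $z\notin B$.
\end{itemize}
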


This claim follows from a simple counting argument.
We refer the reader to \Cref{full:sec:proof-number-leafs} for a complete formal proof.

\begin{corollary}\label{full:cor:number-leafs}
    If $H$ is a connected component of $\calF$ and $B \subseteq V(H)$.
    Then, the number of $(H,B)$-regions in $H$ that are incident (according to $E(\calF)$) on at most one node in $B$ is at least $1 + \sum_{x \in B} (\deg_\calF(x) - 2)$.
\end{corollary}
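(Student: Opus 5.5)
The plan is to deduce the corollary directly from \Cref{full:claim:number-leafs} by exhibiting $H$ as a subtree of the form $T^\calF_{w \leftarrow z}$. The obstacle is that a full connected component of $\calF$ is generally not of this form: $T^\calF_{w\leftarrow z}$ is obtained by deleting an edge, so it is a proper subtree of its component. I will therefore use a slightly modified forest where this does hold, and check that neither the region structure nor the relevant degrees change.

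First, dispose of the degenerate case $B=\emptyset$. Then $H$ itself is the single $(H,B)$-region, and it is incident to no node of $B$. The bound $1+\sum_{x\in B}(\deg_\calF(x)-2)=1$ is therefore met.

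For $B\neq\emptyset$, I would use one of two routes.

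\emph{Route one: reprove the count directly.} This repeats the argument behind \Cref{full:claim:number-leafs}. Root $H$ at an arbitrary node and contract each $(H,B)$-region to a single vertex. The result is a tree $\mathcal{T}$ whose vertices are the regions together with $B$. Since $H$ is a full component, $\deg_\calF(x)=\deg_H(x)$ for every $x\in B$, so no forest edge leaves $H$. For each $x\in B$, $\deg_{\mathcal{T}}(x)\le\deg_\calF(x)$, with equality unless two neighbors of $x$ lie in the same region. That cannot happen, because regions are connected and $\calF$ is acyclic; so equality holds. Adjacent $B$-nodes are joined directly in $\mathcal{T}$.

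Now count leaves via the standard identity for a tree: the number of vertices of degree at most $1$ is at least $2+\sum_{v:\deg\ge 2}(\deg(v)-2)$, whenever the tree has at least two vertices. Regions incident to at most one $B$-node have $\mathcal{T}$-degree at most $1$. Conversely, every vertex of degree at most $1$ is either such a region or a $B$-node of $\calF$-degree at most $1$, and any such $B$-node contributes a nonpositive term $\deg_\calF(x)-2$ to the sum. Careful bookkeeping of these terms, including regions of degree $\ge 2$ contributing nonnegatively, then yields at least $1+\sum_{x\in B}(\deg_\calF(x)-2)$ qualifying regions. This actually gives a bound stronger by one, since a full component has no cut edge to the outside.

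\emph{Route two: a black-box reduction.} Add an auxiliary new leaf $z$ attached by a new edge to some node $w$ of $H$ chosen inside a region, so $w\notin B$. This is possible unless $V(H)=B$. In the enlarged forest $\calF^+$, we have $T^{\calF^+}_{w\leftarrow z}=H$. Degrees of the nodes of $B$ are unchanged, the $(H,B)$-regions are unchanged, and incidences to $B$ are unchanged. \Cref{full:claim:number-leafs} then applies verbatim.

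The case $V(H)=B$ in route two needs separate handling. There are no regions, and the sum $1+\sum(\deg-2)$ over a tree with all nodes in $B$ equals $1+2(|B|-1)-2|B|=-1<0$, so the bound holds trivially.

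The main point to get right is the degree preservation: $\deg_\calF(x)=\deg_{\calF^+}(x)$ for $x\in B$, which holds because $w\notin B$. With that checked, I would present route two as the proof.
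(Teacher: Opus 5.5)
Your route two is correct and is essentially the paper's proof. The paper also adds a dummy leaf $z^\star$ so that $H = T^{\tilde{\calF}}_{x\leftarrow z^\star}$ and then invokes \Cref{full:claim:number-leafs}; the only difference is that it attaches $z^\star$ to a junction $x\in B$, which raises $\deg(x)$ by one and so only strengthens the bound, and this avoids your separate $V(H)=B$ case (which you handle correctly anyway).
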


\begin{proof}
    This is trivial if $B = \emptyset$.
    Now, assume $B \neq \emptyset$ and let $x \in B$ be arbitrary.
    We can simply consider a dummy node $z^\star$ and a dummy edge $(z^\star, x)$ and define a new forest $\tilde{\calF}$ as follows; $V(\tilde{\calF}) := V(\calF) + z^\star$, and $E(\tilde{\calF}) := E(\calF) + (z^\star, x)$.
    Then, we have $H = T^{\tilde{\calF}}_{x \leftarrow z^\star}$.
    The degree of all the original nodes in $\calF$ and $\tilde{\calF}$ are the same except $\deg_{\tilde{\calF}}(x) = \deg_{\calF}(x) + 1$.
    We have the desired statement by \Cref{full:claim:number-leafs} for $\tilde{\calF}$ and $H = T^{\tilde{\calF}}_{x \leftarrow z^\star}$.
\end{proof}

\begin{claim}\label{full:claim:deficit-bound}
    Consider an arbitrary configuration $(\calF, B, \calC, D)$ maintained by the algorithm at some time throughout the entire execution of the algorithm.
    We have that $\deficit(\calF, B, \calC, D) \leq 2 \cdot \tau \cdot H$.
\end{claim}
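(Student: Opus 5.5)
We prove Claim~\ref{full:claim:deficit-bound} by tracking how the deficit can grow along the execution. The plan is to show that the deficit is $0$ at the start of every round and that, within the whole execution, it only grows at the moments when an augmenting chain is applied. Each such moment contributes at most $2h \le 2H$, where $h+1\le H$ is the chain length. Since only $\tau$ chains are applied in total (Assumption~\ref{full:assumption:small-improvement}), the bound $2\tau H$ follows. The one subtlety is that the deficit is measured with respect to $B\cup D$, which changes both within a round and between rounds. So we argue via a global potential rather than round by round.

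\textbf{Base case and new members.} Initially $B_0^\final$ consists exactly of the nodes of degree $k$ and $D_0^\final=\emptyset$, so the deficit is $0$. Every node that ever enters $B\cup D$ has degree exactly $k$ at the moment of insertion:
\begin{itemize}
\item Nodes added to $D$ when applying a chain are added only if their degree is $k$, by the definition of $D'$ in \Cref{full:sec:apply-chain}.
\item Nodes added to $D$ by a touching edge are either former junctions (already in $B\cup D$) or touched nodes of degree $k$.
\item At the start of a round, $B_t^\init=B_{t-1}^\final\cup D_{t-1}^\final$ adds no new nodes to the union.
\end{itemize}
Hence insertions never increase the deficit.

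\textbf{Removals and structural steps.} Removing nodes from $B\cup D$ (merging or touching edges delete path junctions from $B$, and $D$ is reset at round boundaries) can only decrease the deficit, since every summand $k-\deg_\calF(x)$ is nonnegative by feasibility. Merging and touching edges do not change $\calF$, so they do not change any degree.

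\textbf{The only source of growth.} The deficit can therefore increase only when an augmenting chain is applied. By Property~\ref{full:lem:apply-chain-P3} of \Cref{full:lem:apply-chain}, such an application increases the deficit by at most $2h$. Here $h+1\le t\le H$, because \findChain{} is called with parameter $t$ in round $t$; hence the increase is at most $2H$.

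The main point to verify carefully is that Property~\ref{full:lem:apply-chain-P3} really bounds the increase for the \emph{current} $B\cup D$, including nodes of $D$ whose degree drops. Such a node can lose degree only as a pivot $y_i$ or as a deleted endpoint, and only through critical edges. This matches the count of at most $2h$ degree decrements on nodes of $B\cup D$, one per removed critical edge endpoint.

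Summing over the $\tau$ applied chains, the deficit at any time is at most $\tau\cdot 2H$. Formally, this is an induction on the sequence of operations, with the invariant ``deficit $\le 2H\cdot(\text{number of chains applied so far})$''.
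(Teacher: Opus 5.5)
Your proof is correct and follows the paper's argument: the deficit is $0$ initially, the round transition leaves $B\cup D$ unchanged, merging and touching edges never increase it (nodes entering $B\cup D$ have degree exactly $k$, and $\calF$ does not change), and each of the $\tau$ applied chains adds at most $2h\le 2H$ by Property~3 of \Cref{full:lem:apply-chain}. The only slip is in your opening plan, which says the deficit is $0$ at the start of every round; that is false, since the deficit carries over from the end of the previous round, but your actual induction uses only the global invariant, so nothing in the proof depends on it.
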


The proof of this claim is deferred to \Cref{full:sec:proof-deficit-bound}.
It easily follows from Assumption \ref{full:assumption:small-improvement} and Property \ref{full:lem:apply-chain-P3} of \Cref{full:lem:apply-chain}.

\begin{corollary}\label{full:cor:sum-degree-is-large}
    Throughout the entire execution of the algorithm, for any arbitrary subset $B' \subseteq B$ of the set of junctions, we have that $\sum_{x \in B'} \deg_{\calF}(x) \geq k \cdot |B'| - f/10$.
\end{corollary}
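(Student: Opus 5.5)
The plan is to read \Cref{full:cor:sum-degree-is-large} off the deficit bound of \Cref{full:claim:deficit-bound}, combined with \Cref{full:assumption:small-improvement} and the choice of $H$. Fix any moment of the execution with current configuration $(\calF, B, \calC, D)$, and fix any subset $B' \subseteq B$.

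\textbf{Step 1: every summand is nonnegative.} Since $\calF$ is a feasible forest, each node $x$ satisfies $\deg_\calF(x) \leq k$. Hence every term $k - \deg_\calF(x)$ in the definition of the deficit is nonnegative. Because $B' \subseteq B \subseteq B \cup D$, dropping terms can only decrease the sum, so
\[
k\cdot|B'| - \sum_{x\in B'} \deg_\calF(x) \;=\; \sum_{x \in B'} \bigl(k - \deg_\calF(x)\bigr) \;\leq\; \deficit(\calF,B,\calC,D).
\]

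\textbf{Step 2: bound the deficit.} By \Cref{full:claim:deficit-bound}, the deficit is at most $2\tau H$. Under \Cref{full:assumption:small-improvement} we have $\tau < f/(100 H)$. Therefore
\[
2\tau H \;<\; \frac{2f}{100} \;=\; \frac{f}{50} \;\leq\; \frac{f}{10}.
\]
Rearranging the inequality from Step 1 gives $\sum_{x\in B'}\deg_\calF(x) \geq k|B'| - f/10$, which is the claim.

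\textbf{Where care is needed.} The argument is routine. The only points to check are that the corollary is stated inside the contradiction argument, so \Cref{full:assumption:small-improvement} is available, and that nonnegativity of the individual deficit terms really does follow from feasibility of $\calF$. Both hold: the first because we are working under that assumption throughout this section, and the second because every configuration the algorithm maintains is valid.
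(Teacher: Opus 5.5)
Your proposal is correct and is essentially the paper's own proof: drop the nonnegative terms $k-\deg_{\calF}(x)$ for $x \in (B\cup D)\setminus B'$ to get a bound by the deficit, bound the deficit by $2\tau H$ via \Cref{full:claim:deficit-bound}, and use $\tau < f/(100H)$. Your computation in fact gives the sharper bound $f/50$, which is the form the paper actually uses when it later invokes this corollary (e.g., in the proofs of \Cref{full:claim:size-singly-attached} and \Cref{full:claim:size-B-out}).
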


\begin{proof}
    The proof easily follows since
    \begin{align*}
        \sum_{x \in B'} (k -\deg_{\calF}(x)) \leq \sum_{x \in B \cup D} (k -\deg_{\calF}(x)) = \deficit(\calF, B, \calC, D) \leq 2 \cdot \tau \cdot H \leq f/10.
    \end{align*}
    The first inequality holds since $\calF$ has maximum degree at most $k$, the second inequality follows from \Cref{full:claim:deficit-bound}, and the last inequality holds by  Assumption \ref{full:assumption:small-improvement}.
\end{proof}

\subsubsection{Structure of the Tails of Stable $t$-Chains}\label{full:sec:structure-of-tails}
Let $T_i$ be the tail of $A_i$ for each $i \in [1, q]$.
According to the definition, the sets $V(T_i)$ are mutually node-disjoint among $i \in [1, q]$.
Denote by $B_{t, i}^\inn$ the set of junctions in $T_i$ at the end of round $t$, i.e., $B_{t,i}^\inn := B_{t}^\final \cap V(T_i)$ for each $i \in [1, q]$.
Moreover, let $\calC_{t,i}^\inn$ be the collection of reducible $(\calF^\final_{t}, B^\final_{t})$-regions inside $T_i$ at the end of round $t$, i.e., $\calC_{t,i}^\inn := \{ C \in \calC_{t}^\final \mid V(C) \subseteq V(T_i)\} $ for each $i \in [1, q]$.
Note that according to the definition of a stable $t$-chain at the end of round $t$, there is no touched node inside $T_i$ and every $(\calF^\final_{t}, B^\final_{t})$-region in $T_i$ is a reducible region.
Moreover, define $B_t^\inn := \cup_{i=1}^q B_{t,i}^\inn$ and $\calC_{t}^\inn := \cup_{i=1}^q \calC_{t,i}^\inn$.

\begin{claim}\label{full:claim:size-C_t-inn}
    We have that $ |\calC_{t}^\inn| \geq q + \sum_{x \in B_{t}^\inn} \left(\deg_{\calF^\final_t}(x) - 2\right) $.
\end{claim}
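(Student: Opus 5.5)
The plan is to apply \Cref{full:claim:number-leafs} separately to each tail $T_i$ and then sum over $i \in [1,q]$. Since the tails are mutually node-disjoint, the sets $\calC_{t,i}^\inn$ are disjoint and the sets $B_{t,i}^\inn$ partition $B_t^\inn$. So it suffices to prove, for each $i$, that
\[
|\calC_{t,i}^\inn| \geq 1 + \sum_{x \in B_{t,i}^\inn}\left(\deg_{\calF^\final_t}(x) - 2\right).
\]

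Fix $i$ and let $A_i = (w_0, z_1, \ldots, z_h, w_h)$. There are two cases, depending on the shape of the tail.

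\emph{Case $h \geq 1$.} Here the tail is $T_i = T^{\calF^\final_t}_{w_h \leftarrow z_h}$, which is exactly the form of subtree required by \Cref{full:claim:number-leafs}. We apply that claim with $\calF := \calF^\final_t$ and junction set $B_{t,i}^\inn \subseteq V(T_i)$. It yields at least $1 + \sum_{x \in B_{t,i}^\inn}(\deg_{\calF^\final_t}(x)-2)$ regions of $(T_i, B_{t,i}^\inn)$ that are incident to at most one node of $B_{t,i}^\inn$. We may simply discard the qualifier ``incident to at most one node'', since we only need a lower bound on the total number of regions.

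\emph{Case $h = 0$.} Here the tail is the $0^\th$ block, i.e.\ the whole connected component of $\calF^\final_t$ containing $w_0$. We invoke \Cref{full:cor:number-leafs} instead, which gives the same bound.

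It remains to check that the $(T_i, B_{t,i}^\inn)$-regions coincide with the $(\calF^\final_t, B^\final_t)$-regions lying inside $T_i$. This is where some care is needed, because the boundary of $T_i$ must not split or merge regions.

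\begin{itemize}
\item If $h = 0$, the tail $T_i$ is a full component, so no region crosses its boundary and there is nothing to check.
\item If $h \geq 1$, the only forest edge leaving $T_i$ is $(z_h, y_h)$, and $z_h \in B^\final_t$ is a junction lying outside $T_i$. Hence every $(\calF^\final_t, B^\final_t)$-region meeting $T_i$ is contained in $T_i$. Moreover, removing $B_{t,i}^\inn$ inside $T_i$ produces exactly these regions.
\end{itemize}

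By \Cref{full:def:stable-chain}, condition~\ref{full:P1:cond:stable-2}, every such region lies in $\calC^\final_t$, i.e.\ it is reducible. So all of them are counted in $\calC_{t,i}^\inn$.

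Degrees pose no issue. \Cref{full:claim:number-leafs} is stated with degrees taken in the ambient forest $\calF$, which here is $\calF^\final_t$, so it matches the degree term in the claimed bound directly. Summing the per-tail inequalities over $i \in [1,q]$ gives $|\calC_t^\inn| \geq q + \sum_{x \in B_t^\inn}(\deg_{\calF^\final_t}(x) - 2)$, as required.
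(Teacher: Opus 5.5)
Your proof is correct and takes the same route as the paper: apply \Cref{full:claim:number-leafs} to each tail of positive length, apply \Cref{full:cor:number-leafs} to each length-zero tail, and sum over the node-disjoint tails. Your added check is a detail the paper leaves implicit: the $(T_i, B_{t,i}^\inn)$-regions coincide with the $(\calF^\final_t, B^\final_t)$-regions inside $T_i$ because the only forest edge leaving $T_i$ ends at the junction $z_h$, and these regions are reducible by stability condition~2.
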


\begin{proof}
Fix an $i \in [1, q]$.
If the length of $A_i$ is more than zero, according to \Cref{full:claim:number-leafs} for $\calF_t^\final$, $H = T_i$, and $B = B_{t,i}^\inn $, we have that $ |\calC_{t,i}^\inn| \geq 1 + \sum_{x \in B_{t,i}^\inn} \left(\deg_{\calF^\final_t}(x) - 2\right) $.
If the length of $A_i$ is zero ($T_i$ is an entire connected component of $\calF_t^\final$), we have the same inequality by \Cref{full:cor:number-leafs}.
By summing up these inequalities for all $i \in [1, q]$, we get the desired inequality.
\end{proof}

\subsubsection{Tail-Expanding Edges}\label{full:sec:tail-exapnding}

Fix some optimal spanning tree $T^\star$ of $G$ with maximum degree $\Delta^\star$.
Let $i \in [1, q]$ be arbitrary.
We refer to an edge $(u^\star,v^\star) \in E(T^\star)$ in the optimal spanning tree $T^\star$ as a $T_i$-\textbf{expanding} edge if the following holds;
\begin{enumerate}
    \item $u^\star$ is inside a reducible region of $\calC^\final_{t}$ contained entirely in $T_i$.
    \item $v^\star$ is outside of $T_i$.
\end{enumerate}
Note that we consider the pair $(u^\star, v^\star)$ as an ordered pair which means $(u^\star, v^\star)$ and $(v^\star, u^\star)$ are considered distinct in this definition.
We refer to an edge $(u^\star, v^\star) \in E(T^\star)$ as an expanding edge if there exists an $i \in [1, q]$ such that $(u^\star, v^\star)$ is $T_i$-expanding.

\begin{observation}\label{full:obs:expanding-types}
    For each expanding edge $(u^\star, v^\star)$, we have that $(u^\star, v^\star) \notin E(\calF^\final_t)$ unless $v^\star$ is the root of $T_i$ and $u^\star$ is the unique neighbor (according to the edges of $\calF^\final_t$) of $v^\star$ in $T_i$.
\end{observation}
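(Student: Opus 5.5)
The claim concerns a single tail $T_i$, and $T_i$ is the $h$-th block of $A_i$. When $h\ge 1$, this block is $T^{\calF^\final_t}_{w_h\leftarrow z_h}$, i.e., the subtree of $\calF^\final_t$ hanging off the root $z_h$ through the pivot $y_h$. When $h=0$, it is the whole connected component containing $w_0$. The plan is to exploit the fact that $T_i$ is a connected subtree of $\calF^\final_t$ attached to the rest of the forest by at most one forest edge.

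Suppose $(u^\star,v^\star)$ is $T_i$-expanding and lies in $E(\calF^\final_t)$. By definition $u^\star\in V(T_i)$ and $v^\star\notin V(T_i)$, so it is a forest edge leaving $V(T_i)$.

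In the case $h=0$, $T_i$ is an entire connected component of $\calF^\final_t$, so no forest edge has exactly one endpoint in it. This contradicts the edge leaving $V(T_i)$.

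In the case $h\ge1$, we use the definition of $T^{\calF}_{w_h\leftarrow z_h}$: it is the component containing $w_h$ after deleting the edge $(z_h,y_h)$ from the tree containing $w_h$. Every forest edge between $V(T_i)$ and the rest of the graph is therefore exactly $(z_h,y_h)$. Since $v^\star\notin V(T_i)$ and $y_h\in V(T_i)$, we get $u^\star=y_h$ and $v^\star=z_h$. So $v^\star$ is the root of $T_i$, and $u^\star$ is its unique neighbour in $T_i$, i.e., the pivot.

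For uniqueness, note that $z_h$ has only one forest neighbour inside $T_i$. Otherwise a second forest edge would cross between $V(T_i)$ and the rest of the graph, which we just ruled out. Membership in the component uses the stability/pseudo-chain property (Definition \ref{full:def:pseudo-chain}, item \ref{full:pseudo-chain-P2}), which places $w_h$ and $z_h$ in the same component. This is needed for the block to be well defined at the end of round $t$.

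No step is a real obstacle. The only point requiring care is making explicit that the block $T^{\calF}_{w\leftarrow z}$ has precisely one forest edge to its complement, namely $(z,y)$. This follows directly from deleting a single tree edge, which splits a tree into two components.
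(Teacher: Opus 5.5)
Your proof is correct and follows the same reasoning the paper leaves implicit for this observation (it is stated without proof). You use that the tail $T_i$ is either an entire component of $\calF^\final_t$ (length $0$, so no forest edge leaves it) or the block $T^{\calF^\final_t}_{w_h\leftarrow z_h}$, whose only forest edge to its complement is $(z_h,y_h)$; this forces $v^\star=z_h$ and $u^\star=y_h$.
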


Define
$ B^\out_{t} := \left\{v^\star \mid (u^\star, v^\star) \text{ is expanding}\right\} - B^\inn_t $.
In words, $B_{t}^\out$ is the set of all endpoints $v^\star$ of expanding edges that does not belong to $B^\inn_t$. 
We proceed by providing some properties of $B_t^\out$.

\begin{claim}\label{full:claim:B-out-in-B-init}
    We have $B_{t}^\out \subseteq B_{t+1}^\init$.
\end{claim}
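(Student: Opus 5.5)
We must show $B_t^\out \subseteq B_{t+1}^\init = B_t^\final \cup D_t^\final$. Fix $v^\star \in B_t^\out$ and an expanding edge $(u^\star,v^\star) \in E(T^\star)$ that is $T_i$-expanding for some $i \in [1,q]$. The argument splits on whether $(u^\star,v^\star)$ is a forest edge of $\calF_t^\final$, as in the extended-abstract version (\Cref{new:claim:B-out-in-B-init}). The only adjustments are to the full-version definitions of effective edge and of $D$.

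\textbf{Case 1: $(u^\star,v^\star)\in E(\calF_t^\final)$.} By \Cref{full:obs:expanding-types}, $v^\star$ is the root of $T_i$, i.e., $v^\star = z_h$ for the stable $t$-chain $A_i=(w_0,z_1,\ldots,z_h,w_h)$ with $h\ge 1$. A length-$0$ chain cannot occur here: its tail is a whole component of $\calF_t^\final$, so no forest edge leaves it. Property~\ref{full:pseudo-chain-P1} of \Cref{full:def:pseudo-chain}, applied to the pseudo-chain $A_i$ w.r.t. $\conf{\final}{t}$, then gives $z_h\in B_t^\final$. Hence $v^\star\in B_{t+1}^\init$.

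\textbf{Case 2: $(u^\star,v^\star)\notin E(\calF_t^\final)$.} We check the first three conditions of \Cref{full:def:effective-edge} for $(u^\star,v^\star)$ w.r.t. $\conf{\final}{t}$.

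First, $u^\star$ lies in a reducible region $C_{u^\star}\in\calC_t^\final$ contained in $T_i$, by the definition of expanding.

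Second, $v^\star\notin C_{u^\star}$, since $v^\star\notin T_i \supseteq C_{u^\star}$.

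Third, $(u^\star,v^\star)\in E(G)-E(\calF_t^\final)$ by the case assumption.

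Now suppose $v^\star\notin B_t^\final\cup D_t^\final$. Then $v^\star$ is not a junction, so it lies in some $(\calF_t^\final,B_t^\final)$-region. Either that region is reducible, giving condition 4(a), or $v^\star$ is touched and not in $D_t^\final$, giving condition 4(b). In the latter case the configuration property of $D$ ensures $\deg_{\calF_t^\final}(v^\star)\le k-1$. Either way $(u^\star,v^\star)$ would be an effective edge at the end of round $t$, contradicting \Cref{full:obs:cond-end-round}. So $v^\star\in B_t^\final\cup D_t^\final=B_{t+1}^\init$.

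The only delicate point is Case 1: we need $v^\star$ to be the root $z_h$ of a block of positive length, not merely some forest neighbour of $T_i$. \Cref{full:obs:expanding-types} supplies exactly this, because a block $T_{w_h\leftarrow z_h}$ is attached to the rest of $\calF$ only through the edge $(z_h,y_h)$. A chain of length $0$ has no such attachment. Everything else is a direct unwinding of definitions.
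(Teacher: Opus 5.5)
Your proof is correct and follows the paper's argument. The paper also splits on whether $(u^\star,v^\star)\in E(\calF_t^\final)$. In the forest-edge case it uses the observation on expanding edges together with Property~1 of pseudo-chains to get $v^\star=z_h\in B_t^\final$. Otherwise it uses the end-of-round observation to get $v^\star\in B_t^\final\cup D_t^\final$. Your additional remarks only spell out details the paper leaves implicit, namely ruling out length-$0$ chains and checking the first three effective-edge conditions.
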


\begin{proof}
If $(u^\star, v^\star) \in E(\calF_{t}^\final)$, we have that $v^\star$ is the root of the tail of some stable $t$-chain according to \Cref{full:obs:expanding-types}.
In this case, we have $v^\star \in B_t^\final$ by the definition of pseudo-chain (see Property \ref{full:pseudo-chain-P1} in \Cref{full:def:pseudo-chain}).
Now, assume $(u^\star, v^\star) \notin E(\calF_{t}^\final)$.
According to \Cref{full:obs:cond-end-round} and the definition of an expanding edge, we have that $v^\star \in B^\final_t \cup D_t^\final$.
Finally, $B_{t+1}^\init$ is defined as $B^\final_t \cup D_t^\final$, which concludes $v^\star \in B_{t+1}^\init$ in both cases above.
Hence, $B_t^\out \subseteq B_{t+1}^\init$.
\end{proof}

Later, we show that every $x \in B_t^\out$ remains a junction for the rest of the algorithm until the end of round $H$.
We provide the claim later when we complete the construction of $M(t+1)$.
We also have the following property about $B_t^\out$ whose proof is deferred to \Cref{full:sec:proof-size-B-out}.

\begin{claim}\label{full:claim:size-B-out}
    We have that $ |B_t^\out| \geq \frac{q - f/10}{\Delta^\star - 1} $.
\end{claim}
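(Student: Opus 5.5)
We will prove the bound by double counting the edges of $T^\star$ that have one endpoint inside the reducible regions of $\calC_t^\inn$, in the spirit of the argument for \Cref{new:claim:exist-expanding-edge,new:claim:size-B-outtt}, with two refinements. First, the deficit is now controlled by \Cref{full:cor:sum-degree-is-large} rather than assumed to be zero. Second, the junctions in $B_t^\inn$ are charged with degree $\Delta^\star-1$ rather than $\Delta^\star$, which is where the denominator comes from.

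\textbf{Step 1: a lower bound on the edges leaving each region.} Contract each region $C\in\calC_t^\inn$ to a single vertex. Since $T^\star$ is a spanning tree, the number of edges of $T^\star$ with exactly one endpoint in some region of $\calC_t^\inn$ is at least $|\calC_t^\inn|$ minus the number of $T^\star$-edges joining two distinct regions of $\calC_t^\inn$. More precisely, a forest on the contracted vertex set has at most $|\calC_t^\inn|-1$ edges among the contracted vertices, so the edges going out of the union must make up the rest of the connectivity. We will pin down the exact inequality carefully here.

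\textbf{Step 2: classifying the endpoints of these edges.}
\begin{itemize}
\item An edge of $T^\star$ between two different regions of $\calC_t^\inn$ is a non-forest edge between two reducible regions, so it is effective. This contradicts \Cref{full:obs:cond-end-round}, unless it is a forest edge; but two distinct regions are never joined by a forest edge. Hence there are no such edges.
\item An edge from a region of $T_i$ to a node outside $T_i$ is $T_i$-expanding. Its outer endpoint lies in $B_t^\out$, or else in $B_t^\inn$ (inside some other tail $T_j$), or in a reducible region of another tail. The last case is impossible: it would be an effective edge between reducible regions.
\end{itemize}
So every outgoing edge ends in $B_t^\inn\cup B_t^\out$.

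\textbf{Step 3: the counting.} Each $x\in B_t^\inn\cup B_t^\out$ has $\deg_{T^\star}(x)\le\Delta^\star$. To gain the $-1$, root $T^\star$ at a region vertex and orient each edge. Then every junction $x$ receives at most $\deg_{T^\star}(x)-1$ child edges coming from regions. The precise form will be
\[
|\calC_t^\inn|\le 1+(\Delta^\star-1)\bigl(|B_t^\inn|+|B_t^\out|\bigr).
\]
Combining with \Cref{full:claim:size-C_t-inn} gives
\[
q+\sum_{x\in B_t^\inn}\bigl(\deg_{\calF^\final_t}(x)-2\bigr)\le 1+(\Delta^\star-1)\bigl(|B_t^\inn|+|B_t^\out|\bigr).
\]
By \Cref{full:cor:sum-degree-is-large} applied with $B'=B_t^\inn$, the left side is at least $q+(k-2)|B_t^\inn|-f/10$. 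Since $k-2\ge\Delta^\star-1$, the $B_t^\inn$ terms cancel, and we get $(\Delta^\star-1)|B_t^\out|\ge q-f/10-O(1)$. The additive constant is then absorbed, either by using that contracted regions together with the junctions form a tree in which the rooting loses exactly one, or by a slightly more careful forest-edge count.

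\textbf{Main obstacle.} The hardest part is Step 3's bookkeeping: obtaining exactly $\Delta^\star-1$ per junction without a leftover $+1$. I would first try the following cleaner route. Build the auxiliary graph $T^\star$ with each region contracted. Because region–region $T^\star$-edges do not exist, the contracted graph is bipartite between region vertices and the remaining nodes, and it is connected. Its restriction to the region vertices together with $B_t^\inn\cup B_t^\out$ is a forest in which every region vertex has degree at least one. Counting edges in this forest gives
\[
|\calC_t^\inn|\le \#\text{edges}\le |\calC_t^\inn|+|B_t^\inn\cup B_t^\out|-1 .
\]
Together with the degree bound $\sum_x\deg\le\Delta^\star|B|$, this should yield the $\Delta^\star-1$ factor.
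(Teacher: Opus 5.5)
Your main route is essentially the paper's. Step~3 is correct once you root a spanning tree of the multigraph obtained from $T^\star$ by contracting each region of $\calC_t^\inn$; the contracted $T^\star$ itself need not be a tree, since a region is connected in $\calF$ but not necessarily in $T^\star$. Then every non-root region has its parent in $B_t^\inn\cup B_t^\out$ by Step~2, each such node has at most $\Delta^\star-1$ children, and you get $|\calC_t^\inn|\le 1+(\Delta^\star-1)(|B_t^\inn|+|B_t^\out|)$. This is exactly the inequality the paper obtains from $|\calP|-1\le|E^\star|\le\Delta^\star(|B_t^\inn|+|B_t^\out|)$.

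The gap is the final additive constant. With \Cref{full:cor:sum-degree-is-large} as stated you only reach $(\Delta^\star-1)|B_t^\out|\ge q-f/10-1$, and neither of your suggested repairs can remove the $-1$. The tree inequality can hold with equality, for example a single $x\in B_t^\out$ whose $\Delta^\star$ neighbours in $T^\star$ lie in $\Delta^\star$ distinct regions. Moreover, when $k=\Delta^\star+1$ the $B_t^\inn$ terms cancel exactly, leaving no slack there either.

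The missing ingredient is the sharper deficit bound. \Cref{full:claim:deficit-bound} together with \Cref{full:assumption:small-improvement} gives $\deficit\le 2\tau H<f/50$. Hence $\sum_{x\in B_t^\inn}(\deg_{\calF_t^\final}(x)-2)\ge(k-2)|B_t^\inn|-f/50$, and so $(\Delta^\star-1)|B_t^\out|\ge q-1-f/50\ge q-f/10$. The last step uses the hypothesis $f\ge 50$ of \Cref{full:lem:main}. This is how the paper closes the argument.

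Drop your fallback ``cleaner route''; it fails for two reasons. First, the contracted graph is not a forest: two $T^\star$-edges from one junction into the same region already form a cycle. Second, the upper bound $\#\text{edges}\le|\calC_t^\inn|+|B_t^\inn\cup B_t^\out|-1$ combined with the upper bound $\Delta^\star(|B_t^\inn|+|B_t^\out|)$ gives no lower bound on $|B_t^\out|$. Using only $\#\text{edges}\ge|\calC_t^\inn|$ instead yields $|\calC_t^\inn|\le\Delta^\star(|B_t^\inn|+|B_t^\out|)$, which leaves an uncancelled $-|B_t^\inn|$ when $k=\Delta^\star+1$.

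The $-1$ in $\Delta^\star-1$ does not come from any special treatment of $B_t^\inn$. It comes from the fact that each node of $B_t^\inn\cup B_t^\out$ must itself be attached: its parent edge in your rooting, or its own part in the paper's $\calP$.
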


\subsubsection{Singly-Attached Regions}\label{full:sec:stable-t+1-chains}

Consider the collection of all $(\calF^\init_{t+1}, B^\out_{t})$-regions.
We refer to such a region as a \textbf{singly-attached} $(\calF^\init_{t+1}, B^\out_{t})$-region if it is adjacent (according to $E(\calF^\init_{t+1})$) to at most one node in $B^\out_{t}$.
If such a region $R$ is not adjacent to any node in $B_t^\out$, then $R$ itself must be an entire connected component of $\calF_{t+1}^\init$.
If such a region $R$ is adjacent to $v \in B_t^\out$, we refer to $v$ as the root of $R$.
Assume $\calS_{t+1}$ is the collection of all singly-attached $(\calF^\init_{t+1}, B^\out_{t})$-regions.

\begin{claim}\label{full:claim:size-singly-attached}
    We have that $|\calS_{t+1}| \geq (k-2) \cdot |B^\out_{t}| + 9f/10$.
\end{claim}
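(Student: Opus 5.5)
We plan to prove the bound component by component, using \Cref{full:cor:number-leafs} on every connected component of $\calF^\init_{t+1}$ and then summing. Let $X$ range over the connected components of $\calF^\init_{t+1}$, and write $B_X := B^\out_t \cap V(X)$. The $(\calF^\init_{t+1}, B^\out_t)$-regions inside $X$ are exactly the $(X, B_X)$-regions. A region is singly-attached precisely when it is adjacent to at most one node of $B_X$. Hence \Cref{full:cor:number-leafs}, applied to $H = X$ and the set $B_X$, gives
\[
|\calS_{t+1} \cap X| \;\geq\; 1 + \sum_{x \in B_X} \left(\deg_{\calF^\init_{t+1}}(x) - 2\right).
\]
This holds for every component, including those with $B_X = \emptyset$. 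For such a component the whole component is a single singly-attached region, so the bound $1$ is trivially correct.

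Summing over all components $X$ gives
\[
|\calS_{t+1}| \;\geq\; c_{t+1} + \sum_{x \in B^\out_t} \left(\deg_{\calF^\init_{t+1}}(x) - 2\right),
\]
where $c_{t+1}$ is the number of connected components of $\calF^\init_{t+1}$. Two estimates remain. First, the forest only changes through augmenting chains, and each applied chain removes exactly one component (\Cref{full:lem:apply-chain}). Therefore $c_{t+1} \geq f - \tau \geq f - f/(100H) \geq 9f/10$, using \Cref{full:assumption:small-improvement}. Second, by \Cref{full:claim:B-out-in-B-init} we have $B^\out_t \subseteq B^\init_{t+1}$, so $B^\out_t$ is a subset of the junction set at the start of round $t+1$. \Cref{full:cor:sum-degree-is-large} then yields $\sum_{x \in B^\out_t} \deg(x) \geq k|B^\out_t| - f/10$.

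Combining these gives $|\calS_{t+1}| \geq (k-2)|B^\out_t| + 9f/10 - f/10$. This is short of the claimed $9f/10$ by an additive $f/10$, and closing that gap is the main obstacle. I would first try to take the deficit into account more carefully. The deficit is at most $2\tau H < f/50$ (\Cref{full:claim:deficit-bound}), and $c_{t+1} \geq f - f/(100H) \geq f - f/100$. Together these give $|\calS_{t+1}| \geq (k-2)|B^\out_t| + f - f/100 - f/50 \geq (k-2)|B^\out_t| + 9f/10$, which is the claimed bound.

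The second route is therefore to use \Cref{full:claim:deficit-bound} directly rather than the weaker corollary. Here one checks that the deficit of $B^\out_t$ at the initial configuration of round $t+1$ is bounded by the total deficit $\sum_{x \in B \cup D}(k-\deg(x))$. This holds because $B^\out_t \subseteq B^\init_{t+1} = B^\final_t \cup D^\final_t$ and the forest is unchanged at the round boundary. With that in place, the constants work out with room to spare.
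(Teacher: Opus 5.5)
Your proposal is correct and, after the initial detour, is exactly the paper's argument: apply \Cref{full:cor:number-leafs} to each component of $\calF^\init_{t+1}$, sum using that there are at least $f - f/(100H)$ components, and lower-bound $\sum_{x\in B^\out_t}\deg_{\calF^\init_{t+1}}(x)$ via $B^\out_t\subseteq B^\init_{t+1}$ with a loss of only $f/50$. Your two ``routes'' are really the same fix, and your explicit use of $\deficit \le 2\tau H < f/50$ from \Cref{full:claim:deficit-bound} is what actually justifies the paper's $f/50$, since the corollary the paper cites is only stated with the weaker constant $f/10$.
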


\begin{proof}
    Let $X$ be a connected component of $\calF_{t+1}^\init$.
    According to \Cref{full:cor:number-leafs}, the number of singly-attached $(\calF^\init_{t+1}, B^\out_{t})$-regions in $X$ is at least 
    \begin{equation}\label{full:eq:num-singly-attached}
        1 + \sum_{x \in B^\out_{t} \cap V(X)} \left(\deg_{\calF^\init_{t+1}}(x) - 2\right). 
    \end{equation}
    The number of connected components of $\calF_{t+1}^\init$ is at least $f-f/(100\cdot H)$ since we assumed that at most $\tau \leq f/(100\cdot H)$ many augmenting chains are applied, (see Assumption \ref{full:assumption:small-improvement}).
    Hence, by summing up \Cref{full:eq:num-singly-attached} over all connected components of $\calF_{t+1}^\init$, the total number of singly-attached $(\calF_{t+1}^\init, B_t^\out)$-regions is at least 
    \begin{align*}
        (f - f/(100 \cdot H)) + \sum_{x \in B_t^\out} \left(\deg_{\calF^\init_{t+1}}(x) - 2\right) 
        &\geq f - f/(100 \cdot H) + (k-2) \cdot |B_t^\out| - f/50 \\
        &\geq  (k-2) \cdot |B_t^\out| + 9f/10.
    \end{align*}
    The first inequality follows from \Cref{full:cor:sum-degree-is-large} for $B' := B_t^\out \subseteq B_{t+1}^\init$ (see \Cref{full:claim:B-out-in-B-init}).
\end{proof}

\subsubsection{Constructing Potential Stable $(t+1)$-Chains}\label{full:sec:defining-potential-chains}

Assume that $U \in \calS_{t+1}$ is an arbitrary singly-attached $(\calF^\init_{t+1}, B^\out_{t})$-region.
In the following, we introduce a potential stable $(t+1)$-chain denoted by $A_U$ whose tail is $U$.
The goal is to construct these pseudo-chains $A_U$ for $U \in \calS_{t+1}$ and argue that a large number of such chains are actually stable $(t+1)$-chains.
We have two cases;

\medskip
\noindent
\textbf{Case I: $U$ Does Not Have a Root.}
In this case, $U$ itself is an entire connected component of $\calF^\init_{t+1}$, then we define $A_U$ as the sequence $(y)$ of length $0$, where $y$ is an arbitrary leaf node $\calF^\init_{t+1}$.
In this case, we refer to $A_U$ as a \textbf{new} chain. 

\medskip
\noindent
\textbf{Case II: $U$ Has a Root.}
Denote the root of $U$ by $x$.
According to the definition, there exists a stable $t$-chain $A_i = (w_0,z_1,w_1,\ldots, z_{h}, w_h) \in M(t)$ (for some index $i \in [1, q]$) of length $h \leq t-1$ whose tail is $T_i$.
Moreover, there exists a $T_i$-expanding edge $(u^\star,v^\star)$ such that $ u^\star \in T_i$, $v^\star = x \notin T_i$.
Now, we consider two subcases as follows:
\begin{itemize}
\item 
If $(u^\star, v^\star) \notin E(\calF^\init_{t+1})$, we consider any arbitrary leaf node $y$ of $\calF^\init_{t+1}$ inside $U$, and define $A_U := (w_0,z_1,w_1,\ldots, z_h, u^\star, x, y)$.
In this case, we refer to $A_U$ as an \textbf{extended} chain.
\item
If $(u^\star, v^\star) \in E(\calF^\init_{t+1})$, we have that $v^\star$ is the tail of $T_i$ according to \Cref{full:obs:expanding-types}, and we define $A_U := A_i$.
In this case, we refer to $A_U$ as a \textbf{repetitive} chain.
\end{itemize}

So far, for any $U \in \calS_{t+1}$, we have constructed a sequence $A_U$ (as a potential stable $(t+1)$-chain).
Now, we intend to count the number of $U \in \calS_{t+1}$ such that $A_U$ is indeed a stable $(t+1)$-chain.
Recall that the definition of a stable $(t+1)$-chain is w.r.t.~the configuration at the end of round $t+1$.
But we defined $A_U$ w.r.t.~the configuration at the beginning of round $t+1$.
We will show that many of these sequences $A_U$ remain a pseudo-chain w.r.t.~$\conf{\final}{t+1}$ and moreover, they are stable $(t+1)$-chains.

\subsubsection{Definition of $M(t+1)$}\label{full:sec:def-M(t+1)}

Let $A_U = (w_0, z_1, w_1, \ldots, z_{h}, w_{h})$ for some $U \in S_{t+1}$.
We aim to see when $A_U$ refuses to be a stable $(t+1)$-chain.
If $A_U$ is a pseudo-chain w.r.t.~$\conf{\init}{t+1}$, we call $A_U$ as \textbf{qualified}.
The only reason that can cause $A_U$ not to be a pseudo-chain w.r.t.~$\conf{\init}{t+1}$ is that the leaf node $y$ inside $U$ that is used to construct $A_U$ is a junction as well.
Later, we show that this can not happen too many times since the $\deficit$ of any configuration maintained by the algorithm is bounded and leaf nodes have degree $1$.

Now, assume that $A_U$ is qualified.
Let $(u,v)$ be an effective edge that has been applied at some time in the algorithm from the beginning of round $t+1$ until the end of round $H$.
We say that $(u,v)$ \textbf{affects} $A_U$ if after applying $(u, v)$ at least one of the following happens, 
\begin{enumerate}
    \item\label{full:destroy-P1} $z_{h}$ is removed from $B$ and/or the edge connecting $z_h$ and $T_{w_h \leftarrow z_h}$ is removed from the forest (this condition applies only if $A_U$ has length at least $1$).

    \item\label{full:destroy-P2} An edge crossing $U$ is inserted into the forest. 
    
    \item\label{full:destroy-P3} A touched region emerges inside $U$.
\end{enumerate}
Note that this definition does not include applying a merging edge between two reducible regions inside $U$, i.e., if $(u,v)$ is a merging edge between two reducible regions inside $U$, we do \emph{not} say that $(u,v)$ affects $A_U$.
Let $(u,v)$ be an effective edge that affects $A_U$.
We say that $(u,v)$ \textbf{destroys} $A_U$ if during the procedure of the algorithm from the beginning of round $t+1$ until the end of round $H$, $(u,v)$ is the very first edge that affects $A_U$.

\medskip
\noindent
\textbf{Definition of $M(t+1)$.}
Eventually, we define $M(t+1)$ to be the set of all sequences $A_U$ for $U \in \calS_{t+1}$ that \emph{are qualified} and \emph{are not destroyed} throughout the algorithm from the beginning of round $t+1$ until the end of round $H$.
In the following, we show the correctness of our definition of $M(t+1)$.

\begin{claim}\label{full:claim:v-star-in-B}
    If $A_U$ is qualified and is not destroyed by any effective edge throughout the entire algorithm from the beginning of round $t+1$ until the end of round $H$, then $A_U$ remains a pseudo-chain until the end of round $H$.
\end{claim}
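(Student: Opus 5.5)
We argue by tracking the four properties of \Cref{full:def:pseudo-chain} along the sequence of configurations from $\conf{\init}{t+1}$ until the end of round $H$. Write $A_U=(w_0,z_1,\ldots,z_h,w_h)$. By qualification, $A_U$ is a pseudo-chain w.r.t.~$\conf{\init}{t+1}$. The proof is an induction over the applied effective edges: assuming $A_U$ is a pseudo-chain just before an effective edge $(u,v)$ is applied, we show it still is one afterwards. We treat the prefix $A'=(w_0,z_1,\ldots,z_{h-1},w_{h-1})$ and the last block separately.

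\textbf{The prefix.} For an extended chain, $A'$ is the stable $t$-chain $A_i$ with its last node $w_h$ replaced by $u^\star$. For a repetitive chain, $A'$ is a prefix of $A_i$. In both cases, by \Cref{full:def:stable-chain} and \Cref{full:obs:remain-pseudo-chain}, the following never change after round $t$: the junctions $z_1,\ldots,z_{h-1}$, the pivot edges $(z_j,y_j)$, and the reducible regions inside blocks $0,\ldots,h-1$. The same holds for the non-forest edges $(w_{j-1},z_j)$ with $j\le h-1$.

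One point needs care for extended chains. The last edge $(u^\star,x)=(w_{h-1},z_h)$ is new, and we must check that it stays a non-forest edge and that $u^\star$ stays in a reducible region. The second part holds because $u^\star$ lies in the old tail $T_i$, whose regions are permanently reducible by stability of $A_i$. For the first part, inserting $(u^\star,x)$ into $\calF$ would insert an edge crossing $T_i$ (recall $x\notin T_i$). That happens only when an augmenting chain is applied, and such an application would touch a region of $T_i$ (the region $C_{w}$ with $w=u^\star$) or would violate the stability of $A_i$. I expect this to follow directly from \Cref{full:lem:apply-chain} (Property~\ref{full:lem:apply-chain-P4}) together with the fact that inserted chain edges have an endpoint $w_j$ whose region becomes touched.

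Property~\ref{full:pseudo-chain-P3} for the index $h-1$ requires $z_h=x\notin T_{w_{h-1}\leftarrow z_{h-1}}$. This is preserved because that block is frozen: its internal structure and its pivot edge never change.

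\textbf{The tail $U$.} Non-destruction means that no edge crossing $U$ is ever inserted, that no touched region appears inside $U$, and that $z_h$ stays in $B$ with its pivot edge intact. We also need that no edge crossing $U$ is ever deleted. Such a deletion can only come from applying an augmenting chain, either through a degree reduction or through a removal of some $(z_j,y_j)$. A degree reduction only changes edges inside a region that becomes touched. A removed pivot edge $(z_j,y_j)$ crossing $U$ would have $z_j$ as the root of $U$, or $y_j\in U$, and then the same chain inserts $(w_j,z_{j+1})$ with $w_j$ in the block $T_{w_j\leftarrow z_j}\supseteq U$. That is either an insertion crossing $U$ or a touching of a region in $U$, so $A_U$ would be affected. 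Hence $U$ remains exactly the block $T_{w_h\leftarrow z_h}$, and $w_h=y$ stays in a reducible region. The only operations that occur inside $U$ are merging edges, and these replace reducible regions by a larger reducible region, so $w_h$ is still in some region of $\calC$. Finally $z_h\in B$ by non-destruction, which gives Properties~\ref{full:pseudo-chain-P1} and~\ref{full:pseudo-chain-P2} for index $h$.

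\textbf{Main obstacle.} The hardest step is the edge-deletion argument: showing that every forest change touching $U$ or block $h-1$ is caught by one of the three ``affect'' conditions. To handle it, I would enumerate the edges of $E(\calF')\oplus E(\calF)$ for an augmenting chain as listed in the proof of \Cref{full:lem:apply-chain} (degree-reduction edges inside the regions $C_{w_j}$ and $C_{z_{h+1}}$, the pivot edges $(z_j,y_j)$, and the edges $(w_j,z_{j+1})$), and check each type against the conditions. This is a case analysis with a small number of cases.
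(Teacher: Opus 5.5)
Your proposal is correct and follows essentially the paper's route. The part of $A_U$ inherited from $A_i$ is handled by stability of $A_i$ (via \Cref{full:obs:remain-pseudo-chain}). The new pieces are $x\in B$, $(u^\star,x)\notin E(\calF)$, $x\notin T_{u^\star\leftarrow z_{h-1}}$, and $y$ staying in a reducible region connected to $x$; these come from the non-destruction conditions for $A_U$ together with the fact that the tail $T_i$ of $A_i$ is frozen. Your planned case analysis over $E(\calF')\oplus E(\calF)$ also spells out the connectivity step that the paper only attributes to Condition~\ref{full:destroy-P1}.

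The one slip is the repetitive case. There the last block of $A_U=A_i$ is $T_i$ rather than $U$, so your tail argument does not literally apply. Close that case directly by \Cref{full:obs:remain-pseudo-chain}, as the paper does.
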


The proof of this claim is deferred to \Cref{full:sec:proof-v-star-in-B}.
This concludes that if $A_U$ is not destroyed, then it is a pseudo-chain of length at most $t$ w.r.t.~$\conf{\final}{t+1}$.
This shows that our definition of $M(t+1)$ makes sense, since at the beginning of \Cref{full:sec:def-stable-chains}, we defined stable $t$-chains as pseudo-chains w.r.t.~$\conf{\final}{t}$ for each $ t\in[1, H]$.

\begin{claim}\label{full:claim:not-destroyed-is-stable}
    If $A_U$ is qualified and is not destroyed by any effective edge throughout the entire algorithm from the beginning of round $t+1$ until the end of round $H$, then $A_U$ is a stable $(t+1)$-chain.
\end{claim}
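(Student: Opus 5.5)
We need to verify the two conditions of \Cref{full:def:stable-chain} for $A_U = (w_0,z_1,\ldots,z_h,w_h)$ with respect to $\conf{\final}{t+1}$. First, by \Cref{full:claim:v-star-in-B}, $A_U$ is a pseudo-chain w.r.t.~$\conf{\final}{t+1}$. Its length is at most $t$: a new chain has length $0$, a repetitive chain equals some $A_i$ of length $\le t-1$, and an extended chain has length $h+1 \le t$. It then remains to check the two stability conditions from the end of round $t+1$ until the end of round $H$. We split them into the prefix blocks $0,\ldots,h-1$ and the tail $U$.

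\textbf{Prefix blocks.} For an extended chain, the prefix $(w_0,z_1,\ldots,z_h,w_h)$ is exactly the stable $t$-chain $A_i\in M(t)$. Stability of $A_i$ (\Cref{full:def:stable-chain}, applied from round $t+1$ onward) gives two facts. First, each root $z_j$ with $j\le h$ stays in $B$ and its pivot edge is never deleted. Second, the regions of blocks $0,\ldots,h$ of $A_i$ stay in $\calC$ with unchanged structure. The $h$-th block of $A_U$ is rooted at the new junction $x=v^\star$ and contains the old tail $T_i$. Here I must argue that the new block $T_{u^\star\leftarrow x}$ still consists only of regions that are reducible and frozen. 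For this I would use that $x\in B_t^\out\subseteq B_{t+1}^\init$ (\Cref{full:claim:B-out-in-B-init}). The key point is that destruction is defined via the first affecting edge on the last root $z_h = x$ and on $U$, so the relevant structure of block $h$ must be inherited from the $t$-chain together with $x$ never being removed from $B$. For a repetitive chain, the whole of $A_U = A_i$ is stable by assumption on the old blocks, and the tail is handled below.

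\textbf{Tail $U$.} Since no edge affects $A_U$, three things hold after the start of round $t+1$. No edge crossing $U$ is ever inserted into $\calF$. No region inside $U$ ever becomes touched. And the root $z_h$ with its connecting edge survives, by affect-condition~\ref{full:destroy-P1}. Every region in $U$ is reducible at the start of round $t+1$, because $\calC^\init_{t+1}$ contains all regions. By \Cref{full:lem:apply-chain} (Property~\ref{full:lem:apply-chain-P4}) and the definitions in \Cref{full:sec:apply-merging,full:sec:apply-touching}, the only operations that can change $\calF$, $B$ or $\calC$ inside $U$ are of two kinds. The first is applying augmenting chains, which touch regions or insert or remove edges, and is therefore excluded. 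The second is merging edges internal to $U$. Edges are removed from $\calF$ only during augmenting chain applications; these either lie inside touched regions, which is excluded, or are critical edges $(z_i,y_i)$. A critical edge inside $U$ would have to make the chain pass through $U$, and the inserted edge $(w_{i-1},z_i)$ or the degree reduction would then affect $U$. This is the step I expect to be the delicate one: I need to show that every augmenting chain whose removed edge lies in $U$ also inserts an edge crossing or inside $U$ from a region of $U$ that gets touched. Once that is shown, $\calF[U]$ changes only by merging within round $t+1$. By \Cref{full:obs:cond-end-round}, at the end of round $t+1$ no effective edge remains between reducible regions of $U$. Since from round $t+2$ on nothing may touch $U$ and no internal merge becomes available, the junction and region structure of $U$ is frozen from the end of round $t+1$ onward.

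Combining the two parts, both conditions of \Cref{full:def:stable-chain} hold, so $A_U$ is a stable $(t+1)$-chain.
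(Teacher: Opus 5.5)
Your outer structure matches the paper's: pseudo-chain via \Cref{full:claim:v-star-in-B}, earlier blocks via stability of $A_i$, and the tail via non-destruction plus \Cref{full:obs:cond-end-round}. However, the extended case rests on a misidentified block, and this is a real gap. In a pseudo-chain the block rooted at $z_j$ is $T^{\calF}_{w_j\leftarrow z_j}$. It is determined by the root and the node \emph{after} it, not by the node joined to the root through the non-forest edge $(w_{j-1},z_j)$. So in $A_U=(w_0,z_1,\ldots,z_h,u^\star,v^\star,y)$ the block rooted at $v^\star$ is $T^{\calF}_{y\leftarrow v^\star}$. This is exactly $U$, since a singly-attached region hangs off $v^\star$ by a single forest edge. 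It does not contain $T_i$, and $T^{\calF}_{u^\star\leftarrow v^\star}$ is not a block of $A_U$ at all.

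The subgoal you set, that $T^{\calF}_{u^\star\leftarrow v^\star}$ consists only of reducible, frozen regions, is therefore unnecessary. It is also false in general: that subtree is the whole side of $v^\star$ containing $u^\star$. Besides $T_i$, it contains $z_h$ and whatever lies between $z_h$ and $v^\star$, and none of this is protected by the stability of $A_i$ or by the non-destruction of $A_U$. Your justification (``inherited from the $t$-chain together with $x$ never being removed from $B$'') does not establish it.

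The correct bookkeeping is simpler, and it is what the paper does.
\begin{itemize}
\item Since $u^\star\in T_i$, we have $T^{\calF}_{u^\star\leftarrow z_h}=T_i$. So blocks $0,\ldots,h$ of $A_U$ are exactly the blocks of $A_i$. Both conditions of \Cref{full:def:stable-chain} for them (roots $z_1,\ldots,z_h$, their pivot edges, frozen reducible regions) hold verbatim because $A_i\in M(t)$.
\item The only new root is $v^\star$. Its membership in $B$ and its edge into $U$ are protected by affect-condition~\ref{full:destroy-P1}.
\item The only new block is $U$, which your tail argument covers.
\end{itemize}
Likewise, the repetitive case needs nothing ``below'': there $A_U=A_i$ has tail $T_i$, and a stable $t$-chain is already a stable $(t+1)$-chain.

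Your tail argument is fine. The step you flag as delicate, which the paper passes over, does close. Suppose that before $A_U$ is affected, an applied chain $A'=(w'_0,z'_1,\ldots,z'_{h'},w'_{h'},z'_{h'+1})$ deletes a critical edge $(z'_j,y'_j)$ with an endpoint in $U$.
\begin{itemize}
\item If $v^\star\in\{z'_j,y'_j\}$, this is the edge attaching $U$ to $v^\star$, so affect-condition~\ref{full:destroy-P1} applies.
\item Otherwise both endpoints lie in $U$, so $z'_j\in U$. If $y'_j$ is on the side of $z'_j$ away from $v^\star$, then $w'_j\in T^{\calF}_{w'_j\leftarrow z'_j}\subseteq U$, and $C_{w'_j}\subseteq U$ becomes touched.
\item If $y'_j$ is on the side of $z'_j$ toward $v^\star$, Properties~\ref{full:aug-chain-prop-5} and~\ref{full:aug-chain-prop-6} of \Cref{full:def:aug-chain} place $z'_{h'+1}$ in a component of $\calF-z'_j$ that avoids $v^\star$, hence in $U$. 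As $U$ has no touched node yet, $C_{z'_{h'+1}}\subseteq U$ becomes touched.
\end{itemize}
In every case $A_U$ is affected, so for a non-destroyed $A_U$ the forest inside $U$ changes only through merges internal to $U$, as you wanted.
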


The proof of this claim is deferred to \Cref{full:sec:proof-not-destroyed-is-stable}.
This claim shows that $M(t+1)$ must be a set of stable $(t+1)$-chains.
Moreover, the tails of these stable $(t+1)$-chains are singly-attached $(\calF_{t+1}^\init, B_t^\out)$-regions which concludes they are mutually node-disjoint.
Hence, our construction is correct, and $M(t+1)$ is indeed a tail-independent set of stable $(t+1)$-chains.
It remains to analyze the size of $M(t+1)$.
For this reason, we must bound the number of sequences $A_U$ that are not qualified, and the number of qualified sequences $A_U$ that are destroyed.

\subsubsection{Counting Stable $(t+1)$-Chains}\label{full:sec:count-true-chains}

\textbf{Number of Qualified Sequences.}
We start by showing that a large number of sequences $A_U$ for $U \in \calS_{t+1}$ are qualified.

\begin{claim}\label{full:claim:well-defined-AU}
    The number of elements $U \in \calS_{t+1}$ such that $A_U$ is not qualified is bounded by $f/10$.
\end{claim}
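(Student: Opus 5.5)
The plan is to show that non-qualification can only come from the choice of the leaf $y$, and that every such leaf is a junction of nearly zero degree. Each bad leaf therefore consumes about $k-1$ units of deficit, and \Cref{full:claim:deficit-bound} caps the total deficit.

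\textbf{Step 1: non-qualification forces $y\in B^\init_{t+1}$.} I will go through the three kinds of sequences $A_U$ and check which pseudo-chain properties could fail w.r.t.~$\conf{\init}{t+1}$.

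If $A_U$ is repetitive, then $A_U=A_i$ is a stable $t$-chain. By \Cref{full:obs:remain-pseudo-chain} it is a pseudo-chain w.r.t.~$\conf{\init}{t+1}$, so it is qualified.

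If $A_U$ is extended, all properties except the one involving $y$ hold:
\begin{itemize}
\item The prefix $(w_0,\ldots,z_h,w_h)$ is still a pseudo-chain, by stability.
\item $u^\star$ lies in the tail $T_i$ and is not a junction, since the regions of $T_i$ are reducible and stable. Because $\calC^\init_{t+1}$ is the collection of all $(\calF^\init_{t+1},B^\init_{t+1})$-regions, $u^\star$ lies in a reducible region.
\item $x=v^\star\in B^\init_{t+1}$ by \Cref{full:claim:B-out-in-B-init}.
\item $(u^\star,x)$ is a non-forest edge, by the case assumption.
\item $x$ lies outside the $h$-th block $T_i$, by the definition of an expanding edge.
\item $y$ and $x$ lie in the same component of $\calF^\init_{t+1}$, because $U$ is attached to $x$.
\end{itemize}
So the only property that can fail is that $y$ lies in a reducible region.

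If $A_U$ is new, the only condition is again that $y$ lies in a reducible region.

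Since every region at initialization is in $\calC^\init_{t+1}$, failure happens iff $y\in B^\init_{t+1}$.

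\textbf{Step 2: each bad $U$ contributes at least $k-1$ to the deficit.} The chosen $y$ is a leaf, or an isolated node, of $\calF^\init_{t+1}$ inside $U$. Such a node exists: if $U$ is a single node, that node has degree at most $1$; otherwise $U$ is a finite subtree attached to at most one outside node, so it contains a leaf of the forest. Hence $\deg_{\calF^\init_{t+1}}(y)\le 1$.

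If $y\in B^\init_{t+1}=B^\final_t\cup D^\final_t$, then $y$ contributes $k-\deg(y)\ge k-1$ to $\deficit\conf{\final}{t}$. The forest is the same at the end of round $t$ and the start of round $t+1$, so this is measured in the configuration maintained at the end of round $t$.

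The regions in $\calS_{t+1}$ are node-disjoint, so distinct bad $U$ give distinct nodes $y$.

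\textbf{Step 3: counting.} By \Cref{full:claim:deficit-bound} and Assumption~\ref{full:assumption:small-improvement},
\[
(k-1)\cdot\#\{U\in\calS_{t+1} : A_U \text{ not qualified}\}\le \deficit\conf{\final}{t}\le 2\tau H< f/50 .
\]
Since $k\ge 2$, the number of unqualified $U$ is below $f/50\le f/10$.

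The main obstacle is Step 1: confirming that the leaf condition is genuinely the only possible failure, in particular that $u^\star$ and the prefix blocks remain valid when $B^\init_{t+1}$ absorbs $D^\final_t$. I would handle this using the stability guarantee that regions inside $T_i$ are never touched, so no node of $T_i$ enters $D$.
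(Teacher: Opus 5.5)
Your proof is correct and follows the paper's route. The paper uses a case analysis over new, extended and repetitive chains to show that non-qualification forces the chosen leaf $y\in U$ to be a junction in $B^{\init}_{t+1}$. It then bounds such low-degree junctions by $f/50$ via the deficit bound (phrased through \Cref{full:cor:sum-degree-is-large}) and uses node-disjointness of $\calS_{t+1}$. Your explicit checks that $U$ contains a node of forest degree at most $1$, and that no node of $T_i$ lies in $D^{\final}_t$, supply details the paper leaves implicit.
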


\noindent
\textit{Proof Sketch.}
The only reason that $A_U$ is not a pseudo-chain w.r.t.~$\conf{\init}{t+1}$ is that $U$ contains a leaf node that is a junction as well.
But the number of junctions with low degree (in this case, degree equal to $1$ for leaf nodes) is bounded.
We refer the reader to \Cref{full:sec:proof-well-defined-AU} for a complete formal proof.

\medskip
\noindent
\textbf{A Useful Property.}
We have the following property about the execution of the algorithm from the start of round $t+1$ until the end of round $H$.

\begin{claim}\label{full:claim:levels-are-bounded}
    Throughout the execution of the algorithm from the start of round $t+1$ until the end of round $H$, the following holds for any $A_U = (w_0, z_1, \ldots, z_{h+1}, w_{h+1}) \in M(t+1)$;
    \begin{enumerate}
        \item\label{full:level-edge-bound} If $\lvl_{(w_{h}, z_{h+1})}$ reaches $t$, then every call to the $\backSearch$ on $(w_{h}, z_{h+1})$ for the rest of the algorithm must be successful.
        Accordingly, $\lvl_{(w_{h}, z_{h+1})}$ would never exceed $t$.
        \item\label{full:level-node-bound} If $\lvl_{z_{h+1}}$ reaches $t+1$, then every call to the $\backSearch$ on $z_{h+1}$ for the rest of the algorithm must be successful.
        Accordingly, $\lvl_{z_{h+1}}$ would never exceed $t+1$.
    \end{enumerate}
\end{claim}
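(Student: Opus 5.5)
The plan is to prove the claim by induction on $t$, simultaneously for all chains in $M(t+1)$, using the recursive construction of $M(t+1)$ from $M(t)$.

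\emph{Base cases and repetitive chains.} A \emph{new} chain has length $0$, so the statement is vacuous; this also covers all of $M(1)$. A \emph{repetitive} chain $A_U$ is literally some $A_i\in M(t)$. The induction hypothesis (the claim for round $t$) then gives edge level at most $t-1$ and node level at most $t$. This is stronger than required, and it holds from round $t$ onward, in particular from round $t+1$. So the only real case is an \emph{extended} chain $A_U=(w_0,z_1,\ldots,z_h,w_h,z_{h+1},w_{h+1})$ with $w_h=u^\star$ and $z_{h+1}=x$. Here the prefix $A_i=(w_0,\ldots,z_h,w_h)\in M(t)$. If $h\ge 1$, the induction hypothesis applied to $A_i$ says that from round $t$ on, $\lvl_{z_h}$ never exceeds $t$, and once it reaches $t$ every $\backSearch$ on $z_h$ succeeds.

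\emph{Structural invariants.} First I would record what stays true from the start of round $t+1$ to the end of round $H$. Since $A_i$ is a stable $t$-chain, $z_h\in B$ always, the edge $(z_h,y_h)$ stays in $\calF$, and the regions of block $h$ (including the region of $w_h$) stay reducible. Since $A_U$ is qualified and not destroyed, \Cref{full:claim:v-star-in-B} and \Cref{full:claim:not-destroyed-is-stable} give that $z_{h+1}\in B$ throughout and that $A_U$ remains a pseudo-chain. In particular $(w_h,z_{h+1})$ remains a non-forest edge, and $z_{h+1}$ stays outside the block $T^{\calF}_{w_h\leftarrow z_h}$. Consequently, whenever $w_h$ and $z_{h+1}$ lie in the same component, $z_h\in B\cap P^{\calF}_{w_h,z_{h+1}}$.

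\emph{Part 1 (edge).} Consider a call to $\backSearch$ on $(w_h,z_{h+1})$ with $\lvl_{(w_h,z_{h+1})}\ge t$.
\begin{itemize}
\item If the endpoints are in different components, the call succeeds immediately.
\item Otherwise $z_h$ lies on the path with $\lvl_{z_h}\le t\le\lvl_{(w_h,z_{h+1})}$. Hence the while-condition of \Cref{full:code:backward-search-edge} holds as long as the call has not returned.
\end{itemize}
Each failed recursive call on some junction increments that junction's level, and levels are bounded, so the loop cannot run forever. The loop can only exit by failure once every junction on the path has level above $\lvl_{(w_h,z_{h+1})}\ge t$. This is impossible because $\lvl_{z_h}$ never exceeds $t$. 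When $h=0$, $w_0$ and $z_1$ are always in different components by the pseudo-chain property, so success is immediate. Hence every such call succeeds, and since the level of $(w_h,z_{h+1})$ is only incremented on failure, it never exceeds $t$.

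\emph{Part 2 (node).} Consider a call on $z_{h+1}$ with $\lvl_{z_{h+1}}\ge t+1$. By the invariants, $(w_h,z_{h+1})$ is a non-forest edge, $w_h$ lies in a reducible region, and by Part 1 $\lvl_{(w_h,z_{h+1})}\le t\le \lvl_{z_{h+1}}-1$. So $(w_h,z_{h+1})$ is permanently a backward edge, and the loop in \Cref{full:code:backward-search-node} can never find the set of backward edges empty. Failed calls on other backward edges increase their levels, which is finite progress. Eventually either some call succeeds, or $(w_h,z_{h+1})$ is selected; by Part 1 that call succeeds, because its level is at least $t$ at that point or else rises to $t$ after finitely many failures. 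Hence the node call succeeds and $\lvl_{z_{h+1}}$ never exceeds $t+1$.

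The main obstacle is the invariant step: justifying rigorously that non-destruction of $A_U$ and stability of $A_i$ preserve $z_h\in B$ on the path, the non-forest status of $(w_h,z_{h+1})$, and the reducibility of $w_h$'s region at every moment of the later rounds. I would lean directly on \Cref{full:claim:v-star-in-B} and \Cref{full:claim:not-destroyed-is-stable} for this.
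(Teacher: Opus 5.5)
Your proposal is correct and follows the paper's proof essentially step for step. It uses induction on $t$ with the new/repetitive/extended case split, the invariant that $z_h\in B\cap P^{\calF}_{w_h,z_{h+1}}$ while $(w_h,z_{h+1})$ remains a backward edge, the inductive bound on $\lvl_{z_h}$ for Part 1, and Part 1 feeding Part 2. Your explicit handling of $h=0$ and of loop termination is a bit more careful than the paper's.

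The one step you leave implicit is that all levels are at most $t$ at the start of round $t+1$ (\Cref{full:obs:levels-round-t}). You need this to turn ``every call at level $\ge t$ succeeds'' into ``the level never exceeds $t$'', and likewise for the $t+1$ bound on the node.
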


\noindent
\textit{Proof Sketch.}
In summary, the main reason for its correctness is the delicate procedure of the $\backSearch$ and the assumption that $A_U$ is not destroyed (see the definition of $M(t+1)$).
More precisely, the nodes and edges corresponding to $A$ always remain successful choices for the recursive calls inside the $\backSearch$ subroutine.
Hence, the levels of those specific nodes and edges never exceed the specified values.
We refer the reader to \Cref{full:sec:proof-levels-are-bounded} for a complete formal proof.

\medskip
\noindent
\textbf{Number of Qualified Sequences That Are Destroyed.}
Now, we count how many of the qualified sequences $A_U$ can be destroyed throughout the execution of the algorithm from the beginning of round $t+1$ until the end of round $H$.
We have the following two claims.

\begin{claim}\label{full:claim:destroying-is-improving}
    If $(u,v)$ destroys at least one qualified sequence $A_U$ for some $U \in \calS_{t+1}$, then $(u,v)$ is an improving edge.
\end{claim}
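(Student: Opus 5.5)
We argue by contradiction: suppose $(u,v)$ is the first edge to affect a qualified $A_U=(w_0,z_1,\ldots,z_h,w_h,z_{h+1},w_{h+1})$, but $\findChain(t',(u,v))$ fails in the current round $t'\ge t+1$. Then $(u,v)$ is either a merging edge or a touching edge, and we derive a contradiction in both cases. For a repetitive chain the tail structure is inherited from $A_i$, so the argument below applies verbatim with $z_h$ as the root of the tail. The plan has three steps.

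\textbf{Step 1: a failing $(u,v)$ must reach inside $U$.} We first show that any effective edge which affects $A_U$ without being improving leaves $\calF$ unchanged and modifies only $B$ and $\calC$ along $P^\calF_{u,v}$ and its boundary. By the definition of \emph{affects}, a merging edge with both endpoints inside $U$ is excluded. So either Condition~\ref{full:destroy-P3} holds, meaning a region of $U$ lies in $\bound(P^\calF_{u,v})$ and the edge is touching, or Condition~\ref{full:destroy-P1} holds, meaning the root $z_{h+1}$ of $U$ lies on $P^\calF_{u,v}$. Condition~\ref{full:destroy-P2} cannot hold, since non-improving edges never insert forest edges. Since $A_U$ has not been affected before this moment, $U$ still consists of untouched regions and $z_{h+1}$ is still a junction. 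Hence in every case the path $P^\calF_{u,v}$ passes through a node of $U$ or through $z_{h+1}$ while one endpoint lies outside $U$. Because $U$ hangs off the rest of its component only via $z_{h+1}$, this means $z_{h+1}\in B\cap P^\calF_{u,v}$.

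\textbf{Step 2: the search must call $z_{h+1}$ and succeed.} Since $\findChain$ failed, $\lvl_{(u,v)}$ was raised to $t'$, and every failing $\backSearch$ on $(u,v)$ exhausted all junctions on the path with level $\le \lvl_{(u,v)}$. We use Claim~\ref{full:claim:levels-are-bounded}: $\lvl_{z_{h+1}}\le t+1\le t'$ always holds, and any call on $z_{h+1}$ made after its level reaches $t+1$ succeeds. The plan is to show that at some iteration $\backSearch$ on $(u,v)$ calls $z_{h+1}$, or calls a lower-level junction that recursively succeeds. The natural route is to follow the recursion down $A_U$. The backward edge $(w_h,z_{h+1})$ stays available with level $\le t\le \lvl_{z_{h+1}}-1$. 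Then $z_h$ lies on $P^\calF_{w_h,z_{h+1}}$ by pseudo-chain Property~\ref{full:pseudo-chain-P3}, and is still a junction by stability of the prefix, which is a stable $t$-chain. Repeating this down to $w_0$, which lies in a different component, gives success via line~\ref{full:line:dif-comps}. So a successful call on $(u,v)$ exists, contradicting failure.

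\textbf{Step 3 (main obstacle): levels and minimum-level choices.} The hard part is making Step~2 rigorous against the monotone levels and the minimum-level selection: $\backSearch$ may pick other junctions first and inflate levels. The induction to use is over the blocks $i=0,\ldots,h+1$, with the hypothesis that $\lvl_{z_i}\le i$ and $\lvl_{(w_{i-1},z_i)}\le i-1$ forever after round $i$. This should hold because any failing call on these objects would have to bypass a still-valid successful sub-chain, which is impossible since a failure requires exhausting all candidates. That is exactly Claim~\ref{full:claim:levels-are-bounded} applied inductively to the stable prefixes in $M(i)$. With $\lvl_{z_{h+1}}\le t+1\le \lvl_{(u,v)}$ once the edge level reaches $t'$, the while-loop condition on line~\ref{full:line:condition-back-on-node} forces a call on $z_{h+1}$ before failure, and that call succeeds. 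Therefore $(u,v)$ is improving.
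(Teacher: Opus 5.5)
Your route is the paper's. You first show that an edge which affects $A_U$ without being improving must have the root $z_{h+1}=v^\star$ of $U$ on $P^{\calF}_{u,v}$. You then use the level bounds of \Cref{full:claim:levels-are-bounded} to argue that $\findChain$ cannot miss $z_{h+1}$. You skip the \emph{new} case $A_U=(y)$, where $U$ is a whole component and has no root. That case is easy: a non-improving edge that reaches $U$ has both endpoints in $U$, so it is a merging edge inside $U$, which by definition does not affect $A_U$.

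The genuine gap is the final level comparison. $\findChain(t',(u,v))$ calls $\backSearch$ on $(u,v)$ only while $\lvl_{(u,v)}\le t'-1$. Once the level reaches $t'$, no further call is made and $\findChain$ returns failure. So a failing search is only forced to try junctions of level at most $t'-1$. What you need is $\lvl_{z_{h+1}}\le t'-1$, not $\lvl_{z_{h+1}}\le \lvl_{(u,v)}=t'$.

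The bound $\lvl_{z_{h+1}}\le t+1$ that you take from \Cref{full:claim:levels-are-bounded} gives this only when $t'\ge t+2$. Suppose the destroying edge is processed in round $t'=t+1$. Then $z_{h+1}$ may already have been raised to level $t+1$ earlier in that round; a failed call on it at level $t$ is exactly what that claim permits. After that it is never eligible again during round $t+1$. Nothing in your argument then prevents $\findChain(t+1,(u,v))$ from failing, with $(u,v)$ (merging or touching) deleting $z_{h+1}$ from $B$.

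Your Step~3 does not repair this. The hypothesis $\lvl_{z_i}\le i$, $\lvl_{(w_{i-1},z_i)}\le i-1$ is indexed by position in the chain, which is not what \Cref{full:claim:levels-are-bounded} provides. Its bounds are indexed by the round at which a prefix became a stable chain. A prefix of length $j$ lies in some $M(t_j)$ with $t_j\ge j+1$, and $t_j$ can be much larger, for instance through repetitive chains or new chains created late. The hypothesis is also not true in general. A node's level can be pushed up in earlier rounds, e.g.\ while the relevant $w_{i-1}$ was touched and so supplied no backward edge, before that node ever becomes part of a stable prefix.

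The paper's proof closes the corresponding step by asserting $\lvl_{v^\star}\le t\le r-1$. Note that this is sharper than the $t+1$ stated in \Cref{full:claim:levels-are-bounded}. So whichever way you go, the case where the destroying edge is handled in round $t+1$ needs a real argument: either a bound $\lvl_{z_{h+1}}\le t$ valid up to the destroying time, or a separate treatment of that round. The inequality you wrote does not supply it.
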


\noindent
\textit{Proof Sketch.}
The main reason for its correctness is that first, all of the regions inside $U$ are reducible so far, and second, $(u,v)$ must cross $U$, which means the algorithm calls $\backSearch$ on the root of $A_U$.
Intuitively, the nodes and edges corresponding to the sequence $A_U$ are possible choices for the recursive calls in $\backSearch$.
As a result, the $\backSearch$ can not fail, which means that an augmenting chain will be found, i.e., $(u,v)$ is an improving edge.
We refer the reader to \Cref{full:sec:proof-destroying-is-improving} for a formal complete proof.

\begin{claim}\label{full:claim:number-of-destroys}
    Each improving edge $(u,v)$ can destroy at most $20H$ many qualified $A_U$.
\end{claim}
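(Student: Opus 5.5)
The plan is to fix an improving edge $(u,v)$ and let $A$ be the augmenting chain of length $h+1\le t+1\le H$ that is applied, with $(\calF,B,\calC,D)$ before and $(\calF',B',\calC',D')$ after. I would enumerate the ways in which applying $A$ can affect a qualified $A_U$, and charge each such $A_U$ to a distinct element of a list of size $O(H)$. The approach rests on two facts. First, by \Cref{full:lem:apply-chain}, applying $A$ modifies the configuration only by (a) degree reductions inside the regions $C_{z_{h+1}},C_{w_0},\ldots,C_{w_h}$, which become touched; there are at most $h+2$ of them. It also (b) inserts and deletes the at most $2h+1$ critical edges together with the edge $(w_0,z_1)$; more precisely it inserts $(w_i,z_{i+1})$ for $i\in[0,h]$ and deletes $(z_i,y_i)$ for $i\in[1,h]$, and it keeps $B'=B$. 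Second, the tails $U\in\calS_{t+1}$ are mutually node-disjoint.

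I would go through the three conditions of ``affects'' in turn. For Condition~\ref{full:destroy-P3}, a touched region emerges inside $U$ only if one of the at most $h+2$ regions $C_{w_i},C_{z_{h+1}}$ intersects $U$. Since $A_U$ has not yet been destroyed, all regions inside $U$ are still reducible $(\calF,B)$-regions fully contained in $U$, so each such region lies in at most one $U$. This gives at most $h+2$ destroyed sequences.

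For Condition~\ref{full:destroy-P2}, an edge crossing $U$ is inserted. Degree-reduction edges stay inside a single region $C_{w_i}$ or $C_{z_{h+1}}$. If such a region lies inside some $U$, that $U$ is already counted under Condition~\ref{full:destroy-P3}; otherwise the region is disjoint from all undestroyed $U$ and cannot cross one. So only the $h+1$ inserted edges $(w_i,z_{i+1})$ remain. Each edge has two endpoints and can cross at most two disjoint tails, giving at most $2(h+1)$.

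For Condition~\ref{full:destroy-P1}, junctions are not removed because $B'=B$. The only remaining possibility is that the edge between the root $z$ of $A_U$ and its pivot is deleted, which means it equals some $(z_i,y_i)$. The subtle point is that several $A_U$ may share the same root $x\in B^\out_t$. However, distinct $U$ with root $x$ are distinct components of $\calF^\init_{t+1}-x$, so their pivots are distinct neighbours of $x$. A deleted edge $(z_i,y_i)$ therefore corresponds to at most one $U$, namely the one containing $y_i$. Repetitive chains, whose tail root is a $z_h$ of an old chain, need the same check. There I would argue that the last block is exactly $U$ up to the merges inside it, so again one deleted edge identifies at most one $U$. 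This gives at most $h$ further sequences, with a factor $2$ of slack for safety.

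The total is $(h+2)+2(h+1)+2h\le 5h+4\le 20H$, using $h\le H$ and $H\ge1$. I expect the main obstacle to be making precise that a $U$ still undestroyed at that time is a union of current reducible regions, so that the charging to regions and edges is injective. This must hold even though merging edges inside $U$ are allowed to change its region structure. I would handle it by induction over the applied effective edges since the start of round $t+1$. The induction would show that, as long as $A_U$ is not affected, $V(U)$ is a union of junctions and regions of $\calC$, and no forest edge crosses $U$ other than the root edge; I would use \Cref{full:claim:v-star-in-B} for the bookkeeping.
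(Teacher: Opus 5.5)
Your proposal is correct and follows essentially the same charging argument as the paper: each destroyed $A_U$ is charged to one of the at most $h+2$ regions touched by the applied chain, to an endpoint of an inserted edge, or to a deleted root edge $(z_i,y_i)$, and node-disjointness of the tails makes the charging injective (the paper organizes this by new/extended/repetitive chains and gets the looser total $(H+2)+(2H+1)+2(2H+2)+(H+2)$). The paper disposes of repetitive chains outright: $A_U=A_i\in M(t)$ is a stable $t$-chain whose root edge is never removed, so your separate ``last block is exactly $U$'' argument for them is unnecessary; conversely, your explicit induction that an unaffected $U$ stays a union of junctions and current regions makes precise a step the paper only asserts.
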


\noindent
\textit{Proof Sketch.}
The main reason for its correctness is that the length of any augmenting chain $A$ applied by the algorithm is bounded by $H$.
The number of reducible regions that are touched by applying $A$ is bounded by $O(H)$ (see Property \ref{full:lem:apply-chain-P4} of \Cref{full:lem:apply-chain}), and moreover, the number of critical edges of $A$ is bounded by $O(H)$ as well (see Property \ref{full:lem:apply-chain-P2} of \Cref{full:lem:apply-chain}).
We refer the reader to \Cref{full:sec:proof-number-of-destroys} for a formal, complete proof.

\subsubsection{Putting Everything Together}\label{full:sec:putting-together}

We started by assuming $M(t) = \{A_1,A_2,\ldots,A_q\}$.
Then, for each $i \in [1,q]$, we considered all expanding edges $(u^\star, v^\star)$ exiting the tail $T_i$ of $A_i$.
The endpoints of these expanding edges introduced the set $B_t^\out$.
Then, by defining $\calS_{t+1}$ as the set of all singly-attached $(\calF_{t+1}^\final, B_t^\out)$-regions, we constructed the set $M(t+1)$ of stable $(t+1)$-chains whose tails are in $\calS_{t+1}$ and are mutually node-disjoint.
It remains to show that this set is large enough, i.e., $|M(t+1)| \geq \frac{k-2}{\Delta^\star-1}  \cdot q + f/5$.

\medskip
\noindent
\textbf{Bounding the Size of $M(t+1)$.}
Combining Claims \ref{full:claim:size-B-out} and \ref{full:claim:size-singly-attached} concludes 
$$|\calS_{t+1}| \geq \frac{k-2}{\Delta^\star-1} \cdot (q-f/10) + 9f/10 \geq \frac{k-2}{\Delta^\star-1} \cdot q + 5f/10, $$
where the last inequality holds since $k \leq 2\Delta^\star+1$ (see \Cref{full:eq:param}) and $\Delta^\star \geq 2$.
According to \Cref{full:claim:well-defined-AU}, the number of qualified sequences $A_U$ for $U \in \calS_{t+1}$ is at least $ |\calS_{t+1}| - f/10 \geq \frac{k-2}{\Delta^\star-1} \cdot q + 4f/10$.
Combining with Claims \ref{full:claim:destroying-is-improving} and \ref{full:claim:number-of-destroys}, we conclude that among these qualified sequences $A_U$, there are at most $\tau \cdot 20H \leq f/5$ many that are destroyed (the inequality holds by Assumption \ref{full:assumption:small-improvement}).
Hence, the number of qualified sequences that are not destroyed, i.e., stable $(t+1)$-chains in $M(t+1)$, is at least 
$$\frac{k-2}{\Delta^\star-1} \cdot q + 4f/10 - f/5 = \frac{k-2}{\Delta^\star-1} \cdot q + f/5. $$

\subsection{Deferred Proofs}\label{full:sec:deferred-proofs}

\subsubsection{Proof of \Cref{full:claim:number-leafs}}\label{full:sec:proof-number-leafs}

If $B = \emptyset$, the claim is trivial.
Now, consider $H = T^\calF_{w \leftarrow z}$ as a rooted tree with $z$ being its root.
Define an ordering $B = \{x_1,\ldots,x_b\}$ of the elements of $B$ according to the increasing distance from the root $z$.
We remove elements of $B$ in this order iteratively.
Let $B_0 := \emptyset$ and $B_i := \{x_1,\ldots, x_i\}$ for each $i \in [1, b]$.
Initially, we have one component $H$ that is incident to at most one node in $B_0$.
For each $i \in [1, b]$, assume that we have already removed all of the nodes of $B_{i-1}$ from $H$, and the number of $(H,B_{i-1})$-regions that are incident to (w.r.t.~$E(\calF)$) at most one node in $B_{i-1}$ is at least $1 + \sum_{x \in B_{i-1}} (\deg_{\calF}(x) - 2)$.

When we further remove $x_i$, the component containing $x_i$ is split into some new components.
At most one of these components can now be incident to two nodes in $B_i$ (the potential component derived by removing the edge connecting $x_i$ to its parent), and all of the other new components (derived by removing the edges between $x_i$ and its children) are incident to at most one node in $B_i$ (which is $x_i$ itself).
As a result, after removing $x_i$, one component is destroyed and at least $\deg_{\calF}(x_i) - 1$ new components incident to at most one node in $B_i$ are formed.
As a result, the number of components incident to at most one node in $B_i$ is at least
$$\left(1 + \sum_{x \in B_{i-1}} (\deg_{\calF}(x)-2)\right) + \left(\deg_\calF(x_i) - 1 -1\right) = 1 + \sum_{x \in B_{i}} (\deg_{\calF}(x)-2) . $$
Hence, the claim follows for $i = b$, after removing all nodes in $B$.

\subsubsection{Proof of \Cref{full:claim:deficit-bound}}\label{full:sec:proof-deficit-bound}

It is trivial that $\deficit\conf{\init}{1} = 0$.
Moreover, when transferring from the end of round $t$ to the beginning of round $t+1$, we have $$\deficit\conf{\init}{t+1} = \deficit\conf{\final}{t}, $$
since $B^\init_{t+1} \cup D^\init_{t+1} = B^\final_{t} \cup D^\final_{t}$.
Assume that an effective edge $(u,v)$ has been applied to $\conf{\old}{}$, and we have the new Configuration $\conf{\new}{}$.
We have the following cases;
\begin{itemize}
    \item $(u,v)$ is an improving edge.
    According to \Cref{full:lem:apply-chain} (Property 3), we conclude that the deficit can increase at most $2H$ units since the algorithm only applies augmenting chains of length at most $H$.
    
    \item $(u,v)$ is a merging edge.
    Then, we have $B^\new \cup D^\new \subseteq B^\old \cup D^\old$ that concludes the deficit will not increase since the degree of all nodes in the forest is at most $k$.

    \item $(u,v)$ is a touching edge.
    In this case, $B^\new = B^\old$ and $D^\new \supseteq D^\old$, but every node $x \in D^\new - D^\old$ has degree exactly $k$ in the forest.
    Hence, the deficit will not increase.
\end{itemize}

Finally, by \Cref{full:assumption:small-improvement}, we apply $\tau$ many augmenting chains throughout the entire algorithm, which concludes that the deficit of any configuration throughout the algorithm is bounded by $2 \cdot \tau \cdot H$.

\subsubsection{Proof of \Cref{full:claim:size-B-out}}\label{full:sec:proof-size-B-out}

Recall the definitions of $T_i$, $B_{t,i}^\inn$ and $\calC_{t,i}^\inn$ for each $i \in [1, q]$, $B_t^\inn$, $\calC_t^\inn$, and $B_t^\out$.
We define a collection $\calP$ of subsets of nodes of $V(G)$ as follows.
Every node in $B_t^\inn$ as well as $B_t^\out$ appears as a singleton in $\calP$.
For every region $C \in \calC_t^\inn$, the set $V(C)$ appears as an element in $\calP$.
According to the definitions of $\calC_t^\inn$, $B_t^\inn$, and $B_t^\out$, we have that the elements of $\calP$ are mutually node-disjoint.
Now, consider the optimal spanning tree $T^\star$.
The sets in $\calP$ must be connected together in $T^\star$, which concludes that there are at least $|\calP| - 1$ many edges of $E(T^\star)$ that have one endpoint in an element of $\calP$ and the other endpoint is outside of that element.
Let $E^\star$ be the set of such edges.

\begin{claim}
    Every edge $e^\star \in E^\star$ has an endpoint in $B_t^\inn \cup B_t^\out$.
\end{claim}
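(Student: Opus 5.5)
The plan is to take an arbitrary edge $e^\star\in E^\star$ and show it has an endpoint in $B_t^\inn\cup B_t^\out$. I will assume neither endpoint lies in $B_t^\inn\cup B_t^\out$ and derive a contradiction. Since the elements of $\calP$ are the singletons of $B_t^\inn\cup B_t^\out$ together with the vertex sets of the regions in $\calC_t^\inn$, this assumption means the element of $\calP$ that $e^\star$ leaves must be $V(C)$ for some $C\in\calC_t^\inn$. Orient $e^\star=(u^\star,v^\star)$ so that $u^\star\in V(C)$ and $v^\star\notin V(C)$. Let $i\in[1,q]$ be the index with $V(C)\subseteq V(T_i)$. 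Then $u^\star$ lies in a reducible region of $\calC^\final_t$ contained entirely in $T_i$. I then split on where $v^\star$ lies.

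\textbf{Case 1: $v^\star\notin V(T_i)$.} Then $(u^\star,v^\star)\in E(T^\star)$ is, by definition, a $T_i$-expanding edge, so $v^\star$ belongs to the set of endpoints of expanding edges. Since $v^\star\notin B_t^\inn$ by assumption, the definition of $B_t^\out$ gives $v^\star\in B_t^\out$, contradicting our assumption.

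\textbf{Case 2: $v^\star\in V(T_i)\setminus V(C)$.} Since $v^\star\notin B_t^\inn=B^\final_t\cap V(T_i)$, the node $v^\star$ is not a junction. Hence it lies in some $(\calF^\final_t,B^\final_t)$-region $C'\neq C$ inside $T_i$. By Property~\ref{full:P1:cond:stable-2} of a stable $t$-chain, every region inside the tail $T_i$ is reducible, i.e.\ $C'\in\calC^\final_t$. Next I check that $(u^\star,v^\star)\notin E(\calF^\final_t)$. A forest edge joining two non-junction nodes stays inside a single component of $\calF^\final_t[V-B^\final_t]$, whereas $u^\star$ and $v^\star$ lie in distinct regions. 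Therefore $(u^\star,v^\star)$ satisfies all four conditions of \Cref{full:def:effective-edge} w.r.t.\ $\conf{\final}{t}$: $u^\star$ is in the reducible region $C$, $v^\star\notin C$, the edge is a non-forest edge, and $v^\star$ is in the reducible region $C'$. This contradicts \Cref{full:obs:cond-end-round}, which says no effective edge exists at the end of round $t$.

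Both cases are contradictory, so the claim follows. I expect the only delicate point to be Case~2, namely justifying that every non-junction node of $T_i$ lies in a region of $\calC^\final_t$. This should follow directly from the stability condition and the fact that $A_i$ is a pseudo-chain w.r.t.\ $\conf{\final}{t}$. The same fact also rules out $v^\star$ being a touched node or a node in $D^\final_t$.
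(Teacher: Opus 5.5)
Your proof is correct and follows essentially the same route as the paper. If an endpoint lies in a region $C\in\calC_{t,i}^\inn$, the other endpoint is either a junction of $T_i$ (so in $B_t^\inn$), or outside $T_i$ (so in $B_t^\out$ via a $T_i$-expanding edge), or in another reducible region of $T_i$, and the last option is excluded by \Cref{full:obs:cond-end-round}. Your explicit check that $(u^\star,v^\star)\notin E(\calF^\final_t)$, which is needed for the edge to be effective, is a detail the paper leaves implicit.
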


\begin{proof}
$e^\star \in E^\star$ has an endpoint in an element of $\calP$.
It can be in one of $B_t^\inn$ and $B_t^\out$, or it can be in a region $C \in \calC_t^\inn$.
There is nothing to prove in the former case.
Now, assume that one endpoint of $e^\star$ is in a region $C \in \calC_{t,i}^\inn$ for some $i \in [1,q]$.
According to \Cref{full:obs:cond-end-round}, for every $i \in [1, q]$, there is no edge between different regions $C, C' \in \calC_{t,i}^\inn$ at the end of round $t$.
As a result, the other endpoint of $e^\star$ must be either in $B_t^\inn$, or outside of $T_i$. 
In both cases, according to the definition of $B_t^\out$, we have that the other endpoint must be in $B_t^\inn \cup B_t^\out$.
\end{proof}

\noindent
\textbf{Concluding \Cref{full:claim:size-B-out}.}
By a simple counting argument, we have that
\begin{align*}
    |E^\star| &\geq |\calP| - 1 = |\calC_t^\inn| + |B_t^\inn| + |B_t^\out| - 1 \\
    &\geq q + \sum_{x \in B_t^\inn} \left( \deg_{\calF_t^\final}(x) - 2 \right) + |B_t^\inn| + |B_t^\out| - 1 \\
    &\geq q + (k-2) \cdot |B_t^\inn| - f/50 + |B_t^\inn| + |B_t^\out| - 1.
\end{align*}
The second inequality follows from \Cref{full:claim:size-C_t-inn}, and the third inequality follows from \Cref{full:cor:sum-degree-is-large}.
Moreover, $(|B_t^\inn| + |B_t^\out|) \cdot \Delta^\star \geq |E^\star|$ since every edge $e^\star \in E^\star$ has an endpoint in $B_t^\inn \cup B_t^\out$.
Hence,
$$ (|B_t^\inn| + |B_t^\out|) \cdot \Delta^\star \geq q + (k-1) \cdot |B_t^\inn| + |B_t^\out| - f/50 - 1. $$
Together with $k \geq \Delta^\star + 1$, we conclude that
$|B_t^\out| \geq \frac{q - 1 - f/50}{\Delta^\star - 1} \geq \frac{q - f/10}{\Delta^\star - 1} $
as desired.
The last inequality follows from $f \geq 50$ (see the assumption of \Cref{full:lem:main}).

\subsubsection{Proof of \Cref{full:claim:v-star-in-B}}\label{full:sec:proof-v-star-in-B}

Recall that each $A_U$ can be new, extended, or repetitive according to our construction in \Cref{full:sec:defining-potential-chains}.
Hence, we have the following cases.

\medskip
\noindent
\textbf{$A_U$ Is New.}
In this case, $U$ itself is an entire connected component of $\calF^\init_{t+1}$, and we defined $A_U = (y)$ where $y$ is an arbitrary leaf node inside $U$.
Since $A_U$ is qualified, it is a pseudo-chain w.r.t.~$\conf{\init}{t+1}$, i.e., $y$ is inside a reducible region of $U$ in $\calC_{t+1}^\init$.
Moreover, since $A_U$ is not destroyed, it means that $y$ always remains inside a reducible region in $U$ (see Condition \ref{full:destroy-P3}).
As a result, $A_U = (y)$ always remains a pseudo-chain w.r.t.~any configuration until the end of round $H$.

\medskip
\noindent
\textbf{$A_U$ Is Extended.}
In this case, we defined $A_U := (w_0, z_1, w_1, \ldots, z_{h}, u^\star, v^\star, y)$ such that 1) $ A_i = (w_0, z_1, w_1, \ldots, z_{h}, w_{h}) \in M(t)$ is a stable $t$-chain (for some $i \in [1, q]$) of length $h \leq t-1$ whose tail is $T_i$, 2) $(u^\star, v^\star)$ is a $T_i$-expanding edge such that $(u^\star, v^\star) \notin E(\calF_{t+1}^\init)$, and 3) $y$ is an arbitrary leaf node of $\calF_{t+1}^\init$ inside $U$.
Since $A_i$ is a stable $t$-chain, it remains a pseudo-chain w.r.t.~any configuration after the beginning of round $t+1$ (according to \Cref{full:obs:remain-pseudo-chain}). 
Hence, to prove \Cref{full:claim:v-star-in-B} for $A_U$, it suffices to show the following;
\begin{itemize}
    \item Property \ref{full:pseudo-chain-P1} of \Cref{full:def:pseudo-chain} for $v^\star$, i.e., $v^\star$ remains a junction.
    Initially, $v^\star \in B_{t+1}^\init$ since we assumed that $A_U$ is qualified. 
    The fact that $v^\star$ remains in the set of junctions follows directly from the definition of $A_U$ not being destroyed (see Condition \ref{full:destroy-P1}).

    \item Property \ref{full:pseudo-chain-P2} of \Cref{full:def:pseudo-chain} for $v^\star$ and $y$, i.e., $v^\star$ and $y$ are in the same connected component of the forest and $y$ belongs to a reducible region.
    This is correct w.r.t.~$\conf{\init}{t+1}$ since $A_U$ is qualified. 
    The fact that $y$ and $v^\star$ remain in the same connected component of the forest follows directly from the definition of $A_U$ not being destroyed (see Condition \ref{full:destroy-P1} for $A_U$).
    Moreover, according to Condition \ref{full:destroy-P3} for $A_U$, since no touched node emerges inside the tail of $A_U$ (i.e., $U$), $y \in U$ remains inside a reducible region of $U$.
    
    \item Property \ref{full:pseudo-chain-P3} of \Cref{full:def:pseudo-chain} for $v^\star$ and $T^{\calF}_{u^\star \leftarrow z_h}$, i.e., $ v^\star \notin T^{\calF}_{u^\star \leftarrow z_h}$.
    This property holds w.r.t. $\conf{\init}{t+1}$ since $A_U$ is qualified.
    Since $A_i$ is a stable $t$-chain the structure of its tail $T^{\calF}_{u^\star \leftarrow z_h}$ never changes.
    More precisely, no edge incident on $T^{\calF}_{u^\star \leftarrow z_h}$ is ever inserted into the forest since $A_i$ is stable and not destroyed (see Condition \ref{full:destroy-P2} for $A_i$ not being destroyed).
    Hence, $v^\star$ always remains outside of the block connecting $z_h$ to $u^\star$ in $A_U$.

    \item Property \ref{full:pseudo-chain-P4} of \Cref{full:def:pseudo-chain} for $(u^\star, v^\star)$, i.e., $ (u^\star, v^\star) \notin E(\calF)$.
    This property holds w.r.t. $\conf{\init}{t+1}$ since we assumed $A_U$ is extended.
    The edge $(u^\star, v^\star)$ that is incident on the tail of $A_i$ is never inserted into the forest since $A_i$ is stable and not destroyed (see Condition \ref{full:destroy-P2} for $A_i$ not being destroyed).
\end{itemize}
As a result, $A_U$ remains a pseudo-chain.

\medskip
\noindent
\textbf{$A_U$ Is Repetitive.}
In this case, we have $A_U \in M(t)$ which is a stable $t$-chain.
As a result, it is a pseudo-chain at the beginning of round $t+1$ and remains a pseudo-chain until the end of round $H$.

\subsubsection{Proof of \Cref{full:claim:not-destroyed-is-stable}}\label{full:sec:proof-not-destroyed-is-stable}

Recall that each $A_U$ can be new, extended, or repetitive according to our construction in \Cref{full:sec:defining-potential-chains}.
Hence, we have the following cases.

\medskip
\noindent
\textbf{$A_U$ Is New.}
In this case, $U$ itself is an entire connected component of $\calF^\init_{t+1}$, and we defined $A_U = (y)$ where $y$ is an arbitrary leaf node of $\calF_{t+1}^\init$ inside $U$.
Property \ref{full:P1:cond:stable-1} of \Cref{full:def:stable-chain} for $A_U$ holds vacuously.
Now, we show Property \ref{full:P1:cond:stable-2} of \Cref{full:def:stable-chain} holds as well.

At the beginning of round $t+1$, all regions inside $U$ are reducible regions, and $y$ belongs to such a region since $A_U$ is qualified.
According to Condition \ref{full:destroy-P3}, since $A_U$ is not destroyed, no touched node will appear in $U$.
Now, consider the end of round $t+1$.\footnote{recall that for each $t \in [1, H]$, the stable $t$-chains are defined w.r.t.~the configuration at the end of round $t$}
We show that the structure of $U$ is settled at the end of round $t+1$.
More precisely, we have the following;
\begin{itemize}
    \item
    All of the $(\calF_{t+1}^\final, B_{t+1}^\final)$-regions are reducible (because we assumed $A_U$ is not destroyed, then no touched node emerges in $U$).

    \item 
    There is no edge between two reducible regions of $\calC_{t+1}^\final$ inside $U$ (see \Cref{full:obs:cond-end-round} for the end of round $t+1$)
    
\end{itemize}
Moreover, $A_U$ is not destroyed, which concludes Property \ref{full:P1:cond:stable-2} of \Cref{full:def:stable-chain} for $A_U$ holds.

\medskip
\noindent
\textbf{$A_U$ Is Extended.}
In this case, we defined $A_U := (w_0, z_1, w_1, \ldots, z_{h}, u^\star, v^\star, y)$ such that 1) $ A_i = (w_0, z_1, w_1, \ldots, z_{h}, w_{h}) \in M(t)$ is a stable $t$-chain (for some $i \in [1, q]$) of length $h \leq t-1$ whose tail is $T_i$, 2) $(u^\star, v^\star)$ is a $T_i$-expanding edge such that $(u^\star, v^\star) \notin E(\calF_{t+1}^\init)$, and 3) $y$ is an arbitrary leaf node of $\calF_{t+1}^\init$ inside $U$.
We have the following.
\begin{itemize}
\item 
Since $A_i$ is a stable $t$-chain, for all $i \in [1, h]$, $z_i$ is not removed from the set of junctions, and the edge $(z_i,y_i)$ connecting $z_i$ to the $i^\th$ block of $A_i$ (as well as the $i^\th$ block of $A_U$) is not removed from the forest after the beginning of round $t$.
Moreover, $A_U$ is not destroyed, which concludes $v^\star$ is not removed from the set of junctions, and the edge connecting $v^\star$ to the last block of $A_U$ (containing $y$) is not removed from the forest after the beginning of round $t+1$ (see Condition \ref{full:destroy-P1} of $A_U$ being destroyed).
As a result, we have Property \ref{full:P1:cond:stable-1} of \Cref{full:def:stable-chain} for $A_U$.

\item
Property \ref{full:P1:cond:stable-2} of \Cref{full:def:stable-chain} for $A_U$ follows with a similar argument as in the previous case above (when $A_U$ is a new chain).
\end{itemize}

\medskip
\noindent
\textbf{$A_U$ Is Repetitive.}
In this case, we have $A_U \in M(t)$ which is a stable $t$-chain.
As a result, it is a stable $(t+1)$-chain as well, according to the definition.

\subsubsection{Proof of \Cref{full:claim:well-defined-AU}}\label{full:sec:proof-well-defined-AU}

First, we show that if $A_U$ is not qualified, there must exist a node in $U$ that is both a leaf node of $\calF_{t+1}^\init$ and a junction in $B_{t+1}^\init$.
Then, we bound the number of such nodes.
Recall that each $A_U$ can be new, extended, or repetitive according to our construction in \Cref{full:sec:defining-potential-chains}.
Hence, we have the following cases.

\medskip
\noindent
\textbf{$A_U$ Is New.}
If $A_U$ is a new chain, we defined $A_U = (y)$ where $y$ is an arbitrary leaf node in $U$. 
All of the Properties of \Cref{full:def:pseudo-chain} for $A_U = (y)$ w.r.t.~$\conf{\init}{t+1}$ become vacuous except that $y$ belongs to a reducible region in $\calC_{t+1}^\init$.
At the beginning of round $t+1$ all $(\calF_{t+1}^\init, B_{t+1}^\init)$-regions are reducible regions.
Hence, if $y$ does not belong to a reducible region, we must have that $y \in B_{t+1}^\init$.
Hence, there is a leaf node in $U$ which is a junction as well.

\medskip
\noindent
\textbf{$A_U$ Is Extended.}
If $A_U$ is an extended chain, we defined $A_U := (w_0, z_1, w_1, \ldots, z_{h}, u^\star, v^\star, y)$ such that 1) $ A_i = (w_0, z_1, w_1, \ldots, z_{h}, w_{h}) \in M(t)$ is a stable $t$-chain (for some $i \in [1, q]$) of length $h \leq t-1$ whose tail is $T_i$, 2) $(u^\star, v^\star)$ is a $T_i$-expanding edge such that $(u^\star, v^\star) \notin E(\calF_{t+1}^\init)$, and 3) $y$ is a leaf node inside $U$.
Since $A_i$ is a stable $t$-chain, it remains a pseudo-chain w.r.t.~$\conf{\init}{t+1}$ according to \Cref{full:obs:remain-pseudo-chain}.

Now, we analyze when $A_U$ refuses to be a pseudo-chain w.r.t.~$\conf{\init}{t+1}$.
Property \ref{full:pseudo-chain-P1} of \Cref{full:def:pseudo-chain} for $A_U$ follows from $A_i$ being a pseudo-chain and $v^\star \in B_t^\out \subseteq B_{t+1}^\init$ (see \Cref{full:claim:B-out-in-B-init}).
Property \ref{full:pseudo-chain-P3} of \Cref{full:def:pseudo-chain} for $A_U$ follows from $A_i$ being a pseudo-chain and $(u^\star, v^\star)$ being an expanding edge.
Property \ref{full:pseudo-chain-P4} of \Cref{full:def:pseudo-chain} for $A_U$ follows from $A_i$ being a pseudo-chain and $(u^\star, v^\star) \notin E(\calF_{t+1}^\init)$ (since $A_U$ is extended).
Property \ref{full:pseudo-chain-P2} of \Cref{full:def:pseudo-chain} for $A_U$ holds except maybe for the last node $y$ (since $A_i$ is a pseudo-chain and $(u^\star, v^\star)$ is an expanding edge).
As a result, if $A_U$ is not a pseudo-chain w.r.t.~$\conf{\init}{t+1}$, we must have that $y$ does not belong to a reducible region in $\calC_{t+1}^\init$.
Similar to the explanation of the case where $A_U$ is a new chain, we must have that $y \in B_{t+1}^\init$.
Hence, there is a leaf node in $U$ which is a junction as well.

\medskip
\noindent
\textbf{$A_U$ Is Repetitive.}
If $A_U$ is a repetitive chain, i.e., $A_U \in M(t)$, according to \Cref{full:obs:remain-pseudo-chain}, $A_U$ is a pseudo-chain.
So, $A_U$ is qualified.

\medskip
\noindent
\textbf{Bounding the Number of Leaf Nodes That Are Junctions.}
It remains to bound the number of $U \in \calS_{t+1}$ such that there exists a leaf node $y$ of $\calF_{t+1}^\init$ inside $U$ that is a junction as well.
Assume $B^\text{leaf}_{t+1}$ is the set of leaf nodes that are junctions as well. 
According to \Cref{full:cor:sum-degree-is-large} for this subset $B' = B^\text{leaf}_{t+1} \subseteq B_{t+1}^\init$ of junctions, we conclude that 
$$ |B^\text{leaf}_{t+1}| = \sum_{x \in B^\text{leaf}_{t+1}} \deg_{\calF_{t+1}^\init}(x) \geq k \cdot |B^\text{leaf}_{t+1}| - f/50, $$
Since elements of $\calS_{t+1}$ are node-disjoint, we conclude that there are at most $f/50$ many $U \in \calS_{t+1}$ that contain a leaf node $y$ that is a junction as well.

\subsubsection{Proof of \Cref{full:claim:levels-are-bounded}}\label{full:sec:proof-levels-are-bounded}

The proof is by induction on $t$.
Let $A_U = (w_0, z_1, \ldots, w_h, z_{h+1}, w_{h+1}) \in M(t+1)$.
Recall from \Cref{full:sec:defining-potential-chains} that each such chain $A_U$ can be new, extended, or repetitive.
If $A_U$ is a new chain, then the claim is vacuously true since the length of $A_U$ is zero (there is no such edge $(w_h,z_{h+1})$ and node $z_{h+1}$).
If $A_U$ is a repetitive chain, then $A_U \in M(t)$ and the claim follows inductively.
For the rest of the proof, we assume that $A_U$ is an extended chain.
we defined $A_U := (w_0, z_1, w_1, \ldots, z_{h}, u^\star, v^\star, y)$ such that 1) $ A_i = (w_0, z_1, w_1, \ldots, z_{h}, w_{h}') \in M(t)$ is a stable $t$-chain (for some $i \in [1, q]$) of length $h \leq t-1$ whose tail is $T_i$, 2) $(u^\star, v^\star)$ is a $T_i$-expanding edge such that $(u^\star, v^\star) \notin E(\calF_{t+1}^\init)$, and 3) $y$ is a leaf node inside $U$.
We have $w_h = u^\star$, $z_{h+1} = v^\star$ and $w_{h+1} = y$.

\begin{observation}\label{full:obs:levels-round-t}
    For any $t \in [1, H]$, throughout the entire round $t$, the level of all nodes and edges is bounded by $t$, and all of the augmenting chains applied in this round have length $\leq t$.
\end{observation}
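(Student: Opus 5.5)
The statement to prove is the observation: during round $t$, every node and edge level stays at most $t$, and every augmenting chain applied in round $t$ has length at most $t$. I would prove the level bound first, by induction on time within the execution, and then read off the chain-length bound from it.

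\emph{Edge levels.} The only place an edge level is incremented is the last line of $\backSearch$ on an edge $(y,x)$ (Algorithm~\ref{full:code:backward-search-edge}). There are two ways this call can happen in round $t$. The first is directly from $\findChain(t,(u,v))$, which calls $\backSearch$ on $(u,v)$ only while $\lvl_{(u,v)}\le t-1$; one increment therefore gives at most $t$. The second is from $\backSearch$ on a junction $x$, which recurses only on backward edges, i.e.\ edges with $\lvl_{(y,x)}\le \lvl_x-1$. Assuming inductively that $\lvl_x\le t$, we get $\lvl_{(y,x)}\le t-1$ before the call, and hence at most $t$ after it.

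\emph{Node levels.} A node level is incremented only when $\backSearch$ on a junction $x$ fails. That call was made from $\backSearch$ on some edge $(y,x')$ with $x$ on the path, under the guard $\lvl_x\le \lvl_{(y,x')}$. The key point is that the edge level is not incremented until after the recursive call returns. So at the moment of the call, $\lvl_x\le \lvl_{(y,x')}\le t-1$: the edge's current level is at most $t-1$ by the same two cases as above. After the increment, $\lvl_x\le t$.

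These statements are mutually recursive. To make the induction well-founded, I would order it by the time at which each increment occurs, or equivalently by the depth of the recursion stack. The hypothesis is that all levels are at most $t$, and that every edge currently under an active edge-call has level at most $t-1$. This last claim is the step needing care. The argument is that an edge-call is entered only when its level is at most $t-1$, and its level does not change until the call exits; a recursive call on the same edge cannot be entered while the outer call is active, since the levels on the stack strictly decrease. The base case is the start of round $t$. At the start of round $1$, node levels are $1$ and edge levels are $0$. At the start of a later round $t$, levels are at most $t-1$ by the claim for round $t-1$, since levels are only incremented inside rounds.

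\emph{Chain length.} Consider a returned chain $(w_0,z_1,\dots,z_h,w_h,z_{h+1})$. Along its construction, the levels along the chain satisfy
\[
\lvl_{(w_{i-1},z_i)}\le \lvl_{z_i}-1 \quad\text{and}\quad \lvl_{z_i}\le \lvl_{(w_i,z_{i+1})}
\]
at the time of the recursive calls. The first inequality is the backward-edge condition and the second is the guard in the edge search. Hence the edge levels strictly decrease as we go back along the chain: $\lvl_{(w_{i-1},z_i)}<\lvl_{(w_i,z_{i+1})}$. The final edge $(u,v)=(w_h,z_{h+1})$ has level at most $t-1$ when $\findChain(t,(u,v))$ calls $\backSearch$ on it, and all edge levels are at least $0$. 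Therefore $h+1\le t$, so the chain has length at most $t$, matching the paper's convention that $(w_0,\dots,z_{h+1})$ has length $h+1$.

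The main obstacle is the timing in the node-level step: one must check that the increment of $\lvl_{(y,x')}$ really happens only after the recursive call has returned, so that the guard comparison uses a value at most $t-1$. This is immediate from the pseudocode.
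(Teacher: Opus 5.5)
Your proof is correct and takes the same approach as the paper. The paper dismisses the observation in one line: the guard $\lvl_{(u,v)} \le t-1$ in \textsc{find-chain} with parameter $t$ is propagated down the recursion by the guards of \textsc{backward-search}. You make that argument explicit: no node or edge can recur on the recursion stack, so the levels of active calls stay fixed, and edge levels strictly decrease backward along a returned chain, which gives the length bound $h+1 \le t$.
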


\begin{proof}
    This observation easily follows from the $\findChain$ procedure and the fact that in round $t$, we only call $\findChain(h,(u,v))$ on parameter $h = t$. 
\end{proof}

\paragraph{Proof of \Cref{full:level-edge-bound} in \Cref{full:claim:levels-are-bounded}.}
According to \Cref{full:obs:levels-round-t} the level of $(w_{h}, z_{h+1})$ is bounded by $t$ at the beginning of round $t+1$.
Now, assume that the level of $(w_{h}, z_{h+1})$ becomes $t$ (or was $t$ from the beginning of round $t$) and we have a call to $\backSearch$ on $(w_h,z_{h+1})$.
Denote by $\conf{}{}$ the configuration maintained by the algorithm at the time of this call.
We show that this $\backSearch$ will be successful.
If $w_{h}$ and $z_{h+1}$ are in different connected components of $\calF$, then the $\backSearch$ successfully returns $(w_{h}, z_{h+1})$ according to the description (see Line \ref{full:line:dif-comps} of \Cref{full:code:backward-search-edge}).
For the rest of the proof, we assume that $w_{h}$ and $z_{h+1}$ are in the same connected component of $\calF$.

\begin{claim}\label{full:claim:zh-ramain-juncion-on-path}
    Throughout the execution of the algorithm after the start of round $t+1$, we have that:
    \begin{itemize}
        \item $z_h \in B$ is a junction, and
        \item we have that $z_{h} \in P^\calF_{w_{h}, z_{h+1}}$.
    \end{itemize}
\end{claim}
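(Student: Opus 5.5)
The first bullet should follow directly from stability of $A_i$. The plan for the second bullet is to show that the tail $T_i$ of $A_i$ stays a subtree of the current forest that is attached to the rest of its component only through the edge $(z_h,y_h)$. Since $u^\star=w_h$ lies in that subtree and $v^\star=z_{h+1}$ lies outside it, the path $P^\calF_{w_h,z_{h+1}}$ must then cross that edge and so pass through $z_h$.

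\textbf{Setup and first bullet.} Throughout I assume $h\ge 1$, since otherwise $z_h$ is undefined. Write $A_i=(w_0,z_1,\ldots,z_h,w'_h)\in M(t)$, so that its tail is $T_i=T^{\calF_t^\final}_{w'_h\leftarrow z_h}$ with root $z_h$ and pivot $y_h$. By Property \ref{full:P1:cond:stable-1} of \Cref{full:def:stable-chain}, $z_h$ is never removed from $B$ after the end of round $t$, which is exactly the first bullet. The same property says that the edge $(z_h,y_h)$ is never removed from $\calF$.

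\textbf{Second bullet at the start of round $t+1$.} The forest does not change between the end of round $t$ and the start of round $t+1$, so $T_i=T^{\calF^\init_{t+1}}_{w'_h\leftarrow z_h}$. By the definition of a $T_i$-expanding edge, $u^\star$ lies in a reducible region contained in $T_i$ and $v^\star\notin T_i$. If $v^\star$ is in the same component as $u^\star$, the $u^\star$--$v^\star$ path leaves $T_i$, and the only forest edge with exactly one endpoint in $T_i$ is $(z_h,y_h)$. Hence the path uses this edge and contains $z_h$. If $v^\star$ is in a different component, there is nothing to prove, because \Cref{full:claim:levels-are-bounded} only needs the path property when $\backSearch$ reaches the while loop, i.e., when $w_h$ and $z_{h+1}$ share a component.

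\textbf{Invariance.} The core step is an induction over the operations applied after the start of round $t+1$. The invariant is that the edges of $\calF$ with exactly one endpoint in $V(T_i)$ are exactly $\{(z_h,y_h)\}$, and that $\calF[V(T_i)]$ stays connected. Merging and touching edges do not change $\calF$, so only applications of augmenting chains (\Cref{full:lem:apply-chain}) matter.
\begin{itemize}
\item Degree reductions act inside regions $C_{w_j}$ and $C_{z_{j+1}}$, and those regions become touched. By Property \ref{full:P1:cond:stable-2} of stability, no region inside $T_i$ is ever touched, so no degree reduction edits edges inside $T_i$. Regions outside $T_i$ are node-disjoint from it, so they cannot edit edges incident to $T_i$ either.
\item An inserted edge $(w_j,z_{j+1})$ with $w_j\in T_i$ would make $C_{w_j}\subseteq T_i$ touched, contradicting stability. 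The same holds for $(w_h,z_{h+1})$ when $z_{h+1}$ lies in a reducible region inside $T_i$.
\item The remaining case is an inserted edge whose endpoint inside $T_i$ is a junction $z_{j+1}\in V(T_i)$, or a removed edge $(z_j,y_j)$ with $z_j\in T_i$.
\end{itemize}

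\textbf{Main obstacle.} That last case is where I expect the difficulty. I would first use that the chain is augmenting: $z_j$ separates $w_j$ from $z_{j+1}$, and $w_j$ lies in a reducible region that is not inside $T_i$. From this I would try to show that the swap $-(z_j,y_j)+(w_j,z_{j+1})$ either keeps $V(T_i)$ on one side intact with its unique exit edge unchanged, or else removes $(z_h,y_h)$ or modifies the junction and region structure inside $T_i$. Each of the latter outcomes is forbidden by stability. If this direct route fails, I would fall back on the fact that the degree-reduction steps of such a chain would touch regions in the block. I would also lean on the convention in \Cref{full:claim:v-star-in-B} that insertion of an edge crossing the tail of a stable chain never happens.
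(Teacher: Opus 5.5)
Your outline matches the paper's. The first bullet is Property 1 of \Cref{full:def:stable-chain}. The second holds because $(z_h,y_h)$ is never removed and no edge crossing $T_i$ is ever inserted, so any $w_h$--$z_{h+1}$ path must leave $V(T_i)$ through $(y_h,z_h)$.

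The paper simply asserts the "no crossing edge" fact from the stability of $A_i$. What actually guarantees it is that $A_i\in M(t)$ was never destroyed (Condition 2 in the definition of ``affects'', as invoked in the proof of \Cref{full:claim:v-star-in-B}). That is part of how $M(t)$ is constructed, not a convention.

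You instead try to derive the fact by induction from the literal Properties 1--2 of \Cref{full:def:stable-chain}. That is a legitimate and more self-contained route. However, you stop at the one case that carries all the content, so as written the second bullet is unproved. Your primary idea there looks backward: $z_j$ separating $w_j\notin V(T_i)$ from $z_{j+1}$ gives no contradiction, since $z_j$ may lie outside $T_i$. Your fallback ("the chain would have to touch a region of $T_i$") is the right conclusion, but it is exactly what must be shown.

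The missing step is to follow the applied chain $A'=(w'_0,z'_1,w'_1,\ldots,z'_r,w'_r,z'_{r+1})$ forward. Assume your invariant holds before $A'$ is applied.
\begin{itemize}
\item No $w'_l$ lies in $V(T_i)$, and neither does $z'_{r+1}$. Otherwise a region of $T_i$ would be removed from $\calC$, or $T_i$ would contain a touched node.
\item Now suppose some junction $z'_l$ lies in $V(T_i)$. Since $w'_l\notin V(T_i)$ and the only forest edge leaving $V(T_i)$ is $(y_h,z_h)$, the component of $\calF-z'_l$ containing $w'_l$ also contains $z_h$. Every other component of $\calF-z'_l$ is contained in $V(T_i)$.
\item Property 5 of \Cref{full:def:aug-chain}, together with distinctness of the chain's nodes (\Cref{full:claim:property-minimal-chain}), places $z'_{l+1}$ in a component of $\calF-z'_l$ other than the one containing $w'_l$. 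Hence $z'_{l+1}\in V(T_i)$.
\item Iterating gives $z'_{r+1}\in V(T_i)$, a contradiction.
\end{itemize}
So $A'$ has no node in $V(T_i)$. Every inserted edge $(w'_l,z'_{l+1})$ then avoids $V(T_i)$. A removed edge $(z'_l,y'_l)$ meeting $V(T_i)$ would have to be $(z_h,y_h)$, which stability Property 1 forbids. All degree reductions act inside regions disjoint from $T_i$.

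This preserves your invariant, including connectivity of $\calF[V(T_i)]$, which the final cut argument does not even need. Alternatively, citing the non-destroyed condition for $A_i$ directly gives the paper's two-line proof.
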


\begin{proof}
    This follows from the definition of $A_i \in M(t)$ being a stable $t$-chain.
    The tail of $A_i$ is $T_i$, and the root of $T_i$ is $z_h$.
    Since $A_i$ is stable, $z_h$ remains a junction.
    Moreover, no edge crossing $T_i$ is inserted into the forest, and the edge $(z_h, y_h)$ connecting $z_h$ to $T_i $ is not removed from the forest.
    Hence, $z_{h} \in P^\calF_{w_{h}, z_{h+1}}$.
\end{proof}

Now, consider the procedure of $\backSearch$ on $(w_{h},z_{h+1})$, where the level of $(w_{h},z_{h+1})$ equals $t$.
According to \Cref{full:claim:zh-ramain-juncion-on-path}, we have that $z_h$ is a junction and $z_h \in P^\calF_{w_h,z_{h+1}}$.
Moreover, according to the induction for $A_i \in M(t)$, we have that the level of $z_h$ never exceeds $t \leq \lvl_{(w_h,z_{h+1})}$ after the start of round $t$ of the algorithm.
As a result, as long as the level of $z_h$ remains at most $t$, the $\backSearch$ on $(w_{h},z_{h+1})$ has the option to call a recursive $\backSearch$ on $z_h$ (see Line \ref{full:line:condition-back-on-node} of \Cref{full:code:backward-search-edge}).
Note that according to the induction hypothesis for $A_i \in M(t)$, if the level of $z_h$ becomes $t$ after the start of round $t$, then any call to the $\backSearch$ on $z_h$ will be successful.
As a result, the $\backSearch$ on $(w_{h},z_{h+1})$, where the level of $(w_{h},z_{h+1})$ equals $t$ can not fail.
Note that it is very well possible for $\backSearch$ on $(w_{h},z_{h+1})$ to be successful by recursive calls to $\backSearch$ on other junctions, but the main point is that it can not miss $z_h$.

\paragraph{Proof of \Cref{full:level-node-bound} in \Cref{full:claim:levels-are-bounded}.}
According to \Cref{full:obs:levels-round-t}, the level of $z_{h+1}$ is bounded by $t$ at the beginning of round $t+1$.
Now, assume that the level of $z_{h+1}$ becomes $t+1$ and we have a call to $\backSearch$ on $z_{h+1}$.
Denote by $\conf{}{}$ the configuration maintained by the algorithm at the time of this call.
We show that this $\backSearch$ will be successful.

\begin{claim}\label{full:claim:whzh-remain-non-forest}
    Throughout the execution of the algorithm after the start of round $t+1$, we have that:
    \begin{itemize}
        \item $(w_{h},z_{h+1}) \notin E(\calF)$,
        \item $w_{h}$ is inside a reducible region of $\calC$.
    \end{itemize}
\end{claim}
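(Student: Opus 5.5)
The plan is to derive both bullets from the two facts we have about $A_U$: the prefix $A_i$ is stable, and $A_U$ itself is never destroyed (the definition of $M(t+1)$). No new structure is needed; the point is to track which operations could break each bullet and rule them out.

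For the first bullet, $(w_h,z_{h+1})=(u^\star,v^\star)$ is not in $\calF^\init_{t+1}$, because $A_U$ is extended. I will argue it is never inserted afterward. Every insertion of a forest edge happens in one of two ways.
\begin{enumerate}
\item Degree reduction (\Cref{full:lem:degree-reduction}) inside some region, which only inserts edges with both endpoints in one region.
\item The chain edges $(w_j,z_{j+1})$ when an augmenting chain is applied.
\end{enumerate}
Now $u^\star$ lies in $T_i$ and $v^\star\notin T_i$, so $(u^\star,v^\star)$ crosses $T_i$. Stability of $A_i$ (\Cref{full:def:stable-chain}) gives two things. The regions of $T_i$ stay reducible forever, so no degree reduction runs inside them and no crossing edge enters through case 1. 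The block structure of $T_i$ is also frozen: its root edge $(z_h,y_h)$ is never removed and no edge crossing $T_i$ is inserted. For case 2 I reuse the property already invoked in the proof of \Cref{full:claim:v-star-in-B}: an insertion of an edge crossing $T_i$ would require touching a region of $T_i$ or changing its root edge, and both are forbidden by stability. Separately, $v^\star$ is the root of $U$, so $(u^\star,v^\star)$ is incident to $U$'s boundary. Hence its insertion would also affect $A_U$ via condition~\ref{full:destroy-P1} or \ref{full:destroy-P2}, which gives a second route if the first is incomplete.

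For the second bullet, I first note that $u^\star$ lies in a reducible region $C\in\calC^\final_t$ inside $T_i$, by the definition of an expanding edge. Stability of $A_i$ (Property~\ref{full:P1:cond:stable-2}) says the regions inside the tail never leave $\calC$ after round $t$. At the transition to round $t+1$, $\calC^\init_{t+1}$ is reset to all $(\calF^\init_{t+1},B^\init_{t+1})$-regions, and I need $C$ to survive this reset unchanged. Since $T_i$ contains no nodes of $D^\final_t$ and its junctions are fixed, the regions there coincide, and the stability clause covers every later time anyway. Merging edges only replace regions by larger reducible ones. Property~\ref{full:P1:cond:stable-2} rules these out inside the tail as well, and if a merge did occur the merged region would still be in $\calC$. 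Either way $w_h=u^\star$ stays in a reducible region of $\calC$.

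The main obstacle is the first bullet: ruling out insertion of the specific crossing edge $(u^\star,v^\star)$ as a chain edge $(w_j,z_{j+1})$ during some applied augmenting chain. My first attempt is to argue that such an application would make $u^\star$ some $w_j$. Its region would then be removed from $\calC$ by \Cref{full:lem:apply-chain}, Property~\ref{full:lem:apply-chain-P4}, which contradicts the stability of $A_i$. This argument seems sufficient by itself.
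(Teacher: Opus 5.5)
Your proof is correct and takes the same route as the paper: both bullets follow from the stability of $A_i$, which freezes the tail $T_i$. The regions of $T_i$, including the one containing $u^\star=w_h$, never leave $\calC$, so no insertion can ever bring in the crossing edge $(u^\star,v^\star)$. Two small tidy-ups: when excluding $(u^\star,v^\star)$ as a chain edge $(w_j,z_{j+1})$, also handle the orientation $u^\star=z_{j+1}$ (impossible for $j<h_A$ since $u^\star\notin B$, and for $j=h_A$ because $C_{z_{h_A+1}}$ would be removed from $\calC$); and drop the backup route through $A_U$, since $(u^\star,v^\star)$ need not cross $U$.
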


\begin{proof}
    This follows from the definition of $A_i \in M(t)$ being a stable $t$-chain.
    The tail of $A_i$ is $T_i$, whose structure does not change after the beginning of round $t$.
    In other words, no edge crossing $T_i$ is inserted into the forest, especially the edge $(w_{h}, z_{h+1})$, and the reducible region containing $w_{h}$ (inside $T_i$) always remains a reducible region.
\end{proof}

Now, consider the procedure of $\backSearch$ on $z_{h+1}$, where the level of $z_{h+1}$ equals $t+1$.
According to \Cref{full:level-edge-bound} in \Cref{full:claim:levels-are-bounded} (that we just proved above), the level of $(w_{h},z_{h+1})$ is bounded by $t$ at the beginning of the call to $\backSearch$ on $z_{h+1}$.
According to \Cref{full:claim:whzh-remain-non-forest}, as long as the the level of $(w_{h},z_{h+1})$ is at most $t < \lvl_{z_{h+1}}$, the $\backSearch$ on $z_{h+1}$ has the option to call a $\backSearch$ on $(w_{h},z_{h+1})$ (see Line \ref{full:line:conditions-back-on-edge} of \Cref{full:code:backward-search-node}).
Moreover, if the level of $(w_h,z_{h+1})$ becomes $t$, then the $\backSearch$ on $(w_h,z_{h+1})$ must be successful according to \Cref{full:level-edge-bound} in \Cref{full:claim:levels-are-bounded} (that we just proved above).
As a result, the $\backSearch$ on $z_{h+1}$ where $\lvl_{z_{h+1}} = t+1$ can not fail.
Note that it is very well possible for $\backSearch$ on $z_{h+1}$ to be successful by recursive calls to $\backSearch$ on other edges as well, but the main point is that it can not miss $(w_h,z_{h+1})$.

\subsubsection{Proof of \Cref{full:claim:destroying-is-improving}}\label{full:sec:proof-destroying-is-improving}

Assume that $U \in \calS_{t+1}$ such that $A_U$ is qualified and is destroyed by $(u,v)$.
Recall that each $A_U$ can be new, extended, or repetitive according to our construction in \Cref{full:sec:defining-potential-chains}.
Hence, we have the following cases.

\medskip
\noindent
\textbf{$A_U$ Is New.}
In this case, $U$ itself is an entire connected component of $\calF^\init_{t+1}$, and we defined $A_U = (y)$ where $y$ is an arbitrary leaf node of $\calF_{t+1}^\init$ inside  $U$.
At the beginning of round $t+1$, all regions inside $U$ are reducible regions.
Now, consider the edge $(u,v)$ that destroys $A_U$.
Applying a merging or touching edge does not change the edges of the graph.
We conclude that if $(u,v)$ is merging or touching and it destroys $A_U$, both of the endpoints $u$ and $v$ must be inside $U$.
But before applying $(u,v)$, all regions in $U$ were still reducible, and applying $(u,v)$ does not create any touched regions in $U$.
This is in contradiction with $A_U$ being destroyed by $(u,v)$.
Hence, $(u,v)$ must be an improving edge.

\medskip
\noindent
\textbf{$A_U$ Is Extended.}
In this case, we defined $A_U := (w_0, z_1, w_1, \ldots, z_{h}, u^\star, v^\star, y)$ such that 1) $ A_i = (w_0, z_1, w_1, \ldots, z_{h}, w_{h}) \in M(t)$ is a stable $t$-chain (for some $i \in [1, q]$) of length $h \leq t-1$ whose tail is $T_i$, 2) $(u^\star, v^\star)$ is a $T_i$-expanding edge such that $(u^\star,v^\star) \notin E(\calF_{t+1}^\init)$, and 3) $y$ is an arbitrary leaf node of $\calF_{t+1}^\init$ inside $U$.
We consider the following two cases: $v^\star \notin P^\calF_{u,v}$ and $v^\star \in P^\calF_{u,v}$.

\medskip
\noindent
\textbf{Case I: $v^\star \notin P^\calF_{u,v}$.}
Assume that $(u,v)$ is either merging or touching.
Since $v^\star \notin P^\calF_{u,v}$, it is obvious that after applying $(u,v)$, $v^\star$ remains a junction.
Moreover, applying $(u,v)$ does not change the edges of the forest since it is merging or touching.
As a result, the only reason that $(u,v)$ destroys $A_U$ is that $(u,v)$ is a touching edge and applying $(u,v)$ creates a touched region in $U$.
Since $v^\star \notin P^\calF_{u,v}$, we have the two sub-cases below;

\begin{itemize}
\item 
If both $u$ and $v$ are in $U$, it is obvious that $(u,v)$ is merging since $A_U$ is not destroyed before applying $(u,v)$ and all of the regions inside $U$ are reducible.

\item
If both $u$ and $v$ are outside of $U$, then obviously applying $(u,v)$ can not create any touched regions inside $U$.
\end{itemize}
In both sub-cases above, $(u,v)$ can not destroy $A_U$, which is in contradiction with the assumption of \Cref{full:claim:destroying-is-improving}.
Hence, $(u,v)$ can not be merging or touching.

\medskip
\noindent
\textbf{Case II: $v^\star \in P^\calF_{u,v}$.}
In this case, we show that while considering the effective edge $(u,v)$, the call to the main subroutine $\findChain$ on this edge would successfully find an augmenting chain.
This means $(u,v)$ is an improving edge according to the definition.
Recall the procedure of $\findChain$ on $(u,v)$ at some round $r \geq t+1$.
For simplicity, we denote the configuration maintained by the algorithm at the start of this call by $\conf{}{}$.
In $\findChain$, as long as there exists a junction $x \in B \cap P^\calF_{u,v}$ satisfying $\lvl_x \leq r-1$, the algorithm calls a $\backSearch$ on $x$.

According to \Cref{full:claim:levels-are-bounded} for $A_U \in M(t+1)$, we have that the level of $v^\star$ never exceeds $t$ and every call to the $\backSearch$ on $ v^\star$ is successful.
As a result, $\findChain(r,(u,v))$ can never fail since it has the option to call the $\backSearch$ on $x = v^\star$.
This satisfies all the condition $v^\star \in B \cap P^\calF_{u,v}$, $\lvl_{v^\star} \leq t \leq r-1$.

Note that it is very well possible that $\findChain(r,(u,v))$ does not call a $\backSearch$ on $v^\star$ and becomes successful by calling another $\backSearch$ on another node.
But $v^\star$ always remains a valid option and can not be missed.

\medskip
\noindent
\textbf{$A_U$ Is Repetitive.}
In this case, we have that $A_U \in M(t)$.
But according to the definition of $M(t)$, we have that $A_U$ is a stable $t$-chain and can not be destroyed.
This is in contradiction with the assumption that $A_U$ is destroyed by $(u,v)$.

\subsubsection{Proof of \Cref{full:claim:number-of-destroys}}\label{full:sec:proof-number-of-destroys}

Here, we provide a loose analysis since the constant behind the number of destroyed chains (i.e., $O(H)$) is not important.
Many of the cases in this proof actually overlap, and the final number of destroyed chains is smaller.
But we do not need to provide such deliberate analysis.

Assume $(u,v)$ is an arbitrary improving edge and the call to $\findChain$ on $(u,v)$ finds the augmenting chain $A = (w_0,z_1,w_1,\ldots,w_{h_A},z_{h_A+1})$ at some time after the beginning of round $t+1$.
Let $U \in \calS_{t+1}$ be such that $A_U$ is qualified and is destroyed by $(u,v)$.
Recall that each $A_U$ can be new, extended, or repetitive according to our construction in \Cref{full:sec:defining-potential-chains}.
Hence, we have the following cases.

\medskip
\noindent
\textbf{$A_U$ Is New.}
In this case, $U$ itself is an entire connected component of $\calF^\init_{t+1}$, and we defined $A_U = (y)$ where $y$ is an arbitrary leaf node of $\calF^\init_{t+1}$ inside  $U$.
Since at the beginning of round $t+1$, all of the $(\calF_{t+1}^\init, B_{t+1}^\init)$-regions are reducible, it is obvious that if $A_U$ is destroyed by $(u,v)$, then a touched region must emerge in $U$.
According to \Cref{full:lem:apply-chain}, the number of regions that can become touched by applying $(u,v)$, i.e., applying $A$, is bounded by $h_A + 2 \leq H+2$.
As a result, $(u,v)$ can destroy at most $H+2$ many such new chains $A_U$ since the elements $U \in \calS_{t+1}$ are mutually node disjoint and remain separated as long as they are not destroyed.

\medskip
\noindent
\textbf{$A_U$ Is Extended.}
In this case, we defined $A_U := (w_0, z_1, w_1, \ldots, z_{h}, u^\star, v^\star, y)$ such that 1) $ A_i = (w_0, z_1, w_1, \ldots, z_{h}, w_{h}) \in M(t)$ is a stable $t$-chain (for some $i \in [1, q]$) of length $h \leq t-1$ whose tail is $T_i$, 2) $(u^\star, v^\star)$ is a $T_i$-expanding edge such that $(u^\star,v^\star) \notin = E(\calF_{t+1}^\init)$, and 3) $y$ is an arbitrary leaf node of $\calF_{t+1}^\init$ inside $U$.

Since $(u,v)$ is an improving edge, after applying $A$, the set $B$ of junctions is not changed. Hence, $v^\star$ can not be removed from $B$.
As a result, by the assumption that $(u,v)$ destroys $A_U$, at least one of the following must happen;

\begin{itemize}
    \item The edge $(v^\star, y^\star)$ connecting $v^\star$ to $U$ (the last block of $A_U$) is removed from the forest.
    Note that this is a critical edge of $A$ since $v^\star \in B$.
    According to \Cref{full:lem:apply-chain}, there are at most $2h_A+1 \leq 2H+1$ such edges while applying $A$.
    As a result, there are at most $2H+1$ many different $U \in \calS_{t+1}$ such that the edge $(v^\star, y^\star)$ (which is unique for each $U$) is removed because of applying $A$.

    \item An edge crossing $U$ is inserted to the forest.
    It is obvious that this edge is either the last edge $(w_{h_A}, z_{h_A+1})$ of $A$ or a critical edge of $A$.
    Similarly, the number of such edges is bounded by $2H+2$ according to \Cref{full:lem:apply-chain}.
    Each such edge has two endpoints, which concludes there are at most $2 \cdot (2H+2)$ many different $U \in \calS_{t+1}$ that can be destroyed while applying $A$ (since these elements $U \in \calS_{t+1}$ are node-disjoint).

    \item Some touched region inside $U$ has emerged.
    According to \Cref{full:lem:apply-chain}, there are at most $h_A+2 \leq H+2$ many regions that can be removed from $\calC$ because of applying $A$.
    As a result, the number of such destroyed $A_U$ in this case is bounded by $H+2$.
\end{itemize}

\medskip
\noindent
\textbf{$A_U$ Is Repetitive.}
In this case, $A_U \in M(t)$.
But according to our inductive construction of $M(t)$, we have that $A_U$ is a stable $t$-chain and is not destroyed by any effective edge after the beginning of round $t$.
This means that $(u,v)$ can not destroy any repetitive $A_U$.

\medskip
\noindent
\textbf{Final Bound.}
By summing up the number of destroyed $A_U$ for each type of chains as above, we conclude that the total number of pseudo chains $A_U$ that can be destroyed by $(u,v)$ is at most $(H+2) + (2H+1) + 2\cdot(2H+2) + (H+2) \leq 20H$.

\section{Running Time Analysis: Implementation Details}\label{full:sec:time-analysis}

\paragraph{Roadmap.}
We start by describing the main data structures in \Cref{full:sec:data-strucutre}.
Then, we provide the implementation of round $t$ of our algorithm for a fixed $t \in [1, H]$ in \Cref{full:sec:implement-round-t}.
Next, we provide the implementation of $\findChain$ and $\backSearch$ in \Cref{full:sec:implement-main-subroutine}.
Next, we describe the implementation of how to apply an augmenting chain in
\Cref{full:sec:implement-apply-chain}.
Finally, we provide the running time analysis of our implementation in \Cref{full:sec:running-time,full:sec:update-time}.

\subsection{Main Data Structures}\label{full:sec:data-strucutre}

Throughout our algorithm, we use the following data structures.

\begin{lemma}\label{full:lem:forest-query-path}
    Assume $G$ is a fixed graph, $\calF$ is a dynamic forest of $G$, and $U \subseteq V(G)$ is a dynamic set, where each $u \in U$ has a value $\lambda_u$. 
    There exists a dynamic data structure on $(\calF, U, \lambda)$ that supports the following operations, each in $\tilde{O}(1)$ time;
    \begin{enumerate}
        \item Update I: Insert/Delete an edge in $\calF$, while maintaining that $\calF$ remains a forest.
        
        \item Update II: Delete a vertex $u \in U$ from $U$, or insert a vertex $u \in V(G)-U$ with an arbitrary value into $U$.

        \item Query: Given two nodes $u,v \in V(G)$ that belong to the same component of $\calF$, return a node $x$ in $V(P^\calF_{u,v}) \cap U$ with smallest $\lambda_x$ possible ($x$ is arbitrary among all nodes in $V(P^\calF_{u,v}) \cap U$ of smallest value).
    \end{enumerate}
\end{lemma}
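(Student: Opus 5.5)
We realize the data structure with a top tree (or link-cut tree) over $\calF$ \cite{TopTree}, augmented with a path-aggregate of minimum weight. Each vertex $x \in V(G)$ gets a key $\kappa_x := (\lambda_x, \mathrm{id}(x))$ if $x \in U$, and $\kappa_x := (+\infty, \mathrm{id}(x))$ otherwise, ordered lexicographically. Each cluster of the top tree stores the minimum key over the vertices of its cluster path. This is a standard associative aggregate: for a path cluster it is the minimum of the children's stored values together with the keys of the boundary/internal vertices lying on the path. It can therefore be recomputed in $O(1)$ time from the children when a cluster is created by a join.

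The operations then reduce to top-tree primitives, each costing $O(\log n)$ cluster joins and splits.
\begin{enumerate}
\item \textbf{Update I} is a link or cut in the top tree. The forest precondition is guaranteed by the caller.
\item \textbf{Update II} changes a single vertex key. We expose the vertex (making it a boundary vertex of the root cluster, or simply splitting the clusters containing it down to the leaves), change $\kappa_x$, and rejoin. The aggregates are recomputed on the $O(\log n)$ affected clusters.
\item \textbf{Query} on $u,v$ in the same component: call $\mathrm{expose}(u,v)$. The root cluster's path is then exactly $P^{\calF}_{u,v}$, and its stored minimum key $\kappa$ gives the answer. If $\kappa$ is $+\infty$, then $V(P^{\calF}_{u,v}) \cap U = \emptyset$ and we report null. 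Otherwise we return the vertex achieving $\kappa$; to recover the vertex itself, the aggregate stores (key, vertex) pairs. Ties are broken by id, which is allowed since the statement permits an arbitrary minimizer.
\end{enumerate}

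Correctness follows from two facts. First, the aggregate of a path cluster equals the minimum over the vertices of its cluster path. Second, after $\mathrm{expose}(u,v)$ the root cluster path is $P^{\calF}_{u,v}$. Both are standard invariants of top trees. Since each operation uses $O(\log n)$ joins and splits, each with $O(1)$ aggregate work, every operation runs in $\tilde{O}(1)$ time.

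The only delicate point is making vertex weights, rather than edge weights, compose correctly in the aggregate. The boundary vertices of a cluster are shared between sibling clusters, so a careless count would double-include them. Since we aggregate with min, double counting is harmless, because min is idempotent. It therefore suffices to include in each base (single-edge) cluster the keys of both endpoints. A simpler alternative is to use link-cut trees with node values and path-min aggregation via $\mathrm{evert}(u)$ and $\mathrm{access}(v)$, which gives the same bounds with amortized $O(\log n)$ per operation.
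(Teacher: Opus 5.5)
Your route is the paper's route: the paper's entire proof is the remark that the lemma follows from top trees \cite{TopTree}, and you fill in the path-minimum aggregate over vertex keys. Your query argument is fine. However, the way you encode vertex weights breaks the $\tilde{O}(1)$ bound for Update~II.

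\textbf{The problem with Update~II.} If every base (edge) cluster stores the keys of both its endpoints, then $\kappa_x$ is copied into all $\deg_{\calF}(x)$ base clusters incident to $x$, and into all of their ancestors. Exposing $x$ does not separate these stored aggregates from $\kappa_x$. So after changing $\kappa_x$ you must recompute at least $\deg_{\calF}(x)$ clusters, not $O(\log n)$; your ``split the clusters containing it down to the leaves'' makes this explicit. The idempotence of $\min$ makes the double counting harmless for correctness, but it does nothing for the update cost.

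This is not a corner case. In the application, $U=B$ consists of junctions, whose forest degree is (essentially) $k$, and $k$ can be polynomially large. In the paper's specific usage the extra cost would still sum to $\tilde{O}(mH)$, because each key changes $O(H)$ times and $\deg_{\calF}(x)\le \deg_G(x)$. That observation does not prove the lemma as stated, though.

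\textbf{Standard top-tree fix.}
\begin{enumerate}
\item Let each path cluster store the minimum key over the \emph{internal} vertices of its cluster path only. A compress at $c$ contributes $\kappa_c$; a rake leaves the path aggregate unchanged.
\item Answer a query after $\mathrm{expose}(u,v)$ as the minimum of the root aggregate, $\kappa_u$ and $\kappa_v$. Treat $u=v$ separately.
\item Update $\kappa_x$ by first exposing $x$. An exposed vertex is a boundary vertex of every cluster containing it, so no stored aggregate depends on $\kappa_x$, and the change costs $O(\log n)$.
\end{enumerate}

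\textbf{Your link-cut fallback.} This alternative is correct as stated, because each node value lives in exactly one splay-tree node. Splay-based link-cut trees give amortized $O(\log n)$ per operation, which suffices for the paper's total-time accounting.
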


\begin{proof}
    This lemma follows directly from the Top Tree data structure \cite{TopTree}.
\end{proof}

\begin{lemma}\label{full:lem:forest-query-edge}
    Assume $G$ is a fixed graph, and $\calF$ is a dynamic forest of $G$.
    There exists a dynamic data structure that supports the following operations, each in $\tilde{O}(1)$ time;
    \begin{enumerate}
        \item Update: Insert/Delete an edge in $\calF$, while maintaining that $\calF$ remains a forest.

        \item Query: Given two nodes $u \in V(G)$ and $v \in V(G)$, return the first edge $(u,y)$ in the path $P^\calF_{u,v}$ (if $u$ and $v$ are connected in $\calF$).
    \end{enumerate}
\end{lemma}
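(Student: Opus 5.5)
The plan is to reduce the query to a path-minimum query of the kind in \Cref{full:lem:forest-query-path}, or to handle it directly with top trees (or link-cut trees) \cite{TopTree}. The direct route is cleanest. We maintain $\calF$ in a link-cut tree, and we also maintain a connectivity structure on $\calF$, for instance by storing a root identifier through the same top tree. Then each insertion or deletion of a forest edge is a single link or cut, costing $\tilde{O}(1)$ amortized. If worst-case bounds are needed, a worst-case top tree gives the same guarantee.

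For a query $(u,v)$, first test whether $u$ and $v$ are connected, by comparing the roots of their trees; if they are not, return that there is no path. Otherwise, make $u$ the root of its tree with an evert (re-root) operation, and then access $v$. After this, the preferred path from the root $u$ to $v$ is exactly $P^\calF_{u,v}$, and it is stored in one auxiliary splay tree keyed by depth. The first edge $(u,y)$ is found by taking $y$ to be the node of depth $1$ on this path. Equivalently, $y$ is the successor of $u$ in the in-order traversal of the auxiliary tree, which is found by splaying $u$ and walking to the leftmost node of its right subtree. Its amortized cost is $O(\log n)$, paid for by splaying $y$ afterwards.

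An alternative plan reuses \Cref{full:lem:forest-query-path} directly. Query the path minimum with $\lambda_x$ equal to the distance from $u$, restricted to $x\neq u$. Since $\lambda$ would depend on $u$, this does not fit the static-value interface. So I would instead rely on the standard top-tree "jump" or "$k$-th vertex on path" operation, which returns the vertex at distance $1$ from $u$ along $P^\calF_{u,v}$ in $O(\log n)$ time. The only point needing care is that re-rooting inside a query does not change the forest itself, only its representation. Every operation therefore costs $\tilde{O}(1)$, and none of the steps is a serious obstacle; the lemma is a direct consequence of \cite{TopTree}.
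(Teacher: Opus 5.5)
Your proposal is correct and follows essentially the same route as the paper, whose entire proof is that the lemma follows from the top tree data structure of \cite{TopTree}. Your evert-and-access link-cut-tree implementation, which returns $y$ as the in-order successor of $u$ on the exposed path, is a valid concrete instantiation of that one-line citation, and your top-tree jump fallback corresponds to the level-ancestor queries with respect to arbitrary roots that \cite{TopTree} supports; amortized $O(\log n)$ bounds suffice here.
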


\begin{proof}
    This lemma follows directly from the Top Tree data structure \cite{TopTree}.
\end{proof}

\subsection{Implementing Round $t$}\label{full:sec:implement-round-t}

\subsubsection{Ordering of the Edges}\label{full:sec:order-edges}
We provide an ordering of the edges of the graph in which we scan those edges $(u, v)$ and search for augmenting chains.
This ordering is used in the main body of round $t$.

We consider a first-in-first-out (FIFO) queue $Q$.
The goal is that if an edge becomes effective for the first time, we should immediately add it to $Q$.
We consider edges $(u,v)$ as \emph{ordered edges} while adding them to $Q$, which means that $(u,v)$ can be inserted into $Q$, and $(v,u)$ can be inserted into $Q$ as well in the future.
We show how to update $Q$ when the configuration changes.
Assume an effective edge $(u,v)$ is applied to the configuration.
We have the following cases for the type of $(u,v)$, and explain how to update $Q$ in each case.

\medskip
\noindent
\textbf{Improving Edge.}
If $(u,v)$ is an improving edge, it is straightforward to see that no edge of the graph can become effective after applying an augmenting chain.
So, we do not need to update $Q$.

\medskip
\noindent
\textbf{Merging Edge.}
When $(u,v)$ is applied, a new region $C$ is added into $\calC$ and no touched node in emerged.
Assume an edge $(x, y)$ was not effective and becomes effective after applying $(u,v)$.
This can happen only if at least one of $x$ and $y$ was not part of a reducible region, and now it becomes part of the new reducible region $C$.
Hence, we only need to update $Q$ as follows.
For each $z \in B$ that was removed from the set of junctions $B$ and became part of the new reducible region $C$, i.e., $z \in B \cap V(P^\calF_{u,v})$, we consider all the edges incident on $z$, and add them into $Q$.

\medskip
\noindent
\textbf{Touching Edge.}
After applying $(u,v)$, some of the nodes become touched.
Similar to the previous case, it is straightforward to see that in this case, no new effective edge emerges since all the junctions on $B \cap V(P^\calF_{u,v})$ are now put into $D$.

\begin{observation}\label{full:obs:guarantee-Q}
    Throughout round $t$, if a node becomes effective for the first time, it will surely be inserted into $Q$.
\end{observation}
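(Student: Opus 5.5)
The plan is to argue by induction over the sequence of operations performed in round $t$. The invariant I would maintain is: every ordered edge that is effective w.r.t.\ the current configuration $(\calF_t,B_t,\calC_t,D_t)$ has, at some point during round $t$, been inserted into $Q$. For the base case I would make the initialization of round $t$ explicit. After computing $\calC_t^\init$, the implementation scans all $O(m)$ ordered pairs $(u,v)$ once and pushes every pair that is effective w.r.t.\ $(\calF^\init_t,B^\init_t,\calC^\init_t,\emptyset)$. Testing effectiveness takes $\tilde O(1)$ per pair: a region label per node, a membership bit for $\calC$, $B$ and $D$, and a forest-edge lookup suffice.

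For the inductive step, fix an ordered pair $(x,y)$ that is not effective before some operation and is effective after it. I would use the four conditions of \Cref{full:def:effective-edge}:
\begin{enumerate}
\item $x$ lies in a reducible region;
\item $y$ lies in a different region;
\item $(x,y)\notin E(\calF)$;
\item $y$ is reducible, or $y$ is touched with $y\notin D$.
\end{enumerate}
The goal in each case is to show that either $(x,y)$ was already effective before the operation (so it is already in $Q$ by induction), or it is incident to a node whose incident edges the implementation pushes. One useful preliminary fact: a forest edge never joins two distinct $(\calF,B)$-regions, since regions are components of $\calF-B$. So condition (3) is automatic whenever (1) and (2) hold, and only conditions (1), (2) and (4) can change status.

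\textbf{Merging edge.} Here $\calF$ and $D$ are unchanged, and $B$ and $\calC$ change only by replacing the junctions on $P^\calF_{u,v}$ and the boundary regions with the single new reducible region $\calF[W]$.
\begin{itemize}
\item Consider an endpoint of $(x,y)$ that was not a junction on $P^\calF_{u,v}$. If it was reducible before, it is still reducible, possibly now inside $W$. If it was touched before, it keeps its status.
\item Two nodes in different regions after the operation were already in different regions before, because regions only merged.
\end{itemize}
Hence the only way to newly satisfy (1) or (4) is for $x$ or $y$ to belong to $B\cap V(P^\calF_{u,v})$. These are exactly the nodes whose incident edges the implementation pushes into $Q$.

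\textbf{Touching edge.} Again $\calF$ is unchanged. No node becomes reducible, so (1) cannot newly hold. The removed junctions and all saturated nodes of the newly touched regions are placed in $D$, so no node newly satisfies the touched-and-not-in-$D$ option of (4). Nodes that were already touched keep their $D$-status. Therefore no new effective edge appears.

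\textbf{Improving edge.} This case needs the most care. Here $B$ is unchanged, $D$ only grows, and $\calC$ only shrinks (\Cref{full:lem:apply-chain}). So (1) cannot newly hold. The forest does change, which could split regions and thereby newly satisfy (2). It could also lower some degrees, which could newly satisfy (4) via a node becoming touched, slack and not in $D$.

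I would handle both possibilities as follows.
\begin{itemize}
\item Suppose $y$ becomes touched in this step. Then $y\in C_{w_i}$ or $y\in C_{z_{h+1}}$, and $y$ is slack afterwards. Before the step $y$ lay in a reducible region, so (4) already held for $y$.
\item Suppose instead $y$ was already touched before the step and now satisfies (4). Since $D$ only grows, $y\notin D$ before as well, so (4) already held for $y$.
\item For (2), note that $x$ stays in the same reducible region, since that region is untouched and the forest is unchanged inside it. A region split caused by removing some $(z_i,y_i)$ can only separate nodes across the junction $z_i$, which were therefore already in different regions. Degree-reduction changes are internal to regions that become touched; the split region then contains no reducible node and so cannot contain $x$.
\end{itemize}
Hence $(x,y)$ was effective before, and is already in $Q$ by induction.

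The main obstacle is this last case: making rigorous that rewiring inside $C_{w_i}$ and removing $(z_i,y_i)$ never places a still-reducible $x$ in a region newly separated from $y$. I would settle it by observing that every changed forest edge has both endpoints in $B\cup\bigcup_i V(C_{w_i})\cup V(C_{z_{h+1}})$. It follows that the regions containing reducible nodes are identical before and after the application.
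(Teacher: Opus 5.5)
Your proposal is correct and follows the paper's own argument. Initially effective edges are pushed at the start of round $t$, and the same three-way case analysis shows that improving and touching steps create no new effective edges, while after a merging step every new effective edge is incident to a former junction in $B\cap V(P^{\calF}_{u,v})$, whose incident edges the implementation pushes.

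Two small wording slips remain. In the touching case, slack nodes of the newly touched regions \emph{do} newly become touched and outside $D$, but they already satisfied condition (4) as reducible nodes, which is exactly the argument you give in the improving case. In your closing sentence, the removed edge $(z_i,y_i)$ need not have $y_i\in B\cup\bigcup_i V(C_{w_i})\cup V(C_{z_{h+1}})$; the correct statement is that every changed forest edge is either incident to $B$ or has both endpoints in $\bigcup_i V(C_{w_i})\cup V(C_{z_{h+1}})$, and this is what leaves the reducible regions intact.
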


\subsubsection{Initialization}
The initial configuration $(\calF_t^\init, B_t^\init, \calC_t^\init, D_t^\init)$, initial levels of the nodes (all equal to $1$), and initial levels of the edges (all equal to $0$) can be trivially computed in $\tilde{O}(m)$ time.

\medskip
\noindent
\textbf{Data Structures.}
We initiate an instance of the data structure of \Cref{full:lem:forest-query-path} w.r.t.~the forest $\calF = \calF_t^\init$ and the set $U = B$, where initially, $B = B_t^\init$ and the value $\lambda_x$ of all the nodes $x \in B_t^\init$ is $1$.
This value $\lambda_x$ would be equal to the level of $x$ throughout round $t$.
When the level $\lvl_x^\old$ of a node $x \in B$ changes, we remove $x$ from $U$ in the data structure, and reinsert it into $U$ with the new value $\lambda_x = \lvl_x^\new$.
We denote this data structure throughout our implementation by $\ds^{\text{path}}$.

We maintain a union-find data structure to keep track of the $(\calF, B)$-regions.
Initially, this data structure contains $\{x\}$ as a singleton set for each node $x \in B$, and contains each $C \in \calC_t^\init$ as one set.
Moreover, we maintain a label for each $(\calF,B)$-region to indicate whether it is a reducible region or a touched region.
We denote this data structure by $\ds^{\text{region}}$.

We maintain another simple data structure (for implementing $\backSearch$) as follows.
For each $x \in B$ and $i \in [0, H]$, we keep track of the edges $(x, y)$ satisfying the following; 1) $(x,y) \in E(G)-E(\calF)$, 2) $y$ is inside a reducible region in $\calC$, and 3) $\lvl_{(x,y)} = i$.
It is straightforward to update this data structure when $\calF, \calC$, or the level of a node changes (similar to how we update our queue $Q$ in \Cref{full:sec:order-edges}).
We denote this data structure for $x \in B$ by $\ds^{\text{neighbor}}$.

The last data structure that we maintain is as described in \Cref{full:lem:forest-query-edge}.
We denote this data structure by $\ds^{\text{first-edge}}$.
We need $\ds^{\text{first-edge}}$ in order to implement the process of applying an augmenting chain. More specifically, to find the edge $(z_i,y_i)$ connecting $z_i$ to the $i^\th$ block of the augmenting chain.

\medskip
\noindent
\textbf{Updating Data Structures.}
Throughout round $t$, whenever the forest $\calF_t$, the set of junctions $B_t$, the set of reducible regions $\calC_t$, the set $D_t$, and/or the levels of the nodes and edges have changed, we update all the relevant data structures.

\subsubsection{Main Body of Round $t$}

We start from the configuration $(\calF_t, B_t, \calC_t, D_t) := (\calF^\init_t, B^\init_t, \calC^\init_t, D_t^\init)$ at the beginning of the main body of round $t$, and initialize the FIFO queue $Q$ by inserting all the edges crossing $\calC_t^\init$.
Throughout the main body of round $t$, when the configuration changes, some new edges might be inserted into $Q$ (as explained in \Cref{full:sec:order-edges}).
Now, we proceed by scanning edges on $Q$ as follows.

\medskip
\noindent
\textbf{Scanning Edges.}
We repeatedly pop an edge $(u, v)$ from $Q$ and check if $(u, v)$ is an effective edge w.r.t.~the current configuration.
This can be done in $\tilde{O}(1)$ time by using our data structure $\ds^{\text{region}}$.
If $(u,v)$ is an effective edge w.r.t.~the current configuration $(\calF_t, B_t, \calC_t, D_t)$, we call $\findChain(t,(u,v))$.
We provide the implementation of this subroutine in \Cref{full:sec:implement-main-subroutine}.
We have two cases for the output of this subroutine.
Either it returns $(w_0,z_1,w_1,\ldots,w_{h-1},z_h,u,v)$, or fails.

In the former case, the sequence would be an augmenting chain w.r.t.~the current configuration according to \Cref{full:claim:return-chain}.
We then apply this augmenting chain to the configuration and update it.
We provide the implementation of applying an augmenting chain in \Cref{full:sec:implement-apply-chain}.

In the latter case ($\findChain(t,(u,v))$ fails), the edge $(u,v)$ is either a merging edge or a touching edge. 
We find all junctions $B$ lying on the path $P^{\calF_t}_{u,v}$ by repeatedly querying our data structure $\ds^{\text{path}}$.
Next, we find all regions incident on these junctions by our data structure $\ds^\text{region}$.
If all of these regions are reducible regions, we simply merge all of them, as well as all junctions lying on $ P^{\calF_t}_{u,v}$ into a new reducible region in $\calC_t$ (and update all data structures accordingly).
Otherwise, at least one of these regions is a touched region, which means $(u,v)$ is a touching edge.
In that case, we simply mark all these regions as touched (and update all data structures).

\medskip
\noindent
\textbf{Termination.}
We terminate the main body of round $t$ if there are no more edges left in $Q$.

\subsubsection{Correctness}
We proceed by showing that our implementation for the main body of round $t$ is correct.
We start with the following claim.

\begin{claim}\label{full:claim:remains-effective}
    Throughout the main body of round $t$, if an effective edge $(u,v)$ becomes ineffective, it remains ineffective until the end of round $t$ (recall that we consider $(u,v)$ as an ordered pair, which means $(v,u)$ and $(u,v)$ are considered different edges for the statement of this claim).
\end{claim}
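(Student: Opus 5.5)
The proof is a monotonicity argument. I will show that each condition in \Cref{full:def:effective-edge} that an edge can lose during round $t$ can never be regained within the same round. Fix an ordered pair $(u,v)$ that is effective at some moment of round $t$ and becomes ineffective after applying some effective edge $(u',v')$. The plan is to check, for every possible type of $(u',v')$ (improving, merging, touching), how the four conditions of effectiveness can change, and to show that once a condition fails, no later operation in the round restores it.

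The key monotone quantities within a round are the following.
\begin{enumerate}
\item[(a)] The set $B_t$ of junctions only shrinks. Merging and touching edges remove junctions, and improving edges keep $B'=B$ by \Cref{full:lem:apply-chain}.
\item[(b)] The set $D_t$ only grows.
\item[(c)] A node that becomes touched stays touched. Improving and touching edges only remove regions from $\calC$. A merging edge only absorbs reducible regions and junctions into a new reducible region, and by definition it never involves touched regions, so a touched node is never made reducible again.
\item[(d)] The reducible regions only coarsen, by merging, as long as they stay reducible.
\end{enumerate}

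With these in hand, I go through how $(u,v)$ can lose effectiveness.
\begin{enumerate}
\item[(i)] $u$ leaves every reducible region. Then $u$ became touched, and by (c) it stays touched, so condition 1 fails forever.
\item[(ii)] $u$ and $v$ end up in the same reducible region $C_u$. This happens through a merge. Afterwards the region of $u$ can only grow by (d) or become touched by (c), so condition 2 or condition 1 fails forever.
\item[(iii)] $(u,v)$ enters $E(\calF)$. This can only happen via an augmenting chain, in which case $(u,v)$ is a critical edge or the last edge of the chain. Its endpoints $w_i$ lie in regions $C_{w_i}$ that become touched, so $u$ is touched from then on, and condition 1 fails forever by (c). For the critical edges $(w_{i-1},z_i)$, the endpoint $z_i$ is a junction and $w_{i-1}$ is touched; in either orientation, condition 1 or condition 4 fails permanently, using (a), (b) and (c). For the last edge, $w_h$ is touched, and $z_{h+1}$ is either touched with degree $k$ and so put into $D$, or touched and remaining slack. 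In every case I need $u$ touched or $v\in D$ or $v\in B$.
\item[(iv)] Condition 4 fails for $v$. Then $v$ is a junction, or touched and in $D$, or touched and not in $D$ but now $\deg=k$. In the last case $v$ is put into $D$ at that moment by the definition of $D'$ in the touching and improving rules. Junction status can only be lost through merging or touching. A merge absorbs $v$ into a reducible region only if $v$ is on the path $P_{u',v'}$, and then condition 4(a) would hold again. That would be a counterexample, so this case needs care.
\end{enumerate}

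The main obstacle is exactly the junction case in (iv) and the subtle case in (iii). A junction $v\in B$ can become reducible again via a later merging edge whose path contains $v$. So it seems that $(u,v)$, with $u$ reducible, could become effective again, and the claim as literally stated (with "ineffective" including the case $v\in B$) would fail. I expect the intended reading, consistent with the queue maintenance in \Cref{full:sec:order-edges}, to be that edges incident on such a $z$ are reinserted into $Q$. I would therefore either (1) restrict the claim to the transitions effective $\to$ ineffective that the algorithm actually causes, which never makes $v$ a junction since $B$ only shrinks, or (2) observe that an effective $(u,v)$ cannot become ineffective by $v$ entering $B$, because $B$ never grows. Option (2) rules the bad case out, and then every loss of effectiveness falls into (i), (ii), (iii), or $v$ entering $D$ or becoming touched-saturated, all of which are permanent by (b) and (c).
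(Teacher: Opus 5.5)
Your proof is correct and is essentially the paper's own case analysis: $u$ becomes touched, $v$ is merged into $C_u$, $(u,v)$ enters $\calF$ (which can only happen as the last edge $(w_h,z_{h+1})$ of an applied chain, after which $u$ is touched), or $v$ enters $D_t$. Each of these outcomes is permanent by the within-round monotonicity you list in (a)--(d).

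Three parts of your write-up are vacuous for the reason the paper uses: the critical-edge subcase of (iii), the edges inserted by internal degree reduction (which you omit), and your worry in (iv). An edge that is effective just before a step has both endpoints outside $B_t$ and in different regions, and $B_t$ never grows within a round. So the claim holds as stated and your option (1) is unnecessary. Also drop the treatment of a junction endpoint as a permanent failure: it is not valid, since junctions can later be merged into reducible regions.
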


\begin{proof}
    Assume that $(u, v)$ is an edge that becomes effective (or was already effective at the beginning of round $t$), and then it becomes ineffective for the first time during the execution of round $t$.
    According to the definition of an effective edge (\Cref{full:def:effective-edge}), one of the following must have happened.

\begin{itemize}
    \item $u$ becomes touched, i.e., the reducible region $C_u$ containing $u$ becomes touched.
    In this case, $u$ remains touched until the end of round $t$ which concludes that $(u,v)$ can not become effective again.

    \item $v$ becomes part of the reducible region $C_u$ containing $u$.
    In this case, $u$ and $v$ will always remain in the same region for the rest of round $t$.
    As a result, either both of them remain in a reducible region, or both of them become touched nodes at the same time and remain touched nodes.
    Hence, $(u,v)$ can not become effective again.

    \item $(u,v)$ is inserted into the forest.
    This can happen only by applying an augmenting chain $A = (w_0,z_1,\ldots,z_h,w_h,z_{h+1})$.
    There are two cases; (a) $(u,v)$ is not a critical edge of $A$, i.e., $(u,v)$ is one of the edges that are internally changed in some $C_x$ for $x \in \{z_{h+1}, w_0,w_1,\ldots,w_h\}$ by the degree-reduction process (\Cref{full:lem:degree-reduction}) during applying $A$, or (b) $(u,v)$ is a critical edge of $A$.
    Case (a) can not happen since we assumed that $(u,v)$ was effective before applying $A$, which means that $u$ and $v$ can not be in the same reducible region $C_x$.
    Now, assume Case (b) happens.
    According to the definition of an effective edge, neither $u$ nor $v$ is a junction.
    Since all the nodes $z_1,z_2,\ldots,z_h$ are junctions, we conclude that the only option for $(u,v)$ to be a critical edges of $A$ is that $(u,v) = (w_h,z_{h+1})$.
    But, in this case, after applying $A$, $u = w_h$ becomes touched and remains touched throughout the rest of round $t$.
    Hence, $(u, v)$ can not become effective again.
    
    \item $v$ was in a reducible region $C_v$, but now this region becomes touched.
    In this case, since we assume that $(u,v)$ is not an effective edge anymore, we must have that $v$ is inserted into $D_t$, as otherwise, $(u,v)$ still remains effective according to the definition.
    But, if $v \in D_t$, it remains in $D_t$ until the end of round $t$.
    Hence, $(u,v)$ can not become effective again.

    \item $v$ was a touched node where $v \notin D_t$, but now it is inserted into $D_t$.
    Similar to the previous case, $v$ remains in $D_t$ until the end of round $t$, which concludes that $(u,v)$ can not become effective until the end of round $t$.
\end{itemize}
\end{proof}

The correctness of our implementation easily follows from the above claim and the guarantee of $Q$.
Every edge $(u,v)$ that becomes effective during the main body of round $t$ (or is initially effective w.r.t.~$(\calF^\init_t, B^\init_t, \calC^\init_t, D^\init_t)$), will be inserted into $Q$ according to the guarantee of $Q$ in \Cref{full:obs:guarantee-Q}.
Hence, it is sufficient to consider the edge $(u,v)$ (to see if it is effective) at the last time\footnote{We used the term 'last time' since it is possible that the algorithm inserts an edge into $Q$ that has not become effective yet, but once it becomes effective it will be reinserted into $Q$ for sure according to \Cref{full:obs:guarantee-Q}.} that is popped from $Q$.
If it is not effective at this time, \Cref{full:claim:remains-effective} concludes that it can not become effective in the future until the end of round $t$.
As a result, at the end of round $t$ when $Q$ becomes empty, there is no effective edge w.r.t.~$(\calF_t^\final, B_t^\final, \calC_t^\final, D_t^\final)$ that shows the correctness of our implementation.

\subsection{Implementation of Find-Chain}\label{full:sec:implement-main-subroutine}

To implement $\findChain(h,(u,v))$ (see \Cref{full:code:find-chain}), we use our data structures as follows.
Suppose that we are given an effective edge $(u,v)$ w.r.t.~the current configuration $(\calF, B, \calC, D)$.
As long as $\lvl_{(u,v)} \leq h-1$, we repeatedly call $\backSearch$ on $(u,v)$.
It remains to provide the implementation of $\backSearch$.

\subsubsection{Implementation of Backward-Search on an Edge}\label{full:sec:implement-back-search-edge}

Consider a call to $\backSearch$ on $(y,x)$.
We can easily check if $x$ and $y$ are in different connected components of $\calF$ by our data structures.
If that is the case, we just return the sequence $(y,x)$.
Now, assume that $x$ and $y$ are in the same connected components of $\calF$.
We repeatedly query our data structure $\ds^\text{path}$ to find a junction $x^\star \in B$ lying on the path $P^\calF_{u,v}$ with the \textbf{smallest} level.
Then, we check if $\lvl_{x^\star} \leq \lvl_{(y,x)}$.
In this case, we call a $\backSearch$ on $x^\star$.
If this $\backSearch$ returns a sequence, we attach $(y, x)$ to the end of it and return it.
If the $\backSearch$ on $x^\star$ fails, we continue by querying the data structure $\ds^\text{path}$ again and find the next $x^\star_\new$.
Eventually, if the data structure $\ds^\text{path}$ returns a node $x^\star \in B \cap V(P^\calF_{y,x})$ such that $\lvl_{x^\star} = \lvl_{(y,x)} + 1$, we conclude that there is no junction on the path $P^\calF_{y,x}$ of level at most $\lvl_{(y,x)}$.
This means that we can increment the level of $(y,x)$ and terminate the $\backSearch$ with failure.

\subsubsection{Implementation of Backward-Search on an Node}\label{full:sec:implement-back-search-node}

Consider a call to $\backSearch$ on $x$.
We call our data structure $\ds^\text{neighbor}$ in $\tilde{O}(1)$ time to find an edge $(y^\star, x)$ with \textbf{minimum} possible level satisfying; 1) $(y^\star,x) \in E(G) - E(\calF)$ and 2) $y^\star$ is in a reducible region in $\calC$.
Then, we check if $\lvl_{(y^\star,x)} \leq \lvl_x - 1$.
If this is the case, we call $\backSearch$ on $(y^\star, x)$.
If this $\backSearch$ returns a sequence $A$, we return $A$ as well.
If the $\backSearch$ on $(y^\star, x)$ fails, we continue by querying the data structure $\ds^\text{neighbor}$ again and find the next $(y^\star_\new, x)$.
Eventually, if the data structure $\ds^\text{neighbor}$ returns an edge $(y^\star, x)$ such that $\lvl_{(y^\star, x)} = \lvl_{x}$, we conclude that there is no edge $(y, x)$ satisfying; 1) $(y,x) \in E(G) - E(\calF)$ and 2) $y$ is in a reducible region in $\calC$. 
This means that we can increment the level of $x$ and terminate the $\backSearch$ with failure.

\subsection{Implementation of Applying an Augmenting Chain}\label{full:sec:implement-apply-chain}

The process of applying an augmenting chain to $(\calF, B, \calC, D)$ is quite algorithmic, and straightforward to implement.
First, we call \Cref{full:lem:degree-reduction} to reduce the degrees of $\{z_{h+1}, w_0, w_1, \ldots, w_h\}$ to $\leq k-1$ (the implementation of this lemma is identical to the implementation of \cite{BFW26}).
Then, for each $i \in [1, h]$, we query our data structure $\ds^\text{first-edge}$ to find the first edge $(z_i,y_i)$ on the path $P^\calF_{z_i,w_i}$ connecting $z_i$ to the $i^\th$ block of the chain. 
As a result, we have access to the edges $(z_i,y_i)$ and $(w_{i-1}, z_i)$, each in $\tilde{O}(1)$ time.
Finally, we remove $\{(z_i,y_i)\}_{i=1}^h$ from and insert $\{(w_{i-1}, z_i)\}_{i=1}^{h+1}$ into the forest.
Note that after each update on the forest (insertion or deletion of an edge), we update all of our data structures.
We also update our data structure $\ds^\text{region}$ by marking each of the regions $C_{z_{h+1}},C_{w_0},C_{w_1},\ldots,C_{w_h}$ as touched.

\subsection{Running Time Analysis}\label{full:sec:running-time}

\subsubsection{Time Spent on Finding Effective Edges}
We bound the number of times that an edge is inserted into the queue $Q$.
Consider an arbitrary edge $(u, v)$.
This edge can only be inserted into $Q$ at most two times (once as the ordered pair $(u,v)$, and once as $(v,u)$) according to the status of its endpoints.
If $u$ was in $B$ and becomes either part of a reducible component, or a touched node, then our implementation inserts $(u,v)$ into $Q$.
This can only happen once for $u$ throughout round $t$.
The same holds for $v$, which shows that each edge $(u, v)$ can be inserted into $Q$ at most twice.
Finally, for each edge $(u,v)$, we can verify if it is an effective edge in $\tilde{O}(1)$ time.
As a result, the total time spent to find effective edges throughout the entire round $t$ of our implementation is at most $\tilde{O}(m)$.
Over all of the $H$ rounds of the algorithm, this is bounded by $\tilde{O}(H \cdot m)$.

\subsubsection{Time Spent on the Subroutine Find-Chain}

We start with the following claim.

\begin{claim}\label{full:claim:time-bound}
    Consider a call to the subroutine $\findChain$.
    Denote the levels of the nodes and edges at the beginning and at the end of this call by $\lvl$ and $\lvl'$ respectively.
    Let 
    $$ \Delta\lvl := \sum_{(x,y) \in E(G)} \left(\lvl'_{(x,y)} - \lvl_{(x,y)}\right) + \sum_{x \in V(G)} \left(\lvl'_{x} - \lvl_{x}\right). $$
    If the call fails, then its running time is bounded by $\tilde{O}(\Delta \lvl)$, and if the call is successful, then its running time is bounded by $\tilde{O}(H + \Delta \lvl)$.
\end{claim}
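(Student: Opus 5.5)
The plan is a charging argument over the recursion tree of the $\findChain$ call. This tree alternates between $\backSearch$ calls on edges and $\backSearch$ calls on junctions. Each invocation of $\backSearch$ (on an edge or on a node) performs some number of data-structure queries. These are queries to $\ds^{\text{path}}$ for a minimum-level junction on $P^\calF_{y,x}$, or to $\ds^{\text{neighbor}}$ for a minimum-level backward edge, each costing $\tilde{O}(1)$ by \Cref{full:lem:forest-query-path}. Apart from the queries, an invocation spends $O(1)$ bookkeeping and makes recursive calls. I will bound the total number of invocations and queries by $O(\Delta\lvl)$ in the failing case, and by $O(H+\Delta\lvl)$ in the successful case.

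First, I observe that every invocation either fails or succeeds. A failing invocation increments the level of exactly the object (edge or node) it was called on, so it contributes exactly $1$ to $\Delta\lvl$. Levels never decrease, and no other step modifies levels. Hence the number of failing invocations during the call is exactly $\Delta\lvl$.

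Second, I charge the queries. Inside an invocation, each query either (a) returns an object that satisfies the level condition, which triggers a recursive call, or (b) returns an object violating the level condition (or nothing), which ends the loop. Type (b) happens at most once per invocation. For type (a), each recursive call that fails is itself a failing invocation, so it is charged to its own level increment. Each recursive call that succeeds immediately causes the parent to return, so there is at most one successful child per invocation. Consequently, the total number of queries is at most $2\cdot(\#\text{invocations}) + (\#\text{failing invocations})$.

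Third, I bound the number of successful invocations. Successful invocations form a single root-to-leaf chain in the recursion tree, because a success propagates straight upward and terminates each ancestor. The returned sequence has one entry pair per level of this chain, and by \Cref{full:obs:levels-round-t} its length is at most $h\le H$. So there are $O(H)$ successful invocations, and if $\findChain$ fails there are none.

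Finally, I account for the outer loop of \Cref{full:code:find-chain}. Each iteration either succeeds, which happens once, or fails, which raises $\lvl_{(u,v)}$ and is already charged. Combining the three steps gives $\tilde{O}(\Delta\lvl)$ for a failed call and $\tilde{O}(H+\Delta\lvl)$ for a successful one.

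The main obstacle is the second step. It relies on the implementation detail that a query for the minimum-level candidate never re-finds a candidate that was already explored and failed without being charged. This holds because a failure raises that candidate's level. In $\ds^{\text{path}}$ the threshold $\lvl_{x'}\le\lvl_{(y,x)}$ then excludes it, or it is re-explored only at a higher level that is charged again. In $\ds^{\text{neighbor}}$ the same holds with $\lvl_{(y,x)}\le\lvl_x-1$. I also need to check that updating $\lambda_x$ in $\ds^{\text{path}}$ after each level change costs $\tilde{O}(1)$ and is charged to that same increment.
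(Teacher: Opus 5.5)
Your proposal is correct and follows the same charging argument as the paper. Each failed backward-search invocation is paid for by its own unit level increment, and the successful invocations form a single recursion chain of length $O(H)$.

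Your accounting is somewhat more explicit than the paper's terse version. You charge each loop iteration of a parent to the child invocation it spawns, and allow at most one terminating query per invocation. This makes it clear that a parent with many failed children is not being charged to its single increment alone.
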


\begin{proof}
    According to our implementation of $\findChain$ and $\backSearch$, every time that we want to iterate over some nodes $x$ or edges $(y, x)$, we have access to the next element in this iteration in $\tilde{O}(1)$ time.
    Moreover, every condition that we need to check for a given node $x$ or edge $(y, x)$ in these subroutines also takes $\tilde{O}(1)$ time.
    As a result, the total running time would be proportional (up to a $\tilde{O}(1)$ factor) to the number of elements that we iterate over plus the number of recursive calls to the $\backSearch$.

    \medskip
    \noindent
    \textbf{Running Time of a Failed Call.}
    Every time a call $\backSearch$ on a node or an edge ends up in failure, we increment its level.
    Hence, we can charge the time of internal works of a call to a $\backSearch$ (iterating and checking conditions for elements, excluding the running time of recursive $\backSearch$ calls) to the increased value of their level.
    This shows that the total running time of a failed $\findChain$ is $\tilde{O}(\Delta \lvl)$.

    \medskip
    \noindent
    \textbf{Running Time of a Successful Call.}
    The depth of the recursive calls to $\backSearch$ inside $\findChain$ is at most $H$.
    So, there are at most $H$ many recursive calls to $\backSearch$ that do not end up in failure.
    If $\findChain$ is successful, we consider the $\tilde{O}(H)$ time spent on these $H$ many successful recursive calls explicitly.
    More precisely, we might have $\Delta \lvl = 0$ and this $\tilde{O}(H)$ time on the recursive successful calls is not charged to anything.
    This concludes that for a successful call, the running time is bounded by $\tilde{O}(H + \Delta \lvl)$.
\end{proof}

The above claim concludes that the total time spent on $\findChain$ throughout the entire algorithm is bounded by $ \tilde{O} \left( H \cdot (m + n) \right) + \tilde{O} \left( H \cdot f \right) = \tilde{O} \left( H \cdot m \right)$.
The first term $\tilde{O} \left( H \cdot (m + n) \right)$ comes from the fact that the levels of edges and nodes are monotone and remain in the interval $[0, H]$.
More precisely, the total contribution of $\Delta \lvl$ in \Cref{full:claim:time-bound} for all calls (either successful or failure) is bounded by the total increase in all levels of nodes and edges.
The second term $\tilde{O} \left( H \cdot f \right)$ comes from the fact that there are at most $f-1$ successful calls to $\findChain$ by our algorithm since the initial forest has $f$ components and every successful call to $\findChain$ leads to applying an augmenting chain that reduces the number of components of the forest.

Hence, the total time spent on $\findChain$ throughout the entire algorithm is $\tilde{O}(H\cdot m)$.

\subsubsection{Time Spent on Applying Merging/Touching Edges}

Assume $(u,v)$ is a merging or a touching edge.
According to the implementation, we can find all the nodes $x \in B \cap P^\calF_{u,v}$ in $\tilde{O}(|B \cap P^\calF_{u,v}|)$ time.
Then, for each $x \in B \cap P^\calF_{u,v}$, we consider all the neighbors $(x, y) \in \calF$, and find the $(\calF, B)$-region to which $y$ belongs.
As a result, we have access to $\bound_{(\calF, B)}(P^\calF_{u,v})$ in $\tilde{O}(|\bound_{(\calF, B)}(P^\calF_{u,v})|)$ time.

Eventually, we combine all these regions together with all junctions lying on $P^\calF_{u,v}$ into a new single region.
This new region is either a reducible region (if $(u,v)$ is a merging edge), or is a touched region (if $(u,v)$ is a touching edge).
Hence, the total time that we spend is $\tilde{O}(|\bound_{(\calF, B)}(P^\calF_{u,v})|)$.
Note that $|\bound_{(\calF, B)}(P^\calF_{u,v})|$ is \textbf{not} proportional to the size of the regions inside $\bound_{(\calF, B)}(P^\calF_{u,v})$.
In fact, it is proportional to the number of regions in this boundary.

For any $t \in [1, H]$, throughout the entire round $t$, the set of regions is monotone, i.e., some regions can be merged together, but they will never be split.
As a result, the summation of all these terms $\tilde{O}(|\bound_{(\calF, B)}(P^\calF_{u,v})|)$ for all merging/touching edges that are applied throughout the entire round $t$ is bounded by $\tilde{O}(n)$.
Over all of the $H$ rounds of the algorithm, the total time is bounded by $\tilde{O}(H \cdot n)$.

\subsubsection{Time Spent on Applying Improving Edges}\label{full:sec:time-apply-chains}

Assume $(u,v)$ is an improving edge.
According to our implementation, the algorithm applies an augmenting chain $(w_0,z_1,\ldots,z_h,w_h,z_{h+1})$ where $h \leq H$.

The degree-reduction process of \Cref{full:lem:degree-reduction} on each node $x \in \{z_{h+1},w_0,w_1,\ldots,w_h\}$ runs in $\tilde{O}(\sum_{u \in C_x} \deg_G(u))$ time.
Then, the algorithm finds the critical edges $(w_{i-1}, z_i)$ and $(z_i, y_i)$ to insert into and remove from the forest $\calF$.

\medskip
\noindent
\textbf{Time Spent on Degree-Reduction Process in \Cref{full:lem:degree-reduction}.}
Fix a round $t$.
Whenever a call to the degree-reduction process on a node $x$ is made, the region $C_x$ containing $x$ is a reducible region.
After $C_x$ becomes a touched region, all of the nodes inside $C_x$ remain touched nodes until the end of round $t$ of the algorithm.
As a result, if we consider all nodes $x$ whose degrees have been reduced by a call to the degree-reduction process in \Cref{full:lem:degree-reduction}, and sum up all the terms $\tilde{O}(\sum_{u \in C_x} \deg_G(u))$ for the running time, we conclude that throughout the entire round $t$, the total time spent on degree-reduction process among all augmenting chains that have been applied is at most $\tilde{O}(m)$.
Over all of the $H$ rounds of the algorithm, this is bounded by $\tilde{O}(H \cdot m)$.

\medskip
\noindent
\textbf{Time Spent on Inserting and Removing Critical Edges.}
According to the implementation, each of the critical edges can be found in $\tilde{O}(1)$ time using our $\ds^{\text{first-edge}}$ data structure.
As a result, since the length of the augmenting chain that has been applied is bounded by $H$, the total time spent on taking care of critical edges is at most $\tilde{O}(H)$ for each augmenting chain.
Finally, the total time spent by the algorithm on handling critical edges throughout applying all of the augmenting chains is bounded by $\tilde{O}(H \cdot f) = \tilde{O}(H \cdot n)$ since there are at most $f-1$ many augmenting chains applied by the algorithm.

\subsection{Time Spent on Updating Data Structures}\label{full:sec:update-time}

Throughout any round $t$, whenever the forest $\calF_t$, the set of junctions $B_t$, the set of reducible regions $\calC_t$, the set $D_t$, or the levels of the nodes and edges have changed, we update all the relevant data structures.

\subsubsection{Updating $\ds^{\text{path}}$}

\textbf{Changing $\calF_t$.}
According to the discussion in \Cref{full:sec:time-apply-chains}, the total number of edges that we insert or remove in the forest throughout the entire round $t$ is bounded by $\tilde{O}(m)$.
As a result, the number of times that we need to update $\ds^{\text{path}}$ because the forest has changed is $\tilde{O}(m)$ in each round.
Over all of the $H$ rounds of the algorithm, this is bounded by $\tilde{O}(H \cdot m)$.

\medskip
\noindent
\textbf{Changing $B_t$.}
The set $B_t$ is decreasing throughout round $t$, which means that at most $n$ nodes from $B_t$ can be removed in round $t$ and nothing will be inserted into $B_t$.
Hence, the number of times that we need to update $\ds^{\text{path}}$ because a node is removed from $B_t$ is at most $O(n)$.
Over all of the $H$ rounds of the algorithm, this is bounded by $\tilde{O}(H \cdot n)$.

\medskip
\noindent
\textbf{Changing $\calC_t$ or $D_t$.}
Updates on $\calC_t$ and $D_t$ does not update $\ds^{\text{path}}$.

\medskip
\noindent
\textbf{Changing the Levels.}
Finally, every time the level of some node in $x \in B_t$ is increased, we should update $\ds^{\text{path}}$ by deleting the corresponding node from $U = B_t$ and reinserting it with the new value $\lambda_x = \lvl_x^\new$.
Since the levels of the nodes are in the bounded interval $[1, H+1]$, we conclude that the total number of times that $\ds^{\text{path}}$ is updated because the level of a node is changed is bounded by $O(H \cdot n) $ throughout the entire algorithm.

\medskip
\noindent
\textbf{Total Time.}
As a result, the total number of times that we update $\ds^{\text{path}}$ is bounded by $\tilde{O}(H \cdot m)$ and each update takes $\tilde{O}(1)$ time to perform.
Hence, we only spend $\tilde{O}(H \cdot m)$ time to maintain $\ds^{\text{path}}$.

\subsubsection{Updating $\ds^{\text{region}}$}

Changes in the level of the nodes do not update $\ds^{\text{region}}$.
Hence, we only need to analyze the time spent on updating $\ds^{\text{region}}$ after applying an effective edge.
Assume that an effective edge $(u,v)$ is being applied.

\medskip
\noindent
\textbf{$(u,v)$ is Merging or Touching.}
If a merging or touching edge $(u,v)$ is applied, according to the implementation of applying $(u,v)$, we should merge all the regions that hit or are incident to $\bound_{(\calF,B)}(P^\calF_{u,v})$ into a single new region.
Since the set of regions is monotone, the total number of updates on $\ds^{\text{region}}$ because of applying a merging or touching edge is at most $O(n)$.
Over all of the $H$ round of the algorithm, this is bounded by $O(H \cdot n)$.

\medskip
\noindent
\textbf{$(u,v)$ is Improving.}
If $(u,v)$ is an improving edge, assume that $(w_0,z_1,\ldots,z_h,w_h,z_{h+1})$ is the corresponding augmenting chain that is applied.
The edges of the forest that are changed internally in each reducible region $C_x$ for $x \in \{z_{h+1}, w_0,w_1,\ldots,w_h\}$ because of the degree-reduction process (\Cref{full:lem:degree-reduction}) do not affect the set of regions since $C_x$ remains a region.
Moreover, all the critical edges of the augmenting chain (that are changed in the forest) are incident on a node in $B$ (all $z_1,z_2,\ldots,z_h$) except the very last edge $(w_h,z_{h+1})$.
As a result, the only change that happens to the set of regions by applying this augmenting chain is that the regions containing $w_h$ and $z_{h+1}$ are now merged to a single region.
Moreover, the regions containing $z_{h+1},w_0,w_1,\ldots,w_h$ become touched regions.
As a result, the total number of times that we update $\ds^{\text{region}}$ after applying an augmenting chain is at most $O(H)$ since the length of the chain is at most $H$.
Over all of the augmenting chains that are applied by the algorithm (that is at most $f-1$), the number of updates on $\ds^{\text{region}}$  is bounded by $O(H \cdot f)$.

\medskip
\noindent
\textbf{Total Time.}
As a result, the total time spent to update $\ds^{\text{region}}$ throughout the entire round $t$ is at most $\tilde{O}(m)$.

\subsubsection{Updating $\ds^{\text{neighbor}}$}

\textbf{Changing the Levels.}
When the level of an edge changes, the data structure $\ds^{\text{neighbor}}$ is updated in $\tilde{O}(1)$ time.
Since the levels of the edges are only increasing and are in the bounded interval $[0, H]$, we conclude that the total time spent updating $\ds^{\text{neighbor}}$ because of changes in the level of the edges is bounded by $\tilde{O}(H \cdot m) $.

\medskip
\noindent
\textbf{Applying an Improving Edge (an Augmenting Chain).}
Assume that an augmenting chain $A = (w_0,z_1,\ldots,z_h,w_h,z_{h+1})$ is applied.
The edges that are changed internally in the reducible regions $C_x$ for $x \in \{z_{h+1},w_0,w_1,\ldots,w_h\}$ because of the degree-reduction process of $x$ (\Cref{full:lem:degree-reduction}) do not update $\ds^{\text{neighbor}}$.
The set $B_t$ does not change as well by applying the augmenting chain, and the only updates in $\ds^{\text{neighbor}}$ are because of changing critical edges.
The number of such updates is bounded by $O(H \cdot f) = \tilde{O}(H \cdot n)$ since there are at most $f-1$ many augmenting chains that are applied and each of them has at most $O(H)$ critical edges.

\medskip
\noindent
\textbf{Applying a Merging Edge.}
When a merging edge $(u,v)$ is applied, we have to update $\ds^{\text{neighbor}}$ by considering the edges that are incident on every $ x \in B_t \cap P^{\calF_t}_{u,v}$ (there might be an edge $(a,b)$ such that $a \in B$ and $b=x$, where after applying $(u,v)$, $b=x$ becomes part of a reducible region and we need to add $(a,b)$ to the list of $a$ in $\ds^{\text{neighbor}}$).
These are the only edges that might need updates (which is similar to updating the queue $Q$ that maintains the ordering of the edges).
As a result, the data structure can be updated in $\tilde{O}(\sum_{x \in B_t \cap P^{\calF_t}_{u,v}} \deg_{G}(x))$.
Since the set $B_t$ is decreasing throughout the entire round $t$, we conclude that the summation of all such terms for all merging edges that have been applied throughout round $t$ is bounded by $\tilde{O} (\sum_{x \in V} \deg_G(x)) = \tilde{O}(m)$.
As a result, the total time spent for updating data structure $\ds^{\text{neighbor}}$ because of applying merging edges is bounded by $\tilde{O}(m)$.
Over all of, the $H$ round of the algorithm is bounded by $\tilde{O}(H \cdot m)$.

\medskip
\noindent
\textbf{Applying a Touching Edge.}
When a touching edge $(u,v)$ is applied, we need to update $\ds^{\text{neighbor}}$ by considering two types of edges: 
\begin{enumerate}
    \item edges that are incident on every $x \in B_t \cap P^{\calF_t}_{u,v}$ (there might be an edge $(a,b)$ where $a=x$ and $b$ is in a reducible region, but after applying $(u,v)$, $a=x$ becomes touched and then we do not need any list in $\ds^{\text{neighbor}}$ for $a$ anymore).

    \item edges with one endpoint in a reducible region in $\bound_{(\calF_t,B_t)}(P^{\calF_t}_{u,v})$ (there might be an edge $(a,b)$, where $a \in B$ and $b$ is one reducible region in $\bound_{(\calF_t,B_t)}(P^{\calF_t}_{u,v})$, but after applying $(u,v)$ this region becomes touched and we need to remove $(a,b)$ from the list of $a$ in $\ds^{\text{neighbor}}$).
\end{enumerate}
The total time spent for updating type 1 edges throughout the entire round $t$ is $\tilde{O}(m)$ by a similar argument as before. 
For type 2 edges, if a reducible region $C$, becomes touched, the time of updating edges incident on $C$ is $\tilde{O}(\sum_{x \in C} \deg_G(x))$, where all $x \in C$ become a touched node for the first time.
Since every node $x \in V$ can become touched at most once in each round, we conclude that the summation of all such terms throughout the entire round $t$ is at most $\tilde{O}(\sum_{x \in V} \deg_G(x)) = \tilde{O}(m)$.
Over all of the $H$ round of the algorithm, this is bounded by $\tilde{O}(H \cdot m)$.

\subsubsection{Updating $\ds^{\text{first-edge}}$}
The only reason that causes an update on $\ds^{\text{first-edge}}$ is changing the forest.
According to the discussion in \Cref{full:sec:time-apply-chains}, the total number of edges that we insert or remove in the forest throughout the entire round $t$ is bounded by $\tilde{O}(m)$.
As a result, the number of times that we need to update $\ds^{\text{first-edge}}$ is $\tilde{O}(m)$ in each round.
Over all of the $H$ rounds of the algorithm, this is bounded by $\tilde{O}(H \cdot m)$.
As a result, the time spent to update $\ds^{\text{first-edge}}$ is bounded by $\tilde{O}(m)$ as well since each update is handled in $\tilde{O}(1)$ time.

\section*{Acknowledgements}

Sayan Bhattacharya is funded by the European Union (ERC grant, DYNALP, 101170133). Views and opinions expressed are however those of the author(s) only and do not necessarily reflect those of the European Union or the European Research Council Executive Agency. Neither the European Union nor the granting authority can be held responsible for them.

\newpage
\bibliographystyle{alpha}
\bibliography{references,ref}

@article{furer1994approximating,
	title={Approximating the minimum-degree Steiner tree to within one of optimal},
	author={Furer, Martin and Raghavachari, Balaji},
	journal={Journal of Algorithms},
	volume={17},
	number={3},
	pages={409--423},
	year={1994},
	publisher={Elsevier}
}

@article{kiraly2012degree,
	title={Degree bounded matroids and submodular flows},
	author={Kir{\'a}ly, Tam{\'a}s and Lau, Lap Chi and Singh, Mohit},
	journal={Combinatorica},
	volume={32},
	number={6},
	pages={703--720},
	year={2012},
	publisher={Springer}
}

@article{lau2013additive,
	title={Additive approximation for bounded degree survivable network design},
	author={Lau, Lap Chi and Singh, Mohit},
	journal={SIAM Journal on Computing},
	volume={42},
	number={6},
	pages={2217--2242},
	year={2013},
	publisher={SIAM}
}

@article{fukunaga2015iterative,
	title={Iterative rounding approximation algorithms for degree-bounded node-connectivity network design},
	author={Fukunaga, Takuro and Nutov, Zeev and Ravi, R},
	journal={SIAM Journal on Computing},
	volume={44},
	number={5},
	pages={1202--1229},
	year={2015},
	publisher={SIAM}
}

@inproceedings{bansal2008additive,
	title={Additive guarantees for degree bounded directed network design},
	author={Bansal, Nikhil and Khandekar, Rohit and Nagarajan, Viswanath},
	booktitle={Proceedings of the fortieth annual ACM symposium on Theory of computing},
	pages={769--778},
	year={2008}
}

@inproceedings{furer1990nc,
	title={An NC approximation algorithm for minimum degree spanning tree problem},
	author={Furer, M},
	booktitle={Proc. of the 28th Annual Allerton Conf. on Communication, Control and Computing, 1990},
	year={1990}
}

@article{singh2015approximating,
	title={Approximating minimum bounded degree spanning trees to within one of optimal},
	author={Singh, Mohit and Lau, Lap Chi},
	journal={Journal of the ACM (JACM)},
	volume={62},
	number={1},
	pages={1--19},
	year={2015},
	publisher={ACM New York, NY, USA}
}

@article{TopTree,
  author       = {Stephen Alstrup and
                  Jacob Holm and
                  Kristian de Lichtenberg and
                  Mikkel Thorup},
  title        = {Maintaining information in fully dynamic trees with top trees},
  journal      = {{ACM} Trans. Algorithms},
  volume       = {1},
  number       = {2},
  pages        = {243--264},
  year         = {2005},
  url          = {https://doi.org/10.1145/1103963.1103966},
  doi          = {10.1145/1103963.1103966},
  bibsource    = {dblp computer science bibliography, https://dblp.org}
}

@article{DuanP20,
  author       = {Ran Duan and
                  Seth Pettie},
  title        = {Connectivity Oracles for Graphs Subject to Vertex Failures},
  journal      = {{SIAM} J. Comput.},
  volume       = {49},
  number       = {6},
  pages        = {1363--1396},
  year         = {2020}
}

@article{S24,
  author       = {Thatchaphol Saranurak},
  title        = {Open Problem:  Low-Degree Spanning Tree},
  journal      = {Dagstuhl Seminar 24471:
Graph Algorithms: Distributed Meets Dynamic},
  year         = {2024}
}

@article{P16,
  author       = {Seth Pettie},
  title        = {Open Problem 24: Hardness of Approximating NP-hard Problems},
  journal      = {Dagstuhl Seminar 16451:
Structure and Hardness in P},
  year         = {2016}
}

@book{LauRS11,
  title={Iterative Methods in Combinatorial Optimization},
  author={Lau, Lap Chi and Ravi, R. and Mohit Singh},
  year={2011},
  publisher={Cambridge university press}
}

@inproceedings{FR92,
  author       = {Martin F{\"{u}}rer and
                  Balaji Raghavachari},
  editor       = {Greg N. Frederickson},
  title        = {Approximating the Minimum Degree Spanning Tree to Within One from
                  the Optimal Degree},
  booktitle    = {Proceedings of the Third Annual {ACM/SIGACT-SIAM} Symposium on Discrete
                  Algorithms, 27-29 January 1992, Orlando, Florida, {USA}},
  pages        = {317--324},
  publisher    = {{ACM/SIAM}},
  year         = {1992},
  url          = {http://dl.acm.org/citation.cfm?id=139404.139469},
  bibsource    = {dblp computer science bibliography, https://dblp.org}
}

@inproceedings{DHZ20,
  author       = {Ran Duan and
                  Haoqing He and
                  Tianyi Zhang},
  editor       = {Yoshiharu Kohayakawa and
                  Fl{\'{a}}vio Keidi Miyazawa},
  title        = {Near-Linear Time Algorithm for Approximate Minimum Degree Spanning
                  Trees},
  booktitle    = {{LATIN} 2020: Theoretical Informatics - 14th Latin American Symposium,
                  S{\~{a}}o Paulo, Brazil, January 5-8, 2021, Proceedings},
  series       = {Lecture Notes in Computer Science},
  volume       = {12118},
  pages        = {15--26},
  publisher    = {Springer},
  year         = {2020},
  url          = {https://doi.org/10.1007/978-3-030-61792-9\_2},
  doi          = {10.1007/978-3-030-61792-9\_2},
  bibsource    = {dblp computer science bibliography, https://dblp.org}
}

@inproceedings{CQT21,
  author       = {Chandra Chekuri and
                  Kent Quanrud and
                  Manuel R. Torres},
  editor       = {Mary Wootters and
                  Laura Sanit{\`{a}}},
  title        = {Fast Approximation Algorithms for Bounded Degree and Crossing Spanning
                  Tree Problems},
  booktitle    = {Approximation, Randomization, and Combinatorial Optimization. Algorithms
                  and Techniques, {APPROX/RANDOM} 2021, August 16-18, 2021, University
                  of Washington, Seattle, Washington, {USA} (Virtual Conference)},
  series       = {LIPIcs},
  volume       = {207},
  pages        = {24:1--24:21},
  publisher    = {Schloss Dagstuhl - Leibniz-Zentrum f{\"{u}}r Informatik},
  year         = {2021},
  url          = {https://doi.org/10.4230/LIPIcs.APPROX/RANDOM.2021.24},
  doi          = {10.4230/LIPICS.APPROX/RANDOM.2021.24},
  bibsource    = {dblp computer science bibliography, https://dblp.org}
}

@inproceedings{RMRRH93,
  author       = {R. Ravi and
                  Madhav V. Marathe and
                  S. S. Ravi and
                  Daniel J. Rosenkrantz and
                  Harry B. Hunt III},
  editor       = {S. Rao Kosaraju and
                  David S. Johnson and
                  Alok Aggarwal},
  title        = {Many birds with one stone: multi-objective approximation algorithms},
  booktitle    = {Proceedings of the Twenty-Fifth Annual {ACM} Symposium on Theory of
                  Computing, May 16-18, 1993, San Diego, CA, {USA}},
  pages        = {438--447},
  publisher    = {{ACM}},
  year         = {1993},
  url          = {https://doi.org/10.1145/167088.167209},
  doi          = {10.1145/167088.167209},
  bibsource    = {dblp computer science bibliography, https://dblp.org}
}

@inproceedings{KR00,
  author       = {Jochen K{\"{o}}nemann and
                  R. Ravi},
  editor       = {F. Frances Yao and
                  Eugene M. Luks},
  title        = {A matter of degree: improved approximation algorithms for degree-bounded
                  minimum spanning trees},
  booktitle    = {Proceedings of the Thirty-Second Annual {ACM} Symposium on Theory
                  of Computing, May 21-23, 2000, Portland, OR, {USA}},
  pages        = {537--546},
  publisher    = {{ACM}},
  year         = {2000},
  url          = {https://doi.org/10.1145/335305.335371},
  doi          = {10.1145/335305.335371},
  bibsource    = {dblp computer science bibliography, https://dblp.org}
}

@inproceedings{CRRT05,
  author       = {Kamalika Chaudhuri and
                  Satish Rao and
                  Samantha J. Riesenfeld and
                  Kunal Talwar},
  editor       = {Chandra Chekuri and
                  Klaus Jansen and
                  Jos{\'{e}} D. P. Rolim and
                  Luca Trevisan},
  title        = {What Would Edmonds Do? Augmenting Paths and Witnesses for Degree-Bounded
                  MSTs},
  booktitle    = {Approximation, Randomization and Combinatorial Optimization, Algorithms
                  and Techniques, 8th International Workshop on Approximation Algorithms
                  for Combinatorial Optimization Problems, {APPROX} 2005 and 9th InternationalWorkshop
                  on Randomization and Computation, {RANDOM} 2005, Berkeley, CA, USA,
                  August 22-24, 2005, Proceedings},
  series       = {Lecture Notes in Computer Science},
  volume       = {3624},
  pages        = {26--39},
  publisher    = {Springer},
  year         = {2005},
  url          = {https://doi.org/10.1007/11538462\_3},
  doi          = {10.1007/11538462\_3},
  bibsource    = {dblp computer science bibliography, https://dblp.org}
}

@inproceedings{Goemans06,
  author       = {Michel X. Goemans},
  title        = {Minimum Bounded Degree Spanning Trees},
  booktitle    = {47th Annual {IEEE} Symposium on Foundations of Computer Science, {FOCS}
                  2006, Berkeley, California, USA, October 21-24, 2006, Proceedings},
  pages        = {273--282},
  publisher    = {{IEEE} Computer Society},
  year         = {2006},
  url          = {https://doi.org/10.1109/FOCS.2006.48},
  doi          = {10.1109/FOCS.2006.48},
  bibsource    = {dblp computer science bibliography, https://dblp.org}
}

@misc{BFW26,
      title={Additive One Approximation for Minimum Degree Spanning Tree: Breaking the $O(mn)$ Time Barrier}, 
      author={Sayan Bhattacharya and Ermiya Farokhnejad and Haoze Wang},
      year={2026},
      eprint={2602.23448},
      archivePrefix={arXiv},
      primaryClass={cs.DS},
      url={https://arxiv.org/abs/2602.23448}, 
}

\appendix

\section{Getting Rid of the Prior Knowledge of $\Delta^\star$}\label{appendix:unknown-Delta}

In this section, we show how to run our algorithm without prior knowledge of $\Delta^\star$.
The idea is simple: binary search.
We consider a variable $b \in [1, n]$, to do the binary search.

\subsection{Multiplicative Approximation.}
The desired maximum degree of the spanning tree is $k = \lceil (1+\varepsilon)\cdot \Delta^\star \rceil + 1$.
It is straightforward to see that in our algorithm and its analysis, we never use the fact that $k$ is exactly equal to this value.
In other words, as long as $k \geq \lceil (1+\varepsilon)\cdot \Delta^\star \rceil + 1$, since there exists an spanning tree of maximum degree bounded by $(k-2)/(1+\varepsilon)$, the analysis of our algorithm can be synchronized with the spanning tree of bounded degree $(k-2)/(1+\varepsilon)$ instead of the actual optimal spanning tree of maximum degree $\Delta^\star$.

\medskip
\noindent
\textbf{The Algorithm.}
Hence, we can do the following.
We run our algorithm for finding a spanning tree of bounded degree $k = b$.
Moreover, we consider a timer that stops after $\Tilde{O}(m/\varepsilon^2)$ time steps (which is the running time of our algorithm given that $\Delta^\star$ is known).
Hence, either the algorithm successfully finds a spanning tree of bounded degree $b$, or the timer stops.
In the former case, we decrease the value of $b$ according to the binary search step.
In the latter case, we increase the value of $b$ according to the binary search step.

\medskip
\noindent
\textbf{Correctness.}
As long as $b \geq \lceil (1+\varepsilon)\cdot \Delta^\star \rceil + 1$, the algorithm must be successful.
Only if $b < \lceil (1+\varepsilon)\cdot \Delta^\star \rceil + 1$, there is a non-zero chance that the algorithm fails.
This means that the final value $b^\star$ in the binary search scheme that the algorithm is successful, must satisfy $b^\star \leq \lceil (1+\varepsilon)\cdot \Delta^\star \rceil + 1$.
This process only increases the running time by an $O(\log n)$ factor.
We can also reduce it to $O(\log \Delta^\star)$ by performing a classical binary search starting from $b=1$, and doubling the increase/decrease step, until $b$ becomes larger than $b^\star$ (the value of $b$ remains bounded by $2b^\star = O(\Delta^\star)$).
For the rest of the binary search, the increase/decrease step keeps getting halved.

\subsection{Additive Approximation.}
This is completely similar to the multiplicative approximation algorithm, except that we set the timer to $\Tilde{O}(mn^{2/3})$ instead of $\Tilde{O}(m/\varepsilon^2)$.
Then, as long as $b \geq \Delta^\star+1$, the algorithm must be successful before the timer stops.

\section{Barriers of the Previous Works and Our Improvements}\label{appendix-previous-work-barriers}

\cite{BFW26} constructed a precise instance that shows the algorithm of \cite{DHZ20} can return a spanning tree whose maximum degree is larger than $\Delta^\star$ by a multiplicative factor of $\Omega(\log n/ \log \log n)$.
This means that although \cite{DHZ20} algorithm runs in near-linear time, it is incapable of returning a constant approximate MDST for general inputs.

For the rest of this section, we explain the algorithms of \cite{FR92} and \cite{BFW26}, show their barriers, and describe our techniques to improve them.

\subsection{The Algorithm of \cite{FR92}}\label{sec:O(mn)-time}

\cite{BFW26} has a nice perspective of the \cite{FR92} algorithm that we describe here.
Let $T$ be a connected sub-tree of a feasible forest $\calF$.
Consider the following procedure, which is essentially the algorithm of \cite{FR92} running on $T$.
Initially, we define the set $B$ of \textit{saturated nodes} as all $u \in V(T)$ satisfying $\deg_{\calF}(u) = k$, and mark all $(T, B)$-regions as \textit{slack regions}.
Then, the following two main steps are executed iteratively;
\begin{itemize}
    \item[] \textbf{Discovering Step.} Find an edge $(x,y) \in E(G) - E(\calF)$ between two distinct slack regions. 
    \item[] \textbf{Merging Step.} Consider the path $P_{x,y}^T$, and {\em merge} all of the slack regions in $\bound_{(T, B)}(P_{x,y}^T)$ together with all saturated nodes on $P_{x,y}^T$, to form a new larger slack region.
\end{itemize}
Throughout the whole process, each slack region remains a sub-tree of $\calF$, and different slack regions remain mutually node-disjoint, separated by remaining saturated nodes.
At the end, we refer to every node inside a slack region as a \textbf{reducible} node w.r.t.~$T$, and refer to every remaining saturated node as a \textbf{non-reducible} node w.r.t.~$T$.
Morally, a node $u$ is reducible w.r.t.~$T$ if we can locally change $\calF$, by inserting and removing some edges inside $T$ (in a specific way), to achieve another feasible forest $\calF'$ satisfying $\deg_{\calF'}(u) \leq k-1$.
This key property is summarized in \Cref{lem:degree-reduction}.

\begin{lemma}[Degree Reduction]\label{lem:degree-reduction}
    There is a {\bf degree-reduction} subroutine, that given $(\calF, C, u)$, where $\calF$ is a feasible forest, $C$ is a sub-tree of $\calF$, and $u \in C$ is reducible w.r.t.~$C$, returns another feasible forest $\calF^+$ satisfying $\deg_{\calF^+}(u) \leq k-1$ by inserting/deleting some edges $e \in E(C)$ internally in $C$ in a total of $\tilde{O}\left( \sum_{x \in V(C)} \deg_G(c) \right)$ time.
\end{lemma}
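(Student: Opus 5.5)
The statement to prove is the Degree Reduction lemma (\Cref{lem:degree-reduction}) in the appendix. The plan is to follow the classical Fürer--Raghavachari improvement argument. I would make the reducibility procedure constructive by recording a \emph{witness} for every merge.

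\textbf{Recording witnesses.} Run the discovering/merging process on $C$ and store, for each merge, the non-forest edge $(x,y)$ that caused it. Every saturated node $s$ that gets absorbed lies on the path $P^{C}_{x,y}$ of its witness edge. Because $x$ and $y$ lie in slack regions that existed before that merge, $x$ and $y$ were merged strictly earlier than $s$.

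\textbf{Induction on merge time.} I would prove by induction on the time at which a node becomes part of a slack region that its degree can be made at most $k-1$, using only edge swaps inside $C$, while every other node keeps degree at most $k$. A node that was slack from the start (degree $\le k-1$) needs nothing. For a saturated node $s$ absorbed via the witness $(x,y)$, I would proceed in three steps.
\begin{enumerate}
\item Recursively make $x$ slack, then $y$ slack, by swaps local to their (earlier, disjoint) slack regions. These swaps do not touch $s$, which at that time is a junction outside those regions.
\item Add the edge $(x,y)$ to the forest.
\item Delete one of the two path edges incident to $s$ on $P_{x,y}$.
\end{enumerate}
This keeps a forest, since we add one edge and remove one edge of the cycle it closes. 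It raises $\deg(x)$ and $\deg(y)$ by one, which is acceptable because both were made slack, and it lowers $\deg(s)$ by one.

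The main obstacle is that the recursive fixes for $x$ and $y$ might alter the tree path between them, so that $s$ is no longer on the cycle, or they might interact with each other. I would handle this as in FR92: the region containing $x$ and the region containing $y$ are disjoint subtrees separated by junctions that were still saturated at that time. The swaps inside each region are confined to that region and preserve its connectivity, so the path between the two regions through $s$ is unchanged. If $x$ and $y$ lie in the same earlier region, then no merge happens, and this case is excluded by the definition of a discovering step.

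\textbf{Running time.} The degree-reduction recursion visits each witness edge at most once along a single recursion tree, because each node is fixed at most once, so its work is linear in the number of recorded edges. Computing the reducibility process itself, with union-find over slack regions and path traversals that collapse already-merged nodes, costs $\tilde O(|E(G[V(C)])|)$. Collecting $E(G[V(C)])$ costs $\tilde O(\sum_{x\in C}\deg_G(x))$.
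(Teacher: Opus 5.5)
Your proposal is correct and is essentially the classical F\"urer--Raghavachari witness recursion; the paper gives no proof of its own for this lemma and defers to \cite{FR92} and Lemma~2.1 of \cite{BFW26}, which argue exactly this way. The invariant you identify is what makes both the correctness and the linear-time bound work: each recursive fix is confined to its earlier, untouched, pairwise-disjoint region, keeps that region connected, and changes no edge with an endpoint outside it, so the part of $P_{x,y}$ through $s$ is unchanged and the edge to delete at $s$ can simply be recorded at merge time.
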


\noindent
\textbf{The Algorithm of \cite{FR92}.}
The algorithm runs in $n-1$ rounds.
At the start of round $t \in [1, n-1]$, $\calF^t$ is a feasible forest with $n-t+1$ connected components (initially, $\calF^1$ is an empty forest, i.e., $E(\calF^1) = \emptyset$).
The goal is to achieve $\calF^{t+1}$ by modifying $\calF^t$ with one fewer connected component.
Eventually, $\calF^n$ would be a spanning tree of $G$ with desired maximum degree.

\medskip
\noindent
\textbf{Round $t$.}
We find all the reducible and non-reducible nodes w.r.t.~each connected component of $\calF^t$.
With a simple counting argument (as in \cite{BFW26}), it is possible to show that there always exists an edge $(u,v) \in E(G)$ between two reducible nodes $u$ and $v$ in to two different connected components of $\calF^t$.
We find such an edge $(u,v)$, run the degree reduction subroutine (\Cref{lem:degree-reduction}) on $u$ and $v$ separately, and add $(u, v)$ to the forest, obtaining $\calF^{t+1}$.

\subsection{The Algorithm of \cite{BFW26}}\label{sec:BFW26}

Recently, \cite{BFW26} improved the \cite{FR92} algorithm and provided a plus one additive approximate MDST algorithm in $\tilde{O}\!\left(mn^{3/4}\right)$ time.
The algorithm starts with a feasible forest $\calF$ with $f$ components, and considers a decomposition of the forest by chopping up each component of $\calF$ into small pieces of size $O(n/f)$.
For each piece $M$ (called a molecule), the \cite{FR92} algorithm is executed to find the reducible and non-reducible vertices w.r.t.~$M$.
They introduced the notion of an \textit{augmenting chain} of length $h \geq 0$, which is used to reduce the number of components of the forest by one, while destroying only $O(h)$ many molecules.

\medskip
\noindent
\textbf{Main Subroutine.}
\cite{BFW26} shows how to find a blocking set of independent augmenting chains (exploiting different molecules), each of length $O(n/f)$, in $O(mn/f)$ time.
With a deliberate counting argument, they showed that the size of this blocking set is $\Omega(f^3/n^2)$.
So, it is possible to reduce the number of components of the forest from $f$ to $f - \Omega(f^3/n^2)$ in $\tilde{O}(mn/f)$ time. 

\medskip
\noindent
\textbf{The Algorithm.}
As long as $f \geq n^{3/4}$, \cite{BFW26} iteratively runs the main subroutine.
Eventually, when $f \leq n^{3/4}$, they run \cite{FR92} algorithm iteratively to reduce the number of components of the forest one by one, as described in \Cref{sec:O(mn)-time}.
This leads to their result.

\subsubsection{Barriers of \cite{BFW26} Algorithm}\label{sec:BFW-barriers}

When the number of connected components of the forest becomes small, the \cite{BFW26} algorithm does not have any advantage over \cite{FR92}.
For instance, when $f = O(\sqrt{n})$, the size of the molecules considered by the algorithm becomes too large, $\Omega(n/f) = \Omega(\sqrt{n})$.
Hence, the procedure of \cite{BFW26} can destroy a molecule consisting of a large number of nodes, and the algorithm finds only one augmenting chain.
This issue remains \textit{even if we relax the approximation ratio to multiplicative} instead of additive.
In the next section (\ref{appendix:hard-instance}), we provide a precise instance that shows the \cite{BFW26} algorithm needs at least $\Omega(mn^{1/4})$ time to build a spanning tree of maximum degree $2\Delta^\star$.
This is not necessarily the hardest instance for the algorithm, but it shows that the running time of their algorithm is far from optimal even for $O(\Delta^\star)$ approximation. 

\subsection{A Hard Instance for the \cite{BFW26} Algorithm}\label{appendix:hard-instance}

The algorithm of \cite{BFW26} defines the parameter $\theta = 20n/f$ and if a connected component of the forest has less than $2\theta$ nodes, it considers this component as a singleton molecule.\footnote{Even if we change the constant $20$ in the algorithm, we can still construct a hard instance by following the same rules.}
Now, assume $n = (a^2 + 5a + 6)/2$ for some $a \in \mathbb{N}$.
The forest consists of $a+1$ connected components.
There is a connected component which is a star graph with center $y$ and $2a+2$ edges $(y, z_i)$ for each $i \in [1, 2a+2]$.
For each $j \in [1, a]$, there is a path denoted by $P^{(j)}$ as a connected component of the forest which has $j$ nodes, $P^{(j)} := \left(x_1^{(j)},x_2^{(j)},\ldots,x_{j}^{(j)}\right)$.
This shows that the number of nodes in the graph is $(2a+ 3) + \sum_{j=1}^a j = n$.
Assume that there are non-forest edges $\left(x_1^{(i)}, y\right)$ for $i \in [1, a]$ and $(z_{i}, z_{i+1})$ for $i \in [1, 2a+1]$.
\Cref{fig:hard-instance} illustrates the structure of this instance.
It is straightforward to see that $\Delta^\star = a + 1$ in this instance.

\begin{figure}[ht!]
\caption{A hard instance for the \cite{BFW26} algorithm. Solid edges are forest edges, and dashed edges are non-forest edges.}
    \label{fig:hard-instance}
\centering

\tikzset{every picture/.style={line width=0.75pt}} %

\begin{tikzpicture}[x=0.75pt,y=0.75pt,yscale=-1.25,xscale=1.25]

\draw  [dash pattern={on 1.5pt off 1.5pt}]  (203.33,119.33) .. controls (200.33,143) and (150.81,183.33) .. (115.47,177) ;
\draw    (130.33,101.33) -- (130,79) ;
\draw    (130.33,101.33) -- (130.33,123.33) ;
\draw  [fill={rgb, 255:red, 0; green, 0; blue, 0 }  ,fill opacity=1 ] (127.86,123.33) .. controls (127.86,121.97) and (128.97,120.86) .. (130.33,120.86) .. controls (131.7,120.86) and (132.81,121.97) .. (132.81,123.33) .. controls (132.81,124.7) and (131.7,125.81) .. (130.33,125.81) .. controls (128.97,125.81) and (127.86,124.7) .. (127.86,123.33) -- cycle ;
\draw  [fill={rgb, 255:red, 0; green, 0; blue, 0 }  ,fill opacity=1 ] (127.53,79) .. controls (127.53,77.63) and (128.63,76.53) .. (130,76.53) .. controls (131.37,76.53) and (132.47,77.63) .. (132.47,79) .. controls (132.47,80.37) and (131.37,81.47) .. (130,81.47) .. controls (128.63,81.47) and (127.53,80.37) .. (127.53,79) -- cycle ;
\draw  [fill={rgb, 255:red, 0; green, 0; blue, 0 }  ,fill opacity=1 ] (127.86,101.33) .. controls (127.86,99.97) and (128.97,98.86) .. (130.33,98.86) .. controls (131.7,98.86) and (132.81,99.97) .. (132.81,101.33) .. controls (132.81,102.7) and (131.7,103.81) .. (130.33,103.81) .. controls (128.97,103.81) and (127.86,102.7) .. (127.86,101.33) -- cycle ;
\draw    (84.67,123.67) -- (84.33,101.33) ;
\draw    (113,177) -- (56.33,214.33) ;
\draw  [fill={rgb, 255:red, 0; green, 0; blue, 0 }  ,fill opacity=1 ] (82.19,123.67) .. controls (82.19,122.3) and (83.3,121.19) .. (84.67,121.19) .. controls (86.03,121.19) and (87.14,122.3) .. (87.14,123.67) .. controls (87.14,125.03) and (86.03,126.14) .. (84.67,126.14) .. controls (83.3,126.14) and (82.19,125.03) .. (82.19,123.67) -- cycle ;
\draw  [fill={rgb, 255:red, 0; green, 0; blue, 0 }  ,fill opacity=1 ] (43.53,123.47) .. controls (43.53,122.11) and (44.63,121) .. (46,121) .. controls (47.37,121) and (48.47,122.11) .. (48.47,123.47) .. controls (48.47,124.84) and (47.37,125.95) .. (46,125.95) .. controls (44.63,125.95) and (43.53,124.84) .. (43.53,123.47) -- cycle ;
\draw  [fill={rgb, 255:red, 0; green, 0; blue, 0 }  ,fill opacity=1 ] (81.86,101.33) .. controls (81.86,99.97) and (82.97,98.86) .. (84.33,98.86) .. controls (85.7,98.86) and (86.81,99.97) .. (86.81,101.33) .. controls (86.81,102.7) and (85.7,103.81) .. (84.33,103.81) .. controls (82.97,103.81) and (81.86,102.7) .. (81.86,101.33) -- cycle ;
\draw    (203.33,119.33) -- (203,97) ;
\draw    (203,75) -- (203,97) ;
\draw  [fill={rgb, 255:red, 0; green, 0; blue, 0 }  ,fill opacity=1 ] (200.86,119.33) .. controls (200.86,117.97) and (201.97,116.86) .. (203.33,116.86) .. controls (204.7,116.86) and (205.81,117.97) .. (205.81,119.33) .. controls (205.81,120.7) and (204.7,121.81) .. (203.33,121.81) .. controls (201.97,121.81) and (200.86,120.7) .. (200.86,119.33) -- cycle ;
\draw  [fill={rgb, 255:red, 0; green, 0; blue, 0 }  ,fill opacity=1 ] (200.53,97) .. controls (200.53,95.63) and (201.63,94.53) .. (203,94.53) .. controls (204.37,94.53) and (205.47,95.63) .. (205.47,97) .. controls (205.47,98.37) and (204.37,99.47) .. (203,99.47) .. controls (201.63,99.47) and (200.53,98.37) .. (200.53,97) -- cycle ;
\draw    (203.5,46.97) -- (203.17,24.64) ;
\draw  [fill={rgb, 255:red, 0; green, 0; blue, 0 }  ,fill opacity=1 ] (200.69,24.64) .. controls (200.69,23.27) and (201.8,22.17) .. (203.17,22.17) .. controls (204.53,22.17) and (205.64,23.27) .. (205.64,24.64) .. controls (205.64,26.01) and (204.53,27.11) .. (203.17,27.11) .. controls (201.8,27.11) and (200.69,26.01) .. (200.69,24.64) -- cycle ;
\draw  [fill={rgb, 255:red, 0; green, 0; blue, 0 }  ,fill opacity=1 ] (110.53,177) .. controls (110.53,175.63) and (111.63,174.53) .. (113,174.53) .. controls (114.37,174.53) and (115.47,175.63) .. (115.47,177) .. controls (115.47,178.37) and (114.37,179.47) .. (113,179.47) .. controls (111.63,179.47) and (110.53,178.37) .. (110.53,177) -- cycle ;
\draw    (113,177) -- (126.33,214) ;
\draw    (113,177) -- (85.86,214.33) ;
\draw    (113,177) -- (205.33,213.28) ;
\draw  [fill={rgb, 255:red, 0; green, 0; blue, 0 }  ,fill opacity=1 ] (53.86,214.33) .. controls (53.86,212.97) and (54.97,211.86) .. (56.33,211.86) .. controls (57.7,211.86) and (58.81,212.97) .. (58.81,214.33) .. controls (58.81,215.7) and (57.7,216.81) .. (56.33,216.81) .. controls (54.97,216.81) and (53.86,215.7) .. (53.86,214.33) -- cycle ;
\draw  [fill={rgb, 255:red, 0; green, 0; blue, 0 }  ,fill opacity=1 ] (83.39,214.33) .. controls (83.39,212.97) and (84.49,211.86) .. (85.86,211.86) .. controls (87.23,211.86) and (88.33,212.97) .. (88.33,214.33) .. controls (88.33,215.7) and (87.23,216.81) .. (85.86,216.81) .. controls (84.49,216.81) and (83.39,215.7) .. (83.39,214.33) -- cycle ;
\draw  [fill={rgb, 255:red, 0; green, 0; blue, 0 }  ,fill opacity=1 ] (123.86,214) .. controls (123.86,212.63) and (124.97,211.53) .. (126.33,211.53) .. controls (127.7,211.53) and (128.81,212.63) .. (128.81,214) .. controls (128.81,215.37) and (127.7,216.47) .. (126.33,216.47) .. controls (124.97,216.47) and (123.86,215.37) .. (123.86,214) -- cycle ;
\draw  [fill={rgb, 255:red, 0; green, 0; blue, 0 }  ,fill opacity=1 ] (202.86,213.28) .. controls (202.86,211.91) and (203.97,210.81) .. (205.33,210.81) .. controls (206.7,210.81) and (207.81,211.91) .. (207.81,213.28) .. controls (207.81,214.65) and (206.7,215.75) .. (205.33,215.75) .. controls (203.97,215.75) and (202.86,214.65) .. (202.86,213.28) -- cycle ;
\draw  [dash pattern={on 1.5pt off 1.5pt}]  (130.33,123.33) .. controls (139.33,147) and (136.33,161.33) .. (113,177) ;
\draw  [dash pattern={on 1.5pt off 1.5pt}]  (84.67,123.67) .. controls (99.33,133) and (113.33,163.33) .. (113,177) ;
\draw  [dash pattern={on 1.5pt off 1.5pt}]  (46,123.47) .. controls (67.33,130) and (105.33,165) .. (113,177) ;
\draw  [dash pattern={on 1.5pt off 1.5pt}]  (109.33,214) -- (85.86,214.33) ;
\draw  [dash pattern={on 1.5pt off 1.5pt}]  (85.86,214.33) -- (56.33,214.33) ;
\draw  [fill={rgb, 255:red, 0; green, 0; blue, 0 }  ,fill opacity=1 ] (106.86,214) .. controls (106.86,212.63) and (107.97,211.53) .. (109.33,211.53) .. controls (110.7,211.53) and (111.81,212.63) .. (111.81,214) .. controls (111.81,215.37) and (110.7,216.47) .. (109.33,216.47) .. controls (107.97,216.47) and (106.86,215.37) .. (106.86,214) -- cycle ;
\draw  [dash pattern={on 1.5pt off 1.5pt}]  (108.33,214) -- (126.33,214) ;
\draw  [fill={rgb, 255:red, 0; green, 0; blue, 0 }  ,fill opacity=1 ] (201.03,46.97) .. controls (201.03,45.61) and (202.13,44.5) .. (203.5,44.5) .. controls (204.87,44.5) and (205.97,45.61) .. (205.97,46.97) .. controls (205.97,48.34) and (204.87,49.45) .. (203.5,49.45) .. controls (202.13,49.45) and (201.03,48.34) .. (201.03,46.97) -- cycle ;
\draw  [fill={rgb, 255:red, 0; green, 0; blue, 0 }  ,fill opacity=1 ] (200.53,75) .. controls (200.53,73.63) and (201.63,72.53) .. (203,72.53) .. controls (204.37,72.53) and (205.47,73.63) .. (205.47,75) .. controls (205.47,76.37) and (204.37,77.47) .. (203,77.47) .. controls (201.63,77.47) and (200.53,76.37) .. (200.53,75) -- cycle ;
\draw    (113,174.53) -- (109.33,214) ;

\draw (48,217.87) node [anchor=north west][inner sep=0.75pt]  [font=\footnotesize]  {$z_{1}$};
\draw (170,89.87) node [anchor=north west][inner sep=0.75pt]  [font=\footnotesize]  {$\dotsc $};
\draw (200.5,45.9) node [anchor=north west][inner sep=0.75pt]  [font=\footnotesize]  {$\vdots $};
\draw (130,196.93) node [anchor=north west][inner sep=0.75pt]  [font=\footnotesize]  {$\dotsc $};
\draw (95,169.87) node [anchor=north west][inner sep=0.75pt]  [font=\footnotesize]  {$y$};
\draw (73.1,218.73) node [anchor=north west][inner sep=0.75pt]  [font=\footnotesize]  {$z_{2}$};
\draw (99.6,218.57) node [anchor=north west][inner sep=0.75pt]  [font=\footnotesize]  {$z_{3}$};
\draw (189,216.87) node [anchor=north west][inner sep=0.75pt]  [font=\footnotesize]  {$z_{2a+2}$};
\draw (125.86,218.4) node [anchor=north west][inner sep=0.75pt]  [font=\footnotesize]  {$z_{4}$};
\draw (41,101.87) node [anchor=north west][inner sep=0.75pt]  [font=\footnotesize]  {$x_{1}^{( 1)}$};
\draw (90.81,111.73) node [anchor=north west][inner sep=0.75pt]  [font=\footnotesize]  {$x_{1}^{( 2)}$};
\draw (91,90.87) node [anchor=north west][inner sep=0.75pt]  [font=\footnotesize]  {$x_{2}^{( 2)}$};
\draw (136.33,112.21) node [anchor=north west][inner sep=0.75pt]  [font=\footnotesize]  {$x_{1}^{( 3)}$};
\draw (136.17,90.57) node [anchor=north west][inner sep=0.75pt]  [font=\footnotesize]  {$x_{2}^{( 3)}$};
\draw (135,67.87) node [anchor=north west][inner sep=0.75pt]  [font=\footnotesize]  {$x_{3}^{( 3)}$};
\draw (209.33,110.21) node [anchor=north west][inner sep=0.75pt]  [font=\footnotesize]  {$x_{1}^{( a)}$};
\draw (209.33,87.21) node [anchor=north west][inner sep=0.75pt]  [font=\footnotesize]  {$x_{2}^{( a)}$};
\draw (210.33,65.21) node [anchor=north west][inner sep=0.75pt]  [font=\footnotesize]  {$x_{3}^{( a)}$};
\draw (210.33,14.21) node [anchor=north west][inner sep=0.75pt]  [font=\footnotesize]  {$x_{a}^{( a)}$};
\draw (210.33,36.4) node [anchor=north west][inner sep=0.75pt]  [font=\footnotesize]  {$x_{a-1}^{( a)}$};

\end{tikzpicture}
  
\end{figure}

We show that the algorithm of \cite{BFW26} takes at least $\Omega\!\left(mn^{1/4}\right)$ time in order to build a spanning tree starting from the provided forest.

\medskip
\noindent
\textbf{First Iteration.}
In the first iteration, the algorithm of \cite{BFW26} considers each connected component of the forest as one singleton molecule.
It will then find one augmenting chain of length $1$ that inserts $(z_1,z_2)$ and $\left(x_1^{(1)},y\right)$ into the forest, and removes $(z_2,y)$ from the forest.
Hence, the molecule containing $y$ is already destroyed, and it can not do any further improvement since the degree of $y$ is $2\Delta^\star$ and all of the edges between different connected components of the forest are adjacent to $y$.
Hence, in $\Omega(m)$ time, it can only reduce the number of components by $1$.

\medskip
\noindent
\textbf{Iteration $r \in [2, \sqrt{a}]$.}
In general, assume that after $r \leq \sqrt{a}$ iterations, the algorithm has inserted the edges $(z_1,z_2),(z_2,z_3),\ldots,(z_{r},z_{r+1}), \left(x_1^{(1)},y\right),\ldots,\left(x_1^{(r)},y\right)$ into the forest and has removed the edges $(y,z_2),\ldots,(y,z_{r+1})$ from the forest.
In the next iteration, the algorithm defines $\theta = 20n/f = 10(2a^2 + 5a + 6)/(a+1-r)$.
The size of the connected component containing $y,z_1,z_2,\ldots,z_{2a+2}$, is $ (2a + 3) + r(r+1)/2$.
According to our assumption $r \leq \sqrt{a}$, we have that 
$$ 2a + 3 + r(r+1)/2 <  20(2a^2 + 5a + 6)/(a+1-r) = 2\theta $$
As a result, the same issue remains, and the entire connected component of the forest containing $y$ is considered as one large molecule.
The algorithm finds an augmenting chain of length $1$, and reduces the number of components of $\calF$ by only one unit after inserting $(z_{r+1},z_{r+2})$ and $\left(x_{1}^{(r+1)},y\right)$ into the forest and removing $(y,z_{r+2})$ from the forest.
Again, the molecule containing $y$ is destroyed, and the algorithm can not do any further improvement since the degree of $y$ is $2\Delta^\star$ and all of the edges between different connected components of the forest are adjacent to $y$.
Hence, in $\Omega(m)$ time, it can only reduce the number of components by $1$.

\medskip
\noindent
\textbf{Conclusion.}
This example shows that the algorithm of \cite{BFW26} needs at least $\Omega(\sqrt{a}) = \Omega\!\left(n^{1/4}\right)$ many iterations, where each iteration only reduces the number of components by one and takes $\Omega(m)$ time.
As a result, the algorithm needs at least $\Omega\!\left(mn^{1/4}\right)$ time to build the final spanning tree.
This is far from being near-linear $\Tilde{O}(m)$.

\subsection{Our Techniques}\label{sec:techniques-overview}

Recall the degree reduction subroutine (\Cref{lem:degree-reduction}) as described in \Cref{sec:O(mn)-time}.
If the degree reduction subroutine is applied on a node $u$ w.r.t.~region $C$, we refer to $C$ as a \textit{touched} region.
For a touched region $C$, the algorithm can no longer guarantee the degree reduction subroutine for any $v \in C$.
Now, recall Steps I and II of the \cite{FR92} algorithm in \Cref{sec:O(mn)-time}.
In the following, we highlight the main ideas to improve \cite{FR92} and \cite{BFW26}.

\subsubsection{Potential Improvements Over the Main Steps of \cite{FR92}}

\medskip
\noindent
\textbf{Discovering Step.}
In the discovering step, an edge $(x,y)$ between two regions can provide a potential improvement for the algorithm (by growing regions in Step II and discovering more reducible nodes).
A simple observation is that, for a touched region $C$, any node $v \in C$ whose degree is already at most $k-1$ does not need any degree reduction.
Hence, we consider a larger set of potential edges in the discovering step.
We refer to these edges as \textit{effective} edges (formally defined in \Cref{full:def:effective-edge}).
A similar idea has been exploited in \cite{BFW26} algorithm as well.

\medskip
\noindent
\textbf{Merging Step.}
In the merging step (after discovering an effective edge $(x,y)$), the algorithm merges all regions in the boundary of $P^\calF_{(x,y)}$.
This step is quite naive for the following reason;
Consider a scenario where $x$ belongs to a region $C_x^\old$ of small size (say $|C_x^\old| = O(1)$).
If the algorithm keeps merging the regions naively, after some iterations, the new region containing $x$ (denoted by $C_x^\new$) can be very large (even $|C_x^\new| = \Omega(n)$).
Now, assume that the algorithm applies the degree reduction subroutine on $x$.
Although the degree of $x$ can be reduced only by updating $C_x^\old$, the construction of $C_x^\new$ required that all nodes inside $C_x^\old$ were reducible.
So, after this degree reduction on $x$, the entire $C_x^\new$ gets touched.

As a result, if we can decide whether a degree reduction subroutine for $x$ is useful before growing the region containing $x$, we can do the degree reduction on $x$ at that time which touches only a small number of nodes. 
To utilize the above idea, we introduce a critical object called an augmenting chain (formally defined in \Cref{full:def:aug-chain}).
We describe this object intuitively in \Cref{sec:aug-chain-overview}, how to use them in \Cref{sec:use-aug-chain}, and eventually how to find them in \Cref{sec:find-chain-overview}.

\subsubsection{Augmenting Chains}\label{sec:aug-chain-overview}

We consider the same name `augmenting chain' as in \cite{BFW26} since our notion of an augmenting chain is a non-trivial extension of the same object in \cite{BFW26}.
A high-level intuition of this object is provided below.
Assume that $(x, y)$ is an effective edge.
There are two cases for $(x,y)$;

\medskip
\noindent
\textbf{Case I:}
$x$ and $y$ are in different connected components of $\calF$.
We can just apply the degree reduction subroutine on $x$ and $y$ separately, and then add $(x,y)$ to $\calF$ in order to reduce the number of components of the forest.
In this case, we call $(x,y)$ an augmenting chain of length $1$.

\medskip
\noindent
\textbf{Case II:} 
$x$ and $y$ are in the same connected component of $\calF$.
Let $z \in P^\calF_{x,y}$ be a node of degree $k$.
We can do the following: 1) apply the degree reduction subroutine on $x$ and $y$ separately, 2) add $(x,y)$ to the forest, and 3) remove an edge incident on $z$ on the path $P^\calF_{x,y}$.
If we do such an operation, we have a feasible forest, where the degree of $z$ is reduced to $k-1$.
Now, it is possible that a new edge $(w, z)$ becomes effective, in which case, we can repeat the above process on $(w,z)$.
Eventually, we might end up in case I, which reduces the number of components of the forest.

\medskip
In essence, an augmenting chain is a sequence of non-forest edges ($\notin E(\calF)$) derived from the above process.
We refer to the length of this sequence as the \textit{length} of the augmenting chain.
Moreover, the above process that updates the forest according to an augmenting chain is called the process of \textit{applying an augmenting chain}.
In \Cref{sec:use-aug-chain}, we link how this object resolve the issue of \cite{FR92}.

\medskip
\noindent
\textbf{Comparison to \cite{BFW26}.}
The augmenting chains introduced in \cite{BFW26} depend on the forest decomposition in their algorithm.
As explained in \Cref{sec:BFW-barriers}, this decomposition of the forest introduces a hard barrier towards achieving fast algorithms (see Appendix \ref{appendix:hard-instance} for a concrete example), which means that we need a stronger object independent of such decomposition.
For this reason, our definition of an augmenting chain described above is a non-trivial refinement of \cite{BFW26} to avoid creating cycles in the forest.

\subsubsection{Incorporating Augmenting Chains In the Algorithm}\label{sec:use-aug-chain}

Using our terminology, the \cite{FR92} algorithm as described in \Cref{sec:O(mn)-time}, simply keeps applying the merging steps until an augmenting chain of length 1 appears.
Thus, regions are often too large when they become touched.
Instead, our algorithm will prioritize finding longer augmenting chains over merging. This encourages the algorithm not to merge regions too aggressively.
On the other hand, an augmenting chain of length $h$, touches $\Theta(h)$ many regions (see the intuitive description in \Cref{sec:aug-chain-overview}).
Hence, we need to deliberately choose the length $h$ of the augmenting chains that we apply since 1) if $h$ is large, the \textit{number} of touched regions becomes large, and 2) if $h$ is small, the \textit{size} of the touched regions might be large.
Both cases lead to little progress in the algorithm.
To address this, we run our algorithm in rounds, where the goal of round $t$ is to find a large set of augmenting chains, each of length $\leq t$.

\medskip
\noindent
\textbf{Final Length of Augmenting Chains.}
Eventually, to achieve multiplicative $(1+\varepsilon)$-approximate MDST, it is sufficient to run $H = \Theta(\log_{1+\varepsilon} n)$ many rounds of our algorithm to reduce the number of components of the forest by $\Omega(f/H)$.
To achieve a plus one additive approximate MDST, it is sufficient to run $H = \Theta(n/f)$ rounds of our algorithm, to reduce the number of components of the forest by $\Omega(f/H)$.
These are aligned with the guarantees of \Cref{lem:main} as desired.

\subsubsection{Finding Augmenting Chains}\label{sec:find-chain-overview}

Augmenting chains are complicated objects, where applying an augmenting chain can destroy or create new augmenting chains since the forest is changing in a non-trivial way.
As a result, searching for augmenting chains is a non-trivial task due to the limited running time of the algorithm.

\medskip
\noindent
\textbf{Levels of the Edges.}
We maintain an integer for each edge of the graph called the \textit{level} of the edge.
If the level of an effective edge $(u,v)$ is $r$, it intuitively means that the length of the minimum augmenting chain whose last edge is $(u,v)$ should be at least $r$.

\medskip
\noindent
\textbf{Searching for Augmenting Chains.}
We provide a deliberate subroutine $\findChain(h, (u,v))$ that receives an integer $h$ and an effective edge $(u,v)$ and tries to find an augmenting chain of length at most $h$.
This subroutine searches for a sequence of effective edges whose levels are \textit{strictly increasing}, and upon any failure, it increases the levels of the corresponding edges.

In round $t$ of the algorithm, we bound the level of the edges by $t$ to make sure that $\findChain$ only finds augmenting chains of length $\leq t$.
Moreover, the total time spent for finding augmenting chains is proportional to the summation of the levels of all edges, which is $O(m \cdot H)$, aligned with the guarantee of \Cref{lem:main}.

\medskip
\noindent
\textbf{Comparison to \cite{BFW26}.}
The algorithm of \cite{BFW26} searches for augmenting chains as follows: Assuming that no augmenting chain of length $\leq h-1$ exists, it first constructs a layering of the nodes (analogous to the notion of levels in our algorithm), and then searches for a blocking set of augmenting chains of length $=h$ by a DFS-type algorithm going down the levels.
There are strong barriers to using their algorithm here, since 1) this layering of the nodes depends completely on [their decomposition of the forest] and [the assumption that no augmenting chain of length $\leq h-1$ exists], 2) their algorithm is static in the sense that they search for a blocking set of augmenting chains independently and then apply all of them at the end.

On the other hand, our notion of an augmenting chain is independent of any decomposition, and these objects are super dynamic in the sense that applying an augmenting chain might introduce a new augmenting chain, even of length $1$.
As a result, we need an intricate subroutine that finds a large number of augmenting chains in a dynamic way and with limited running time.

\end{document}